
\documentclass[prodmode,acmjacm]{acmupsmall}

\usepackage[ruled]{algorithm2e}

\SetAlFnt{\small}
\SetAlCapFnt{\small}
\SetAlCapNameFnt{\small}
\SetAlCapHSkip{0pt}
\IncMargin{-\parindent}

\acmVolume{0}
\acmNumber{0}
\acmArticle{00}
\acmYear{2011}
\acmMonth{0}

\usepackage{latexsym}
\usepackage{amsmath}
\usepackage{amssymb}
\usepackage{vmargin}

\usepackage{tikz}


\newlength{\shor}
\setlength{\shor}{0.3em}

\def\tu#1{\langle #1\rangle}

\newcommand{\up}{\uparrow}
\newcommand{\down}{\downarrow}
\newcommand{\upts}{\uparrow}
\newcommand{\downts}{\downarrow}

\def\setto{\mathrel{:=}}
\def\First#1{\ensuremath{\mathop{\mathrm{First}}(#1)}}
\def\Rest#1{\ensuremath{\mathop{\mathrm{Rest}}(#1)}}
\def\Put#1#2{\ensuremath{\mathop{\mathrm{Put}}(#1,#2)}}
\def\IsNotEmpty#1{\ensuremath{\mathop{\mathrm{IsNotEmpty}}(#1)}}
\def\NewList{\ensuremath{\mathop{\mathrm{NewList}}()}}

\def\crd{{\mathrel{\curlyvee}}}
\def\cru{{\mathrel{\curlywedge}}}
\def\Fu#1{\lceil #1\rceil}
\def\Cr#1{\lfloor #1\rfloor}

\def\ts{{\ast}}

\def\tsy{{\ts_Y}}
\def\tss{\bullet}


\newcommand{\Pcl}[2][T]{\mathop{\textit{cl}_{#1^{\ast}}\kern-.4pt(#2)}}

\newcommand{\Tcl}[2][T]{\mathop{\textit{cl}_{#1}\kern-.4pt(#2)}}

\def\deg#1#2{{}^{#1\!}/#2}
\def\impl#1#2{#1 \!\Rightarrow\! #2}



\begin{document}

\markboth{R. Belohlavek, V. Vychodil}{Attribute Dependencies for Data with Grades}

\title{Attribute Dependencies for Data with Grades}
\author{RADIM BELOHLAVEK
\affil{Palacky University, Olomouc}
VILEM VYCHODIL
\affil{Palacky University, Olomouc}
}

\begin{abstract}
This paper examines attribute dependencies in data that involve grades, such as 
a grade to which an object is red or a grade to which two objects are similar.
We thus extend the classical agenda by allowing  graded, or ``fuzzy'', attributes
instead of Boolean, yes-or-no attributes in case of attribute implications, and allowing
approximate match based on degrees of similarity instead of exact match based on equality
in case of functional dependencies.
In a sense, we move from bivalence, inherently present in the now-available theories of dependencies, 
to a more flexible setting that involves grades. Such a shift has far-reaching consequences.
We argue that a reasonable theory of dependencies may be developed by making use of
mathematical fuzzy logic, a recently developed many-valued logic.
Namely, 
 the theory of dependencies is then based on a solid logic calculus  the same way the classical
dependencies are based on classical logic.
For instance, rather than handling degrees of similarity in an ad hoc manner, we consistently treat
them as truth values, the same way as \emph{true} (match)  and \emph{false} (mismatch)  are treated in classical theories.
In addition, several notions intuitively embraced in the presence of grades, such as a degree
of validity of a particular dependence or a degree of entailment,  naturally emerge and receive a conceptually clean treatment
in the presented approach.
In the paper, we discuss motivations, provide basic notions of syntax and semantics, and develop 
basic results which include entailment of dependencies, associated closure structures, 
a logic of dependencies with two versions of completeness theorem, results and algorithms
regarding complete non-redundant sets of dependencies, relationship to and a possible
reductionist interface to classical dependencies,  and relationship to functional dependencies
over domains with similarity. We also outline future research topics.
\end{abstract}

\category{F.4.1}{Mathematical Logic}{Model theory}
\category{H.2.8}{Database Applications}{Data mining}
\category{I.2.3}{Deduction and Theorem Proving}{Uncertainty, ``fuzzy,'' and probabilistic reasoning}
\category{I.2.4}{Knowledge Representation Formalisms and Methods}{Relation systems}

\terms{Theory}

\keywords{attribute dependence, grade, similarity, logic, 
redundancy, functional dependence}

\acmformat{
Belohlavek, R., Vychodil, V. 2012. Attribute Dependencies for Data with Grades.}

\begin{bottomstuff}
This work is supported by the Czech Science Foundation,
under grant No. P202/10/0262.
%
%
Authors' addresses: 
R. Belohlavek, V. Vychodil, Department of Computer Science,
Palacky University, Olomouc.
\end{bottomstuff}

\maketitle

\section{Introduction}
\label{sec:i}

Attribute dependencies are fundamental for understanding and processing data. In the past, dependencies describing
various types of attribute relationships have been studied, particularly in relational databases and data analysis/mining.
Arguably, the most important dependencies are those of the form
\begin{eqnarray}\label{eqn:AIform}
 A\Rightarrow B
\end{eqnarray}
 where $A$ and $B$ are sets of attributes.
They are interpreted in two basic ways---in binary datasets (tables with yes-or-no attributes)
describing which objects have which attributes and in relations (tables with general attributes) 
describing the values of objects for the attributes. 
In binary datasets, $A\Rightarrow B$ is considered valid if 
\begin{eqnarray}\label{eqn:AIval}
 &&\mbox{every object (table row) having all attributes from $A$ has all attributes from $B$,}
\end{eqnarray}
or, more generally, if a certain percentage (called confidence) of the objects having $A$ also have $B$ and another 
percentage (called support) of objects have all the attributes from $A\cup B$.
Such dependencies are used in data analysis and are known as attribute implications \cite{CaRo:CDA,GaWi:FCA,GuDu:Fmiirtdb},
see also \cite{DeCa:Ddbtbsf,Fag:Fdrdpl},
or association rules when the support and confidence are considered 
\cite{AgImSw:Marbsild,HaHoRa:Gmmdm,HaHa:MHF,HiGuNa:Aarl,TaStKu:IDM}. 
In relations, $A\Rightarrow B$ is considered valid if 
\begin{eqnarray}\label{eqn:FDval}
  \nonumber
 &&\mbox{every two tuples (table rows) with the same values on attributes from $A$}\\
 &&\mbox{have the same values on attributes from $B$.}
\end{eqnarray}
Such dependencies are called functional dependencies and are fundamental to relational databases 
\cite{Arm:Dsdbr,Cod:Rmdlsdb,Mai:TRD}.

A common feature of the two interpretations is a bivalent character of the conditions involved in (\ref{eqn:AIval}) and (\ref{eqn:FDval}). The bivalence results from the nature of the data.
Namely, a given object either does or does not have a given attribute; two given tuples either do or do 
not have the same value for a given attribute. It turns out that it is becoming increasingly important 
for data models to  account for fuzziness \cite{Fag:Cfims,Fag:Cfi},
which is inherently present in human cognition and plays a fundamental role in how people
communicate knowledge \cite{Zad:Fl,Zad:Flnnsc,Zad:Itnfl}.
Two points in case are fuzzy (or graded) attributes, such as ``green'' or ``performing well'', and similarity relations.
In these and other cases, fuzziness is conveniently represented by grades (degrees, scores) which are usually
numbers ranging between $0$ and $1$. 
Thus, an object $x$ may be assigned a grade  to which $x$ is green---the higher the grade, the more
green $x$ is.
Likewise, two objects $x$ and $y$ may be assigned a degree to which $x$ and $y$ are similar.
A scale of grades bounded by $0$ and $1$ thus naturally replaces the two-element set of truth values
of classical logic with $0$ representing falsity (``attribute does not apply'', ``values do not match'')
and $1$ representing truth (``attribute applies'', ``values match'').
For data with grades, the ordinary dependencies have limited applicability. 
Namely, rather than knowing that (full) presence of some attributes  implies 
(full) presence of some other attributes, one is naturally interested in rules that take the grades into account.
Such rules are the main subject of the present paper.

In particular, we consider rules saying that
%
presence of attributes $y_i$ with grades at least $a_i$ implies (or implies partially)
presence of attributes $z_i$ with grades at least $b_i$.
Therefore, from rules of the form
\begin{eqnarray}\label{eqn:YZ}
   \{ y_1,\dots,y_p\} \!\Rightarrow\!
  \{z_1,\dots,z_q\}
\end{eqnarray}
we come to rules of the form
\begin{eqnarray}\label{eqn:YZgrades}
   \{{}^{a_1\!\!}/y_1,\dots,{}^{a_p\!\!}/y_p\} \!\Rightarrow\!
  \{{}^{b_1\!\!}/z_1,\dots,{}^{b_q\!\!}/z_q\},
\end{eqnarray}
such as 
\begin{eqnarray}\label{eqn:uflahc}
   \{{}^{0.5\!\!}/\mathrm{unhealthy\ food},{}^{0.9\!\!}/\mathrm{little\ activity}\} \!\Rightarrow\!
  \{{}^{0.7\!\!}/\mathrm{high\ cholesterol}\}.
\end{eqnarray}
From a functional dependence point of view, such rules may be interpreted in tables whose
domains are equipped with similarity relations assiging similarity grades to pairs of elements
of the domains. In such tables, the rules specify that two tuples with similar values on attributes 
$y_1,\dots,y_p$ have similar values on $z_1,\dots,z_q$. In particular, rule (\ref{eqn:YZgrades}) says
that similarity to degrees $a_i$ or higher on attributes $y_i$
implies similarity to degrees $b_i$ or higher on attributes $z_i$, generalizing thus ordinary functional
dependencies which say that a match of two tuples on attributes $y_1,\dots,y_p$
implies a match on $z_1,\dots,z_q$.

Using grades to represent fuzziness is the fundamental idea of fuzzy logic \cite{Zad:Fs}. We use fuzzy
logic as a formal framework for our approach. Fuzzy logic enables us to manipulate the grades
by means of the truth functions of logic connectives. In the past, various models of processing data
with grades using fuzzy logic connectives, notably ``fuzzy conjunction'',  have been studied 
in a more or less ad hoc way. 
In this perspective,
one aspect of our work is that we consistently use the so-called mathematical fuzzy 
logic \cite{Ciea:HMFL,Got:TMVL,Got:Mfl,Haj:MFL,Haj:Wimfl} as a formal framework. Mathematical fuzzy logic
is a recently developed branch of logic that provides us with general principles and notions such as
theory, model, or entailment, and enables us to process data with grades in a clean way. 

Our reliance on mathematical fuzzy logic is similar to the reliance of the ordinary dependencies on
classical logic. In case of grades, however, the logic framework is more explicit. Namely, while
in the ordinary case the assertions like (\ref{eqn:AIval}) or the notion of entailment have a clear meaning
and one rarely needs to resort to the formal agenda of classical logic, in case of grades, 
the meaning needs to be supplied by an explicit resorting to fuzzy logic principles. 
Due to a consistent use of fuzzy logic, the verbal description of validity conditions and
manipulation regarding the dependencies remains the same as in the ordinary case, 
retaining thus a clear meaning.
For instance, the validity of  rule (\ref{eqn:YZgrades}) in data with grades may still
be verbally described by (\ref{eqn:AIval}), the grades being ``hidden in the interpretation''.
A natural consequence of working with grades is that key logic notions such as validity
 or entailment become graded. That is, we speak of a degree to which a given rule is valid or
a degree to which a rule follows from other rules leaving validity or entailment to degree $1$ 
(full validity or full entailment) important particular cases.

While the reliance on mathematical fuzzy logic provides us with reasonable guiding principles, 
the resulting notions and problems tend, naturally, to be more involved both conceptually
as well as technically due to the presence of intermediary grades and, in addition,
due to the fact that we develop the theory for a general scale $\mathbf{L}$ of truth degrees
with $\mathbf{L}$ acting as a parameter. 
The conceptual aspect regarding the extension from the ordinary, bivalent
framework to a framework involving grades may, using a loose analogy, 
be compared to an extension from a deterministic to a probabilistic framework.
As regards the technical  aspect, a point in case for illustration
is the fact that, as a rule, the ordinary proofs by cases, corresponding
to \emph{false} and \emph{true}, no longer work and need to be replaced by different
schemata which are based on algebraic maniputation of the grades.
In this perspective, the paper illustrates both aspects, the conceptual and the technical,
by numerous cases. 

The paper is organized as follows.
In Section \ref{sec:p}, we present preliminaries on scales of truth degrees
and operations on them and the basic principles of fuzzy logic.
Section \ref{sec:gai} presents the basic notions regarding graded
attribute implications, their validity,  theories, models, entailment, 
and related
closure and other structures.
In Section \ref{sec:fal}, we present a system for reasoning with graded attribute
implications that is based on Armstrong-like rules
and prove two versions of syntactico-semantical completeness,
the ordinary-style one claiming that entailment coinsides with provability and 
the graded-style one claiming that degrees of entailment equal 
degrees of provability. 
Section \ref{sec:bai} elaborates on the notion of a base, that is a non-redundant
set of graded attribute implications that contains, via entailment, complete information
about validity of all implications in a given data.
In particular, we focus on bases constructed by means of so-called pseudo-intents.
The algorithms for the problem of computing bases and some other problems 
regarding attribute implications are presented in Section \ref{sec:a}.
In Section \ref{sec:rgai}, we explore the problem of whether and to what extent 
it is possible to reduce the notions and problems regarding graded attribute
implications, notably the problem of computing a base,
to the corresponding problems regarding ordinary attribute implications.
Section \ref{sec:rfd} presents the above mentioned alternative semantics for graded
attribute implications in which  implications are interpreted as functional
dependencies over a certain extension of Codd's relational model, in which
domains of attribute values are equipped with binary fuzzy relations.
The binary relations may, in particular, be preference relations or similarity relations,
in which case the extension becomes a relational model enabling similarity
queries and other data processing involving 
 similarity relations. We examine such extension in detail in a another paper.
In this paper, we show that the two semantics are equivalent in that their notions of 
(degree of) entailment coincide.

\section{Scales of grades and basic principles of fuzzy logic}
\label{sec:p}

The dependencies studied in this paper are of the form (\ref{eqn:YZgrades}) and we assume that they are 
interpreted in tables with graded attributes. 
We assume that the grades involved (i.e. $a_i$s, $b_i$s, and the table entries) belong to a fixed set
$L$. Furthermore, we assume that $L$ is bounded by $0$ and $1$, partially ordered (usually a chain), 
and equipped with operations which are (truth functions of) logic connectives.
In accordance with fuzzy logic, we interpret the grades in $L$ as truth values, or truth degrees, with 
$0$ and $1$ representing falsity and truth. 
The intermediate degrees $a$, i.e. those with $0<a<1$, represent partial truth.
As in classical logic, grades are assigned to propositions to represent their validity.
The grade assigned to proposition $\varphi$ in structure $\mathbf{M}$ is denoted by
\[
    ||\varphi||_\mathbf{M} \mbox{ or just } ||\varphi||.
\]
Higher grades indicate truer propositions, hence
\[
     ||\text{$x$ is red}|| = 0.7 \qquad\text{and}\qquad  ||\text{$y$ is red}|| = 0.9 
\]
implies that $y$ is considered more red than $x$.
We consider  (truth functions of)  conjunction and implication and denote them by $\otimes$ and $\to$.
As usual in fuzzy logic, we assume truth functionality of connectives. That is, the truth degree
of $\varphi\&\psi$ and $\varphi\Rightarrow\psi$ (conjunction and implication of $\varphi$ and $\psi$)
is defined as
\[
      ||\varphi\&\psi|| = ||\varphi||\otimes||\psi||   \qquad\text{and}\qquad
      ||\varphi\Rightarrow\psi|| = ||\varphi||\to||\psi||.
\]
This way, the operations may be looked at as aggregation operations \cite{Fag:Cfi}. For instance, if 
\[
   ||\text{$x$ is brown}|| = 0.8  \qquad\text{and}\qquad  ||\text{$x$ is heavy}|| = 0.5, 
\]
and if $\otimes$ is the Goguen conjunction (see below in this section), then  the degree to which $x$ is brown and heavy
is $0.8\otimes 0.5=0.8\cdot 0.5=0.4$.
To be able to evaluate truth degrees of quantified formulas, we assume that as a partially ordered set,
$L$ forms a complete lattice, i.e. infima and suprema of arbitrary sets of grades exist. Namely,
if $\varphi$ is a formula with a free variable $x$ ranging over a set $D$, one naturally defines 
\[
      ||(\forall x)\varphi|| =\textstyle \bigwedge_{e} ||\varphi||_e   \qquad\text{and}\qquad
      ||(\exists x)\varphi|| =\textstyle \bigvee_{\!e} ||\varphi||_e,
\]
where $e$ ranges over all valuations of $x$ in $D$.
It has been recognized in his seminal work by Goguen \cite{Gog:Lfs,Gog:Lic} that a class of general
scales of grades equipped with operations suitable for fuzzy logic is the class of all complete
residuated lattices \cite{WaDi:Rl}. Residuated lattices and their variants are currently the main structures
used in mathematical fuzzy logic \cite{Gaea:RL,Got:TMVL,Got:Mfl,Haj:MFL} and are used as the basic structures 
of grades in this paper.

A complete residuated lattice 
\cite{Haj:MFL,Haj:Ovt} is an algebra
$\mathbf{L}=\langle L,\wedge,\vee,\otimes,\rightarrow,0,1\rangle$
such that $\langle L,\wedge,\vee,0,1 \rangle$ is
a complete lattice with $0$ and $1$ being the least and greatest elements, 
respectively; $\langle L,\otimes,1 \rangle$ is a commutative monoid
(i.e. $\otimes$  is commutative, associative,
and $a \otimes 1 = 1 \otimes a = a$ for each $a \in L$);
and $\otimes$ with $\rightarrow$
satisfy the so-called adjointness property:
\begin{eqnarray}
  a \otimes b \leq c \quad \mbox{if{}f} \quad a \leq b \rightarrow c
  \label{adj}
\end{eqnarray}
for each $a,b,c \in L$.
Commonly used residuated lattices are those with $L=[0,1]$ (unit
interval), $\wedge$ and $\vee$ being minimum and maximum, $\otimes$ being a
left-continuous t-norm \cite{Got:TMVL,Haj:MFL}  and $\rightarrow$ its residuum.
Three most important pairs of adjoint operations on the unit interval are:
\begin{eqnarray}
  \mbox{\L ukasiewicz:} &&
  \left.
    \begin{array}{r@{~=~}l}
      a \otimes b & \max(a + b - 1, 0), \\[2pt]
      a \rightarrow b & \min(1 - a + b, 1),
    \end{array}
  \right.\label{Def:Luk} \\[4pt]
  \mbox{G\"odel:} &&
  \left.
    \begin{array}{r@{~=~}l}
      a \otimes b & \min(a,b), \\[2pt]
      a \rightarrow b &
      \left\{
        \begin{array}{@{\,}ll}
          1 & \mbox{if}\ a \leq b, \\
          b & \mbox{otherwise,}
        \end{array}
      \right.
    \end{array}
  \right. \\[4pt]
  \mbox{Goguen (product):} &&
  \left.
    \begin{array}{r@{~=~}l}
      a \otimes b & a \cdot b, \\[2pt]
      a \rightarrow b &
      \left\{
        \begin{array}{@{\,}ll}
          1 & \mbox{if}\ a \leq b, \\
          \frac{b}{a} & \mbox{otherwise.}
        \end{array}
      \right.
    \end{array}
  \right.
\end{eqnarray}
Another important class of examples consists of residuated lattices that are
finite equidistant subchains in $[0,1]$, i.e.  $L = \{0, \frac{1}{n}, \dots,\frac{n-1}{n}, 1\}$.
Such chains may be endowed with the restrictions of {\L}ukasiewicz, G\"odel operations,
or other discrete t-norm-based operations \cite{MaTo:Tnds}.
Importantly, a particular example for $n=1$ yields $L=\{0,1\}$ in which case 
$\otimes$ and $\to$ are the classica conjunction and implication. In this case, $\mathbf{L}$ is 
the two-element Boolean algebra of classical logic and is denoted by $\mathbf{2}$ in this paper.

The following are the basic properties of complete residuated lattices that are needed
in our paper, see e.g. \cite{Bel:FRS,Got:TMVL,Haj:MFL}:

  \def\I{\rightarrow}%
  \def\E{\leftrightarrow}%
  \def\A{\wedge}%
  \def\O{\vee}%
  \def\bigA{\bigwedge}%
  \def\bigO{\bigvee}%
  \def\bigT{\bigotimes}%
  \def\smallA{\mathop{{\textstyle\bigwedge}}\nolimits}%
  \def\smallO{\mathop{{\textstyle\bigvee}}\nolimits}%
  \def\smallT{\mathop{{\textstyle\bigotimes}}\nolimits}%
  \def\smallcup{\mathop{{\textstyle\bigcup}}\nolimits}%
  \def\smallcap{\mathop{{\textstyle\bigcap}}\nolimits}%
  \def\C{\A}%
  \def\D{\O}%
  \def\M{\A}%
  \def\J{\O}%
  \def\R{\I}%
  \def\B{\E}%
  \def\N{\neg}%
  \def\T{\otimes}%

\begin{theorem} \label{thm:prl1}
Every complete residuated lattice satisfies
{ \begin{eqnarray}
  \label{eqn:prl1B}\label{Prop:leqiff}
 && a\leq b \qquad \mbox{\em if{}f } \qquad a\I b=1, \\
  \label{eqn:prl1C}
 && a\I a=1, \quad a\I 1=1, \quad 0\I a=1, \\
  \label{eqn:prl1D}
 && 1\I a=a, \\
  \label{eqn:prl1E}
 && a\T 0=0, \\
 \label{eqn:prl1F}
 && a\T b\leq a, \quad a\leq b\I a, \\
  \label{eqn:prl1A}
 && a\T(a\I b) \leq b, \quad
 b\leq a\I(a\T b), \quad a\leq(a\I b)\I b, \\
 \label{eqn:prl1H}\label{Prop:chprem}
 && (a\T b)\I c = a\I (b\I c) = b\I(a\I c), \\
 \label{eqn:prl1I}\label{Prop:Ttrans}
 && (a\I b)\T (b\I c)\leq a\I c, \\
 \label{eqn:isot}\label{Prop:isoT}
 && a_1\leq a_2 \ \mbox{ \em and } \ b_1 \leq b_2 \ \mbox{ \em implies } \
 a_1\T b_1 \leq a_2\T b_2, \\
 \label{eqn:isor}\label{Prop:isoR}
 && a_1\geq a_2 \ \mbox{ \em and } \ b_1 \leq b_2 \ \mbox{ \em implies } \
 a_1\I b_1 \leq a_2\I b_2, \\
%
 \label{Prop:ititit} 
 && (a \I b) \T (c \I d) \leq (a \T c) \I (b \T d), \\
%
 \label{eqn:prlsiA}\label{Prop:OaTb_aTOb}
  &&  a\T \smallO_{\!i\in I} b_i = \smallO_{\!i\in I} (a\T
  b_i),
\\
 \label{eqn:prlsiB}\label{Prop:AaIb_aIAb}
  &&   a\I \smallA_{i\in I} b_i = \smallA_{i\in I}(a\I
b_i),\\
 \label{eqn:prlsiC}\label{Prop:OaIb_aIOb}
  &&   \smallO_{\!i\in I} a_i \I b = \smallA_{i\in I}(a_i\I
  b),
   \\
 \label{eqn:prlsiD}\label{Prop:AaTb_aTAb}
  &&  a\T \smallA_{i\in I} b_i \leq \smallA_{i\in I}
(a\T b_i), \\
 \label{eqn:prlsiE}
  && \smallO_{\!i\in I} (a\I b_i) \leq a\I \smallO_{\!i\in I}
b_i,  \\
 \label{eqn:prlsiF}
  && \smallO_{\!i\in I} (a_i\I b)\leq \smallA_{i\in I} a_i\I
b, \\
  \label{Prop:Aab_AaAb}
  && \smallA_{i \in I}{(a_i \I b_i)} \leq
  \smallA_{i \in I}{a_i} \I \smallA_{i \in I}{b_i}.
\end{eqnarray}
}
\end{theorem}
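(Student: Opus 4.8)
The plan is to derive every item from exactly three ingredients: the complete‑lattice order on $L$, the fact that $\langle L,\otimes,1\rangle$ is a commutative monoid, and the adjointness (\ref{adj}). The guiding principle is that for fixed $a$ the map $a\otimes(-)$ is a left adjoint of $a\to(-)$, while $(-)\to c$ reverses order; once this is in place, the join/meet laws become instances of the slogan ``left adjoints preserve joins, right adjoints preserve meets''. I would fix a logical order so that no property is used before it is proved. First I would establish the order‑characterization (\ref{eqn:prl1B}): taking $b=1$ in (\ref{adj}) and using $1\otimes a=a$ gives $a\leq b$ iff $1\leq a\to b$, i.e.\ iff $a\to b=1$. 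The three identities in (\ref{eqn:prl1C}) then follow by applying (\ref{eqn:prl1B}) to the trivial inequalities $a\leq a$, $a\leq 1$, $0\leq a$. For (\ref{eqn:prl1D}) I would note that (\ref{adj}) with $b=1$ reads $x\leq a$ iff $x\leq 1\to a$ for all $x$, whence $1\to a=a$; and (\ref{eqn:prl1E}) follows because $a\otimes 0\leq c$ iff $0\leq a\to c$ holds for every $c$, so in particular $a\otimes 0\leq 0$.

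Next come the ``modus ponens'' inequalities (\ref{eqn:prl1A}) and (\ref{eqn:prl1F}), the real workhorses, which I would get directly from adjointness and commutativity. Here $a\otimes(a\to b)\leq b$ is the instance of (\ref{adj}) applied to $a\to b\leq a\to b$; both $b\leq a\to(a\otimes b)$ and $a\leq(a\to b)\to b$ are obtained the same way by reading (\ref{adj}) in the convenient direction, and (\ref{eqn:prl1F}) follows from (\ref{eqn:prl1C}) via adjointness. With (\ref{eqn:prl1A}) in hand I would prove the monotonicity laws (\ref{eqn:isot}) and (\ref{eqn:isor}): to see that $a\otimes(-)$ is isotone, observe that $b_1\leq b_2\leq a\to(a\otimes b_2)$ forces $a\otimes b_1\leq a\otimes b_2$ by (\ref{adj}); isotonicity in the second argument of $\to$ and antitonicity in the first follow from $a\otimes(a\to b)\leq b$ together with adjointness. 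It is essential to settle these before using them anywhere.

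Finally I would assemble the remaining items from (\ref{eqn:prl1A}), monotonicity, and adjointness. Transitivity (\ref{eqn:prl1I}) is equivalent under (\ref{adj}) to $a\otimes(a\to b)\otimes(b\to c)\leq c$, which telescopes through $a\otimes(a\to b)\leq b$ and $b\otimes(b\to c)\leq c$; exportation (\ref{eqn:prl1H}) is the chain $x\leq(a\otimes b)\to c$ iff $x\otimes a\otimes b\leq c$ iff $x\leq a\to(b\to c)$, the symmetric form coming from commutativity; and (\ref{Prop:ititit}) reduces via (\ref{adj}) to combining $a\otimes(a\to b)\leq b$ with $c\otimes(c\to d)\leq d$ using (\ref{eqn:isot}). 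For the infinitary laws, the join‑preservation (\ref{eqn:prlsiA}) is the crux: the $\geq$ direction is monotonicity, and for $\leq$ one sets $c=\bigvee_i(a\otimes b_i)$, notes $b_i\leq a\to c$ for all $i$ by (\ref{adj}), takes the supremum to get $\bigvee_i b_i\leq a\to c$, and reads (\ref{adj}) back. The meet laws (\ref{eqn:prlsiB}) and (\ref{eqn:prlsiC}), the one‑sided inequalities (\ref{eqn:prlsiD})--(\ref{eqn:prlsiF}), and (\ref{Prop:Aab_AaAb}) then follow by the same adjoint bookkeeping, the last one via $(\bigwedge_i a_i)\otimes(\bigwedge_i(a_i\to b_i))\leq a_j\otimes(a_j\to b_j)\leq b_j$ for each $j$ followed by (\ref{adj}).

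I expect the only genuinely delicate point to be (\ref{eqn:prlsiA}): it is where completeness of $\mathbf{L}$ is actually used, it must invoke adjointness in both directions, and essentially all the other distributivity‑type laws depend on it. The main care is therefore to ensure that monotonicity is already available when (\ref{eqn:prlsiA}) is proved and that the chain of dependencies (\ref{adj}) $\to$ (\ref{eqn:prl1A}) $\to$ (\ref{eqn:isot}) $\to$ (\ref{eqn:prlsiA}) $\to$ everything else is not circular.
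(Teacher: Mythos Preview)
Your argument is correct and is exactly the standard derivation one finds in the references the paper cites; note that the paper itself does not give a proof of this theorem at all, but simply lists the properties and points to \cite{Bel:FRS,Got:TMVL,Haj:MFL}. So there is no ``paper's proof'' to compare against---your sketch \emph{is} the textbook proof, with the dependency chain (adjointness $\Rightarrow$ unit/counit inequalities (\ref{eqn:prl1A}) $\Rightarrow$ monotonicity (\ref{eqn:isot}),(\ref{eqn:isor}) $\Rightarrow$ join preservation (\ref{eqn:prlsiA}) $\Rightarrow$ the rest) laid out in the right non-circular order.

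One cosmetic point: when you write ``taking $b=1$ in (\ref{adj}) \dots\ gives $a\leq b$ iff $1\leq a\to b$'' you reuse the letter $b$ for two different things; the substitution is really into the \emph{first} slot of the (commuted) adjointness $b\otimes a\leq c\Leftrightarrow b\leq a\to c$. State it with fresh variable names to avoid the clash.
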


Residuated lattices may be equipped with further operations. We utilize truth-stressing hedges
(shortly, hedges)
which are functions ${}^\ast:L\to L$ that represent intensifying linguistic modifiers such as ``very'' or ``highly''.
Such modifiers are used in propositions like ``this book is very good'' or, put differently, 
``it is very true that this book is good'', and may be thought of as unary logic connectives \cite{Haj:MFL,Haj:Ovt}. 
If ${}^\ast$ is the hedge representing the modifier ``very'', then the truth degree of
the proposition ``it is very true that $\varphi$'', shortly ``very $\varphi$'', is $||\varphi||^\ast$. 
That is,  one applies ${}^\ast$ to the truth degree of $\varphi$.
We assume that a truth-stressing hedge satisfies the following conditions, which are
inspired by the conditions from \cite{Haj:Ovt}:
\begin{eqnarray}
  1^{\ast} &=& 1, \label{TS:One} \\
  a^{\ast} &\leq& a, \label{TS:Sub} \\
  (a \rightarrow b)^{\ast} &\leq&
  a^{\ast} \rightarrow b^{\ast}, \label{TS:MP} \\
  a^{\ast\ast} &=& a^{\ast}, \label{TS:ID}
\end{eqnarray}
for each $a,b \in L$ ($i \in I$).
Properties (\ref{TS:One})--(\ref{TS:ID}) have a natural interpretation.
For instance, (\ref{TS:One}) says that if a proposition $\varphi$ is true (to degree $1$),
it is also very true (to degree $1$).
(\ref{TS:Sub}) says that if $\varphi$ is very true, then $\varphi$ is true;
(\ref{TS:MP}), which is equivalent to $a^{\ast}\otimes(a\rightarrow b)^{\ast}\leq b^{\ast}$, 
says that if $\varphi$ is very true and $\varphi \Rightarrow \psi$ is very,
then $\psi$ is very true; and (\ref{TS:ID}) says that ``very very $\varphi$''
has the same truth degree as ``very $\varphi$''.

Two boundary cases of hedges are
(i) identity, i.e. $a^{\ast} = a$ ($a \in L$);
(ii) globalization~\cite{TaTi:Gist}:
\begin{eqnarray}
  \label{eqn:glob}
  a^{\ast} = \left\{
    \begin{array}{@{\,}ll}
      1 & \quad \mbox{if}\ a = 1, \\
      0 & \quad \mbox{otherwise.}
    \end{array}
  \right.
\end{eqnarray}
Note that identity is the only hedge on the two-element Boolean algebra $\mathbf{2}$.

Given a complete residuate lattice $\mathbf{L}$, one defines the usual notions
regarding fuzzy sets:
an $\mathbf{L}$-set (fuzzy set) $A$ in universe $U$ is a mapping
$A\!:U \to L$, $A(u)$ being interpreted as 
``the degree to which $u$ belongs to $A$''.
If $U = \{u_1,\dots,u_n\}$ then $A$ can be denoted by
$A = \{\deg{a_1}{u_1},\dots,\deg{a_n}{u_n}\}$ meaning
that $A(u_i)$ equals $a_i$ for each $i=1,\dots,n$.
For brevity, we introduce the following convention:
we write $\{\dots,u,\dots\}$ instead of $\{\dots,\deg{1}{u},\dots\}$,
and we also omit elements of $U$ whose membership degree is zero.
For example, we write $\{u,\deg{0.5}{v}\}$
instead of $\{\deg{1}{u},\deg{0.5}{v},\deg{0}{w}\}$, etc.
Let $L^U$ or $\mathbf{L}^U$ (if the operations on $L$ are to be emphasized) 
denote the collection of all $\mathbf{L}$-sets in $U$.
The basic operations with $\mathbf{L}$-sets are defined componentwise. For instance,
the intersection of $\mathbf{L}$-sets $A,B \in \mathbf{L}^U$ is
an $\mathbf{L}$-set $A \cap B$ in $U$ such that
$(A \cap B)(u) = A(u) \wedge B(u)$ for each $u \in U$, etc.
Binary $\mathbf{L}$-relations (binary fuzzy relations) between $X$ and $Y$
can be thought of as $\mathbf{L}$-sets in the universe $X \times Y$.
That is, a binary $\mathbf{L}$-relation $I\in \mathbf{L}^{X\times Y}$ between
a set $X$ and a set $Y$ is a mapping assigning to each $x\in X$ and each $y\in
Y$ a truth degree $I(x,y)\in L$ (a degree to which $x$ and $y$ are related by
$I$).
An $\mathbf{L}$-set $A\in \mathbf{L}^X$ is called crisp if $A(x)\in\{0,1\}$
for each $x\in X$. Crisp $\mathbf{L}$-sets may obviously be identified with ordinary
sets. For a crisp $A$, we also write $x\in A$ if $A(x)=1$ and 
$x\not\in A$ if $A(x)=0$.
An $\mathbf{L}$-set $A\in\mathbf{L}^X$ is called empty (denoted by $\emptyset$)
if $A(x)=0$ for each $x\in X$. 

For $a\in L$ and $A\in\mathbf{L}^X$, the $\mathbf{L}$-sets $a\otimes A\in\mathbf{L}^X$
and $a\to A\in\mathbf{L}^X$
 are defined by
$(a\otimes A)(x)=a\otimes A(x)$ and $(a\to A)(x)=a\to A(x)$.


\section{Graded Attribute Implications and Their Semantics}
\label{sec:gai}

\subsection{Definition and Validity in Tables with Grades}

Throughout the paper, we assume that $Y$ is a finite and nonempty set of attributes.
The dependencies we consider, such as (\ref{eqn:uflahc}), are defined as follows.

\begin{definition}
A \emph{\emph{(}graded\emph{)} attribute implication} \emph{over $Y$}
is an expression $A \Rightarrow B$,
where $A,B \in \mathbf{L}^Y$ ($A$ and $B$ are $\mathbf{L}$-sets of attributes in $Y$).
\end{definition}

Note that since both $A$ and $B$ may be crisp in $A\Rightarrow B$,  
 i.e. $A(y), B(y)\in\{0,1\}$ for each $y\in Y$,  ordinary attribute implications 
(association rules, functional dependencies) are a particular case of graded attribute implications.
In addition, if $\mathbf{L}$ is the two-element Boolean algebra, graded implications
become just the the ordinary attribute implications.

Graded attribute implications are to be interpreted in tables whose entries
contain grades to which objects (represented by rows) have attributes (represented
by columns). Such tables are represented as triplets $\tu{X,Y,I}$ consisting of
non-empty sets $X$ of objects and $Y$ of attributes and an $\mathbf{L}$-relation
$I:X\times Y\to L$ for which the degree $I(x,y)$ is interpreted as the grade to which
the attribute $y\in Y$ applies to the object $x\in X$.

Consider first the implication 
\begin{eqnarray}\label{eqn:AIexample}
   \{{}^{1\!\!}/y_1,{}^{0.5\!\!}/y_3\} \!\Rightarrow\!
  \{{}^{0.8\!\!}/y_2,{}^{1\!\!}/y_4\}
\end{eqnarray}
and the table
%
%
\begin{equation}\label{eqn:simple_table}
 \mbox{
    \begin{tabular}{@{\,}c@{~}|@{~}c@{~~}c@{~~}c@{~~}c@{\,}}
      $I$ & $y_1$ & $y_2$ & $y_3$ & $y_4$  \\
      \hline
      $x_1$ & 1.0 & 0.9 & 0.8 & 1.0  \\
      $x_2$ & 1.0 & 0.7 & 0.8 & 1.0 \\
      $x_3$ & 0.9 & 0.5 & 0.8 & 1.0 
    \end{tabular}
  }
\end{equation}
On intuitive grounds,  (\ref{eqn:AIexample}) is satisfied by the object $x_1$ because
$x_1$ has all the attributes from the antecedent $A=\{{}^{1\!\!}/y_1,{}^{0.5\!\!}/y_3\}$
to the specified grades, i.e. $A(y_1)\leq I(x_1,y_1)$ and $A(y_3)\leq I(x_1,y_3)$,
and has also the attributes from the consequent $B=\{{}^{0.8\!\!}/y_2,{}^{1\!\!}/y_4\}$
to the specified grades, since $B(y_2)\leq I(x_1,y_2)$ and $B(y_4)\leq I(x_1,y_4)$.
While $x_2$ has the objects from $A$ to the specified grades as well, $y_2$
applies to $x_2$ to grade $0.7$ which is smaller than the grade $0.8$ prescribed
by $B$. Since $0.7$ is only slightly smaller than $0.8$, one naturally considers
(\ref{eqn:AIexample}) as an implication which is \emph{almost satisfied} by the object $x_2$, 
that is, satisfied to a high degree. 
The object $x_3$ does not have the attributes from $A$ to the specified grades,
because it posseses the attribute $y_1$ to grade $0.9$ while the grade prescribed
by $A$ is $1$. In testing the validity of (\ref{eqn:AIexample}) in the table, one may
therefore want to disregard $x_3$. However, if one wishes to work consistently with 
partial satisfiability, the same way one works with classic satisfiability,
one should involve $x_3$ and modify the test to take into account that $x_3$ satisfies
the antecendent $A$ partially.
Clearly, both  approaches, one in which only the objects fully satisfying $A$ participate in testing
the validity of $A\Rightarrow B$ and the other in which also objects partially satisfying $A$
paticipate in the test, coincide in the classical case with $0$ and $1$ as the only
grades. In the general case with intermediate grades involved, both approaches are plausible
and lead to two, different kinds of sematnics. As we show next, it turns out that both of the approaches
can conveniently be regarded as two particular cases of a general way to assess validity
of $A\Rightarrow B$ that is parameterized by how one evaluates the satifaction of $A$.

We now provide a definition of validity of a graded attribute implication $A\Rightarrow B$
in a table $\tu{X,Y,I}$ with grades. The basic structures in which $A\Rightarrow B$ is
evaluated are $\mathbf{L}$-sets of attributes. 
The rationale is that every row of $\tu{X,Y,I}$ corresponding to
the object $x\in X$ may be seen as the $\mathbf{L}$-set $I_x\in \mathbf{L}^Y$ given by 
$I_x(y)=I(x,y)$ for every $y\in Y$. Consider thus $M\in\mathbf{L}^Y$ representing object $x$, 
i.e. $M(y)$ is interpreted as the grade to which $x$ has $y$. 
According to (\ref{eqn:AIval}),  the truth degree $||A\Rightarrow B||_M$ to which $A\Rightarrow B$
is valid in $M$ is intended to be the truth degree of the proposition
``if $x$ has all attributes from $A$ then $x$ has all attributes from $B$'',
or equivalently, ``if $A$ is contained in $M$ then $B$ is contained in $M$''.

Containment of an $\mathbf{L}$-set $C$ in an $\mathbf{L}$-set $D$ is conveniently represented by the 
degree $S(C,D)$ of inclusion of $C$ in $D$ \cite{Got:TMVL}, defined by
\begin{eqnarray}
  S(C,D) = \textstyle{\bigwedge}_{y \in Y}\bigl(C(y) \rightarrow D(y)\bigr).
  \label{Eq:S}
\end{eqnarray}
$S(C,D)$ is the truth degree of proposition ``for each $y\in Y$: if $y$ belongs to $C$ then $y$ belongs to $D$''.
Clearly, $S$ is a graded relation which generalizes the inclusion relation of classical sets
in that if $L=\{0,1\}$ then $S$ is just the characteristic function of classical inclusion.
In particular, we write $C \subseteq D$ if $S(C,D) = 1$.
As a consequence of the fact that $a\to b=1$ if{}f $a\leq b$ we get that 
$C \subseteq D$ if and only if $C(y) \leq D(y)$ for each $y \in Y$,
i.e. $C\subseteq D$ means that $C$ is ``fully contained'' in $D$.
In what follows we use the well-known properties of graded inclusion \cite{Got:TMVL}.

With $S(A,M)$ and $S(B,M)$ being the degrees to which $A$ and $B$ are contained in $M$, 
respectively, one can define the degree  to which $A\Rightarrow B$
is valid in $M$ by $||A\Rightarrow B||_M=S(A,M)\to S(B,M)$. We provide a slightly more
general definition to account for both approaches described above, utilizing the notion
of hedge introduced in Section \ref{sec:p}.

\begin{definition}
Let $\mathbf{L}$ be a complete residuated lattice $\mathbf{L}$ with a truth-stressing 
hedge ${}^{\ast}$. 
The \emph{degree $||A\Rightarrow B||_M \in L$ to which $A \Rightarrow B$
is valid in an $\mathbf{L}$-set $M$} of attributes is defined by
\begin{eqnarray}\label{eqn:valDeg}
  ||A \Rightarrow B||_M = S(A,M)^{\ast} \rightarrow S(B,M).
\end{eqnarray}
\end{definition}

\begin{remark}
  (a) 
  If ${}^{\ast}$ is the globalization, i.e. $a^\ast=1$ for $a=1$ and $a^\ast=0$
  for $a<1$, we get $||A\Rightarrow B||_M=S(B,M)$ if $A\subseteq M$ and
  $||A\Rightarrow B||_M=1$ if $A\not\subseteq M$. Namely, 
   if $A\subseteq M$ then $S(A,M)^\ast \to S(B,M)=1^\ast\to S(B,M)=1\to S(B,M)=1$,
   and 
  if $A\not\subseteq M$, i.e. $S(A,M)<1$,
  then $S(A,M)^\ast \to S(B,M)=0\to S(B,M)=1$.
  This corresponds to the first approach mentioned above, in which only objects
  fully satisfying $A$ participate in testing validity.
  In addition, $A\Rightarrow B$ is fully true, i.e. $||A\Rightarrow B||_M=1$, if and only if
  \begin{equation}\label{eqn:AIglob}
      A\subseteq M \text{ implies } B\subseteq M.
  \end{equation}
  In this case, the degrees $A(y)$ and $B(y)$ may be seen as thresholds. Namely, according to
  (\ref{eqn:AIglob}), $A\Rightarrow B$ is satisfied by the object $x$ represented by $M$
  if and only if each attribute $y\in Y$  applies to $x$ in grade at least $A(y)$, 
  then each attribute $y\in Y$  applies to $x$ in grade at least $B(y)$.
  
  (b) 
  If ${}^{\ast}$ is the identity, then $||A\Rightarrow B||_M=S(A,M)\to S(B,M)$.
  This corresponds to the second approach mentioned above, in which also objects
  partially satisfying $A$ participate in the test of validity.
  In addition, since $a\to b=1$ if{}f $a\leq b$ for any $a,b\in L$,
  $A\Rightarrow B$ is fully true if and only if
  \begin{equation}\label{eqn:AIid}
      S(A,M)\leq S(B,M).
  \end{equation}
%

  (c)
  Globalization and identity represent the two natural ways to interpret graded
  attribute implications. In what follows, we develop the results for general hedges
  ${}^\ast$, covering thus both globalization and identity as particular cases.
\end{remark}

For a system ${\cal M}$ of $\mathbf{L}$-sets in $Y$, the degree
$||A \Rightarrow B||_{\cal M}$ to which $A \Rightarrow B$ is valid in (each
$M$ from) $\cal M$ is defined by
\begin{eqnarray}
  ||A \Rightarrow B||_{\cal M} =\textstyle \bigwedge_{M\in{\cal M}}||A \Rightarrow B||_M.
\end{eqnarray}
The degree $||A \Rightarrow B||_{\langle X,Y,I\rangle}$ to which $A \Rightarrow B$ is valid in
 a table $\langle X,Y,I\rangle$ with grades is defined by
\begin{eqnarray}
   ||A \Rightarrow B||_{\langle X,Y,I\rangle} =  ||A \Rightarrow B||_{\{I_x \,|\, x\in X\}} .
\end{eqnarray}
Recall that $I_x$ represents the $x$th row in $\tu{X,Y,I}$, i.e. for each $y\in Y$,
\begin{eqnarray}\label{eqn:Ix}
   I_x(y)=I(x,y).
\end{eqnarray}
Hence
 $||A \Rightarrow B||_{\langle X,Y,I\rangle}$ is naturally interpreted
as the degree to which $A\Rightarrow B$ is valid in every row of table $\tu{X,Y,I}$.

\begin{example}
   Consider again the implication $A\Rightarrow B$ in (\ref{eqn:AIexample}), 
   the table in (\ref{eqn:simple_table}), and the informal requirements
   discussed above in this example.
   Let $\mathbf{L}$ be the complete residuated lattice given by the  {\L}ukasiewicz
   operations on $[0,1]$. 
   Since $S(A,I_{x_1})=1$ and $S(B,I_{x_1})=1$, we get
   \[
     ||A\Rightarrow B||_{I_{x_1}}=S(A,I_{x_1})^\ast\to S(B,I_{x_1})=1\to 1=1.
   \]
   $A\Rightarrow B$ is thus fully satisfied by $x_1$, independently of the choice of ${}^\ast$
   because $1^\ast=1$ is always the case.
   For $x_2$, we have again $S(A,I_{x_2})=1$ but in this case, 
   $S(B,I_{x_2})=\bigwedge_{y\in Y}((B(y)\to I(x_2,y)))=B(y_2)\to I(x_2,y_2)=0.8\to 0.7=
   0.9$, whence
   \[
     ||A\Rightarrow B||_{I_{x_2}}=S(A,I_{x_2})^\ast\to S(B,I_{x_2})=1\to 0.9=0.9,
   \]
   again independently of the choice of ${}^\ast$. This corresponds to the intuitive requirement
   that $A\Rightarrow B$ be almost satisfied by $x_2$ because the grades specified by $B$
   are almost attained by the object $x_2$. 
   For $x_3$, we have $S(A,I_{x_3})=A(y_1)\to I(x_3,y_1)=1\to 0.9=0.9$, i.e. $A$ is only partially
   satisfied by $x_3$. According to the first approach to the semantics of implications,
   $x_3$ should not participate in the test of validity. Indeed, for ${}^\ast$ being globalization
   which corresponds to the first approach, we obtain
   \[
     ||A\Rightarrow B||_{I_{x_3}}=S(A,I_{x_3})^\ast\to S(B,I_{x_3})=0\to S(B,I_{x_3})=1,
   \]   
   because $0\to a=1$ for any degree $a$.
   For ${}^{\ast}$ being the identity, we get
   \[
     ||A\Rightarrow B||_{I_{x_3}}=S(A,I_{x_3})\to S(B,I_{x_3})=0.9\to 0.7=0.8,
   \]   
   which corresponds to the second approach to the semantics. We see that
   $x_3$ enters the test of validity in that the degree $S(B,I_{x_3})=0.7$ to which $x_3$
   satisfies $B$ is modified by the degree $S(A,I_{x_3})=0.9$ to which $x_3$ satisfies
   $A$. In particular, the modification is accomplished by shifting up the degree
   $S(B,I_{x_3})$; the smaller $S(A,I_{x_3})$ the more significant the shift. This is because
   we always have $S(A,I_{x_3})\to S(B,I_{x_3})\geq S(B,I_{x_3})$ and because $\to$ is antitone
   in the first argument.

   This example also makes it clear that testing to what degree an object $x$ satisfies 
   a consequent $B$ (or antecedent $A$) actually amounts to comparing the degrees $B(y)$ and $I(x,y)$
   for every attribute $y$. If $B(y)\leq I(x,y)$, the test is passed with degree $1$ for attribute $y$.
   If $B(y)\not\leq I(x,y)$, the test is passed with degree $B(y)\to I(x,y)<1$ for attribute $y$.
   In the end, the thus obtained degrees are aggregated by means of infimum which yields
   the degree to which $x$ satisfies $B$. Taking $B(y)\to I(x,y)$ if $B(y)\not\leq I(x,y)$ is appropriate
   because $\to$ is antitone in the first and isotone in the second argument.
   For example, for the {\L}ukasiewicz operations, $B(y)\to I(x,y)$ is $1-B(y)+I(x,y)$; 
   for the Goguen operations, $B(y)\to I(x,y)=I(x,y)/B(y)$.
\end{example}

\subsection{Theories, Models, and Semantic Entailment}
\label{sec:tmse}

When reasoning with degrees, theories are naturally conceived as $\mathbf{L}$-sets of formulas.
A (\emph{graded}) \emph{theory}  is therefore an $\mathbf{L}$-set of graded implications over $Y$. The degree 
$T(A\Rightarrow B)$ is considered as the degree to which we assume the validity of $A\Rightarrow B$.
This approach corresponds to the notion of a theory as an $\mathbf{L}$-set (fuzzy set) of axioms in fuzzy
logic~\cite{Pav:Ofl}, see also \cite{Ger:FL,Haj:MFL,NoPeMo:MPFL}.
If $T$ is a \emph{crisp theory}, i.e. $T(A\Rightarrow B)$ is $0$ or $1$ for every $A\Rightarrow B$,  
we write $A\Rightarrow B\in T$ if $T(A\Rightarrow B)=1$ and $A\Rightarrow B\not\in T$
if $T(A\Rightarrow B)=0$.

For a theory $T$, the set $\mathrm{Mod}(T)$
of all \emph{models} of $T$ is defined by
\[
   \mathrm{Mod}(T)=\{M\in\mathbf{L}^Y \,|\, \mbox{for each }
   A,B\in\mathbf{L}^Y: T(A\Rightarrow B)\leq ||A \Rightarrow B||_M\}.
\]
That is, $M\in\mathrm{Mod}(T)$ means that for each attribute implication 
$A\Rightarrow B$, the degree to which $A\Rightarrow B$ holds in $M$ is higher
than or at least equal to the degree $T(A\Rightarrow B)$ prescribed by $T$.
In particular, for a crisp $T$ we have
$\mathrm{Mod}(T)=\{M\in\mathbf{L}^Y \,|\, \mbox{for each }
   A\Rightarrow B\in T: ||A \Rightarrow B||_M=1\}$.

The degree $||A \Rightarrow B||_T \in L$ to which $A \Rightarrow B$
\emph{semantically follows} from a fuzzy set $T$ of attribute implications is defined by
\begin{eqnarray}
  \label{eqn:entAI}
  ||A \Rightarrow B||_T =
  \textstyle\bigwedge_{M \in \mathrm{Mod}(T)}||A \Rightarrow B||_M.
\end{eqnarray}
That is, $||A \Rightarrow B||_T$ may be seen as the degree to which $A\Rightarrow B$ is valid
in every model of $T$.

We need the following lemma.

\begin{lemma}\label{Th:Sem_graded}
  \upshape{(i):}
  $c \rightarrow S(B,M) = S(c \otimes B,M)=S(B,c\to M)$;

  \upshape{(ii):}
  $c \to ||A \Rightarrow B||_M = 
  ||A \Rightarrow c \otimes B||_M$.

  \upshape{(iii):}
  $c \leq ||A \Rightarrow B||_M$ if{}f\/
  $||A \Rightarrow c \otimes B||_M = 1$.
%
\end{lemma}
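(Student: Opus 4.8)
The plan is to treat part (i) as the main computation and then obtain (ii) and (iii) as short corollaries, so that everything reduces to manipulating the residuated-lattice identities from Theorem~\ref{thm:prl1} componentwise over $Y$. The starting observation is that the scalar operations act componentwise, i.e.\ $(c\otimes B)(y)=c\otimes B(y)$ and $(c\to M)(y)=c\to M(y)$, so the three quantities in (i) are all infima over $y\in Y$ of related one-variable expressions.

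First I would prove (i). Unfolding the definition (\ref{Eq:S}), I would write
\[
  c\to S(B,M)=c\to\textstyle\bigwedge_{y\in Y}\bigl(B(y)\to M(y)\bigr).
\]
By the distribution law (\ref{eqn:prlsiB}), which lets $\to$ pass inside an infimum in its second argument, this equals $\bigwedge_{y\in Y}\bigl(c\to(B(y)\to M(y))\bigr)$. Now the exchange/currying property (\ref{eqn:prl1H}), namely $(a\otimes b)\to c=a\to(b\to c)=b\to(a\to c)$, rewrites each term in two ways simultaneously: as $(c\otimes B(y))\to M(y)$ and as $B(y)\to(c\to M(y))$. Taking the infimum of the first form gives $S(c\otimes B,M)$ and of the second gives $S(B,c\to M)$, which establishes both equalities of (i) from a single computation.

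For (ii), I would expand the left side by the definition (\ref{eqn:valDeg}), obtaining $c\to\bigl(S(A,M)^{\ast}\to S(B,M)\bigr)$, and apply (\ref{eqn:prl1H}) once more to move $c$ inside, getting $S(A,M)^{\ast}\to\bigl(c\to S(B,M)\bigr)$. Part~(i) then replaces $c\to S(B,M)$ by $S(c\otimes B,M)$, and the resulting expression is exactly $||A\Rightarrow c\otimes B||_M$ by definition. Finally, (iii) is immediate from (ii) together with the characterization (\ref{eqn:prl1B}) that $a\leq b$ iff $a\to b=1$: the condition $c\leq||A\Rightarrow B||_M$ is equivalent to $c\to||A\Rightarrow B||_M=1$, which by (ii) is precisely $||A\Rightarrow c\otimes B||_M=1$.

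There is no genuine obstacle here; the only thing requiring care is choosing the correct form of each identity. In particular, (\ref{eqn:prl1H}) is a two-sided currying law, and it is exactly this two-sidedness that delivers both equalities in (i) at once; and (\ref{eqn:prlsiB}) must be applied to the \emph{infimum} appearing in the defining formula (\ref{Eq:S}) for $S$ in its right-hand argument, not to a supremum, so that the distribution is an equality rather than merely an inequality.
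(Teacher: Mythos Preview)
Your proof is correct and follows essentially the same approach as the paper: both use (\ref{eqn:prlsiB}) to push $c\to$ through the infimum and (\ref{eqn:prl1H}) to curry, then derive (ii) and (iii) as immediate corollaries via (\ref{eqn:prl1H}) and (\ref{eqn:prl1B}). The only cosmetic difference is that you extract both equalities of (i) in one pass from the two-sided form of (\ref{eqn:prl1H}), whereas the paper proves $c\to S(B,M)=S(c\otimes B,M)$ in detail and then remarks that $S(c\otimes B,M)=S(B,c\to M)$ is an easy consequence of (\ref{eqn:prl1H}).
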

\begin{proof}
  (i): On account of (\ref{eqn:prlsiB}) and (\ref{eqn:prl1H}) and we have
  \begin{eqnarray*}
    &&\kern-14mm
    c \rightarrow S(B,M) =
    c \rightarrow \textstyle\bigwedge_{y \in Y}(B(y) \rightarrow M(y)) = 
    \textstyle\bigwedge_{y \in Y}(c \rightarrow (B(y) \rightarrow M(y))) = \\
         &=&
    \textstyle\bigwedge_{y \in Y}((c \otimes B(y)) \rightarrow M(y)) = 
    \textstyle\bigwedge_{y \in Y}((c \otimes B)(y) \rightarrow M(y)) =
    S(c \otimes B, M).
  \end{eqnarray*}
 $S(c \otimes B,M)=S(B,c\to M)$ is an easy consequence of (\ref{eqn:prl1H}).

  (ii): Using (\ref{eqn:prl1H}) and (i),
   \begin{eqnarray*}
    && \kern-14mm
      c \to ||A \Rightarrow B||_M = c\to(S(A,M)^\ts\to S(B,M)) =
   S(A,M)^\ts\to(c\to S(B,M)) =\\
    &=& S(A,M)^\ts\to S(c\otimes B,M) =  ||A \Rightarrow c \otimes B||_M.
   \end{eqnarray*}

  (iii): Direct consequence of (ii) and (\ref{eqn:prl1B}).
%
\end{proof}

Lemma~\ref{Th:Sem_graded} implies 
every graded theory may be transformed to a crisp theory with the same 
models and thus (degrees of) consequences:

\begin{theorem}\label{thm:cT}
  Let $T$ be a theory, $A\Rightarrow B$ be a graded attribute implication. 
For the crisp theory $\mathrm{cr}(T)$ 
  defined by
  \begin{equation}
    \label{eqn:cT}
     \mathrm{cr}(T) = \{ A\Rightarrow T(A\Rightarrow B)\otimes B \,|\,
     A,B\in\mathbf{L}^Y \mbox{ and } T(A\Rightarrow B)\otimes  
     B\not=\emptyset\}
  \end{equation}
  we have
  \begin{eqnarray}
     \label{eqn:modTT}
     \mathrm{Mod}(T) &=& \mathrm{Mod}(\mathrm{cr}(T)),\\
     \label{eqn:semTT}
     ||A\Rightarrow B||_T &=& ||A\Rightarrow B||_{\mathrm{cr}(T)}. 
  \end{eqnarray}
\end{theorem}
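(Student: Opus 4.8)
The plan is to prove the model-class identity (\ref{eqn:modTT}) directly and then read off (\ref{eqn:semTT}) for free, since by the defining formula (\ref{eqn:entAI}) each of $||A\Rightarrow B||_T$ and $||A\Rightarrow B||_{\mathrm{cr}(T)}$ is nothing but the infimum of $||A\Rightarrow B||_M$ taken over the respective class of models; once the two classes coincide, the two infima are identical. The engine of the whole argument is part (iii) of Lemma~\ref{Th:Sem_graded}, which turns a graded validity constraint $c\leq||A\Rightarrow B||_M$ into the crisp constraint $||A\Rightarrow c\otimes B||_M=1$. This is exactly the device that lets a graded axiom be traded for a crisp one with a ``scaled'' consequent, which is what the definition (\ref{eqn:cT}) of $\mathrm{cr}(T)$ encodes.

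To establish (\ref{eqn:modTT}) I would unfold both membership conditions and compare them pairwise. By definition $M\in\mathrm{Mod}(T)$ says that $T(A\Rightarrow B)\leq||A\Rightarrow B||_M$ for every $A,B\in\mathbf{L}^Y$; applying Lemma~\ref{Th:Sem_graded}(iii) with $c=T(A\Rightarrow B)$ rewrites each of these as the equivalent crisp condition $||A\Rightarrow T(A\Rightarrow B)\otimes B||_M=1$. On the other side, $M\in\mathrm{Mod}(\mathrm{cr}(T))$ says precisely that $||A\Rightarrow T(A\Rightarrow B)\otimes B||_M=1$ for every pair $A,B$ whose scaled consequent actually enters $\mathrm{cr}(T)$, namely those with $T(A\Rightarrow B)\otimes B\neq\emptyset$. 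Hence the two conditions differ only over the pairs for which $T(A\Rightarrow B)\otimes B$ is empty.

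The single point requiring care --- the main obstacle, such as it is --- is to check that the pairs discarded from $\mathrm{cr}(T)$ carry no information, so that omitting them does not shrink the model class. Here I would note that whenever $T(A\Rightarrow B)\otimes B=\emptyset$, formula (\ref{Eq:S}) together with $0\to a=1$ from (\ref{eqn:prl1C}) gives $S(\emptyset,M)=\bigwedge_{y\in Y}(0\to M(y))=1$, and then $a\to 1=1$ from (\ref{eqn:prl1C}) gives $||A\Rightarrow T(A\Rightarrow B)\otimes B||_M=S(A,M)^\ast\to 1=1$ for every $M$. Thus the condition attached to an empty-consequent pair is vacuously satisfied, so including or excluding these pairs leaves the model class unchanged. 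Chaining this observation with the equivalence supplied by Lemma~\ref{Th:Sem_graded}(iii) yields $M\in\mathrm{Mod}(T)$ if{}f $M\in\mathrm{Mod}(\mathrm{cr}(T))$, which is (\ref{eqn:modTT}).

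Finally (\ref{eqn:semTT}) costs nothing more: substituting $\mathrm{Mod}(T)=\mathrm{Mod}(\mathrm{cr}(T))$ into (\ref{eqn:entAI}) gives $||A\Rightarrow B||_T=\bigwedge_{M\in\mathrm{Mod}(T)}||A\Rightarrow B||_M=\bigwedge_{M\in\mathrm{Mod}(\mathrm{cr}(T))}||A\Rightarrow B||_M=||A\Rightarrow B||_{\mathrm{cr}(T)}$, completing the proof.
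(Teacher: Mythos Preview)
Your proof is correct and follows exactly the approach sketched in the paper: derive (\ref{eqn:modTT}) directly from Lemma~\ref{Th:Sem_graded}(iii) and then obtain (\ref{eqn:semTT}) as an immediate consequence via (\ref{eqn:entAI}). You have in fact supplied the details the paper omits, in particular the verification that the empty-consequent pairs excluded from $\mathrm{cr}(T)$ impose no constraint on models.
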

\begin{proof}
%
  (\ref{eqn:modTT}) directly using (iii) of  Lemma~\ref{Th:Sem_graded}.
  (\ref{eqn:semTT}) is a consequence of (\ref{eqn:modTT}).
\end{proof}

Furthermore,
Lemma~\ref{Th:Sem_graded} enables us to reduce the concept of a degree of 
entailment of  to the concept of entailment in degree $1$ (full entailment): 

\begin{theorem}\label{Th:Sem_graded3}
  For a graded theory $T$ and an implication $A\Rightarrow B$ we have
\[
  ||A \Rightarrow B||_T = \textstyle\bigvee\{c \in L \,|~
  ||A \Rightarrow c \otimes B||_T = 1\}.
\]
\end{theorem}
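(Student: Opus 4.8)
The plan is to reduce everything to Lemma~\ref{Th:Sem_graded}(iii), which already gives the pointwise bridge between the degree of validity in a single model and validity to degree $1$ of a rescaled implication. Write $d = ||A \Rightarrow B||_T = \bigwedge_{M \in \mathrm{Mod}(T)}||A \Rightarrow B||_M$ for the left-hand side. The goal is to show that the set $\{c \in L \mid ||A \Rightarrow c \otimes B||_T = 1\}$ over which the supremum on the right is taken is exactly the principal ideal $\{c \in L \mid c \leq d\}$; once this is established, the claim follows immediately because the supremum of a principal ideal $\{c \mid c \leq d\}$ is $d$ itself (and is in fact attained, since $d \leq d$).

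The heart of the argument is a chain of equivalences, for a fixed $c \in L$. First, unfolding the definition \eqref{eqn:entAI} of semantic entailment, $||A \Rightarrow c \otimes B||_T = 1$ means $\bigwedge_{M \in \mathrm{Mod}(T)}||A \Rightarrow c \otimes B||_M = 1$. Since every term $||A \Rightarrow c \otimes B||_M$ lies below the top element $1$ of $\mathbf{L}$, an infimum of such terms equals $1$ if and only if each individual term equals $1$; hence the condition is equivalent to requiring $||A \Rightarrow c \otimes B||_M = 1$ for every $M \in \mathrm{Mod}(T)$. Next, I would apply Lemma~\ref{Th:Sem_graded}(iii) to each model separately, which states precisely that $||A \Rightarrow c \otimes B||_M = 1$ if{}f $c \leq ||A \Rightarrow B||_M$. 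Combining these, $||A \Rightarrow c \otimes B||_T = 1$ holds if and only if $c \leq ||A \Rightarrow B||_M$ for all $M \in \mathrm{Mod}(T)$, which by the defining property of infimum is equivalent to $c \leq \bigwedge_{M \in \mathrm{Mod}(T)}||A \Rightarrow B||_M = d$.

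This shows the two sets coincide, and taking suprema yields $\bigvee\{c \mid ||A \Rightarrow c \otimes B||_T = 1\} = \bigvee\{c \mid c \leq d\} = d = ||A \Rightarrow B||_T$, as desired. There is essentially no hard obstacle here: the only point that requires a moment's care is the step that an infimum of elements bounded above by $1$ equals $1$ precisely when all the elements equal $1$, and this is immediate in a complete lattice with greatest element $1$. The substance of the theorem is entirely carried by Lemma~\ref{Th:Sem_graded}(iii); the present statement is its straightforward globalization from single models to all models of $T$ via the commuting of the two infima.
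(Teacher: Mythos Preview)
Your proof is correct and follows essentially the same route as the paper: both arguments reduce to Lemma~\ref{Th:Sem_graded}(iii), rewriting the infimum $\bigwedge_{M \in \mathrm{Mod}(T)}||A \Rightarrow B||_M$ as the supremum of its lower bounds and then using the lemma pointwise to identify those lower bounds with the degrees $c$ for which $||A \Rightarrow c \otimes B||_T = 1$. The paper's version is simply a more compressed presentation of the same chain of equivalences you spell out.
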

\begin{proof}
    Using (iii) of Lemma~\ref{Th:Sem_graded}, we have
    \begin{eqnarray*}
      &&\kern-6mm
      ||A \Rightarrow B||_T =
      \textstyle\bigwedge_{M \in \mathrm{Mod}(T)}||A \Rightarrow B||_M = \\
      &&\kern-4mm
      = \textstyle\bigvee\{c \in L \,|\, c \leq ||A \Rightarrow B||_M
      \mbox{~for~each~} M \in \mathrm{Mod}(T)\} = 
       \textstyle\bigvee\{c \in L \,|~||A \Rightarrow c \otimes B||_T = 1\}.
    \end{eqnarray*}
\end{proof}  

Therefore, the concept of a degree of entailment by graded theories
may be reduced to that of entailment in degree 1 (full entailment) by crisp theories:

\begin{corollary}\label{thm:reducT}
  \(
    ||A \Rightarrow B||_T = \textstyle\bigvee\{c \in L \,|~
    ||A \Rightarrow c \otimes B||_{\mathrm{cr}(T)} = 1\},
  \)
  with $\mathrm{cr}(T)$ defined by \textup{(\ref{eqn:cT})}.
\end{corollary}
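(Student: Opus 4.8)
The plan is to chain together the two immediately preceding theorems, since this corollary is essentially their composition. First I would recall from Theorem~\ref{Th:Sem_graded3} that $||A \Rightarrow B||_T = \bigvee\{c \in L \mid ||A \Rightarrow c \otimes B||_T = 1\}$. This already accomplishes half the work: it reduces a \emph{degree} of entailment from the graded theory $T$ to \emph{full} ($=1$) entailment, from the same graded theory $T$, of the scaled implications $A \Rightarrow c\otimes B$.

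The remaining step is to replace $T$ by its crispification $\mathrm{cr}(T)$ inside the supremum. For this I would invoke Theorem~\ref{thm:cT}, applied not to $A \Rightarrow B$ but to the implication $A \Rightarrow c\otimes B$ for each fixed $c \in L$. Theorem~\ref{thm:cT} is stated for an arbitrary graded attribute implication, so this substitution is legitimate, and its conclusion~(\ref{eqn:semTT}) gives $||A \Rightarrow c\otimes B||_T = ||A \Rightarrow c\otimes B||_{\mathrm{cr}(T)}$ for every $c$. In particular the two degrees equal $1$ simultaneously, so the two index sets $\{c \in L \mid ||A \Rightarrow c\otimes B||_T = 1\}$ and $\{c \in L \mid ||A \Rightarrow c\otimes B||_{\mathrm{cr}(T)} = 1\}$ coincide.

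Substituting the common index set into the supremum from Theorem~\ref{Th:Sem_graded3} then yields $||A \Rightarrow B||_T = \bigvee\{c \in L \mid ||A \Rightarrow c\otimes B||_{\mathrm{cr}(T)} = 1\}$, which is exactly the claim, with $\mathrm{cr}(T)$ defined by~(\ref{eqn:cT}).

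I do not expect any genuine obstacle: all the analytic content — the reduction of graded entailment to degree-$1$ entailment via part (iii) of Lemma~\ref{Th:Sem_graded}, and the model-equivalence $\mathrm{Mod}(T)=\mathrm{Mod}(\mathrm{cr}(T))$ of~(\ref{eqn:modTT}) — has already been discharged in the two theorems being combined. The only point demanding a little care is to apply Theorem~\ref{thm:cT} to the scaled implication $A \Rightarrow c\otimes B$ rather than to $A \Rightarrow B$ itself, and to note that the crispification $\mathrm{cr}(T)$ appearing there is built from $T$ alone and does not depend on $c$, so a \emph{single} crisp theory serves uniformly across the entire supremum.
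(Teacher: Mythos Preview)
Your proposal is correct and matches the paper's intended argument: the corollary is stated without proof immediately after Theorem~\ref{Th:Sem_graded3}, with the surrounding text indicating it is simply the composition of Theorem~\ref{Th:Sem_graded3} with Theorem~\ref{thm:cT}. Your observation that $\mathrm{cr}(T)$ is independent of $c$ and that Theorem~\ref{thm:cT} applies to the scaled implication $A\Rightarrow c\otimes B$ for each $c$ is exactly the small amount of care the corollary requires.
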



\subsection{Closure Properties of Models of Graded Implications}
\label{sec:cp}

In the classic setting, models of theories of implications (equivalently, functional dependencies)
are closed under intersections. This enables one to test whether $A\Rightarrow B$ follows from
a theory $T$ by checking whether $A\Rightarrow B$ is valid in a single model of $T$, namely
the least model of $T$ containing $A$ \cite{GaWi:FCA,Mai:TRD}. In this section we establish
the corresponding results for a setting with grades.

Recall from \cite{BeFuVy:Fcots} that a system $\mathcal{S}\subseteq \mathbf{L}^Y$ of $\mathbf{L}$-sets
in $Y$ is called an $\mathbf{L}^\ast$-closure system if it is closed under intersections and $a^\ast$-shifts,
i.e. satisfies the following conditions:
\begin{eqnarray}
  &&\label{eqn:ci}
  \text{if $A_j\in\mathcal{S}$ for $j\in J$ then $\textstyle\bigcap_{j\in J} A_j\in\mathcal{S}$,}\\
  &&\label{eqn:cs}
 \text{if $a\in L$ and $A\in\mathcal{S}$ then $a^\ast\to A\in\mathcal{S}$}. 
\end{eqnarray}
Note that here, $(\bigcap_{j\in J} A_i)(y)=\bigwedge_{j\in J} A_j(y)$ and
  $(a^\ast\to A)(y) = a^\ast\to A(y)$.
Recall furthermore that an $\mathbf{L}^*$-closure operator \cite{BeFuVy:Fcots}
on a set $Y$ is a mapping $C\!: \mathbf{L}^Y \to \mathbf{L}^Y$
satisfying, for each $A,A_1,A_2 \in \mathbf{L}^Y$,
\begin{eqnarray}
  A  &\subseteq& C(A), \label{eqn:fco1} \\
  S(A_1,A_2)^* &\leq& S(C(A_1),C(A_2)), \label{eqn:fco2} \\
  C(A) &=& C(C(A)),\label{eqn:fco3}
\end{eqnarray}
where $S$ is the degree of inclusion defined by (\ref{Eq:S}).
If $L=\{0,1\}$,  $\mathbf{L}^\ast$-closure systems 
and $\mathbf{L}^\ast$-closure operators
may be identified with ordinary closure systems and closure operators \cite{DaPr:ILO}, since
(\ref{eqn:cs}) is satisfied for free and (\ref{eqn:fco2}) asserts monotony of $C$
with respect to set inclusion in this case.
According to \cite{BeFuVy:Fcots}, letting for an $\mathbf{L}^\ast$-closure system
$\mathcal{S}$ and an $\mathbf{L}^\ast$-closure operator $C$,
  \begin{eqnarray}
    C_{\cal S}(B) = \textstyle\bigcap_{i\in I}(S(B,A_i)^* \rightarrow A_i)
    \label{eqn:leastcl}
  \end{eqnarray}
and
\[
  {\cal S}_C = \{A\in \mathbf{L}^U \,|\, A = C(A)\},
\]
$C_{\cal S}$ is an $\mathbf{L}^\ast$-closure operator, ${\cal S}_C$ is an 
$\mathbf{L}^\ast$-closure system, and the mappings ${\cal S}\mapsto C_{\cal S}$
and $C\mapsto{\cal S}_C$ are mutually inverse bijections.

%
%

\begin{theorem}\label{thm:Modcls}
  $\mathrm{Mod}(T)$ is an $\mathbf{L}^*$-closure system in $Y$ for any graded theory $T$ of implications
  over $Y$.
\end{theorem}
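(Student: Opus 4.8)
The plan is to verify the two defining conditions (\ref{eqn:ci}) and (\ref{eqn:cs}) of an $\mathbf{L}^\ast$-closure system directly from the definition of $\mathrm{Mod}(T)$. Throughout I will use the reformulation that $M\in\mathrm{Mod}(T)$ iff for all $A,B\in\mathbf{L}^Y$ one has $T(A\Rightarrow B)\otimes S(A,M)^\ast\leq S(B,M)$, obtained from the defining inequality $T(A\Rightarrow B)\leq S(A,M)^\ast\to S(B,M)$ by adjointness (\ref{adj}). I also record a fact used in both parts: the hedge is isotone, since $a\leq b$ gives $a\to b=1$, so by (\ref{TS:MP}) and (\ref{TS:One}) we get $1=(a\to b)^\ast\leq a^\ast\to b^\ast$, whence $a^\ast\leq b^\ast$ by (\ref{eqn:prl1B}).

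For closure under intersections (\ref{eqn:ci}), let $M_j\in\mathrm{Mod}(T)$ for $j\in J$ and put $M=\bigcap_{j\in J}M_j$. First I would show $S(A,M)=\bigwedge_{j\in J}S(A,M_j)$: expanding the definition (\ref{Eq:S}) and using (\ref{eqn:prlsiB}) to pull the residuum past the infimum $M(y)=\bigwedge_j M_j(y)$, the resulting double infimum over $y$ and $j$ may be reordered. The same identity holds with $B$ in place of $A$. Fixing $A,B$ and writing $c=T(A\Rightarrow B)$, isotonicity of the hedge gives $S(A,M)^\ast\leq S(A,M_j)^\ast$, and then by isotony of $\otimes$ (\ref{eqn:isot}) together with the hypothesis $c\otimes S(A,M_j)^\ast\leq S(B,M_j)$ we obtain $c\otimes S(A,M)^\ast\leq S(B,M_j)$ for every $j$. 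Taking the infimum over $j$ yields $c\otimes S(A,M)^\ast\leq\bigwedge_j S(B,M_j)=S(B,M)$, which is exactly the membership condition for $M$.

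For closure under $a^\ast$-shifts (\ref{eqn:cs}), let $M\in\mathrm{Mod}(T)$, $a\in L$, and set $N=a^\ast\to M$. The crucial simplification comes from Lemma~\ref{Th:Sem_graded}(i), which with $c=a^\ast$ gives $S(A,N)=a^\ast\to S(A,M)$ and $S(B,N)=a^\ast\to S(B,M)$. Writing $p=S(A,M)$ and $q=S(B,M)$, the membership condition for $N$ becomes, after two applications of adjointness (\ref{adj}), the inequality $T(A\Rightarrow B)\otimes(a^\ast\to p)^\ast\otimes a^\ast\leq q$. Since we already know $T(A\Rightarrow B)\otimes p^\ast\leq q$, it suffices to prove $(a^\ast\to p)^\ast\otimes a^\ast\leq p^\ast$.

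This last inequality is the main obstacle, and is where the hedge axioms do the real work. The step I expect to be delicate is controlling the nested hedge: applying (\ref{TS:MP}) to $a^\ast$ and $p$, together with idempotency (\ref{TS:ID}) in the form $a^{\ast\ast}=a^\ast$, yields $(a^\ast\to p)^\ast\leq a^\ast\to p^\ast$; multiplying by $a^\ast$ and invoking (\ref{eqn:prl1A}) as $a^\ast\otimes(a^\ast\to p^\ast)\leq p^\ast$ finishes it. Substituting back and using isotony of $\otimes$ gives $T(A\Rightarrow B)\otimes(a^\ast\to p)^\ast\otimes a^\ast\leq T(A\Rightarrow B)\otimes p^\ast\leq q$, so $N\in\mathrm{Mod}(T)$, completing the verification of both conditions.
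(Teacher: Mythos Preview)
Your proof is correct and follows essentially the same route as the paper: both verify (\ref{eqn:ci}) via isotony of the hedge together with $S(A,\bigcap_j M_j)=\bigwedge_j S(A,M_j)$, and both verify (\ref{eqn:cs}) via Lemma~\ref{Th:Sem_graded}(i) combined with the key inequality $(a^\ast\to p)^\ast\leq a^{\ast\ast}\to p^\ast=a^\ast\to p^\ast$ from (\ref{TS:MP}) and (\ref{TS:ID}). The only organizational difference is that the paper first invokes Theorem~\ref{thm:cT} to reduce to a crisp $T$ and then works with the bare inequality $S(A,M)^\ast\leq S(B,M)$, whereas you carry the factor $T(A\Rightarrow B)$ through via adjointness; your version shows that the reduction to crisp theories is a convenience here rather than a necessity.
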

\begin{proof}
  We need to check
  (\ref{eqn:ci}) and (\ref{eqn:cs}). Due to Theorem \ref{thm:cT}, we may safely assume
  that $T$ is crisp.

  (\ref{eqn:ci}):
  Consider a $J$-indexed system $\{M_j \in \mathrm{Mod}(T) \,|\, j \in J\}$
  of models of $T$. We show that $\bigcap_{j \in J}\!M_j$ is a model of $T$.
  Thus, we check that, for each $A \Rightarrow B \in T$,
  $||A \Rightarrow B||_{\bigcap_{j \in J}\!M_j} = 1$.
  Since each $M_j$ is a model of $T$, 
  we have $||A \Rightarrow B||_{M_j} = 1$, i.e.
  $S(A,M_j)^* \leq S(B,M_j)$, for any $A \Rightarrow B \in T$. 
  Now, since
  $(\bigwedge_{j \in J}a_j)^* \leq \bigwedge_{j \in J}a^*_j$, we get
  \begin{align*}
    \textstyle S(A,\bigcap_{j \in J}\!M_j)^* &=
    \textstyle \bigl(\bigwedge_{j \in J}S(A,M_j)\bigr)^* \leq
    \textstyle \bigwedge_{j \in J}S(A,M_j)^* \leq 
   \textstyle\bigwedge_{j \in J}S(B,M_j)=S(B,\bigcap_{j \in J}\!M_j),
  \end{align*}
  proving $||A \Rightarrow B||_{\bigcap_{j \in J}\!M_j} = 1$, and hence
  $\bigcap_{j \in J}\!M_j \in \mathrm{Mod}(T)$.

  (\ref{eqn:cs}):
  Let $M \in \mathrm{Mod}(T)$ and  $a \in L$.
  We need to check that $a^* \rightarrow M$ belongs to $\mathrm{Mod}(T)$.
  Since $M$ is a model of $T$, for each $A \Rightarrow B \in T$
  we have $||A \Rightarrow B||_M = 1$, i.e., $S(A,M)^* \leq S(B,M)$.
  Using Lemma \ref{Th:Sem_graded} (i), \eqref{TS:MP}, \eqref{TS:ID}, and monotony of $\rightarrow$
  in the second argument, we get
  \begin{align*}
    S(A,a^* \rightarrow M)^* &= (a^* \rightarrow S(A,M))^* \leq \\
    &\leq a^{**} \rightarrow S(A,M)^* = a^* \rightarrow S(A,M)^* \leq 
    a^* \rightarrow S(B,M) = S(B,a^* \rightarrow M),
  \end{align*}
  establishing $||A\Rightarrow B||_{a^* \rightarrow M }=1$ for an 
  arbitrary $A\Rightarrow B\in T$, whence
  $a^* \rightarrow M \in \mathrm{Mod}(T)$.%
\end{proof}

The following assertion shows the converse claim to Theorem \ref{thm:Modcls}.

\begin{theorem}\label{thm:Clsmod}
  Let ${\cal S}$ be an $\mathbf{L}^*$-closure system in $Y$.
  Then there exists a theory $T$ of graded attribute implications over $Y$ such that
  ${\cal S} = \mathrm{Mod}(T)$.
\end{theorem}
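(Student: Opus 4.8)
The plan is to construct $T$ directly from the closure operator canonically associated with $\mathcal{S}$, and to derive the two inclusions $\mathcal{S}\subseteq\mathrm{Mod}(T)$ and $\mathrm{Mod}(T)\subseteq\mathcal{S}$ from the defining properties of an $\mathbf{L}^\ast$-closure operator. By the bijection recalled just before the theorem, the $\mathbf{L}^\ast$-closure system $\mathcal{S}$ gives rise to an $\mathbf{L}^\ast$-closure operator $C=C_{\mathcal{S}}$ (defined by (\ref{eqn:leastcl})) satisfying $\mathcal{S}=\mathcal{S}_C=\{A\in\mathbf{L}^Y \mid A=C(A)\}$. I would then let $T$ be the crisp theory consisting of all implications $A\Rightarrow C(A)$, that is $T=\{A\Rightarrow C(A)\mid A\in\mathbf{L}^Y\}$, and prove $\mathcal{S}=\mathrm{Mod}(T)$. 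Throughout I would use that, for any hedge, the validity condition $||A\Rightarrow B||_M=1$ unwinds via (\ref{eqn:prl1B}) to the inequality $S(A,M)^\ast\leq S(B,M)$.

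For $\mathcal{S}\subseteq\mathrm{Mod}(T)$, take $M\in\mathcal{S}$, so that $M=C(M)$. For an arbitrary axiom $A\Rightarrow C(A)$ of $T$, property (\ref{eqn:fco2}) applied to $A$ and $M$ gives $S(A,M)^\ast\leq S(C(A),C(M))=S(C(A),M)$, which is exactly $||A\Rightarrow C(A)||_M=1$. Since this holds for every axiom, $M\in\mathrm{Mod}(T)$.

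For the converse $\mathrm{Mod}(T)\subseteq\mathcal{S}$, take $M\in\mathrm{Mod}(T)$. The key step is to instantiate the single axiom whose antecedent is $M$ itself: from $||M\Rightarrow C(M)||_M=1$ we get $S(M,M)^\ast\leq S(C(M),M)$. As $S(M,M)=1$ and $1^\ast=1$ by (\ref{TS:One}), this forces $S(C(M),M)=1$, i.e. $C(M)\subseteq M$. Combined with $M\subseteq C(M)$ from (\ref{eqn:fco1}), we obtain $M=C(M)$, hence $M\in\mathcal{S}_C=\mathcal{S}$.

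There is no serious obstacle here; the proof is short once $T$ is chosen well. The only points that require care are recognizing that the $\mathbf{L}^\ast$-closure condition (\ref{eqn:fco2}) is precisely the hedged validity condition for the axioms, and observing that the harder inclusion needs nothing beyond the axiom with antecedent $M$, where the hedge identity $1^\ast=1$ is exactly what converts $S(M,M)=1$ into $C(M)\subseteq M$. A minor stylistic choice worth noting is that the same argument goes through if one shrinks $T$ to a smaller generating family of axioms, but taking all $A\Rightarrow C(A)$ keeps the verification cleanest.
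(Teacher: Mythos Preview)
Your proof is correct and essentially identical to the paper's: both take $T=\{A\Rightarrow C_{\mathcal{S}}(A)\mid A\in\mathbf{L}^Y\}$, use (\ref{eqn:fco2}) for $\mathcal{S}\subseteq\mathrm{Mod}(T)$, and instantiate the axiom at $A=M$ together with $1^\ast=1$ for the converse. The only cosmetic difference is that the paper phrases the second inclusion contrapositively, while you argue it directly.
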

\begin{proof}
  Put $T = \{A \Rightarrow C_{\cal S}(A) \,|\, A \in \mathbf{L}^Y\}$.
  Let $M \in {\cal S}$. 
  Then $M = C_{\cal S}(M)$ and due to \eqref{eqn:fco2},
  $S(A,M)^* \leq S(C_{\cal S}(A),C_{\cal S}(M)) = S(C_{\cal S}(A),M)$,
  which gives
  $||A \Rightarrow C_{\cal S}(A)||_M = 1$, i.e. $M$ is a model of $T$.
  This proves ${\cal S} \subseteq \mathrm{Mod}(T)$.
  Conversely let $M \not\in {\cal S}$, i.e. $M \ne C_{\cal S}(M)$.
  Then  $M \subset C_{\cal S}(M)$ by~\eqref{eqn:fco1},
  whence $S(C_{\cal S}(M),M) \ne 1$.
  As a result, 
  \[
  ||M \Rightarrow C_{\cal S}(M)||_M = S(M,M)^* \rightarrow
  S(C_{\cal S}(M),M) = 1^* \rightarrow S(C_{\cal S}(M),M) =
  S(C_{\cal S}(M),M) \ne 1,
  \]
   i.e. $M \not\in \mathrm{Mod}(T)$, proving
  $\mathrm{Mod}(T) \subseteq {\cal S}$.%
\end{proof}

  Theorem~\ref{thm:Modcls} and Theorem~\ref{thm:Clsmod} imply that
  systems of models of graded attibute implications over $Y$ coincide with 
  $\mathbf{L}^*$-closure systems over $Y$.
In particular, given a theory $T$ and an arbitrary $A\in\mathbf{L}^Y$,
one may consider the least model of $T$ that contains $A$.
As is well-known from the ordinary case \cite{GaWi:FCA}, an ordinary implication
$A\Rightarrow B$ follows from $T$ if an only if the least model of $T$
that contains $A$  includes $B$. As we show next, this property
generalizes to the setting involving grades in that degree of entailment
equals degree of inclusion. 
In our setting,
the least model is $C_{\mathrm{Mod}(T)}(A)$ where $C_{\mathrm{Mod}(T)}$
is the $\mathbf{L}^*$-closure operator corresponding to $T$ according to 
Theorem \ref{thm:Clsmod}.
As shown by the following theorem, $C_{\mathrm{Mod}(T)}(A)$ may be used 
to determine the degree to which $A\Rightarrow B$ semantically follows from $T$.
Namely, the degree equals the degree to which $A\Rightarrow B$ is valid in 
$C_{\mathrm{Mod}(T)}(A)$ as well as the degree of inclusion
of $B$ in $C_{\mathrm{Mod}(T)}(A)$.

\begin{theorem}\label{thm:ent}
  \itshape
  For every theory $T$ and a  graded attribute implication $A \Rightarrow B$,
  \begin{displaymath}
    ||A \Rightarrow B||_T =
    ||A \Rightarrow B||_{C_{\mathrm{Mod}(T)}(A)} =
    S(B,C_{\mathrm{Mod}(T)}(A)).
  \end{displaymath}
\end{theorem}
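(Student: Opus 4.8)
The plan is to prove the two equalities separately, working from right to left. Throughout I would write $\mathcal{M}=\mathrm{Mod}(T)$ and $C=C_{\mathrm{Mod}(T)}$, so that by (\ref{eqn:leastcl}) the closure has the explicit form $C(A)=\bigcap_{M\in\mathcal{M}}\bigl(S(A,M)^*\to M\bigr)$; this formula is the engine of the whole argument.

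First I would dispose of the rightmost equality $||A\Rightarrow B||_{C(A)}=S(B,C(A))$, which is immediate. Since $C$ is an $\mathbf{L}^*$-closure operator, extensivity (\ref{eqn:fco1}) gives $A\subseteq C(A)$, hence $S(A,C(A))=1$; by (\ref{TS:One}) then $S(A,C(A))^*=1^*=1$, and unfolding the definition (\ref{eqn:valDeg}) of validity together with (\ref{eqn:prl1D}) yields $||A\Rightarrow B||_{C(A)}=S(A,C(A))^*\to S(B,C(A))=1\to S(B,C(A))=S(B,C(A))$.

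The substantive step is the leftmost equality $||A\Rightarrow B||_T=S(B,C(A))$, and I would establish it by simply expanding $S(B,C(A))$ via the explicit form of $C(A)$ and rearranging the resulting nest of infima and residua. Starting from $S(B,C(A))=\bigwedge_{y\in Y}\bigl(B(y)\to C(A)(y)\bigr)$ with $C(A)(y)=\bigwedge_{M\in\mathcal{M}}(S(A,M)^*\to M(y))$, I push $B(y)\to(-)$ through the inner infimum using (\ref{eqn:prlsiB}), swap the two premises $B(y)$ and $S(A,M)^*$ using (\ref{eqn:prl1H}), exchange the two infima, and then pull $S(A,M)^*\to(-)$ back out of the infimum over $y$ by a second application of (\ref{eqn:prlsiB}). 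This leaves $\bigwedge_{M\in\mathcal{M}}\bigl(S(A,M)^*\to\bigwedge_{y}(B(y)\to M(y))\bigr)=\bigwedge_{M\in\mathcal{M}}(S(A,M)^*\to S(B,M))$, which by (\ref{eqn:valDeg}) is exactly $\bigwedge_{M\in\mathcal{M}}||A\Rightarrow B||_M$, i.e. $||A\Rightarrow B||_T$ by (\ref{eqn:entAI}).

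I do not expect a genuine obstacle here; the only thing needing care is the bookkeeping in the chain of infima, in particular the legitimacy of commuting the two premises and of interchanging $\bigwedge_{y}$ with $\bigwedge_{M}$, all of which are licensed by Theorem \ref{thm:prl1} and by commutativity of infima in a complete lattice. As a conceptual cross-check one could instead argue by two inequalities: $||A\Rightarrow B||_T\le S(B,C(A))$ because $C(A)$ is itself a model (it is a fixed point of $C$ by idempotency (\ref{eqn:fco3}), hence lies in $\mathcal{M}=\mathcal{S}_C$) combined with the rightmost equality already proved, while the reverse bound $S(B,C(A))\le||A\Rightarrow B||_M$ for each $M\in\mathcal{M}$ follows from $C(A)\subseteq S(A,M)^*\to M$ via adjointness and the estimate $a\otimes(a\to b)\le b$ from (\ref{eqn:prl1A}). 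The direct computation above is shorter and self-contained, so I would present that as the main line.
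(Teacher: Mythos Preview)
Your proof is correct but takes a genuinely different route from the paper. The paper establishes the leftmost equality by two inequalities: $||A\Rightarrow B||_T\le ||A\Rightarrow B||_{C(A)}$ holds because $C(A)\in\mathrm{Mod}(T)$, and for the reverse it invokes the abstract closure property (\ref{eqn:fco2}), $S(A,M)^*\le S(C(A),C(M))=S(C(A),M)$, together with transitivity $S(B,C(A))\otimes S(C(A),M)\le S(B,M)$ and adjointness. Your main line instead bypasses the closure-operator axioms entirely by unfolding the explicit formula (\ref{eqn:leastcl}) and performing a single chain of residuated-lattice identities, obtaining equality directly rather than via two bounds. This is arguably cleaner and more self-contained, since it needs only the raw properties of $\to$ and $\bigwedge$ from Theorem~\ref{thm:prl1}; the trade-off is that the paper's argument makes visible \emph{why} the result holds conceptually (it is essentially the graded Galois-connection monotonicity), whereas yours is a calculation. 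Amusingly, the alternative ``cross-check'' you sketch at the end is close to the paper's actual proof, though even there you reach the reverse inequality via the explicit inclusion $C(A)\subseteq S(A,M)^*\to M$ rather than via (\ref{eqn:fco2}).
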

\begin{proof}
  Clearly,  $||A\Rightarrow B||_T \leq
  ||A\Rightarrow B||_{C_{\mathrm{Mod}(T)}(A)}$
  because $C_{\mathrm{Mod}(T)}(A) \in \mathrm{Mod}(T)$.
  Moreover, since $C_{\mathrm{Mod}(T)}$ satisfies \eqref{eqn:fco1},
  \begin{align*}
    ||A&\Rightarrow B||_{C_{\mathrm{Mod}(T)}(A)} 
    = S(A,C_{\mathrm{Mod}(T)}(A))^* \rightarrow
    S(B,C_{\mathrm{Mod}(T)}(A))\\
    & = 
    1 \rightarrow S(B,C_{\mathrm{Mod}(T)}(A)) =
    S(B,C_{\mathrm{Mod}(T)}(A)).
  \end{align*}
  Take any $M \in \mathrm{Mod}(T)$. Due to \eqref{eqn:fco2}
  and $M=C_{\mathrm{Mod}(T)}(M)$, 
  \begin{align*}
     & S(B,C_{\mathrm{Mod}(T)}(A)) \otimes S(A,M)^* \leq 
    S(B,C_{\mathrm{Mod}(T)}(A)) \otimes
    S(C_{\mathrm{Mod}(T)}(A),C_{\mathrm{Mod}(T)}(M)) \leq \\
    &\leq S(B,C_{\mathrm{Mod}(T)}(M)) = S(B,M).
  \end{align*}
  Applying adjointness, we get
  \begin{align*}
    S(B,C_{\mathrm{Mod}(T)}(A)) \leq
    S(A,M)^* \!\rightarrow S(B,M) =
    ||A \Rightarrow B||_M,
  \end{align*}
  for each $M \in \mathrm{Mod}(T)$.
  Hence, 
  $S(B,C_{\mathrm{Mod}(T)}(A)) \leq ||A \Rightarrow B||_T$.
\end{proof}

\subsection{Related structures and alternative formulas for validity}
\label{sec:rs}

Every table $\tu{X,Y,I}$ with grades induces an important pair of operators.
These operators, along with the sets of their fixpoints, were studied in
\cite{Bel:Fgc,Bel:Clofl,BeVy:Fcalh}. In this section, we present the basic
connections of these structures to graded attribute implications. 
In addition, we provide alternative formulas for validity of implications. 

Given a table $\tu{X,Y,I}$, consider the operators ${}^\upts:\mathbf{L}^X\to\mathbf{L}^Y$
and ${}^\upts:\mathbf{L}^Y\to\mathbf{L}^X$ given by 
\begin{equation} \label{eqn:updownts}
  \textstyle
 A^\up(y)=\bigwedge_{x\in X} ( A(x)^{\ast}\rightarrow I(x,y))
 \qquad\text{and}\qquad
 B^\down(x)=\bigwedge_{y\in Y} ( B(y)\rightarrow I(x,y)).
\end{equation}
The pair $\tu{{}^\upts,{}^\downts}$ forms an $\mathbf{L}^\ast$-Galois connection
\cite{Bel:Fgc,Bel:Clofl,BeVy:Fcalh}.
Note that the formulas in (\ref{eqn:updownts}) are not symmetric because
we consider only a particular form of these operators, which are directly linked
to graded attribute implications. The general formulas involve two hedges, one for
$X$ and one for $Y$.
Using basic rules of predicate fuzzy logic,  $A^\up(y)$ is the truth degree of 
``for each $x\in X$: if it is very true that $x$ belongs to $A$ then
$y$ applies to $x$''. 
Likewise,  $B^\down(x)$ is the truth degree of ``for each $y\in Y$: if $y$ belongs 
to $B$ then $y$ applies to $x$''.
The set
\[
\mathcal{B}(X^{\ast},Y,I)=\{\langle A,B\rangle\in \mathbf{L}^X\times \mathbf{L}^Y \,|\, A^\up=B, \ B^\down=A\}
\]
of all fixpoints of $\langle{}^\up,{}^\down\rangle$ is called the \emph{concept lattice}
of $\tu{X,Y,I}$.
Its elements, called \emph{formal concepts} of $\tu{X,Y,I}$, are naturally interpreted as
concepts in the sense of traditional logic \cite{GaWi:FCA}.
Namely, for every formal concept $\tu{A,B}\in\mathcal{B}(X^{\ast},Y,I)$, $A$ and $B$ may be seen
as its \emph{extent} and its \emph{intent}, i.e.  the collections of objects and attributes, respectively,
which are covered by the concept. 
Both $A$ and $B$ are $\mathbf{L}$-sets, i.e. represent graded collections and apply to objects and attributes to
 degrees, not necessarily 0 and 1 only. 
The set of all intents, which plays an important role for graded attribute implications, is denoted
by $\mathrm{Int}(X^{\ast},Y,I)$, i.e.
\[
   \mathrm{Int}(X^{\ast},Y,I) =\{  B\in\mathbf{L}^Y \mid \tu{A,B}\in\mathcal{B}(X^{\ast},Y,I) \mbox{ for some } A \}.
\]
Note that 
\begin{equation}\label{eqn:Mint}
 \text{for each } B\in \mathbf{L}^Y: \  B \in  \mathrm{Int}(X^{\ast},Y,I) \text{ if and only if } B = B^{\downts\upts}
\end{equation}

As the following theorem shows, validity of $A\Rightarrow B$ in a data table $\tu{X,Y,I}$ may equivalently
be expressed as
the  validity of $A\Rightarrow B$ in the set of all intents of $\tu{X,Y,I}$ as well as 
the degree of inclusion of $B$ in the ${}^{\upts\down}$-closure of $A$.

\begin{theorem}\label{thm:alter}
  For every $\langle X,Y,I\rangle$, 
  \begin{eqnarray}
    ||A \Rightarrow B||_{\langle X,Y,I\rangle} = 
    ||A \Rightarrow B||_{\mathrm{Int}(X^{\ast},Y,I)} =
    S(B,A^{\downarrow\uparrow}).
    \label{Th:B_eq_tab}
  \end{eqnarray}
\end{theorem}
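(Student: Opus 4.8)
The plan is to prove the two equalities separately, in each case reducing the validity degree to the concept-forming operators of (\ref{eqn:updownts}). The bridge observation is that for every object $x$ the $x$-th row satisfies $S(A,I_x)=A^\downarrow(x)$ and $S(B,I_x)=B^\downarrow(x)$, which is immediate by comparing (\ref{Eq:S}), (\ref{eqn:updownts}) and (\ref{eqn:Ix}).

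First I would establish the rightmost equality $||A\Rightarrow B||_{\langle X,Y,I\rangle}=S(B,A^{\downarrow\uparrow})$ by a direct computation. Starting from the definition together with the bridge observation, $||A\Rightarrow B||_{\langle X,Y,I\rangle}=\bigwedge_{x\in X}(S(A,I_x)^\ast\to S(B,I_x))=\bigwedge_{x\in X}(A^\downarrow(x)^\ast\to B^\downarrow(x))$. I would then expand $B^\downarrow(x)=\bigwedge_{y\in Y}(B(y)\to I(x,y))$ and rearrange the nested infima: use (\ref{eqn:prlsiB}) to pull the outer $\to$ inside the infimum over $y$, then (\ref{eqn:prl1H}) to swap the two premises $A^\downarrow(x)^\ast$ and $B(y)$, and finally (\ref{eqn:prlsiB}) once more to contract the infimum over $x$. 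The expression becomes $\bigwedge_{y\in Y}(B(y)\to\bigwedge_{x\in X}(A^\downarrow(x)^\ast\to I(x,y)))=\bigwedge_{y\in Y}(B(y)\to A^{\downarrow\uparrow}(y))=S(B,A^{\downarrow\uparrow})$, the penultimate step being just the definition of ${}^\uparrow$ applied to $A^\downarrow$. This part is purely algebraic and should go through cleanly.

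For the middle equality $||A\Rightarrow B||_{\mathrm{Int}(X^{\ast},Y,I)}=S(B,A^{\downarrow\uparrow})$ I would use that $\langle{}^\uparrow,{}^\downarrow\rangle$ is an $\mathbf{L}^\ast$-Galois connection, so ${}^{\downarrow\uparrow}$ is an $\mathbf{L}^\ast$-closure operator whose fixpoints, by (\ref{eqn:Mint}), are precisely the elements of $\mathrm{Int}(X^{\ast},Y,I)$; in particular $A^{\downarrow\uparrow}$ is itself an intent. For the inequality $\leq$ I would instantiate the defining infimum at $M=A^{\downarrow\uparrow}$: since $A\subseteq A^{\downarrow\uparrow}$ by (\ref{eqn:fco1}) we have $S(A,A^{\downarrow\uparrow})=1$, so that member's validity degree collapses to $1^\ast\to S(B,A^{\downarrow\uparrow})=S(B,A^{\downarrow\uparrow})$, bounding the infimum from above. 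For the reverse inequality I would fix an arbitrary intent $M=M^{\downarrow\uparrow}$ and show $S(B,A^{\downarrow\uparrow})\otimes S(A,M)^\ast\leq S(B,M)$, which by adjointness is exactly $S(B,A^{\downarrow\uparrow})\leq||A\Rightarrow B||_M$. Here the closure property (\ref{eqn:fco2}) gives $S(A,M)^\ast\leq S(A^{\downarrow\uparrow},M^{\downarrow\uparrow})=S(A^{\downarrow\uparrow},M)$, and combining this with isotony of $\otimes$ (\ref{eqn:isot}) and transitivity of the inclusion degree (a consequence of (\ref{eqn:prl1I})) yields $S(B,A^{\downarrow\uparrow})\otimes S(A,M)^\ast\leq S(B,A^{\downarrow\uparrow})\otimes S(A^{\downarrow\uparrow},M)\leq S(B,M)$. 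Taking the infimum over all intents $M$ then delivers $S(B,A^{\downarrow\uparrow})\leq||A\Rightarrow B||_{\mathrm{Int}(X^{\ast},Y,I)}$.

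The one genuine subtlety, and the main obstacle, is the bookkeeping of the hedge ${}^\ast$: its asymmetric placement in (\ref{eqn:updownts}) must be matched exactly against its placement in the validity degree (\ref{eqn:valDeg}) and in the closure inequality (\ref{eqn:fco2}), and the premise-swap via (\ref{eqn:prl1H}) in the first computation works precisely because the hedge sits on $A^\downarrow(x)$ rather than on $B(y)$. Everything else is routine manipulation with the identities of Theorem~\ref{thm:prl1}. As a shortcut, the middle equality could instead be read off directly from Theorem~\ref{thm:ent}, since $\mathrm{Int}(X^{\ast},Y,I)$ coincides with $\mathrm{Mod}(T)$ for a suitable theory $T$ by Theorem~\ref{thm:Clsmod}, its associated closure operator being ${}^{\downarrow\uparrow}$; I would nonetheless prefer the explicit two-inequality argument to keep the proof self-contained.
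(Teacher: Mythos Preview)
Your proof is correct. The direct computation establishing $||A\Rightarrow B||_{\langle X,Y,I\rangle}=S(B,A^{\downarrow\uparrow})$ matches the paper's argument essentially verbatim. For the remaining equality, however, you take a different and cleaner route. The paper proves $||A\Rightarrow B||_{\langle X,Y,I\rangle}=||A\Rightarrow B||_{\mathrm{Int}(X^{\ast},Y,I)}$ directly: for the nontrivial direction it introduces the unhedged operator $A^\Uparrow(y)=\bigwedge_{x}(A(x)\to I(x,y))$, rewrites the table-validity as $S(A^{\downarrow\ast},B^\downarrow)$, and then pushes through a chain of inequalities using the $\mathbf{L}$-Galois connection $\langle{}^\Uparrow,{}^\downarrow\rangle$ together with the factorisation ${}^\uparrow={}^{\ast\Uparrow}$; the converse direction comes from $I_x\in\mathrm{Int}(X^{\ast},Y,I)$. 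You instead prove $||A\Rightarrow B||_{\mathrm{Int}(X^{\ast},Y,I)}=S(B,A^{\downarrow\uparrow})$ and chain---your two-inequality argument is exactly the specialisation of Theorem~\ref{thm:ent} to the $\mathbf{L}^\ast$-closure system $\mathrm{Int}(X^{\ast},Y,I)$, carried out in place. This avoids the unhedged machinery entirely and is shorter; the paper's detour through $\Uparrow$ buys nothing for this particular theorem, though it does make the decomposition ${}^\uparrow={}^{\ast\Uparrow}$ explicit, which is reused later (e.g.\ in Theorem~\ref{thm:dvai} and Lemma~\ref{thm:cgc}).
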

\begin{proof}
  First, we check $||A \Rightarrow B||_{\langle X,Y,I\rangle} =
  ||A \Rightarrow B||_{\mathrm{Int}(X^{\ast},Y,I)}$.
    Observe that $||A \Rightarrow B||_{\langle X,Y,I\rangle} \leq
  ||A \Rightarrow B||_{\mathrm{Int}(X^{\ast},Y,I)}$ if{}f
  for each $M \in \mathrm{Int}(X^{\ast},Y,I)$ we have
  $||A \Rightarrow B||_{\langle X,Y,I\rangle} \leq ||A \Rightarrow B||_M$, i.e.
   \\[4pt]
  \centerline{$\bigwedge_{x \in X}\bigl(S(A,I_x)^{\ast} \rightarrow
    S(B,I_x)\bigr) \leq S(A,M)^{\ast} \rightarrow S(B,M) $.} \\[4pt]
   As $I_x(y)=I(x,y)$, we have $S(A,I_x)=\bigwedge_{y\in Y}(A(y)\to I_x(y))=A^\down(x)$.
  Therefore, the last inequality is equivalent to
   \\[4pt]
  \centerline{$\bigwedge_{x \in X}
    \bigl(A^{\downts}(x)^{\ast} \rightarrow B^{\downts}(x)\bigr)
    \leq S(A,M)^{\ast} \rightarrow S(B,M) $,} \\[4pt]
  i.e. to \\[4pt]
  \centerline{$S(A^{\downts\ast}, B^{\downts}) = \bigwedge_{x \in X}
    \bigl((A^{\downts\ast}(x) \rightarrow B^{\downts}(x)\bigr)
    \leq S(A,M)^{\ast} \rightarrow S(B,M) $,} \\[4pt]
  which is equivalent to 
  \begin{eqnarray}\label{Eq:Tab_eq_Int}
    S(A,M)^{\ast} \otimes S(A^{\downts\ast}, B^{\downts}) \leq S(B,M)
  \end{eqnarray}
  due to adjointness of $\otimes$ and $\to$.
  Thus, it suffices
  to prove (\ref{Eq:Tab_eq_Int})   for each $M \in  \mathrm{Int}(X^{\ast},Y,I)$.
  For this purpose, consider the operator ${}^\Uparrow$, the ``unhedged'' version
  of ${}^\upts$ defined by 
  \[
     A^\Uparrow(y)=\textstyle\bigwedge_{x\in X} ( A(x)\rightarrow I(x,y)).
  \]
  The pair $\tu{{}^\Uparrow,{}^\downts}$ forms an $\mathbf{L}$-Galois connection and 
  hence satisfies $S(C_1,C_2)\leq S(C_2^\Uparrow,C_1^\Uparrow)$,
  $S(D_1,D_2)\leq S(D_2^\downts,D_1^\downts)$,  and $D\subseteq D^{\downts\Uparrow}$,
  see \cite{Bel:Fgc}.
  Due to (\ref{eqn:Mint}) and since
  $S(C,D)\otimes S(D,E)\leq S(C,E)$, we obtain 
  \begin{eqnarray*}
    &&\kern-1cm
    S(A,M)^{\ast} \otimes S(A^{\downts\ast}, B^{\downts}) \leq
    S(M^{\downts},A^{\downts})^{\ast} \otimes
    S(A^{\downts\ast}, B^{\downts}) \leq \\
    &\leq&
    S(M^{\downts\ast},A^{\downts\ast}) \otimes
    S(A^{\downts\ast}, B^{\downts}) \leq
    S(M^{\downts\ast}, B^{\downts}) \leq \\
    &\leq&
    S(B^{\downts\Uparrow},M^{\downts\ast\Uparrow}) =
    S(B^{\downts\Uparrow},M^{\downts\upts}) =  S(B^{\downts\Uparrow},M)
    \leq   S(B,M),
  \end{eqnarray*}
  verifying~(\ref{Eq:Tab_eq_Int}) and thus 
  $||A \Rightarrow B||_{\langle X,Y,I\rangle} \leq ||A \Rightarrow B||_{\mathrm{Int}(X^{\ast},Y,I)}$
  To check  $||A \Rightarrow B||_{\langle X,Y,I\rangle} \geq ||A \Rightarrow B||_{\mathrm{Int}(X^{\ast},Y,I)}$
  it is sufficient to observe that for each $x\in X$,  $I_x\in \mathrm{Int}(X^{\ast},Y,I)$.
  This fact follows from (\ref{eqn:Mint}) since, as one can easily see, \\[4pt]
  \centerline{%
    $I_x = \{{}^{1\!\!}/x\}^{\uparrow} = \{{}^{1\!\!}/x\}^{\ast\Uparrow} =
    \{{}^{1\!\!}/x\}^{\ast\Uparrow\downts\ast\Uparrow} =
    {I_x}^{\downts\ast\Uparrow} = {I_x}^{\downarrow\uparrow}$.}

  \medskip
 Second, we check
  $||A \Rightarrow B||_{\langle X,Y,I\rangle} =
  S(B,A^{\downarrow\uparrow})$.
  We have
  \begin{eqnarray*}
    &&
    \kern-1.2cm
    ||A \Rightarrow B||_{\langle X,Y,I\rangle} = \\
    &=&
    \textstyle\bigwedge_{x \in X}
    \bigl(S(A,I_x)^{\ast} \rightarrow S(B,I_x)\bigr) = \\
    &=&
    \textstyle\bigwedge_{x \in X}
    \bigl(A^{\downts}(x)^{\ast} \rightarrow B^{\downts}(x)\bigr) = \\
    &=&
    \textstyle\bigwedge_{x \in X}
    \bigl(A^{\downts\ast}(x) \rightarrow
    \textstyle\bigwedge_{y \in Y}
    \bigl(B(y) \rightarrow I(x,y)\bigr)\bigr) = \\
    &=&
    \textstyle\bigwedge_{y \in Y}
    \textstyle\bigwedge_{x \in X}
    \bigl(A^{\downts\ast}(x) \rightarrow
    \bigl(B(y) \rightarrow I(x,y)\bigr)\bigr) = \\
    &=&
    \textstyle\bigwedge_{y \in Y}
    \textstyle\bigwedge_{x \in X}
    \bigl(B(y) \rightarrow
    \bigl(A^{\downts\ast}(x) \rightarrow I(x,y)\bigr)\bigr) = \\
    &=&
    \textstyle\bigwedge_{y \in Y}
    \bigl(B(y) \rightarrow
    \textstyle\bigwedge_{x \in X}
    \bigl(A^{\downts\ast}(x) \rightarrow I(x,y)\bigr)\bigr) = \\
    &=&
    \textstyle\bigwedge_{y \in Y}
    \bigl(B(y) \rightarrow A^{\downts\upts}(y)\bigr) = 
     S(B,A^{\downts\uparrow}),
  \end{eqnarray*}
  proving the claim.
\end{proof}

We now present several other formulas expressing the degree  
$||A\Rightarrow B||_{\langle X,Y,I\rangle}$.
They show that, in a sense, globalization may be regarded as the basic
hedge in the definition (\ref{eqn:valDeg}). 
First, for a hedge $\ts$ on $\mathbf{L}$ put $\mathrm{fix}(\ts)=\{a\in L\,|\, a^\ts=a\}$ (set of all fixpoints of $\ts$).
Furthermore, for $\tss,\ts:L\rightarrow L$ put $\tss\leq\ts$ iff $a^\tss\leq a^\ts$ for each $a\in L$
($\tss$ is as strong or stronger than $\ts$). 
One can easily see that for hedges $\ts$ and $\tss$ on a complete residuated lattice $\mathbf{L}$,
\begin{eqnarray}\label{eqn:pohedge}
 \mbox{$\tss\leq\ts$ \quad if{}f \quad $\mathrm{fix}(\tss)\subseteq\mathrm{fix}(\ts)$.} 
\end{eqnarray}
Denote by $||A\Rightarrow B||^\tss_{\dots}$ the degree of validity
of $A\Rightarrow B$ in $\dots$ that involves the hedge $^\tss$. Thus,
$||A\Rightarrow B||^\tss_{M}=S(A,M)^\tss\to S(B,M)$ and the like.
Omitting the superscript, i.e. $||A\Rightarrow B||_{M}$ always means
$||A\Rightarrow B||^\ts_{M}$.
We need the following lemma.

\begin{lemma} \label{thm:lemma2}
 For $A,B,M\in\mathbf{L}^Y$, and hedges $\tss$ and $\ts$ for which 
 $\tss\leq \ts$ we have 
 \[
   ||A\Rightarrow B||_{M}=\textstyle\bigwedge_{a\in L}
 \bigl(S(a^{\ts}\otimes A,M)^{\tss} \rightarrow S(a^{\ts}\otimes B,M)\bigr)=
 S(S(A,M)^{\ts}\otimes B,M).
\]
\end{lemma}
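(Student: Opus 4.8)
The plan is to handle the two asserted equalities separately, since the rightmost one is essentially immediate while the middle one carries the work. Throughout I abbreviate $s = S(A,M)$ and $t = S(B,M)$, so the target quantity is $||A\Rightarrow B||_M = s^\ts \to t$. For the equality with $S(S(A,M)^\ts \otimes B,M)$ I would simply invoke Lemma~\ref{Th:Sem_graded}(i) with $c = s^\ts$: it gives $c \to S(B,M) = S(c\otimes B,M)$, that is $s^\ts \to t = S(s^\ts \otimes B,M)$, which is exactly the claim. The same lemma also rewrites every term of the infimum: taking $c = a^\ts$ yields $S(a^\ts \otimes A,M) = a^\ts \to s$ and $S(a^\ts \otimes B,M) = a^\ts \to t$, so the middle expression becomes $\bigwedge_{a\in L}\bigl((a^\ts \to s)^{\tss} \to (a^\ts \to t)\bigr)$, and it remains to show this infimum equals $s^\ts \to t$.

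To bound the infimum from above by $s^\ts \to t$, I would exhibit one value of $a$ at which the term collapses. The natural choice is $a = s$. Then $a^\ts \to s = s^\ts \to s = 1$ by (\ref{TS:Sub}) and (\ref{eqn:prl1B}), so the antecedent satisfies $(a^\ts \to s)^{\tss} = 1^{\tss} = 1$ by (\ref{TS:One}); consequently the term equals $1 \to (s^\ts \to t) = s^\ts \to t$ by (\ref{eqn:prl1D}). Since an infimum is at most any single one of its terms, this already gives $\bigwedge_{a\in L}(\cdots) \leq s^\ts \to t$.

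For the matching lower bound — that every term is at least $s^\ts \to t$, and hence so is the infimum — which I expect to be the main obstacle, I would prove $s^\ts \to t \leq (a^\ts \to s)^{\tss} \to (a^\ts \to t)$ for all $a$. By adjointness (\ref{adj}), applied twice, this is equivalent to $a^\ts \otimes (a^\ts \to s)^{\tss} \otimes (s^\ts \to t) \leq t$. The crux is the estimate $(a^\ts \to s)^{\tss} \leq a^\ts \to s^\ts$: here I would first use $\tss \leq \ts$ to pass to $(a^\ts \to s)^{\ts}$, then apply (\ref{TS:MP}) and idempotency (\ref{TS:ID}) to get $(a^\ts \to s)^{\ts} \leq (a^\ts)^\ts \to s^\ts = a^\ts \to s^\ts$. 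With this in hand, two applications of $a \otimes (a\to b)\leq b$ from (\ref{eqn:prl1A}) finish the argument: $a^\ts \otimes (a^\ts \to s^\ts) \leq s^\ts$, and then $s^\ts \otimes (s^\ts \to t) \leq t$; chaining these through isotony of $\otimes$ yields the required bound. Combining the upper and lower bounds establishes the middle equality, and together with the first paragraph this proves the lemma. The subtle point is precisely the estimate $(a^\ts \to s)^{\tss} \leq a^\ts \to s^\ts$, since it is the only place where the hypothesis $\tss\leq\ts$ and the hedge axioms (\ref{TS:MP})--(\ref{TS:ID}) are simultaneously needed.
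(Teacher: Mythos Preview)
Your proof is correct and follows essentially the same approach as the paper: the rightmost equality via Lemma~\ref{Th:Sem_graded}, the upper bound on the infimum by evaluating at $a=S(A,M)$, and the lower bound via the key estimate $(a^\ts\to s)^{\tss}\leq a^\ts\to s^\ts$ obtained from $\tss\leq\ts$, (\ref{TS:MP}), and (\ref{TS:ID}), followed by two applications of $a\otimes(a\to b)\leq b$. The only cosmetic difference is that you rewrite the infimum terms via Lemma~\ref{Th:Sem_graded}(i) at the outset and work with the abbreviations $s,t$, whereas the paper keeps the $S(a^\ts\otimes A,M)$ notation throughout.
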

\begin{proof}
  $||A\Rightarrow B||_{M}= S(A,M)^{\ts}\rightarrow S(B,M)=
  S(S(A,M)^{\ts}\otimes B,M)$ follows directly from 
   Lemma~\ref{Th:Sem_graded}. 

   Next, we check both inequalities of
   $||A\Rightarrow B||_{M} = \bigwedge_{a\in L}
   \bigl(S(a^{\ts}\otimes A,M)^{\tss} \rightarrow S(a^{\ts}\otimes B,M)\bigr)$.
``$\leq$'' is true iff for each $a\in L$ we have $S(a^{\ts}\otimes A,M)^{\tss}\otimes ||A\Rightarrow B||_M\leq S(a^{\ts}\otimes B,M)$ and since $S(a^{\ts}\otimes B,M)=a^\ts\rightarrow S(B,M)$, the latter inequality is equivalent to
\[
  a^{\ts}\otimes S(a^{\ts}\otimes A,M)^{\tss}\otimes ||A\Rightarrow B||_M\leq S(B,M)
\]
which is true. Indeed, 
\begin{eqnarray*}
    &&
    \kern-1.2cm
a^{\ts}\otimes S(a^{\ts}\otimes A,M)^{\tss}\otimes ||A\Rightarrow B||_M\leq 
a^{\ts}\otimes S(a^{\ts}\otimes A,M)^{\ts}\otimes ||A\Rightarrow B||_M= \\
  &=&
a^{\ts}\otimes (a^{\ts}\rightarrow S(A,M))^{\ts}\otimes ||A\Rightarrow B||_M\leq 
a^{\ts}\otimes (a^{\ts}\rightarrow S(A,M)^{\ts})\otimes ||A\Rightarrow B||_M\leq \\
  &\leq&
 S(A,M)^{\ts}\otimes (S(A,M)^{\ts}\rightarrow S(B,M))\leq S(B,M).
\end{eqnarray*}
To check ``$\geq$'', observe that 
\begin{eqnarray*}
    &&
    \kern-1.2cm
   \textstyle
   \bigwedge_{a\in L} \bigl(S(a^{\ts}\otimes A,M)^{\tss} \rightarrow S(a^{\ts}\otimes B,M)\bigr) \leq \text{ (put $a=S(A,M)$)}\\
    &\leq& S(S(A,M)^{\ts}\otimes A,M)^{\tss} \rightarrow S(S(A,M)^{\ts}\otimes B,M)=\\
   &=&1^{\tss} \rightarrow S(S(A,M)^{\ts}\otimes B,M)= S(A,M)^{\ts}\to S(B,M)=||A\Rightarrow B||_M.
\end{eqnarray*}
\end{proof}


\begin{theorem}
\label{thm:dvai}
For a data table $\langle X,Y,I\rangle$ with grades,
hedges $\tss$ and $\ts$ with $\tss\leq\ts$, and a graded attribute attribute implication $A\Rightarrow B$,
\begin{eqnarray}
&&  \nonumber   \kern-1.2cm
 ||A \Rightarrow B||_{\langle X,Y,I\rangle} = \\
&& \label{eqn:alter1}
 \textstyle\bigwedge_{a\in L}
  ||a^\ts\otimes A\Rightarrow a^\ts\otimes B||^\tss_{\langle X,Y,I\rangle} = \\
&& \label{eqn:alter3}
   \textstyle \bigwedge_{a\in L} ||a^\ts\otimes A\Rightarrow a^\ts\otimes B||_{\langle X,Y,I\rangle} =\\
&& \label{eqn:alter2}
  \textstyle \bigwedge_{a\in L} ||A\Rightarrow B ||^\tss_{\tu{X,Y,a^{\ast}\to I}} = \\
&& \label{eqn:alter4}
   \textstyle ||A\Rightarrow B ||^\tss_{\mathrm{Int}(X^{\ast},Y,I)}.
\end{eqnarray}  
%
\end{theorem}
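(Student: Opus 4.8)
The plan is to show that each of the four right-hand expressions equals the left-hand side $\|A\Rightarrow B\|^{\ts}_{\tu{X,Y,I}}$, which by definition is $\bigwedge_{x\in X}\|A\Rightarrow B\|^{\ts}_{I_x}$, the infimum of the per-row degrees $\|A\Rightarrow B\|^{\ts}_{I_x}=S(A,I_x)^{\ts}\to S(B,I_x)$. The workhorse throughout is Lemma~\ref{thm:lemma2}, applied one row at a time; the only genuinely new ingredient beyond it is the treatment of \eqref{eqn:alter4}, which requires the closure structure of the intent system.

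For \eqref{eqn:alter1} I would invoke Lemma~\ref{thm:lemma2} with the hedge pair $(\tss,\ts)$ on each structure $M=I_x$, obtaining $\|A\Rightarrow B\|^{\ts}_{I_x}=\bigwedge_{a\in L}\bigl(S(a^{\ts}\otimes A,I_x)^{\tss}\to S(a^{\ts}\otimes B,I_x)\bigr)=\bigwedge_{a\in L}\|a^{\ts}\otimes A\Rightarrow a^{\ts}\otimes B\|^{\tss}_{I_x}$. Taking $\bigwedge_{x\in X}$ and interchanging the two infima (both over fixed index sets) turns this into $\bigwedge_{a\in L}\|a^{\ts}\otimes A\Rightarrow a^{\ts}\otimes B\|^{\tss}_{\tu{X,Y,I}}$, i.e. \eqref{eqn:alter1}. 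For \eqref{eqn:alter3} I would repeat the very same argument, but now applying Lemma~\ref{thm:lemma2} with the admissible pair $(\ts,\ts)$ (legitimate since $\ts\leq\ts$); this merely replaces the outer hedge $\tss$ by $\ts$ and yields $\bigwedge_{a\in L}\|a^{\ts}\otimes A\Rightarrow a^{\ts}\otimes B\|^{\ts}_{\tu{X,Y,I}}$, which is \eqref{eqn:alter3}.

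To reach \eqref{eqn:alter2} I would transfer the shift from the implication onto the table. For fixed $a$, the $x$-th row of $\tu{X,Y,a^{\ast}\to I}$ is $a^{\ts}\to I_x$, so by Lemma~\ref{Th:Sem_graded}(i) we have $S(A,a^{\ts}\to I_x)=S(a^{\ts}\otimes A,I_x)$ and likewise for $B$. Hence the $x$-summand of $\|A\Rightarrow B\|^{\tss}_{\tu{X,Y,a^{\ast}\to I}}$ is exactly $\|a^{\ts}\otimes A\Rightarrow a^{\ts}\otimes B\|^{\tss}_{I_x}$; after taking infima over $x$ this degree coincides with the $a$-summand of \eqref{eqn:alter1}, and taking $\bigwedge_{a\in L}$ identifies \eqref{eqn:alter2} with \eqref{eqn:alter1}.

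The main obstacle is \eqref{eqn:alter4}, since the uniform rowwise use of Lemma~\ref{thm:lemma2} is not available for the $\tss$-validity over the whole intent system. Here I would instead exploit that $\mathrm{Int}(X^{\ast},Y,I)$ is precisely the fixpoint set of the $\mathbf{L}^{\ast}$-closure operator ${}^{\downts\upts}$, and that Theorem~\ref{thm:alter} already gives $\|A\Rightarrow B\|^{\ts}_{\tu{X,Y,I}}=S(B,A^{\downarrow\uparrow})$, so it suffices to prove $\|A\Rightarrow B\|^{\tss}_{\mathrm{Int}(X^{\ast},Y,I)}=S(B,A^{\downarrow\uparrow})$. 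For the inequality ``$\leq$'' I would substitute the single model $M=A^{\downarrow\uparrow}$, which lies in the system; extensivity \eqref{eqn:fco1} gives $A\subseteq A^{\downarrow\uparrow}$, hence $S(A,A^{\downarrow\uparrow})^{\tss}=1^{\tss}=1$ and this term equals $S(B,A^{\downarrow\uparrow})$. For ``$\geq$'' I would show $S(B,A^{\downarrow\uparrow})\otimes S(A,M)^{\tss}\leq S(B,M)$ for every $M$ in the system: using $\tss\leq\ts$ together with \eqref{eqn:fco2} and the fixpoint property $M=M^{\downarrow\uparrow}$ gives $S(A,M)^{\tss}\leq S(A,M)^{\ts}\leq S(A^{\downarrow\uparrow},M)$, after which transitivity of $S$ closes the estimate. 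This is the one place where the hypothesis $\tss\leq\ts$ is essential, and keeping track of it is the crux of the argument.
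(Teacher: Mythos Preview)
Your argument is correct. For \eqref{eqn:alter1}--\eqref{eqn:alter2} you proceed exactly as the paper does: apply Lemma~\ref{thm:lemma2} rowwise, commute infima, and use Lemma~\ref{Th:Sem_graded}(i) to shuttle the $a^{\ts}$-shift between the implication and the table.

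For \eqref{eqn:alter4} your route differs from the paper's. You prove directly that $\|A\Rightarrow B\|^{\tss}_{\mathrm{Int}(X^{\ast},Y,I)}=S(B,A^{\downarrow\uparrow})$ via the $\mathbf{L}^{\ast}$-closure property \eqref{eqn:fco2} of ${}^{\downarrow\uparrow}$ and transitivity of $S$, then invoke Theorem~\ref{thm:alter}. The paper instead obtains the direction $\|A\Rightarrow B\|_{\tu{X,Y,I}}\leq\|A\Rightarrow B\|^{\tss}_{\mathrm{Int}(X^{\ast},Y,I)}$ from Theorem~\ref{thm:alter} plus the trivial monotony $S(A,M)^{\ts}\to S(B,M)\leq S(A,M)^{\tss}\to S(B,M)$, and for the converse it reuses the already-proven \eqref{eqn:alter2}, rewriting $\|A\Rightarrow B\|_{\tu{X,Y,I}}$ as an infimum over the intents $\{\deg{a}{x}\}^{\up}$. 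Your version is more self-contained (it does not need \eqref{eqn:alter2} as an intermediate step) and isolates precisely where $\tss\leq\ts$ enters; the paper's version is shorter because it recycles earlier equalities. Both are fine.
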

\begin{proof}
(\ref{eqn:alter1}):
Since $\{\deg{1}{x}\}^\up=I_x$ and
\[\textstyle 
    ||a^\ts\otimes A\Rightarrow a^\ts\otimes B||^\tss_{\langle X,Y,I\rangle} =  
    \bigwedge_{x\in X}
    \bigl(S(a^{\ts}\otimes A,\{\deg{1}{x}\}^\up)^{\tss} \rightarrow   S(a^{\ts}\otimes B,\{\deg{1}{x}\}^\up)\bigr),
\]
the fact that 
$||A \Rightarrow B||_{\langle X,Y,I\rangle}$ equals (\ref{eqn:alter1})
 follows directly from Lemma~\ref{thm:lemma2} and the definition of 
$||A \Rightarrow B||_{\langle X,Y,I\rangle}$.

(\ref{eqn:alter3}): The expression is a particular case of (\ref{eqn:alter1}) for
$\bullet=\ast$.

(\ref{eqn:alter2}):
Since 
\begin{eqnarray*}
  &&\textstyle 
  S(a^{\ts}\otimes C,I_x) =
  \bigwedge_{y\in Y} ((a^\ts\otimes C(y))\to I(x,y) ) =
     \bigwedge_{y\in Y} (C(y)\to (a^\ts \to I(x,y) ) ) =\\
     &=& \textstyle  S(C,a^{\ast}\to I_x) = S(C,(a^{\ast}\to I)_x),
\end{eqnarray*}
where $a^{\ast}\to I$ is the $a^{\ast}$-shift of $I$, i.e. $(a^{\ast}\to I)(x,y)=a^{\ast}\to I(x,y)$,
we get
\begin{eqnarray*}
  &&\textstyle
  \bigwedge_{a\in L}
  ||a^\ts\otimes A\Rightarrow a^\ts\otimes B||^\tss_{\langle X,Y,I\rangle} =
   \bigwedge_{a\in L,x\in X} \bigl(S(a^\ast\otimes A,I_x)^\tss \to S(a^\ast\otimes B,I_x)\bigr) =\\
  &=&\textstyle
   \bigwedge_{a\in L,x\in X} \bigl(S(A,(a^\ast\to I)_x)^\tss \to S(B,(a^\ast\to I)_x)\bigr) =
   \bigwedge_{a\in L} ||A\Rightarrow B||^\tss_{\tu{X,Y,a^\ast\to I}}, 
\end{eqnarray*}
proving that (\ref{eqn:alter2}) equals (\ref{eqn:alter1}).

%

(\ref{eqn:alter4}):
In view of Theorem \ref{thm:alter},
to prove $||A\Rightarrow B||_{\tu{X,Y,I}}\leq ||A\Rightarrow B||^\tss_{\mathrm{Int}(X^\ast,Y,I)}$ 
 it suffices to check
$||A\Rightarrow B ||_{\mathrm{Int}(X^{\ast},Y,I)} \leq   
||A\Rightarrow B ||^\tss_{\mathrm{Int}(X^{\ast},Y,I)}$, 
which  follows from $S(A,M)^{\ts}\rightarrow S(B,M)\leq S(A,M)^{\tss}\rightarrow S(B,M)$.
Conversely, since 
\begin{align*}
 &\textstyle ||A\Rightarrow B||_{\tu{X,Y,I}} = 
   \bigwedge_{a\in L} ||A\Rightarrow B||^\tss_{\tu{X,Y,a^\ast\to I}} =
  \textstyle\bigwedge_{x\in X,a\in L} \bigl(S(A,(a\to I)_x)^{\tss} \rightarrow S(B,(a\to I)_x)\bigr) = \\
  &=
\textstyle \bigwedge_{x\in X,a\in L} \bigl(S(A,\{\deg{a}{x}\}^\up)^{\tss} \rightarrow S(B,\{\deg{a}{x}\}^\up)\bigr),
\end{align*}
the inequality 
$||A\Rightarrow B||^\tss_{\mathrm{Int}(X^\ast,Y,I)}\leq ||A\Rightarrow B||_{\tu{X,Y,I}}$
is equivalent to 
%
\[\textstyle
  \bigwedge_{M\in\mathrm{Int}(X^{\ast},Y,I)}\bigl(S(A,M)^{\tss}\rightarrow S(B,M)\bigr)  \leq \bigwedge_{x\in X,a\in L} \bigl(S(A,\{\deg{a}{x}\}^\up)^{\tss} \rightarrow S(B,\{\deg{a}{x}\}^\up)\bigr),
\]
which follows from the fact that $\{\deg{a}{x}\}^\up\in\mathrm{Int}(X^{\ast},Y,I)$  for each $a\in L$ and $x\in X$.
\end{proof}

\begin{remark} 
(1) We encounter (\ref{eqn:alter1}) in Section \ref{sec:fal} where we prove completeness of certain Armstrong-like rules
for graded attribute implications.

(2) The hedge $\tss$ of Theorem~\ref{thm:dvai} can range in the sense of (\ref{eqn:pohedge}) arbitrarily from 
globalization, which is the least hedge, up to $\ts$ (boundary condition Theorem~\ref{thm:dvai}). 
In particular, with $\tss$ being the globalization, Theorem~\ref{thm:dvai} says that globalization
is in a sense, the basic hedge since the degree of validity of $A\Rightarrow B$ based on a general hedge
$\ts$ may be expressed as a degree of validity of $A\Rightarrow B$ that is based on globalization.
%
%
\end{remark}

%
%
%

\section{Logic of Graded Attribute Implications}\label{sec:fal}

In this section, we introduce a system for reasoning with graded attribute implications
and prove two versions of completeness for this system. In Section \ref{sec:arc}, 
we prove the ordinary-style completeness, i.e. we prove  that a graded attribute implication 
$A\Rightarrow B$ is provable from a set $T$ of 
implications if{}f the degree $||A\Rightarrow B||_T$ to which  $A\Rightarrow B$ semantically follows
 from $T$ equals~$1$.
In Section \ref{sec:gsc}, we present a graded-style completeness theorem.
Namely, we introduce  the concept of a degree $|A\Rightarrow B|_T$ of provability
of an implication $A\Rightarrow B$ from an $\mathbf{L}$-set $T$ of
implications and show that $|A\Rightarrow B|_T=||A\Rightarrow B||_T$,
i.e. the degree of provability coincides with the degree of semantic entailment. 

\subsection{Armstrong-like rules and ordinary-style completeness} \label{sec:arc}


\def\axA{Ax}
\def\axB{Cut}
\def\axC{Mul} 

Our axiomatic system consists of the following Armstrong-like \emph{deduction rules} \cite{Arm:Dsdbr}.\\[4pt]
\makebox[2.7em][r]{\textup{(\axA)}}~~%
infer $A \cup B \Rightarrow A$, \\[2pt]
\makebox[2.7em][r]{\textup{(\axB)}}~~%
from $A \Rightarrow B$ and $B\cup C\Rightarrow D$
infer $A\cup C \Rightarrow D$, \\[2pt]
\makebox[2.7em][r]{\textup{(\axC)}}~~%
from $A \Rightarrow B$  
infer $c^{\ast}\otimes A \Rightarrow c^{\ast}\otimes B $, \\[4pt]
for each $A,B,C,D \in \mathbf{L}^Y$, and $c \in L$.

\begin{remark}\label{rem:abc}
  (1)
  Rules (\axA) and (\axB) are inspired by the well-known ordinary rules
  of axiom and cut from which they differ in that
  $A,B,C,D$ represent $\mathbf{L}$-sets.
  
  (2)
  Rule (\axC), the rule of multiplication, is a new rule. Note that 
  $c^{\ast}\otimes A$ is defined by $(c^{\ast}\otimes A)(y)=c^{\ast}\otimes A(y)$.
  If ${}^\ast$ is globalization, (\axC) can be omitted. Indeed, for
  $c=1$, we have $c^\ast=1$ and (\axC) becomes ``from $A \Rightarrow B$ infer
  $A \Rightarrow B$'', a trivial rule. For $c<1$, we have $c^\ast=0$
  and (\axC) becomes ``from $A \Rightarrow B$ infer
  $\emptyset\Rightarrow\emptyset$'' which can be omitted since
  $\emptyset\Rightarrow\emptyset$ can be inferred by (\axA).
\end{remark}

Provability is defined as usual:
A graded attribute implication $A \Rightarrow B$ is called \emph{provable}
from a set $T$ of implications using a set $\cal R$ of
deduction rules, written $T \vdash_{\cal R} A \Rightarrow B$,
if there is a sequence (a proof) $\varphi_1,\dots,\varphi_n$ of 
implications such that $\varphi_n$ is $A \Rightarrow B$ and for each
$\varphi_i$ we either have $\varphi_i \in T$ or $\varphi_i$ is inferred
 from some of the preceding formulas 
using some deduction rule from $\cal R$.
If $\cal R$ consists of (\axA)--(\axC), 
we usually omit $\cal R$, and use, for instance, $T \vdash A \Rightarrow B$ instead of $T \vdash_{\cal R} A \Rightarrow B$.

A deduction rule ``from $\varphi_1,\dots,\varphi_n$ infer $\varphi$'' ,
with graded attribute implications $\varphi_1,\dots,\varphi_n,\varphi$,
is derivable from a set $\cal R$ of deduction rules if 
$\{\varphi_1,\dots,\varphi_n\}\vdash_{\cal R} \varphi$.
The following lemma presents some derived rules (one easily checks that
the arguments from the ordinary case apply~\cite{Mai:TRD}).

\begin{lemma}\label{thm:derArm}
The following deduction rules are derivable
from \textup{(\axA)} and \textup{(\axB):}\\[4pt]
\makebox[2.7em][r]{\textup{(Ref)}}~~%
 infer $A \Rightarrow A$, \\[2pt]
\makebox[2.7em][r]{\textup{(Wea)}}~~%
 from $A \Rightarrow B$
 infer $A \cup C \Rightarrow B$, \\[2pt]
\makebox[2.7em][r]{\textup{(Add)}}~~%
 from $A \Rightarrow B$ and $A \Rightarrow C$
 infer $A \Rightarrow B\cup C$, \\[2pt]
\makebox[2.7em][r]{\textup{(Pro)}}~~%
 from $A \Rightarrow B\cup C$
 infer $A \Rightarrow B$, \\[2pt]
\makebox[2.7em][r]{\textup{(Tra)}}~~%
 from $A \Rightarrow B$ and $B \Rightarrow C$
 infer $A \Rightarrow C$, \\[4pt]
for each $A,B,C,D \in \mathbf{L}^Y$.
\end{lemma}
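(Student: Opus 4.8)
The plan is to reproduce, in the graded setting, the standard Armstrong derivations, since each of the five rules is a classical consequence of axiom and cut. The only point needing a preliminary check is that the union $\cup$ of $\mathbf{L}$-sets, being the componentwise supremum, satisfies the same algebraic laws as ordinary set union---idempotence ($A\cup A=A$), commutativity, associativity, and $A\cup\emptyset=A$---so that every set-theoretic identity silently invoked in the classical proofs remains valid. These follow at once from $a\vee a=a$, $a\vee b=b\vee a$, and $a\vee 0=a$ in the lattice $L$. Note in particular that rule (Mul) plays no role; everything is derived from (Ax) and (Cut) alone.

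First I would obtain (Ref): instantiating (Ax) with both arguments equal to $A$ yields $A\cup A\Rightarrow A$, i.e.\ $A\Rightarrow A$. Next, (Wea): from the hypothesis $A\Rightarrow B$ and the (Ax)-instance $B\cup C\Rightarrow B$, one application of (Cut) (taking the cut's consequent $D$ to be $B$ and $C$ as the cut variable) produces $A\cup C\Rightarrow B$. For (Tra): from $A\Rightarrow B$ and $B\Rightarrow C$, applying (Cut) with the cut variable set to $\emptyset$ gives $A\cup\emptyset\Rightarrow C$, i.e.\ $A\Rightarrow C$. For (Pro): from $A\Rightarrow B\cup C$ together with the (Ax)-instance $B\cup C\Rightarrow B$, a single (Cut) (with cut variable $\emptyset$) yields $A\Rightarrow B$.

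It remains to derive (Add), for which I would first record the auxiliary augmentation rule ``from $A\Rightarrow B$ infer $A\cup C\Rightarrow B\cup C$''; this follows by applying (Cut) to $A\Rightarrow B$ and the instance $B\cup C\Rightarrow B\cup C$ (itself obtained from (Ref) or (Ax)), with cut variable $C$. Then, given $A\Rightarrow B$ and $A\Rightarrow C$, augmenting the second implication by $B$ gives $A\cup B\Rightarrow B\cup C$, and cutting this against $A\Rightarrow B$ (using $A\cup B=B\cup A$ and $A\cup A=A$) produces $A\Rightarrow B\cup C$.

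Since each derivation is a short, purely syntactic manipulation, there is no genuine obstacle here; the only care required is bookkeeping the $\mathbf{L}$-set identities and verifying that no step secretly relies on a bivalence-specific fact. The essential observation---which is exactly the content of the parenthetical remark in the statement---is that the combinatorics of the classical Armstrong calculus transfer unchanged because $\cup$ on $\mathbf{L}$-sets forms an idempotent commutative monoid with identity $\emptyset$, just as for ordinary sets.
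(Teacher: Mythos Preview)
Your derivations are correct and follow exactly the approach the paper intends: the paper's proof merely states that ``the arguments from the ordinary case apply'' and cites Maier's textbook, and you have spelled out precisely those classical Armstrong derivations, correctly noting that the only prerequisite is that $\cup$ on $\mathbf{L}$-sets is an idempotent commutative monoid with identity $\emptyset$.
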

%



A deduction rule ``from $\varphi_1,\dots,\varphi_n$ infer $\varphi$''
is  \emph{sound} if 
$\mathrm{Mod}(\{\varphi_1,\dots,\varphi_n\})\subseteq \mathrm{Mod}(\{\varphi\})$,
i.e. every model of all of $\varphi_1,\dots,\varphi_n$ is a model of $\varphi$.

\begin{lemma}\label{thm:sound}
  Each of the deduction rules \textup{(\axA)--(\axC)} is sound.
\end{lemma}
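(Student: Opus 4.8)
The plan is to verify soundness of each of the three rules (Ax), (Cut), and (Mul) separately, in each case showing that any $M\in\mathbf{L}^Y$ satisfying the premises to degree $1$ also satisfies the conclusion to degree $1$; by the definition of soundness and of $\mathrm{Mod}$, it suffices to argue at the level of a single $\mathbf{L}$-set $M$. Throughout I would unfold $\|A\Rightarrow B\|_M = S(A,M)^\ast\to S(B,M)$ and translate "$\|A\Rightarrow B\|_M=1$" into the inequality $S(A,M)^\ast\le S(B,M)$ via (\ref{eqn:prl1B}). The basic tools are the properties of $S$ and the residuated-lattice identities collected in Theorem~\ref{thm:prl1}, together with the hedge axioms (\ref{TS:One})--(\ref{TS:ID}).

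For (Ax), I must show $\|A\cup B\Rightarrow A\|_M=1$ unconditionally, i.e. $S(A\cup B,M)^\ast\le S(A,M)$. Here $A\cup B$ is the componentwise supremum, so $(A\cup B)(y)=A(y)\vee B(y)\ge A(y)$, which gives $S(A\cup B,M)\le S(A,M)$ by antitony of $\to$ in its first argument; applying the hedge, $S(A\cup B,M)^\ast\le S(A\cup B,M)\le S(A,M)$ using (\ref{TS:Sub}). For (Cut), assuming $S(A,M)^\ast\le S(B,M)$ and $S(B\cup C,M)^\ast\le S(D,M)$, I want $S(A\cup C,M)^\ast\le S(D,M)$. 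The key point is to control $S(B\cup C,M)$ from below by $S(A\cup C,M)$ (up to the hedge): since $S(\cdot,M)$ turns unions into meets, $S(B\cup C,M)=S(B,M)\wedge S(C,M)$ and $S(A\cup C,M)=S(A,M)\wedge S(C,M)$, so from $S(A,M)^\ast\le S(B,M)$ and $S(C,M)^\ast\le S(C,M)$ (the latter by (\ref{TS:Sub})) and the fact that $(\,\cdot\,)^\ast$ distributes subadditively over $\wedge$ (property $(\bigwedge a_j)^\ast\le\bigwedge a_j^\ast$, used already in the proof of Theorem~\ref{thm:Modcls}), one obtains $S(A\cup C,M)^\ast\le S(B\cup C,M)$, and then chaining with the second hypothesis yields the claim; I would be careful that the hedge is applied once to the meet rather than componentwise, so (\ref{TS:ID}) or the subadditivity estimate is what makes the two applications match up.

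For (Mul), assuming $S(A,M)^\ast\le S(B,M)$ I must show $S(c^\ast\otimes A,M)^\ast\le S(c^\ast\otimes B,M)$. The clean route is Lemma~\ref{Th:Sem_graded}(i), which gives $S(c^\ast\otimes A,M)=c^\ast\to S(A,M)$ and $S(c^\ast\otimes B,M)=c^\ast\to S(B,M)$. Then by (\ref{TS:MP}) and idempotency (\ref{TS:ID}) we have $(c^\ast\to S(A,M))^\ast\le c^{\ast\ast}\to S(A,M)^\ast=c^\ast\to S(A,M)^\ast$, and finally monotony of $\to$ in the second argument together with the hypothesis $S(A,M)^\ast\le S(B,M)$ gives $c^\ast\to S(A,M)^\ast\le c^\ast\to S(B,M)$, which is exactly the desired inequality. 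This is essentially the computation already performed for (\ref{eqn:cs}) in Theorem~\ref{thm:Modcls}.

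The main obstacle is (Cut): making the hedge interact correctly with the union/meet decomposition of $S$ is the only step that is not a one-line application of a listed identity, since one must combine the subadditivity of $(\,\cdot\,)^\ast$ over infima with (\ref{TS:Sub}) to push $S(A\cup C,M)^\ast$ below $S(B\cup C,M)$ before invoking the second hypothesis. By contrast, (Ax) is immediate from antitony and (\ref{TS:Sub}), and (Mul) is a direct rerun of the closure-under-shift computation via Lemma~\ref{Th:Sem_graded}(i); so I expect the write-up to spend most of its length pinning down the (Cut) inequality.
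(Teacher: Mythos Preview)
Your treatment of (Ax) is fine and matches the paper. Your argument for (Mul) is correct and in fact more elementary than the paper's: the paper invokes the identity (\ref{eqn:alter1}) from Theorem~\ref{thm:dvai} applied to a one-object table, whereas you derive the inequality directly from Lemma~\ref{Th:Sem_graded}(i) together with (\ref{TS:MP}) and (\ref{TS:ID}), which is self-contained and avoids forward references.

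Your (Cut) argument, however, has a genuine gap exactly where you flag the ``main obstacle''. From $S(A,M)^\ast\le S(B,M)$, $S(C,M)^\ast\le S(C,M)$, and subadditivity of $(\,\cdot\,)^\ast$ over $\wedge$, you correctly obtain
\[
S(A\cup C,M)^\ast \;\le\; S(B,M)\wedge S(C,M) \;=\; S(B\cup C,M).
\]
But the second hypothesis is $S(B\cup C,M)^\ast\le S(D,M)$, not $S(B\cup C,M)\le S(D,M)$, so ``chaining'' does not go through: you would need $S(A\cup C,M)^\ast\le S(B\cup C,M)^\ast$, and nothing in your chain gives you the extra hedge on the right. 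The fix---which is what the paper does and what your mention of (\ref{TS:ID}) is gesturing at---is to invoke idempotency \emph{at the outset}:
\[
S(A\cup C,M)^\ast = S(A\cup C,M)^{\ast\ast}
\le \bigl(S(A,M)^\ast\wedge S(C,M)^\ast\bigr)^\ast
\le \bigl(S(B,M)\wedge S(C,M)\bigr)^\ast
= S(B\cup C,M)^\ast \le S(D,M),
\]
where the first inequality uses monotony of $(\,\cdot\,)^\ast$ applied to the subadditivity estimate, and the second uses monotony of $(\,\cdot\,)^\ast$ applied to $S(A,M)^\ast\le S(B,M)$ and $S(C,M)^\ast\le S(C,M)$. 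The point is that doubling the hedge via (\ref{TS:ID}) before applying subadditivity leaves one $\ast$ on the outside, so you land on $S(B\cup C,M)^\ast$ rather than $S(B\cup C,M)$.
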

\begin{proof}
  The soundness of (\axA) is trivial as $||A\cup B\Rightarrow A||_M=1$ holds for any $M\in\mathbf{L}^Y$.
  If $M\in\mathrm{Mod}(\{A \Rightarrow B,B\cup C\Rightarrow D\})$, i.e.
  $S(A,M)^\ast\leq S(B,M)$ and $S(B\cup C,M)^\ast\leq S(D,M)$, then using $S(P\cup Q,R)=S(P,R)\wedge S(Q,R)$
  and $(a\wedge b)^\ast\leq a^\ast\wedge b^\ast$ we get
  \begin{eqnarray*}
  S(A\cup C,M)^\ast &=& (S(A,M)\wedge S(C,M))^\ast=(S(A,M)\wedge S(C,M))^{\ast\ast}\leq\\
  &\leq&  (S(A,M)^\ast\wedge S(C,M)^\ast)^{\ast}\leq (S(B,M)\wedge S(C,M))^{\ast}=
   S(B\cup C,M)^{\ast}\leq S(D,M), 
  \end{eqnarray*}
proving the soundness of (\axB).
  Let $||A\Rightarrow B||_M=1$. Putting $X=\{x\}$ and $I(x,y)=M(y)$
  for each $y\in Y$, (\ref{eqn:alter1}) yields 
  $||A\Rightarrow B||_M\leq ||a^\ast\otimes A\Rightarrow a^\ast\otimes  B||_M$,
  and hence $||a^\ast\otimes A\Rightarrow a^\ast\otimes  B||_M=1$, for every $a\in L$,
  proving the soundness of (\axC).
\end{proof}

A set $T$ of graded attribute implications is 
\begin{itemize}
  \item[--]
   \emph{syntactically closed} if  for every $A\Rightarrow B$, $T \vdash A \Rightarrow B$ implies 
$A \Rightarrow B \in T$, 
  \item[--]
   \emph{semantically closed} if  for every $A\Rightarrow B$, $|| A \Rightarrow B||_T=1$ implies 
$A \Rightarrow B \in T$. 
\end{itemize}

Clearly, $T$ is syntantically closed if{}f
$T=\{A\Rightarrow B \,|\, T \vdash A \Rightarrow B\}$;
analogously, $T$ is semantically closed if{}f
$T=\{A\Rightarrow B \,|\, ||A \Rightarrow B||_T=1\}$. 


\begin{lemma}\label{thm:semsyn}
  Let $T$ be a set of graded attribute implications.
  If $T$ is semantically closed then $T$ is syntactically closed.
\end{lemma}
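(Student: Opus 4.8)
The plan is to show that the whole deduction system is sound with respect to entailment in degree $1$, that is, that $T\vdash A\Rightarrow B$ implies $||A\Rightarrow B||_T=1$. Once this is in hand the lemma is immediate: if $T$ is semantically closed, then $T\vdash A\Rightarrow B$ gives $||A\Rightarrow B||_T=1$ and hence $A\Rightarrow B\in T$, which is exactly syntactic closedness.

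First I would recast the target degree in terms of model classes. Since $T$ is a (crisp) set of implications, and $||A\Rightarrow B||_T=\bigwedge_{M\in\mathrm{Mod}(T)}||A\Rightarrow B||_M$ with each term bounded by $1$, the degree $||A\Rightarrow B||_T$ equals $1$ if and only if $||A\Rightarrow B||_M=1$ for every $M\in\mathrm{Mod}(T)$, i.e. if and only if $\mathrm{Mod}(T)\subseteq\mathrm{Mod}(\{A\Rightarrow B\})$. Thus it suffices to prove, for every implication $\varphi$ with $T\vdash\varphi$, the inclusion $\mathrm{Mod}(T)\subseteq\mathrm{Mod}(\{\varphi\})$.

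This I would establish by induction on the length of a proof $\varphi_1,\dots,\varphi_n$ of $\varphi=\varphi_n$ from $T$, keeping the invariant $\mathrm{Mod}(T)\subseteq\mathrm{Mod}(\{\varphi_i\})$ for each $i$. If $\varphi_i\in T$, the inclusion is immediate from the definition of $\mathrm{Mod}(T)$. If $\varphi_i$ is obtained by the axiom rule $(\mathrm{Ax})$, then $||\varphi_i||_M=1$ for all $M\in\mathbf{L}^Y$ (as in the proof of Lemma~\ref{thm:sound}), so the inclusion holds trivially. Otherwise $\varphi_i$ is inferred from some preceding $\varphi_{j_1},\dots,\varphi_{j_k}$ (all with index below $i$) by a sound rule $(\mathrm{Cut})$ or $(\mathrm{Mul})$. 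By the induction hypothesis $\mathrm{Mod}(T)\subseteq\mathrm{Mod}(\{\varphi_{j_l}\})$ for each $l$, whence $\mathrm{Mod}(T)\subseteq\bigcap_{l}\mathrm{Mod}(\{\varphi_{j_l}\})=\mathrm{Mod}(\{\varphi_{j_1},\dots,\varphi_{j_k}\})$; soundness of the applied rule (Lemma~\ref{thm:sound}) then gives $\mathrm{Mod}(\{\varphi_{j_1},\dots,\varphi_{j_k}\})\subseteq\mathrm{Mod}(\{\varphi_i\})$, and composing the two inclusions yields the invariant for $\varphi_i$.

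Applying the invariant to $\varphi_n=A\Rightarrow B$ gives $\mathrm{Mod}(T)\subseteq\mathrm{Mod}(\{A\Rightarrow B\})$, i.e. $||A\Rightarrow B||_T=1$, and semantic closedness delivers $A\Rightarrow B\in T$. I do not expect a genuine obstacle beyond bookkeeping: the one point deserving attention is that $\mathrm{Mod}$ is antitone (more premises yield fewer models), so that the per-rule soundness supplied by Lemma~\ref{thm:sound} can be chained along the proof through intersections of model classes; the rest is just the definition of provability and the hypothesis of semantic closedness.
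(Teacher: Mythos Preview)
Your proof is correct and is exactly the ``standard argument'' the paper alludes to: soundness of the rules (Lemma~\ref{thm:sound}) plus induction on proof length yields $T\vdash A\Rightarrow B\implies ||A\Rightarrow B||_T=1$, and semantic closedness then gives $A\Rightarrow B\in T$. The paper's own proof is the one-line ``Follows from Lemma~\ref{thm:sound} by standard arguments,'' so you have simply unpacked it.
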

\begin{proof}
 Follows from Lemma \ref{thm:sound} by standard arguments.
\end{proof}

\begin{lemma}\label{thm:synsem}
  Let $T$ be a set of graded attribute implications and let both $Y$ and $L$
  be finite. If $T$ is syntactically closed then $T$ is semantically closed.
\end{lemma}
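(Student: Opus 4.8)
The plan is to prove the contrapositive. Assuming $A\Rightarrow B\notin T$, I will produce a single model $M$ of $T$ in which $A\Rightarrow B$ is not fully valid, so that $||A\Rightarrow B||_T\leq||A\Rightarrow B||_M<1$ and hence $||A\Rightarrow B||_T\neq 1$. Because $T$ is syntactically closed, $A\Rightarrow B\notin T$ is equivalent to $T\not\vdash A\Rightarrow B$, so the real task is to convert non-provability into a semantic counterexample, mimicking the classical construction of the closure of $A$ under a theory.

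The model I would use is the syntactic closure of $A$, namely $A^+\in\mathbf{L}^Y$ defined by $A^+=\bigcup\{E\in\mathbf{L}^Y\mid T\vdash A\Rightarrow E\}$, that is $A^+(y)=\bigvee\{E(y)\mid T\vdash A\Rightarrow E\}$. This is the one place where the hypothesis that both $Y$ and $L$ are finite is essential: it makes $\mathbf{L}^Y$ finite, so the union runs over finitely many $E$, and combining $T\vdash A\Rightarrow A$ (from (Ref)) with finitely many applications of (Add) from Lemma~\ref{thm:derArm} yields $T\vdash A\Rightarrow A^+$. The bookkeeping fact I would establish first is the equivalence, valid for every $E\in\mathbf{L}^Y$,
\[
 E\subseteq A^+ \quad\text{if{}f}\quad T\vdash A\Rightarrow E,
\]
whose forward direction is the nontrivial one: from $T\vdash A\Rightarrow A^+$ and the identity $A^+=E\cup A^+$ (which holds since $E\subseteq A^+$), rule (Pro) gives $T\vdash A\Rightarrow E$.

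The main step, which I expect to be the genuine obstacle, is verifying $A^+\in\mathrm{Mod}(T)$, since this is where the multiplication rule (Mul) and the hedge axiom $c^\ast\leq c$ (\ref{TS:Sub}) must be brought in. Fix $C\Rightarrow D\in T$ and set $c=S(C,A^+)$; by (\ref{eqn:prl1B}) it suffices to show $c^\ast\leq S(D,A^+)$, equivalently $c^\ast\otimes D\subseteq A^+$. From $c=S(C,A^+)$ and adjointness one gets $c\otimes C\subseteq A^+$, and since $c^\ast\leq c$ also $c^\ast\otimes C\subseteq A^+$; the displayed equivalence then gives $T\vdash A\Rightarrow c^\ast\otimes C$. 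Applying (Mul) to $C\Rightarrow D$ yields $T\vdash c^\ast\otimes C\Rightarrow c^\ast\otimes D$, and (Tra) produces $T\vdash A\Rightarrow c^\ast\otimes D$; one more use of the equivalence gives $c^\ast\otimes D\subseteq A^+$, as required, so $||C\Rightarrow D||_{A^+}=1$ and $A^+\in\mathrm{Mod}(T)$.

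To finish, I would note $A\subseteq A^+$, so $S(A,A^+)=1$ and therefore
\[
 ||A\Rightarrow B||_{A^+}=S(A,A^+)^\ast\rightarrow S(B,A^+)=1\rightarrow S(B,A^+)=S(B,A^+).
\]
Since $T\not\vdash A\Rightarrow B$, the displayed equivalence gives $B\not\subseteq A^+$, i.e. $S(B,A^+)<1$, whence $||A\Rightarrow B||_{A^+}<1$. As $A^+$ is a model of $T$, this forces $||A\Rightarrow B||_T<1$, which is exactly the contrapositive of semantic closedness and completes the argument.
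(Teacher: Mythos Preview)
Your proposal is correct and follows essentially the same approach as the paper: both prove the contrapositive by constructing the syntactic closure $A^+$ of $A$ (the paper phrases it as the largest $E$ with $A\Rightarrow E\in T$, which coincides with your union since $T$ is syntactically closed), verify via (\axC) and (Tra) that $A^+$ is a model of $T$, and then use $B\not\subseteq A^+$ to conclude $||A\Rightarrow B||_{A^+}<1$. Your explicit equivalence $E\subseteq A^+\iff T\vdash A\Rightarrow E$ packages the same (Pro)/(Add) bookkeeping the paper carries out inline.
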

\begin{proof}
  Let $T$ be syntactically closed. We need to show that if $||A\Rightarrow B||_T=1$
  then $A\Rightarrow B\in T$. We prove this by verifying that if 
  $A\Rightarrow B\not\in T$ then $||A\Rightarrow B||_T\not=1$.  
  Let thus $A\Rightarrow B\not\in T$.
  Note that since $T$ is syntactically closed,
  $T$ is closed w.r.t. the rules (Ref)--(Tra) of Lemma~\ref{thm:derArm}.

  To see that $||A\Rightarrow B||_T\not=1$, we show that there exists
  a model of $T$ that is not a model of $A\Rightarrow B$.
  For this purpose, consider $M=A^+$ where $A^+$ is the largest $\mathbf{L}$-set 
  such that $A\Rightarrow A^+\in T$. Note that $A^+$ exists. Namely,
  $S=\{C\,|\, A\Rightarrow C\in T\}$ is non-empty since $A\Rightarrow A\in T$
  by (Ref),
  $S$ is finite by finiteness of $Y$ and $L$,
  and for $A\Rightarrow C_1,\dots,A\Rightarrow C_n\in T$, we have
  $A\Rightarrow\bigcup_{i=1}^n C_i\in T$ by a repeated use of (Add).

  We now check that (a) $A^+$ is a model of $T$ and that
  (b) $A^+$ is not a model of $A\Rightarrow B$.

  (a):
  Let $C\Rightarrow D\in T$. We need to show $||C\Rightarrow D||_{A^+}=1$,
  i.e. $S(C,A^+)^\ast \rightarrow S(D,A^+)=1$ which is equivalent to
  $S(C,A^+)^\ast\otimes D\subseteq A^+$ due to adjointness of $\otimes$ and $\to$. 
  Since $A^+$ is the
  largest one for which $A\Rightarrow A^+\in T$, in order 
  to verify $S(C,A^+)^\ast\otimes D\subseteq A^+$, it is sufficient to 
  show that $A\Rightarrow S(C,A^+)^\ast\otimes D\in T$.
  We claim  (a1) $A\Rightarrow A^+\in T$,
  (a2) $A^+\Rightarrow S(C,A^+)^\ast\otimes C\in T$, and (a3)
  $S(C,A^+)^\ast\otimes C\Rightarrow S(C,A^+)^\ast\otimes D\in T$.
  Indeed, $A\Rightarrow A^+\in T$ by definition of $A^+$.
  $A^+\Rightarrow S(C,A^+)^\ast\otimes C\in T$ is an instance of (\axA)
  because $S(C,A^+)^\ast\otimes C \subseteq A^+$,
  which follows from
  \begin{eqnarray*}
   (S(C,A^+)^\ast\otimes C)  (y)&\leq& \textstyle C(y)\otimes S(C,A^+) =
   C(y) \otimes \bigwedge_{z\in Y}(C(z)\to A^+(z)) \leq\\
   &\leq& C(y) \otimes C(y)\to A^+(y) \leq A^+(y).
  \end{eqnarray*}
   Finally,   $S(C,A^+)^\ast\otimes C\Rightarrow S(C,A^+)^\ast\otimes D\in T$ by applying (\axC)
  to $C\Rightarrow D\in T$.
  Now,  (Tra) applied to (a1), (a2), and (a3) yields 
  $A\Rightarrow S(C,A^+)^\ast\otimes D\in T$, proving (a).

  (b):
  We need to show  $||A\Rightarrow B||_{A^+}\not=1$. 
  Note that 
  \[
     ||A\Rightarrow B||_{A^+}=S(A,A^+)^\ast \rightarrow S(B,A^+)=1\rightarrow S(B,A^+)=S(B,A^+).
  \]
  Therefore, if $||A\Rightarrow B||_{A^+}=1$, one has 
  $1=S(B,A^+)$, whence $B\subseteq A^+$.
  Since $A\Rightarrow A^+\in T$, (Pro) would give $A\Rightarrow B\in T$, a
  contradiction to the assumption.
  \end{proof}


The following theorem is the ordinary-style completeness theorem of (\axA)--(\axC).

\begin{theorem}
  \label{Th:Comp}
  Let\/ $L$ and $Y$ be finite. For a set $T$ be
  of graded attribute implications and a graded attribute implication $A\Rightarrow B$,\\[4pt]
  \centerline{$T \vdash A \Rightarrow B$
    \quad if{}f \quad
    $||A \Rightarrow B||_T = 1$.}
\end{theorem}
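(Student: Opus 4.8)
The plan is to derive both directions of the equivalence from the two closure lemmas already at hand, namely Lemma~\ref{thm:semsyn} (semantic closedness forces syntactic closedness) and Lemma~\ref{thm:synsem} (syntactic closedness forces semantic closedness, under finiteness of $L$ and $Y$), by passing to the syntactic closure $\overline{T}=\{A\Rightarrow B\mid T\vdash A\Rightarrow B\}$ of $T$. This set is syntactically closed essentially by construction, since any proof of $C\Rightarrow D$ from $\overline{T}$ can be spliced onto proofs of its premises from $T$ to produce a proof of $C\Rightarrow D$ from $T$. The crux of the whole argument is then the single identity $\mathrm{Mod}(\overline{T})=\mathrm{Mod}(T)$, from which $||A\Rightarrow B||_{\overline{T}}=||A\Rightarrow B||_T$ follows for every implication, because the degree of entailment in (\ref{eqn:entAI}) is an infimum over the set of models.

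To establish $\mathrm{Mod}(\overline{T})=\mathrm{Mod}(T)$, one inclusion is immediate from $T\subseteq\overline{T}$. For the reverse I would show that every $M\in\mathrm{Mod}(T)$ validates every provable implication to degree $1$, by induction on proof length: members of $T$ hold to degree $1$ in $M$ by the definition of $\mathrm{Mod}(T)$, and each of (Ax), (Cut), (Mul) preserves degree-$1$ validity by the soundness of Lemma~\ref{thm:sound}. Hence $M\in\mathrm{Mod}(\overline{T})$. This inductive soundness is precisely the standard argument underlying Lemma~\ref{thm:semsyn}; indeed, applying Lemma~\ref{thm:semsyn} to the semantically closed set $\{A\Rightarrow B\mid ||A\Rightarrow B||_T=1\}$ packages the same inclusion.

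With the model identity in place, both implications fall out. For the soundness direction, if $T\vdash A\Rightarrow B$ then $A\Rightarrow B\in\overline{T}$, so every $M\in\mathrm{Mod}(\overline{T})=\mathrm{Mod}(T)$ validates it to degree $1$, whence $||A\Rightarrow B||_T=1$. For completeness, if $||A\Rightarrow B||_T=1$ then $||A\Rightarrow B||_{\overline{T}}=1$ by the model identity; since $\overline{T}$ is syntactically closed, Lemma~\ref{thm:synsem} makes it semantically closed, and semantic closedness turns $||A\Rightarrow B||_{\overline{T}}=1$ into $A\Rightarrow B\in\overline{T}$, i.e.\ $T\vdash A\Rightarrow B$.

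The genuine difficulty of the completeness half is not in this assembly but is already absorbed into Lemma~\ref{thm:synsem}: constructing, for an underivable $A\Rightarrow B$, the countermodel $A^{+}$ (the largest $\mathbf{L}$-set with $A\Rightarrow A^{+}$ provable) that lies in $\mathrm{Mod}(T)$ yet fails $A\Rightarrow B$. This is exactly where finiteness of $L$ and $Y$ is indispensable, since it guarantees $A^{+}$ exists as a finite join closed under (Add). I therefore expect the only real subtlety at the level of the theorem itself to be the bookkeeping in the inductive proof of $\mathrm{Mod}(T)\subseteq\mathrm{Mod}(\overline{T})$, in particular verifying that (Mul) preserves degree-$1$ validity, which rests on the alternative formula~(\ref{eqn:alter1}) invoked in Lemma~\ref{thm:sound}.
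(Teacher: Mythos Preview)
Your proposal is correct and follows essentially the same route as the paper: your $\overline{T}$ is the paper's $\mathit{syn}(T)$, and both arguments hinge on Lemma~\ref{thm:semsyn} for soundness and Lemma~\ref{thm:synsem} for completeness. The paper packages the two directions symmetrically by introducing also $\mathit{sem}(T)=\{A\Rightarrow B\mid ||A\Rightarrow B||_T=1\}$ and showing $\mathit{syn}(T)=\mathit{sem}(T)$ via the closure-operator inclusions $\mathit{sem}(T)\subseteq\mathit{sem}(\mathit{syn}(T))=\mathit{syn}(T)$ and dually, whereas you go through the model identity $\mathrm{Mod}(\overline{T})=\mathrm{Mod}(T)$; the content is the same.
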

\begin{proof}
  Let 
  $\mathit{syn}(T)=\{A\Rightarrow B\,|\, T\vdash A\Rightarrow B\}$ and
  $\mathit{sem}(T)=\{A\Rightarrow B\,|\, ||A\Rightarrow B||_T=1\}$.
    We need to show $\mathit{syn}(T)=\mathit{sem}(T)$.
  One easily checks that 
  $\mathit{syn}(T)$ and $\mathit{sem}(T)$ are the least syntactically and semantically 
  closed sets of graded attribute implications that contain $T$, respectively. 
  As $\mathit{syn}(T)$ is syntactically closed, it is also semantically closed
  by Lemma~\ref{thm:synsem} which means
  $\mathit{sem}(\mathit{syn}(T))= \mathit{syn}(T)$. Therefore, since
  $T\subseteq\mathit{syn}(T)$, we get
  \[
    \mathit{sem}(T)\subseteq\mathit{sem}(\mathit{syn}(T))=
    \mathit{syn}(T).
  \]
  In a similar manner, using Lemma \ref{thm:semsyn}, we get
  $\mathit{syn}(T)\subseteq\mathit{sem}(T)$,
  showing $\mathit{syn}(T)=\mathit{sem}(T)$.
\end{proof}

Note that as is well-known,  (\axA) and (\axB) form a syntactico-semantically
complete system in the ordinary case (i.e. with fuzzy sets replaced by 
ordinary sets). 
The system consisting of  (\axA), (\axB), and (\axC) results by adding
a new rule, (\axC), to a (\axA) and (\axB). 
In this perspective,  (\axC) is the rule that handles intermediate degrees.
Alternatively, one could modify (\axB) and use

\medskip\noindent
\makebox[2.7em][r]{\textup{(Cut')}}~~%
from $A \Rightarrow e \otimes B$ and
$B \cup C \Rightarrow D$ infer
$A \cup C \Rightarrow e^{\ast} \otimes D$ \\[4pt]
instead of adding (\axC). Namely:

\begin{lemma}
  \textup{(\axA), (\axB)}, and \textup{(\axC)} are equivalent to \textup{(\axA)} and \textup{(Cut')}.
\end{lemma}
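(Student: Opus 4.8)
The plan is to read \emph{equivalent} as ``inducing the same provability relation $\vdash$,'' which reduces to showing that each primitive rule of one system is a derived rule of the other. Since (Ax) is common to both systems, I would (i) derive (Cut) and (Mul) from (Ax) and (Cut'), and (ii) derive (Cut') from (Ax), (Cut), (Mul). For (i), (Cut) is immediate: taking the scalar $e=1$ in (Cut') gives $1\otimes B=B$, and by \eqref{TS:One} we have $1^{\ast}=1$, so $e^{\ast}\otimes D=D$; thus (Cut') collapses to exactly ``from $A\Rightarrow B$ and $B\cup C\Rightarrow D$ infer $A\cup C\Rightarrow D$,'' which is (Cut).

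The interesting step in (i) is deriving (Mul), and here is where I would use an idempotence trick. Given $A\Rightarrow B$, I would apply (Cut') with the scalar instantiated to $c^{\ast}$, feeding in as the first premise the (Ax)-instance $c^{\ast}\otimes A\Rightarrow c^{\ast}\otimes A$ (so the schematic first premise $A_0\Rightarrow e_0\otimes B_0$ is read with $e_0=c^{\ast}$ and $B_0=A$, making $e_0\otimes B_0=c^{\ast}\otimes A$), and as the second premise the given $A\Rightarrow B$ with the side set empty. The conclusion delivered by (Cut') is then $c^{\ast}\otimes A\Rightarrow (c^{\ast})^{\ast}\otimes B$, and by idempotence \eqref{TS:ID}, $(c^{\ast})^{\ast}=c^{\ast}$, so this is precisely $c^{\ast}\otimes A\Rightarrow c^{\ast}\otimes B$, the conclusion of (Mul). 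A single application of (Cut') over a reflexivity axiom therefore suffices.

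For (ii), I would reconstruct (Cut') inside the system (Ax), (Cut), (Mul), freely using the rules (Ref), (Wea), (Pro) already derived in Lemma~\ref{thm:derArm}. Starting from the premises $A\Rightarrow e\otimes B$ and $B\cup C\Rightarrow D$: first apply (Mul) with $c=e$ to the second premise to obtain $(e^{\ast}\otimes B)\cup(e^{\ast}\otimes C)\Rightarrow e^{\ast}\otimes D$, using that $\otimes$ distributes over $\cup$, i.e.\ \eqref{eqn:prlsiA}; next, since $e^{\ast}\le e$ by \eqref{TS:Sub} gives $e^{\ast}\otimes B\subseteq e\otimes B$, apply (Pro) to $A\Rightarrow e\otimes B$ to get $A\Rightarrow e^{\ast}\otimes B$; then (Cut) on these two yields $A\cup(e^{\ast}\otimes C)\Rightarrow e^{\ast}\otimes D$; finally, since $e^{\ast}\otimes C\subseteq C$, (Wea) enlarges the antecedent to $A\cup C$, giving $A\cup C\Rightarrow e^{\ast}\otimes D$, which is the conclusion of (Cut').

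The hard part is the derivation of (Mul) in (i). The obstacle is that (Cut') hedges only the consequent (multiplying $D$ by $e^{\ast}$) while the factor in its first premise, $e\otimes B$, is \emph{unhedged}; so one cannot simply plug $A\Rightarrow B$ into (Cut') and read off the same factor $c^{\ast}$ on both sides of (Mul). The resolution is to set $e=c^{\ast}$ and exploit $(c^{\ast})^{\ast}=c^{\ast}$, so that the reflexive axiom supplies $c^{\ast}$ on the antecedent while (Cut') regenerates exactly $c^{\ast}$ on the consequent. In (ii) the symmetric care is bridging the $e$ versus $e^{\ast}$ discrepancy through \eqref{TS:Sub} together with (Pro), and then cleaning up the antecedent with (Wea).
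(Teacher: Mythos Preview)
Your proof is correct and follows essentially the same approach as the paper. The derivation of (Mul) via the reflexivity axiom $c^{\ast}\otimes A\Rightarrow c^{\ast}\otimes A$ together with $e=c^{\ast}$ and idempotence \eqref{TS:ID} is exactly the paper's argument; for the converse direction, the paper streamlines your four steps (Mul, Pro, Cut, Wea) into three by applying (Wea) directly to enlarge the antecedent $e^{\ast}\otimes(B\cup C)$ to $(e\otimes B)\cup C$ before cutting, which avoids the separate (Pro) step on the first premise---but this is a cosmetic difference, not a substantive one.
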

\begin{proof}
  First, we show that (Cut') is derivable from (\axA), (\axB), and (\axC).
  Let $\vdash A \Rightarrow e \otimes B$ and
  $\vdash B \cup C \Rightarrow D$. Then 
  $\vdash e^{\ast}\otimes(B \cup C)\Rightarrow e^{\ast} \otimes D$ by (\axC),
  $\vdash (e \otimes B) \cup C \Rightarrow e^{\ast} \otimes D$ by (Wea), and
  $\vdash A \cup C \Rightarrow e^{\ast} \otimes D$ by (Cut).

  Conversely, since (\axB) is an instance of (Cut') for $e=1$, it suffices to show that
   (\axC) is derivable from (\axA) and (Cut').
%
  Since $c^{\ast} \otimes A \Rightarrow c^{\ast} \otimes A$ is an instance
  of (Ax'), we get $c^{\ast} \otimes A \Rightarrow c^{\ast\ast} \otimes B$
  by (Cut') applied on 
  $c^{\ast} \otimes A \Rightarrow c^{\ast} \otimes A$ and $A \Rightarrow B$;
  (\ref{TS:ID}) gives that
  $c^{\ast} \otimes A \Rightarrow c^{\ast\ast} \otimes B$ equals
  $c^{\ast} \otimes A \Rightarrow c^{\ast} \otimes B$ which is the
  desired formula.
\end{proof}

In the setting which involves grades, the degree $S(A,B)$ of inclusion of 
the $\mathbf{L}$-set $A$ in the $\mathbf{L}$-set $B$, as defined by (\ref{Eq:S}) 
is an important concept generalizing the classical set inclusion.
Another one, generalizing set equality, is the degree $A\approx B$
of equality of $A$ and $B$, defined by
\[
 A\approx B =\textstyle\bigwedge_{y\in Y} \bigl(A(y)\leftrightarrow B(y)\bigr),
\]
where $a\leftrightarrow b=(a\rightarrow b)\wedge(b\rightarrow a)$ is the biresiduum
of $a$ and $b$. 
Note that $A\approx B$ is a truth degree of the proposition ``for each $y\in Y$:
$y$ belongs to $A$ if{}f $y$ belongs to $B$'' and that $\approx$ is an 
$\mathbf{L}$-equivalence relation~\cite{Bel:FRS,Got:TMVL,Haj:MFL}.
Therefore, $A\approx B$ may be conceived as a degree of similarity of $A$ and $B$.
Both $S(A,B)$ and $A\approx B$ naturally enter derived rules, as illustrated
by the following lemma.

\begin{lemma}\label{thm:derRules}
  The following deduction rules are derivable
  from \textup{(\axA)}--\textup{(\axC):} \\[4pt]
  \makebox[2.7em][r]{\textup{(S)}}~~%
   from $A\Rightarrow B$ infer $C \Rightarrow S(A,C)^\ast\otimes B$, \\[2pt]
  \makebox[2.7em][r]{\textup{(Sub)}}~~%
   from $A\Rightarrow B$ infer $C \Rightarrow D\otimes 
   S(A,C)^\ast\otimes S(D,B)$, \\[2pt]
  \makebox[2.7em][r]{\textup{(Sim)}}~~%
   from $A\Rightarrow B$ infer $C \Rightarrow D\otimes 
   (A\approx C)^\ast\otimes (D\approx B)$, \\[4pt]
  for each $A,B,C,D \in \mathbf{L}^Y$.
\end{lemma}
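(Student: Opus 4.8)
The plan is to reduce all three rules to the first one, \textup{(S)}, together with a simple ``monotone consequent'' principle that is itself derivable from \textup{(Ax)}, \textup{(Cut)}, \textup{(Mul)}. First I would record that principle: if $A\Rightarrow B$ is provable and $B'\subseteq B$, then $A\Rightarrow B'$ is provable. This is immediate, since $B'\subseteq B$ gives $B=B\cup B'$, so $A\Rightarrow B$ is literally $A\Rightarrow B\cup B'$ and \textup{(Pro)} of Lemma~\ref{thm:derArm} yields $A\Rightarrow B'$. After that, each rule reduces to exhibiting a provable implication whose consequent contains the target one.

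Next I would prove \textup{(S)}. Writing $s=S(A,C)$, applying \textup{(Mul)} to $A\Rightarrow B$ with $c=s$ gives $s^{\ast}\otimes A\Rightarrow s^{\ast}\otimes B$. The key algebraic fact is the inclusion $s^{\ast}\otimes A\subseteq C$: for each $y\in Y$ one has $s^{\ast}\otimes A(y)\leq s\otimes A(y)\leq (A(y)\rightarrow C(y))\otimes A(y)\leq C(y)$, using \eqref{TS:Sub}, the estimate $s=S(A,C)\leq A(y)\rightarrow C(y)$ together with isotony \eqref{eqn:isot} of $\otimes$, and finally \eqref{eqn:prl1A}. Hence $(s^{\ast}\otimes A)\cup C=C$, so $C\Rightarrow s^{\ast}\otimes A$ is an instance of \textup{(Ax)}, and \textup{(Tra)} applied to $C\Rightarrow s^{\ast}\otimes A$ and $s^{\ast}\otimes A\Rightarrow s^{\ast}\otimes B$ yields $C\Rightarrow s^{\ast}\otimes B$, which is \textup{(S)}.

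For \textup{(Sub)}, put $t=S(D,B)$ and start from the implication $C\Rightarrow s^{\ast}\otimes B$ just obtained. The analogous inclusion $t\otimes D\subseteq B$, proved exactly as above, gives $D\otimes s^{\ast}\otimes t=s^{\ast}\otimes(t\otimes D)\subseteq s^{\ast}\otimes B$ by commutativity and isotony \eqref{eqn:isot}; the monotone consequent principle then turns $C\Rightarrow s^{\ast}\otimes B$ into $C\Rightarrow D\otimes S(A,C)^{\ast}\otimes S(D,B)$. For \textup{(Sim)} I would start from \textup{(Sub)} and use only the comparisons $A\approx C\leq S(A,C)$ and $D\approx B\leq S(D,B)$, which hold because $P\approx Q=S(P,Q)\wedge S(Q,P)$. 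Since the hedge is isotone---an easy consequence of \eqref{TS:One} and \eqref{TS:MP}---we obtain $(A\approx C)^{\ast}\leq S(A,C)^{\ast}$, hence $D\otimes(A\approx C)^{\ast}\otimes(D\approx B)\subseteq D\otimes S(A,C)^{\ast}\otimes S(D,B)$, and one more application of the monotone consequent principle finishes \textup{(Sim)}.

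The only genuinely non-mechanical point is the inclusion $S(A,C)^{\ast}\otimes A\subseteq C$ underlying \textup{(S)} (and its sibling $S(D,B)\otimes D\subseteq B$); everything else is bookkeeping with \textup{(Ax)}, \textup{(Tra)}, and \textup{(Pro)}. I expect this inclusion, together with the small observation that the hedge ${}^{\ast}$ is isotone---which is exactly what lets the similarity degrees in \textup{(Sim)} be dominated by the inclusion degrees used in \textup{(Sub)}---to be the main obstacle; once they are in hand the rest follows smoothly.
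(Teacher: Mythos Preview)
Your syntactic derivations are correct. The paper, however, takes a completely different route: it simply observes that each of \textup{(S)}, \textup{(Sub)}, \textup{(Sim)} is a sound rule (a semantic check) and then invokes the completeness theorem, Theorem~\ref{Th:Comp}, to conclude derivability from \textup{(\axA)}--\textup{(\axC)}. Your approach gives explicit proofs inside the calculus, built from \textup{(Mul)}, \textup{(Ax)}, \textup{(Tra)}, and \textup{(Pro)}, with the inclusion $S(A,C)^{\ast}\otimes A\subseteq C$ doing the real work; the paper's approach trades this for a short semantic verification plus an appeal to the heavy machinery of completeness. One concrete gain of your version is that it does not depend on the finiteness hypotheses on $L$ and $Y$ that Theorem~\ref{Th:Comp} requires, so your argument establishes the lemma in full generality, whereas the paper's proof, as written, only covers the finite case.
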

\begin{proof}
   One may easily check that (S), (Sub), and (Sim) are all sound rules.
  The assertion then follows from completeness of (\axA)--(\axC).
\end{proof}

\subsection{Graded-style completeness}
\label{sec:gsc}

The ordinary-style completeness does not capture semantic entailment
to its full extent in that it only provides a syntactic characterization
of entailment to degree $1$. In this section, we provide a graded-style
completeness theorem which extends to general degrees of entailment.
Note that the graded-style completeness goes back to Pavelka's seminal 
work~\cite{Pav:Ofl} and is further worked out, e.g., in \cite{Ger:FL,Haj:MFL}.
It is based on Goguen's idea \cite{Gog:Lic} of a  proof as a sequence of
weighted formulas, i.e. pairs $\tu{\varphi,a}$ where $\varphi$ is a formula
and $a$ a truth degree to which $\varphi$ has been inferred
using deduction rules that operate on both formulas and truth degrees.
Graded-style completeness then says that the possibly intermediate degree 
of entailment of $\varphi$ equals the degree of provability of $\varphi$, 
i.e. the supremum of $a$s that appear in $\tu{\varphi,a}$ at the end of
proofs. 

In our treatment, this agenda is not employed. Instead, we utilize 
Corollary~\ref{thm:reducT} and ordinary-style completeness and define
the notion of of provability degree accordingly.
Namely, for a fuzzy set $T$ of graded attribute implications and for $A \Rightarrow B$
we define the \emph{degree $|A \Rightarrow B|_T \in L$ to which
  $A \Rightarrow B$ is provable from $T$} by
\begin{eqnarray}
  \label{eqn:provdeg}
  |A \Rightarrow B|_T = \textstyle\bigvee\{c \in L \,|\,
  \mathrm{cr}(T) \vdash A \Rightarrow c \otimes B\},
\end{eqnarray}
where $\mathrm{cr}(T)$ is defined by (\ref{eqn:cT}).
Alternatively, $|A \Rightarrow B|_T$ may be defined as the largest $c$
for which $\mathrm{cr}(T) \vdash A \Rightarrow c \otimes B$:

\begin{lemma}
The set 
$\{c \in L \,|\,  \mathrm{cr}(T) \vdash A \Rightarrow c \otimes B\}$ 
 in (\ref{eqn:provdeg}) has a largest element.
\end{lemma}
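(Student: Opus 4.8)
The plan is to analyze the set $S=\{c\in L\mid \mathrm{cr}(T)\vdash A\Rightarrow c\otimes B\}$ directly and show that the join $\bigvee S$ is itself a member of $S$; this makes $\bigvee S$ the largest element, since it is trivially an upper bound. I would proceed purely syntactically, exploiting the derived rules of Lemma~\ref{thm:derArm}, and only afterwards note a semantic shortcut. First I would record that $S$ is nonempty: since $0\otimes B=\emptyset$ by (\ref{eqn:prl1E}) and $A\Rightarrow\emptyset$ is provable (apply (Ref) to get $A\Rightarrow A$, then (Pro) to strip everything off), we have $0\in S$.

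Next I would show $S$ is closed under binary joins. Suppose $c_1,c_2\in S$, so that $\mathrm{cr}(T)\vdash A\Rightarrow c_1\otimes B$ and $\mathrm{cr}(T)\vdash A\Rightarrow c_2\otimes B$. Applying (Add) from Lemma~\ref{thm:derArm} yields $\mathrm{cr}(T)\vdash A\Rightarrow (c_1\otimes B)\cup(c_2\otimes B)$. By (\ref{eqn:prlsiA}) we have, for each $y$, $((c_1\vee c_2)\otimes B)(y)=(c_1\vee c_2)\otimes B(y)=(c_1\otimes B(y))\vee(c_2\otimes B(y))=((c_1\otimes B)\cup(c_2\otimes B))(y)$, so $(c_1\vee c_2)\otimes B=(c_1\otimes B)\cup(c_2\otimes B)$ and hence $c_1\vee c_2\in S$. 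Since $L$ is finite (the hypothesis under which ordinary-style completeness, Theorem~\ref{Th:Comp}, applies), $S\subseteq L$ is a finite, nonempty set closed under binary and therefore under all finite joins; consequently $\bigvee S$ is a finite join of members of $S$, lies in $S$, and is its largest element.

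A second, more semantic route is available and worth recording. By Theorem~\ref{Th:Comp} one has $c\in S$ iff $||A\Rightarrow c\otimes B||_{\mathrm{cr}(T)}=1$, so Theorem~\ref{Th:Sem_graded3} identifies $d:=\bigvee S$ with $||A\Rightarrow B||_{\mathrm{cr}(T)}$. Using Lemma~\ref{Th:Sem_graded}(ii) together with the infinite distributivity (\ref{eqn:prlsiB}), I would compute $||A\Rightarrow d\otimes B||_{\mathrm{cr}(T)}=\bigwedge_{M}\bigl(d\to||A\Rightarrow B||_M\bigr)=d\to||A\Rightarrow B||_{\mathrm{cr}(T)}=d\to d=1$, and one final appeal to completeness turns this into $\mathrm{cr}(T)\vdash A\Rightarrow d\otimes B$, i.e.\ $d\in S$. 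The step I expect to be the crux in either route is precisely the \emph{attainment} of the supremum: syntactically it forces the reduction of a potentially arbitrary join to a finite one, which is exactly where finiteness of $L$ enters, while semantically it is the return passage from $||\cdot||_{\mathrm{cr}(T)}=1$ back to $\vdash$, again licensed only by the completeness theorem and hence under the same finiteness assumption.
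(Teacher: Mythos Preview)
Your proposal is correct, and both routes work. The paper's own proof follows essentially your second, semantic route: it invokes completeness (Theorem~\ref{Th:Comp}) to trade provability for full entailment and then verifies, by a direct model-by-model computation using adjointness and the distributivity $a\otimes\bigvee_k c_k=\bigvee_k(a\otimes c_k)$, that the semantic set $\{c\mid ||A\Rightarrow c\otimes B||_{\mathrm{cr}(T)}=1\}$ is closed under arbitrary suprema. Your Route~2 is the same idea but packaged more efficiently through the already-established Lemma~\ref{Th:Sem_graded}(ii) and Theorem~\ref{Th:Sem_graded3}, which makes the calculation $||A\Rightarrow d\otimes B||_{\mathrm{cr}(T)}=d\to d=1$ a one-liner.

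Your Route~1 is genuinely different and worth highlighting: it is a purely \emph{syntactic} argument that never touches models or completeness. By showing $0\in S$ and that $S$ is closed under binary joins via (Add) and the identity $(c_1\vee c_2)\otimes B=(c_1\otimes B)\cup(c_2\otimes B)$, you reduce the whole lemma to the finiteness of $L$. This is more elementary and makes transparent that the only place the finiteness hypothesis is genuinely needed is to turn an arbitrary supremum into a finite join. The paper's argument, by contrast, proves the stronger fact that the \emph{semantic} set is closed under arbitrary suprema (which holds for any complete residuated lattice), and only needs finiteness to bridge back to provability via Theorem~\ref{Th:Comp}.
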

\begin{proof}
   Due to Theorem \ref{Th:Comp}, we prove the claim by checking  that 
   if $||A\Rightarrow c_k\otimes B||_{\mathrm{cr}(T)}=1$ for $k\in K$,
   then $||A\Rightarrow (\bigvee_{\!k\in K}c_k)\otimes B||_{\mathrm{cr}(T)}=1$.
   Observe that $||A\Rightarrow (\bigvee_{\!k\in K}c_k)\otimes B||_{\mathrm{cr}(T)}=1$
   means that for each model $M\in\mathrm{Mod}(\mathrm{cr}(T))$, we have
   $||A\Rightarrow (\bigvee_{\!k\in K}c_k)\otimes B||_M=1$, i.e.
   $S(A,M)^\ast\leq S((\bigvee_{\!k\in K}c_k)\otimes B,M)=
   \bigwedge_{y\in Y} ((B(y) \otimes \bigvee_{\!k\in K}c_k) \to M(y))$,
   which holds iff for every $y\in Y$, 
  $S(A,M)^\ast\leq (B(y) \otimes \bigvee_{\!k\in K}c_k) \to M(y)$.
  Due to adjointness and the distributiviy of $\otimes$ over $\bigvee$, 
  the last inequality is equivalent to 
  $\bigvee_{\!k\in K}(c_k\otimes B(y) \otimes S(A,M)^\ast)\leq  M(y)$
  which holds if for each $k\in K$,
  $c_k\otimes B(y) \otimes S(A,M)^\ast\leq  M(y)$. This is equivalent to
  $S(A,M)^\ast\leq (c_k\otimes B(y))\to M(y)$.
  Now, the last inequality holds for every $y\in Y$ if{}f
  $S(A,M)^\ast\leq S(c_k\otimes B, M)$ which is true due to the assumptions
  $||A\Rightarrow c_k\otimes B||_{\mathrm{cr}(T)}=1$ and $M\in\mathrm{Mod}(\mathrm{cr}(T))$.
\end{proof}

We now have:

\begin{theorem}
  Let\/ $\mathbf{L}$ and $Y$ be finite. Then for every fuzzy set $T$ of
  fuzzy attribute implications and $A \Rightarrow B$ we have
  $$|A \Rightarrow B|_T = ||A \Rightarrow B||_T.$$
\end{theorem}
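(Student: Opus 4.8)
The plan is to observe that the degree of provability $|A \Rightarrow B|_T$ and the degree of entailment $||A \Rightarrow B||_T$ are both suprema over the \emph{same} collection of scalars $c \in L$, and that the two defining conditions on $c$ coincide by the ordinary-style completeness theorem applied to the crisp theory $\mathrm{cr}(T)$. Thus no genuinely new argument is needed; the proof is a matter of chaining together results already in hand.

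Concretely, I would start from the definition (\ref{eqn:provdeg}), namely
\[
  |A \Rightarrow B|_T = \textstyle\bigvee\{c \in L \,|\, \mathrm{cr}(T) \vdash A \Rightarrow c \otimes B\}.
\]
Since $\mathrm{cr}(T)$ is a crisp theory and both $L$ and $Y$ are finite, Theorem~\ref{Th:Comp} applies to it and yields, for each fixed $c \in L$, the equivalence $\mathrm{cr}(T) \vdash A \Rightarrow c \otimes B$ if{}f $||A \Rightarrow c \otimes B||_{\mathrm{cr}(T)} = 1$. Consequently the two index sets
\[
  \{c \in L \,|\, \mathrm{cr}(T) \vdash A \Rightarrow c \otimes B\}
  \quad\text{and}\quad
  \{c \in L \,|\, ||A \Rightarrow c \otimes B||_{\mathrm{cr}(T)} = 1\}
\]
coincide, hence so do their suprema. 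Invoking Corollary~\ref{thm:reducT}, which states $||A \Rightarrow B||_T = \bigvee\{c \in L \,|\, ||A \Rightarrow c \otimes B||_{\mathrm{cr}(T)} = 1\}$, I obtain $|A \Rightarrow B|_T = ||A \Rightarrow B||_T$.

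As for the difficulty, there is essentially no obstacle remaining at this stage: all the labor has already been discharged by the earlier development. The real content lives in the reduction of graded entailment to full ($1$-degree) entailment over $\mathrm{cr}(T)$ (Lemma~\ref{Th:Sem_graded}, Theorem~\ref{thm:cT}, Theorem~\ref{Th:Sem_graded3} and Corollary~\ref{thm:reducT}) and in the ordinary-style completeness theorem (Theorem~\ref{Th:Comp}) bridging provability and $1$-degree entailment; the definition (\ref{eqn:provdeg}) of the provability degree was tailored precisely so that these two threads meet. The only point requiring a moment's care is verifying that the finiteness hypotheses of Theorem~\ref{Th:Comp} transfer to $\mathrm{cr}(T)$, which they do because the ambient $L$ and $Y$ are assumed finite.
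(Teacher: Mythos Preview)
Your proposal is correct and follows essentially the same approach as the paper's own proof, which simply states that the result is a consequence of Corollary~\ref{thm:reducT} and Theorem~\ref{Th:Comp}. You have spelled out in detail precisely how these two ingredients combine, which is exactly the intended argument.
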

\begin{proof}
  Consequence of Corollary~\ref{thm:reducT} and Theorem~\ref{Th:Comp}.
\end{proof}

\section{Bases of Graded Attribute Implications}
\label{sec:bai}


\subsection{Completeness, non-redundancy, bases}
\label{sec:cnrb}

When exploring graded attribute implications of a table $\tu{X,Y,I}$ with grades, one is
interested in small informative sets of implications.
A reasonable approach is to require, on one hand, that such set contains
information 
about validity in $\tu{X,Y,I}$ of all implications and, on the other hand, is
non-redundant. Such sets are investigated in this section.

\begin{definition}\label{def:base}
  A set $T$ of graded attribute implications is called \emph{complete in} $\tu{X,Y,I}$ if
  \begin{eqnarray}\label{eqn:base}
      ||A \Rightarrow B||_T = ||A \Rightarrow B||_{\langle X,Y,I\rangle}
  \end{eqnarray}
  for every implication $A\Rightarrow B$. 
\end{definition}

\begin{remark}
  (1)
   That is,
  $T$ is complete if the degree of entailment by $T$ coincides with the degree
  of validity in $\tu{X,Y,I}$. In this sense, a complete set contains all information about validity in $\tu{X,Y,I}$
   via semantic entailment.

  (2)
  Every $A\Rightarrow B$ from a complete $T$ is valid in  $\tu{X,Y,I}$ to degree $1$.
  This is a direct consequence of (\ref{eqn:base}) and the fact that $||A \Rightarrow B||_T=1$
  for $A\Rightarrow B\in T$.
%
\end{remark}

Completeness of $T$ may be characterized in terms of models  of $T$ the following way:

\begin{theorem}\label{thm:basechar}
$T$ is complete in $\tu{X,Y,I}$ if{}f
$\mathrm{Mod}(T) = \mathrm{Int}(X^{\ast},Y,I)$.
\end{theorem}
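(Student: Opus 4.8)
The plan is to reduce the statement to an equality of two $\mathbf{L}^\ast$-closure operators and then pass to their systems of fixpoints. The direction ``$\Leftarrow$'' is immediate: assuming $\mathrm{Mod}(T)=\mathrm{Int}(X^\ast,Y,I)$, the definition of semantic entailment gives $||A\Rightarrow B||_T=\bigwedge_{M\in\mathrm{Mod}(T)}||A\Rightarrow B||_M=\bigwedge_{M\in\mathrm{Int}(X^\ast,Y,I)}||A\Rightarrow B||_M=||A\Rightarrow B||_{\mathrm{Int}(X^\ast,Y,I)}$, which equals $||A\Rightarrow B||_{\langle X,Y,I\rangle}$ by Theorem~\ref{thm:alter}. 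Hence $T$ is complete.

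For ``$\Rightarrow$'' I would start from the two closed-form expressions for the degrees involved. By Theorem~\ref{thm:ent}, $||A\Rightarrow B||_T=S(B,C_{\mathrm{Mod}(T)}(A))$, where $C_{\mathrm{Mod}(T)}$ is the $\mathbf{L}^\ast$-closure operator attached to the $\mathbf{L}^\ast$-closure system $\mathrm{Mod}(T)$ (Theorem~\ref{thm:Modcls}). By Theorem~\ref{thm:alter}, $||A\Rightarrow B||_{\langle X,Y,I\rangle}=S(B,A^{\downarrow\uparrow})$. Completeness therefore reads $S(B,C_{\mathrm{Mod}(T)}(A))=S(B,A^{\downarrow\uparrow})$ for all $A,B\in\mathbf{L}^Y$.

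The crux is to upgrade this family of equalities of inclusion degrees into an equality of the two $\mathbf{L}$-sets $C_{\mathrm{Mod}(T)}(A)$ and $A^{\downarrow\uparrow}$. Fix $A$. Taking $B=C_{\mathrm{Mod}(T)}(A)$ yields $1=S(C_{\mathrm{Mod}(T)}(A),C_{\mathrm{Mod}(T)}(A))=S(C_{\mathrm{Mod}(T)}(A),A^{\downarrow\uparrow})$, i.e. $C_{\mathrm{Mod}(T)}(A)\subseteq A^{\downarrow\uparrow}$; symmetrically, taking $B=A^{\downarrow\uparrow}$ gives the reverse inclusion. Hence $C_{\mathrm{Mod}(T)}(A)=A^{\downarrow\uparrow}$ for every $A$, so the two closure operators coincide.

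Finally I would pass to fixpoints. Since $\mathcal{S}\mapsto C_{\mathcal{S}}$ and $C\mapsto\mathcal{S}_C=\{A\mid A=C(A)\}$ are mutually inverse bijections between $\mathbf{L}^\ast$-closure systems and $\mathbf{L}^\ast$-closure operators, equality of the operators forces equality of their fixpoint systems: $\mathrm{Mod}(T)=\{A\mid A=C_{\mathrm{Mod}(T)}(A)\}=\{A\mid A=A^{\downarrow\uparrow}\}=\mathrm{Int}(X^\ast,Y,I)$, the last equality being (\ref{eqn:Mint}). I expect the only delicate points to be bookkeeping the correspondence between $\mathrm{Mod}(T)$ and its operator $C_{\mathrm{Mod}(T)}$ and checking that the test inputs $B=C_{\mathrm{Mod}(T)}(A)$ and $B=A^{\downarrow\uparrow}$ are admissible substitutions in the quantified identity; everything else is a direct application of the already-established Theorems~\ref{thm:ent}, \ref{thm:alter}, and \ref{thm:Modcls}.
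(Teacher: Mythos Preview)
Your proof is correct but takes a genuinely different route from the paper for the ``$\Rightarrow$'' direction. The paper argues the two inclusions $\mathrm{Mod}(T)\subseteq\mathrm{Int}(X^\ast,Y,I)$ and $\mathrm{Int}(X^\ast,Y,I)\subseteq\mathrm{Mod}(T)$ directly: for the first, it takes $M\in\mathrm{Mod}(T)$, uses completeness and Theorem~\ref{thm:alter} to get $||M\Rightarrow M^{\down\up}||_T=1$, and then evaluates this implication in $M$ itself to conclude $M^{\down\up}\subseteq M$; for the second, it observes that every implication in $T$ is fully valid in $\tu{X,Y,I}$ and hence in every intent. Your argument instead lifts completeness to an identity of closure operators $C_{\mathrm{Mod}(T)}={}^{\down\up}$ via Theorem~\ref{thm:ent} and then reads off the equality of fixpoint systems from the bijection between $\mathbf{L}^\ast$-closure systems and operators. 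The paper's approach is more elementary---it avoids invoking Theorem~\ref{thm:ent} and the closure-operator bijection---while yours is more structural and arguably cleaner once that machinery is available; in particular, your method makes transparent that completeness is precisely the statement that the syntactic closure $C_{\mathrm{Mod}(T)}$ coincides with the data-induced closure ${}^{\down\up}$. Both the ``$\Leftarrow$'' directions are essentially identical.
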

\begin{proof}
  Let $T$ be complete in $\tu{X,Y,I}$.
  Let first $M \in \mathrm{Mod}(T)$.
  Due to completeness of $T$ and Theorem \ref{thm:alter}, 
     $$||M \Rightarrow M^{\down\up}||_T =
    ||M \Rightarrow M^{\down\up}||_{\tu{X,Y,I}} = S(M^{\down\up},M^{\down\up}) = 1.$$
    As $M \in \mathrm{Mod}(T)$, $||M \Rightarrow M^{\down\up}||_T =1$ 
   yields $||M \Rightarrow M^{\down\up}||_M = 1$ from which it follows
   $1 = S(M,M)^{\ast} \leq S(M^{\down\up},M)$,
  i.e. $M^{\down\up} \subseteq M$. 
  Since $M\subseteq M^{\down\up}$ is always the case, we have
  $M= M^{\down\up}$, hence 
   $M \in \mathrm{Int}(X^{\ast},Y,I)$ by virtue of (\ref{eqn:Mint}).
  We proved $\mathrm{Mod}(T) \subseteq \mathrm{Int}(X^{\ast},Y,I)$.
  Conversely, let $M \in \mathrm{Int}(X^{\ast},Y,I)$.
  Clearly, since $T$ is complete, Theorem \ref{thm:alter} implies
  \[
       ||A \Rightarrow B||_M \geq ||A \Rightarrow B||_{\mathrm{Int}(X^{\ast},Y,I)} = ||A\Rightarrow B||_T
   \]
  for every implication $A\Rightarrow B$.
  In particular, if $A\Rightarrow B\in T$ then $||A\Rightarrow B||_T=1$ and the last inequality
  yields 
    $||A \Rightarrow B||_M=1$. This shows $M \in \mathrm{Mod}(T)$ and thus
  $\mathrm{Int}(X^{\ast},Y,I) \subseteq \mathrm{Mod}(T)$.

  Conversely, if $\mathrm{Mod}(T) = \mathrm{Int}(X^{\ast},Y,I)$ then $T$ is complete since due to
  Theorem \ref{thm:alter},
  $$||A \Rightarrow B||_T = ||A \Rightarrow B||_{\mathrm{Mod}(T)}=
  ||A \Rightarrow B||_{\mathrm{Int}(X^{\ast},Y,I)} =
  ||A \Rightarrow B||_{\langle X,Y,I\rangle}.$$
\end{proof}

\begin{remark}\label{rem:baseMod}
  (1)
   Each of the two  inclusions of $\mathrm{Mod}(T) = \mathrm{Int}(X^{\ast},Y,I)$ has
   a natural meaning. Namely,
  as an inspection of the proof of Theorem \ref{thm:basechar} shows, (a)
  $\mathrm{Mod}(T) \subseteq \mathrm{Int}(X^{\ast},Y,I)$ if and only if
   $ ||A \Rightarrow B||_T \geq ||A \Rightarrow B||_{\langle X,Y,I\rangle}$ for every $A\Rightarrow B$,
  and (b)
  $\mathrm{Mod}(T) \supseteq \mathrm{Int}(X^{\ast},Y,I)$ if and only if
   $ ||A \Rightarrow B||_T \leq ||A \Rightarrow B||_{\langle X,Y,I\rangle}$ for every $A\Rightarrow B$.
%
%
%
%
%

  (2)
  Let $\mathrm{Mod}(T)\supseteq \{I_x \mid x\in X\}$
  with $I_x$s given by (\ref{eqn:Ix}). Then 
  $ ||A \Rightarrow B||_T= ||A \Rightarrow B||_{\mathrm{Mod}(T)} \leq ||A \Rightarrow B||_{\langle X,Y,I\rangle}$.
  If, on the other hand, $ ||A \Rightarrow B||_T\leq ||A \Rightarrow B||_{\langle X,Y,I\rangle}$, then according
  to (1)(b), $\mathrm{Mod}(T) \supseteq \mathrm{Int}(X^{\ast},Y,I)$, whence also 
  $\mathrm{Mod}(T)\supseteq \{I_x \mid x\in X\}$ because  $\mathrm{Int}(X^{\ast},Y,I)\supseteq \{I_x \mid x\in X\}$.
  This shows that in (1)(b), the condition $\mathrm{Mod}(T) \supseteq \mathrm{Int}(X^{\ast},Y,I)$
  may be replaced by $\mathrm{Mod}(T)\supseteq \{I_x \mid x\in X\}$.
  Now, since $\mathrm{Mod}(T)\supseteq \{I_x \mid x\in X\}$ says that 
  every $A\Rightarrow B\in T$ is valid in $\tu{X,Y,I}$ to degree $1$, 
  we conclude that in order to check that $T$ is complete in $\tu{X,Y,I}$, it suffices to check that 
  every $A\Rightarrow B\in T$ be valid in $\tu{X,Y,I}$ to degree $1$ and that $\mathrm{Mod}(T) \subseteq \mathrm{Int}(X^{\ast},Y,I)$.
%
\end{remark}

\begin{definition}\label{def:nrbase}
  A set $T$ of graded implications is called a \emph{base of} $\tu{X,Y,I}$ 
  if $T$ is complete in $\tu{X,Y,I}$ and no proper subset of $T$
  is complete in $\tu{X,Y,I}$.
\end{definition}

Alternatively, one can define the notion of a  base the following way.
Call a set $T$ of implications \emph{redundant} if there exists $A \Rightarrow B \in T$
such that $||A \Rightarrow B||_{T - \{A \Rightarrow B\}} = 1$.
Otherwise, call $T$ non-redundant. 

\begin{lemma}\label{thm:nonr}
   The following conditions are equivalent:
   \bgroup%
   \addtolength{\leftmargini}{1.2em}
   \begin{itemize}
     \item[\rm (i)]
    $T$ is a non-redundant set of implications.
     \item[\rm (ii)]
     For every $A \Rightarrow B \in T$: $\mathrm{Mod}(T) \subset \mathrm{Mod}(T-\{A \Rightarrow B\})$.
     \item[\rm (iii)]
     For every $A \Rightarrow B \in T$ there exists $C\Rightarrow D$ such that 
     $||C\Rightarrow D||_{T-\{A \Rightarrow B\}} <||C\Rightarrow D||_{T}$.
   \end{itemize}
   \egroup%
\end{lemma}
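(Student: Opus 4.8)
The plan is to reduce the entire three-way equivalence to one key observation relating the entailment degree of the removed implication to the model class, after which each implication becomes almost immediate. First I would record two elementary monotonicity facts stemming from the fact that $T-\{A\Rightarrow B\}$ is a subtheory of the crisp theory $T$: since fewer axioms yield more models, the inclusion $\mathrm{Mod}(T)\subseteq\mathrm{Mod}(T-\{A\Rightarrow B\})$ always holds, and consequently, directly from the defining infimum (\ref{eqn:entAI}), $||C\Rightarrow D||_{T-\{A\Rightarrow B\}}\leq ||C\Rightarrow D||_T$ for every implication $C\Rightarrow D$. I would also note that for $A\Rightarrow B\in T$ one has $||A\Rightarrow B||_T=1$, since every model of the crisp theory $T$ satisfies $A\Rightarrow B$ to degree $1$.

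The key step is the equivalence
\[
  ||A\Rightarrow B||_{T-\{A\Rightarrow B\}}=1
  \quad\text{if{}f}\quad
  \mathrm{Mod}(T-\{A\Rightarrow B\})=\mathrm{Mod}(T).
\]
To establish it I would use $\mathrm{Mod}(T)=\mathrm{Mod}(T-\{A\Rightarrow B\})\cap\mathrm{Mod}(\{A\Rightarrow B\})$. The degree $||A\Rightarrow B||_{T-\{A\Rightarrow B\}}$, being an infimum over $\mathrm{Mod}(T-\{A\Rightarrow B\})$, equals $1$ exactly when $||A\Rightarrow B||_M=1$ for every such $M$, that is, when $\mathrm{Mod}(T-\{A\Rightarrow B\})\subseteq\mathrm{Mod}(\{A\Rightarrow B\})$; intersecting with $\mathrm{Mod}(T-\{A\Rightarrow B\})$ itself, this is the same as $\mathrm{Mod}(T-\{A\Rightarrow B\})=\mathrm{Mod}(T)$. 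Combined with the always-valid inclusion $\mathrm{Mod}(T)\subseteq\mathrm{Mod}(T-\{A\Rightarrow B\})$, the negation reads: $||A\Rightarrow B||_{T-\{A\Rightarrow B\}}<1$ if{}f $\mathrm{Mod}(T)\subset\mathrm{Mod}(T-\{A\Rightarrow B\})$.

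This key equivalence yields \textup{(i)}$\Leftrightarrow$\textup{(ii)} at once, since by definition $T$ is non-redundant precisely when $||A\Rightarrow B||_{T-\{A\Rightarrow B\}}<1$ for every $A\Rightarrow B\in T$, which is exactly the strict inclusion in \textup{(ii)}. For \textup{(ii)}$\Rightarrow$\textup{(iii)} I would simply take $C\Rightarrow D$ to be $A\Rightarrow B$ itself: by \textup{(ii)} and the key equivalence, $||A\Rightarrow B||_{T-\{A\Rightarrow B\}}<1=||A\Rightarrow B||_T$, which is the required strict drop. For \textup{(iii)}$\Rightarrow$\textup{(ii)} I would argue by contradiction: were $\mathrm{Mod}(T-\{A\Rightarrow B\})=\mathrm{Mod}(T)$ for some removed implication, the two entailment degrees, being infima (\ref{eqn:entAI}) over the same model class, would coincide for \emph{every} $C\Rightarrow D$, contradicting \textup{(iii)}; hence the inclusion must be proper, which is \textup{(ii)}.

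The monotonicity bookkeeping is routine; the only place demanding care is the key equivalence, and within it the step translating the graded assertion $||A\Rightarrow B||_{T-\{A\Rightarrow B\}}=1$ into a set-theoretic statement about models. This rests on the fact that for a crisp theory ``being a model'' coincides with ``satisfying each axiom to degree $1$'', so that entailment to degree $1$ is equivalent to model-class inclusion. Once this translation is secured, every implication of the lemma follows formally, with $A\Rightarrow B$ serving as its own witness in the direction towards \textup{(iii)}.
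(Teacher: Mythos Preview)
Your proof is correct and follows essentially the same approach as the paper's: both rely on the always-valid inclusion $\mathrm{Mod}(T)\subseteq\mathrm{Mod}(T-\{A\Rightarrow B\})$, the fact that $||A\Rightarrow B||_T=1$ for $A\Rightarrow B\in T$, the choice of $A\Rightarrow B$ itself as the witness for (iii), and the observation that equal model classes force equal entailment degrees. The only difference is organizational---you isolate the equivalence ``$||A\Rightarrow B||_{T-\{A\Rightarrow B\}}=1$ if{}f $\mathrm{Mod}(T)=\mathrm{Mod}(T-\{A\Rightarrow B\})$'' upfront and derive (i)$\Leftrightarrow$(ii) from it directly, whereas the paper proves the cycle (i)$\Rightarrow$(ii)$\Rightarrow$(iii)$\Rightarrow$(i) with that equivalence used implicitly in the first and last steps.
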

\begin{proof}
   (i) $\Rightarrow$ (ii):
  As $\mathrm{Mod}(T) \subseteq \mathrm{Mod}(T-\{A \Rightarrow B\})$
   is always the case, it is sufficient to show that  
   $\mathrm{Mod}(T) \not= \mathrm{Mod}(T-\{A \Rightarrow B\})$.
   Indeed, if $\mathrm{Mod}(T) = \mathrm{Mod}(T-\{A \Rightarrow B\})$,
   then 
   $$||A\Rightarrow B||_{T-\{A \Rightarrow B\}}
  =||A\Rightarrow B||_{\mathrm{Mod}(T-\{A \Rightarrow B\})}
     =||A\Rightarrow B||_{\mathrm{Mod}(T)} 
=||A\Rightarrow B||_{T}=1,$$
   a contradiction to non-redundancy of $T$.

   (ii) $\Rightarrow$ (iii): 
   Due to (ii), there exists a model $M$ of $T-\{A \Rightarrow B\}$ which is not
   a model of $A\Rightarrow B$. Hence, putting $C\Rightarrow D=A\Rightarrow B$,
   we get $||A\Rightarrow B||_{T-\{A \Rightarrow B\}} <1=||A\Rightarrow B||_{T}$.

   (iii) $\Rightarrow$ (i): 
   If $T$ were redundant then for some $A \Rightarrow B \in T$ we have
   $||A\Rightarrow B||_{T-\{A \Rightarrow B\}} =1$ from which it follows
   $\mathrm{Mod}(T) \supseteq \mathrm{Mod}(T-\{A \Rightarrow B\})$. 
   Since the converse inclusion is obvious, we get
   $\mathrm{Mod}(T) = \mathrm{Mod}(T-\{A \Rightarrow B\})$. As a result,
    $||C\Rightarrow D||_{T-\{A \Rightarrow B\}} =||C\Rightarrow D||_{T}$ for each
   $C\Rightarrow D$, a contradiction to (iii).
\end{proof}


As the following theorem shows,  bases are just complete sets that are
non-redundant as sets of implications.

\begin{theorem}
  $T$ is a base of $\langle X,Y,I\rangle$ if and only if
  \bgroup%
  \addtolength{\leftmargini}{1.2ex}%
  \begin{itemize}
  \item
    $T$ is complete in $\langle X,Y,I\rangle$, and
  \item
    $T$ is non-redundant as a set of implications.
  \end{itemize}
  \egroup%
\end{theorem}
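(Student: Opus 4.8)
The plan is to prove the two implications separately, using throughout the elementary monotonicity fact that $T'\subseteq T$ forces $\mathrm{Mod}(T)\subseteq\mathrm{Mod}(T')$ and hence $||C\Rightarrow D||_{T'}\leq||C\Rightarrow D||_{T}$ for every implication $C\Rightarrow D$, since the infimum defining $||C\Rightarrow D||_{T'}$ ranges over the larger model set. I will also repeatedly use that $||A\Rightarrow B||_{T}=1$ precisely when every $M\in\mathrm{Mod}(T)$ validates $A\Rightarrow B$ to degree $1$, i.e. when $\mathrm{Mod}(T)\subseteq\mathrm{Mod}(\{A\Rightarrow B\})$.

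For the forward direction, assume $T$ is a base. By Definition~\ref{def:nrbase} the completeness bullet holds immediately, so it remains to establish non-redundancy, which I would do by contraposition. If $T$ were redundant, there would be $A\Rightarrow B\in T$ with $||A\Rightarrow B||_{T-\{A\Rightarrow B\}}=1$; as noted above this means $\mathrm{Mod}(T-\{A\Rightarrow B\})\subseteq\mathrm{Mod}(\{A\Rightarrow B\})$, and intersecting with the reverse inclusion $\mathrm{Mod}(T)\subseteq\mathrm{Mod}(T-\{A\Rightarrow B\})$ yields $\mathrm{Mod}(T)=\mathrm{Mod}(T-\{A\Rightarrow B\})$. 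Consequently $T$ and its proper subset $T-\{A\Rightarrow B\}$ entail every implication to the same degree, so completeness of $T$ transfers to $T-\{A\Rightarrow B\}$, contradicting the minimality clause of Definition~\ref{def:nrbase}. (This step is exactly the model-theoretic reformulation of redundancy recorded in Lemma~\ref{thm:nonr}, so I may simply cite it.) Hence $T$ is non-redundant.

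For the converse, assume $T$ is complete and non-redundant. Completeness is given, so I only need to exclude a complete proper subset. Suppose toward a contradiction that some $S\subsetneq T$ is complete and pick $A\Rightarrow B\in T\setminus S$. By Remark~(2) following Definition~\ref{def:base}, every member of the complete set $T$ is valid in $\tu{X,Y,I}$ to degree $1$, so $||A\Rightarrow B||_{\tu{X,Y,I}}=1$; completeness of $S$ then gives $||A\Rightarrow B||_{S}=||A\Rightarrow B||_{\tu{X,Y,I}}=1$. Since $A\Rightarrow B\notin S$ we have $S\subseteq T-\{A\Rightarrow B\}$, and the monotonicity fact gives $||A\Rightarrow B||_{T-\{A\Rightarrow B\}}\geq||A\Rightarrow B||_{S}=1$. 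Thus $A\Rightarrow B$ witnesses redundancy of $T$, a contradiction. Therefore no proper subset of $T$ is complete and $T$ is a base.

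I do not anticipate a genuine obstacle here; the whole argument reduces to bookkeeping. The one point to state carefully is the two opposite monotonicities at play: enlarging a theory shrinks its model set but can only raise (never lower) the degree to which any fixed implication is entailed, and conversely for subtheories. Getting these directions right in both the redundancy-to-completeness transfer (forward direction) and the degree comparison $||A\Rightarrow B||_{T-\{A\Rightarrow B\}}\geq||A\Rightarrow B||_{S}$ (converse direction) is the only place where a sign error could creep in.
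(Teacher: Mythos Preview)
Your proof is correct and follows essentially the same approach as the paper's. The paper's proof is terser---it simply invokes condition~(iii) of Lemma~\ref{thm:nonr} together with the defining equality $||C\Rightarrow D||_T=||C\Rightarrow D||_{\tu{X,Y,I}}$ for complete $T$---but your explicit model-theoretic unpacking of both directions amounts to the same argument, and your converse direction (using $S\subseteq T-\{A\Rightarrow B\}$ and monotonicity to force $||A\Rightarrow B||_{T-\{A\Rightarrow B\}}=1$) is exactly the sandwich step the paper leaves implicit.
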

\begin{proof}
   The assertion follows directly from the fact that non-redundancy of $T$ is equivalent to 
   condition (iii) of Lemma \ref{thm:nonr} and the fact that
   if $T$ is complete in $\tu{X,Y,I}$, one has $||C\Rightarrow D||_{T}=||C\Rightarrow D||_{\tu{X,Y,I}}$.
%
\end{proof}

Since one is naturally interested in implications that are fully true in data, the following
concept is of interest. A set $T$ of graded attribute implications is called a 
\emph{$1$-complete in $\langle X,Y,I\rangle$} if 
\[
||A\Rightarrow B||_T=1  \text{ if{}f } ||A\Rightarrow B||_{\langle X,Y,I\rangle}=1
\]
for every implication $A\Rightarrow B$. That is,  full consequences of a $1$-complete set 
need to be just the implications fully true in $\tu{X,Y,I}$.
Clearly, completeness in $\tu{X,Y,I}$ implies $1$-completeness in $\tu{X,Y,I}$. 
Interestingly, we have:

\begin{theorem}\label{thm:1base}
   $T$ is complete in $\langle X,Y,I\rangle$ if{}f $T$ is $1$-complete in $\langle X,Y,I\rangle$.
\end{theorem}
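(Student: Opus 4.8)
The forward direction, that completeness in $\tu{X,Y,I}$ implies $1$-completeness, is immediate and already noted in the text, so the plan concentrates on the converse. The key observation I would exploit is that \emph{both} the degree of entailment from $T$ and the degree of validity in $\tu{X,Y,I}$ admit the very same reduction to degree-$1$ assertions, namely the supremum reduction established in Theorem~\ref{Th:Sem_graded3}. Once both sides are written in this form, $1$-completeness lets me match them term by term.

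First I would record the reduction on the table side. By definition $||A\Rightarrow B||_{\tu{X,Y,I}}=\bigwedge_{x\in X}||A\Rightarrow B||_{I_x}$, an infimum of pointwise validity degrees over the system of $\mathbf{L}$-sets $\{I_x\mid x\in X\}$. Running exactly the computation from the proof of Theorem~\ref{Th:Sem_graded3} — writing the infimum as $\bigvee\{c\in L\mid c\le||A\Rightarrow B||_{I_x}\text{ for all }x\}$ and then converting each inequality $c\le||A\Rightarrow B||_{I_x}$ into the degree-$1$ statement $||A\Rightarrow c\otimes B||_{I_x}=1$ via Lemma~\ref{Th:Sem_graded}(iii) — yields
$$||A\Rightarrow B||_{\tu{X,Y,I}}=\textstyle\bigvee\{c\in L\mid||A\Rightarrow c\otimes B||_{\tu{X,Y,I}}=1\}.$$
On the entailment side, Theorem~\ref{Th:Sem_graded3} already supplies $||A\Rightarrow B||_T=\bigvee\{c\in L\mid||A\Rightarrow c\otimes B||_T=1\}$.

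Now suppose $T$ is $1$-complete. For each fixed $c\in L$, the expression $A\Rightarrow c\otimes B$ is a legitimate graded attribute implication, since $c\otimes B\in\mathbf{L}^Y$. Applying $1$-completeness to it shows that $||A\Rightarrow c\otimes B||_T=1$ holds if and only if $||A\Rightarrow c\otimes B||_{\tu{X,Y,I}}=1$. Hence the two index sets appearing in the suprema above coincide, so their suprema agree, giving $||A\Rightarrow B||_T=||A\Rightarrow B||_{\tu{X,Y,I}}$ for every $A\Rightarrow B$, which is exactly completeness in $\tu{X,Y,I}$.

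The step I expect to require the most care is the first one: verifying that the supremum reduction of Theorem~\ref{Th:Sem_graded3} transfers verbatim from entailment in $\mathrm{Mod}(T)$ to validity in the finite system $\{I_x\mid x\in X\}$. This is not formally a consequence of the \emph{statement} of that theorem, but its \emph{proof} uses nothing about $\mathrm{Mod}(T)$ beyond two facts: that $||\cdot||$ is an infimum of pointwise validity degrees, and Lemma~\ref{Th:Sem_graded}(iii). Both ingredients are equally available for $\{I_x\mid x\in X\}$ (and, through Theorem~\ref{thm:alter}, for $\mathrm{Int}(X^{\ast},Y,I)$ as well), so the reduction goes through. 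Once it is in place, the matching of index sets via $1$-completeness is purely formal and completes the proof.
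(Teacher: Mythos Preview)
Your proof is correct and takes a genuinely different route from the paper's. The paper does not reduce both sides to suprema; instead it fixes the particular witness $c = S(B,A^{\down\up})$, proves directly (using properties of the Galois connection $\tu{{}^\up,{}^\down}$) that $||A\Rightarrow S(B,A^{\down\up})\otimes B||_{\mathrm{Int}(X^\ast,Y,I)}=1$, applies $1$-completeness to this single implication, and then unwinds via Lemma~\ref{Th:Sem_graded}(ii) to obtain $S(B,A^{\down\up})\leq ||A\Rightarrow B||_{\mathrm{Mod}(T)}$; the reverse inequality comes from the inclusion $\mathrm{Int}(X^\ast,Y,I)\subseteq\mathrm{Mod}(T)$. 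Your argument, by contrast, is purely formal: it uses nothing about $\tu{X,Y,I}$ beyond the fact that $||\cdot||_{\tu{X,Y,I}}$ is an infimum of pointwise degrees, and it avoids Theorem~\ref{thm:alter} and the Galois-connection calculations entirely. What the paper's approach buys is that it pinpoints the exact degree $c$ that does the work; what yours buys is a cleaner, more portable argument that would apply verbatim to any two semantics defined as infima over families of $\mathbf{L}$-sets, once one has the analogue of Lemma~\ref{Th:Sem_graded}(iii).
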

\begin{proof}
  We need to show that if $T$ is a $1$-complete, it is complete. Let thus $T$ be 1-complete in
  $\tu{X,Y,I}$.
  Due to Theorem \ref{thm:alter}, $T$ is complete if{}f
   $||A\Rightarrow B||_{T}=||A\Rightarrow B||_{\mathrm{Int}(X^\ast,Y,I)}$, i.e. if{}f
  $||A\Rightarrow B||_{\mathrm{Mod}(T)}=||A\Rightarrow B||_{\mathrm{Int}(X^\ast,Y,I)}$ for every $A\Rightarrow B$,
   which we now verify. 
   First, observe that 
   \begin{equation}\label{eqn:1baseaux2}
       ||A\Rightarrow S(B,A^{\down\up})\otimes B||_{\mathrm{Int}(X^\ast,Y,I)}=1.
   \end{equation}
   Indeed, the equality holds if{}f for each $M\in \mathrm{Int}(X^\ast,Y,I)$, 
   \[
        S(A,M)^\ts \leq S(S(B,A^{\down\up})\otimes B, M).
   \]
   Since $S(S(B,A^{\down\up})\otimes B, M)=S(B,A^{\down\up}) \to S(B,M)$,
   the last inequality is equivalent to
   \begin{equation}\label{eqn:1baseaux1}
       S(B,A^{\down\up})\otimes  S(A,M)^\ts \leq S(B, M).
   \end{equation}
   Now, due to the properties of $\tu{{}^\up,{}^\down}$ established in \cite{Bel:Fgc},
   and due to $(\bigwedge_{k\in K} a_k)^\ts\leq \bigwedge_{k\in K}a_k^\ts$ and (\ref{TS:MP}), we get
   \begin{eqnarray*}
     && \kern-14mm \textstyle
     S(A,M)^\ts \leq S(A^\downts,M^\downts)^\ts =
     (\bigwedge_{x\in X}(A^\downts(x)\to M^\downts(x)))^\ts \leq
     \bigwedge_{x\in X}(A^\downts(x)^\ts\to M^\downts(x)^\ts) =\\
      &=& S({A^\downts}^\ts,{M^\downts}^\ts) \leq S(A^{\downts\ts\Uparrow},M^{\downts\ts\Uparrow})
      =S(A^{\downts\upts}, M^{\downts\upts}).
   \end{eqnarray*}
    Therefore, using $M\in \mathrm{Int}(X^\ast,Y,I)$ and thus $M=M^{\downts\upts}$, we get
   \begin{eqnarray*}
     S(B,A^{\down\up})\otimes  S(A,M)^\ts \leq 
     S(B,A^{\downts\upts})\otimes  S(A^{\downts\upts}, M^{\downts\upts}) \leq
     S(B,M^{\downts\upts}) = S(B,M),
   \end{eqnarray*}
    verifying (\ref{eqn:1baseaux1}) and thus also (\ref{eqn:1baseaux2}).
    Next,
    (\ref{eqn:1baseaux2}), Theorem \ref{thm:alter}, and the assumption that $T$ is
    1-complete yields
   \[
      1=||A\Rightarrow S(B,A^{\down\up})\otimes B||_{\mathrm{Int}(X^\ast,Y,I)} =||A\Rightarrow S(B,A^{\down\up})\otimes B||_{\mathrm{Mod}(T)}.
   \]
   Due to (ii) of Lemma~\ref{Th:Sem_graded} and (\ref{eqn:prlsiD}),
   \begin{eqnarray*}
     && \kern-14mm \textstyle
     ||A\Rightarrow S(B,A^{\down\up})\otimes B||_{\mathrm{Mod}(T)}=
      \bigwedge_{M\in \mathrm{Mod}(T)}  ||A\Rightarrow S(B,A^{\down\up})\otimes B||_M =\\
     &=& \textstyle
      \bigwedge_{M\in \mathrm{Mod}(T)} \bigl(  S(B,A^{\down\up})\rightarrow||A\Rightarrow B||_M\bigr) =
       S(B,A^{\down\up})\rightarrow  \bigwedge_{M\in \mathrm{Mod}(T)}  ||A\Rightarrow B||_M =\\
      &=& S(B,A^{\down\up})\rightarrow  ||A\Rightarrow B||_{\mathrm{Mod}(T)}.
   \end{eqnarray*}
    As a result,
  \[
      S(B,A^{\down\up})\rightarrow  ||A\Rightarrow B||_{\mathrm{Mod}(T)} =1,
  \]
  i.e. due to (\ref{eqn:prl1B}), $S(B,A^{\down\up})\leq  ||A\Rightarrow B||_{\mathrm{Mod}(T)}$.
  Since $S(B,A^{\down\up})=||A\Rightarrow B||_{\mathrm{Int}(X^\ast,Y,I)}$ due to Theorem \ref{thm:alter},
  we established $||A\Rightarrow B||_{\mathrm{Int}(X^\ast,Y,I)}\leq ||A\Rightarrow B||_{\mathrm{Mod}(T)}$. 
  The converse inequality, $||A\Rightarrow B||_{\mathrm{Int}(X^\ast,Y,I)}\geq ||A\Rightarrow B||_{\mathrm{Mod}(T)}$,
   follows directly from ${\mathrm{Int}(X^\ast,Y,I)}\subseteq\mathrm{Mod}(T)$,
  which is a consequence of $1$-completeness of $T$
  (the same argument as in the proof of Theorem \ref{thm:basechar} applies).
\end{proof}

\subsection{Bases and pseudo-intents}\label{sec:nrbpi}

A particular type of bases may be obtained from the following collections of $\mathbf{L}$-sets
of attributes.

\begin{definition}\label{def:pseudint}
A set ${\cal P} \subseteq \mathbf{L}^Y$ is called a \emph{system of pseudo-intents}
of $\langle X,Y,I\rangle$ if for each $P \in \mathbf{L}^Y$:
\begin{equation}\label{eqn:sp}
     \mbox{  $P \in {\cal P}$ \quad if{}f \quad $P \ne P^{\down\up}$ and
    $||Q \Rightarrow Q^{\down\up}||_P = 1$ 
    for each $Q \in {\cal P}$ with $Q \ne P$.}
\end{equation}
\end{definition}

\begin{remark} \label{rem:classicP}
(a)  Recall that $||Q \Rightarrow Q^{\down\up}||_P = 1$ means that 
  $S(Q,P)^\ast \leq S(Q^{\down\up},P)$. Hence, if ${}^\ast$ is the globalization,
  then since $S(Q,P)^\ast=1$ if $Q\subseteq P$ and $S(Q,P)^\ast=0$ if $Q\not\subseteq P$,
  condition (\ref{eqn:sp}) simplifies to
\begin{equation} \label{eqn:spG}
     \mbox{  $P \in {\cal P}$ \quad if{}f \quad $P \ne P^{\down\up}$ and
    $Q^{\down\up}\subset P$ 
    for each $Q \in {\cal P}$ with $Q \subset P$.}
\end{equation}
If $L$ is, moreover, finite then it is easily seen that there exists a unique system of pseudointents
of $\tu{X,Y,I}$.
In general, a system of pseudointents is not unique and may even not exist, as we demonstrate below.

(b)
For $L=\{0,1\}$, globalization is the only hedge ${}^\ast$ on $L$. 
One easily observes that in this case, (\ref{rem:classicP}) essentially coincides with the
definition of  a (unique) system of ordinary pseudo-intents \cite{G:BuI,GaWi:FCA,GuDu:Fmiirtdb}.
%
\end{remark}

The importance of the notion of a system of pseudointents derives from the following theorem.

\begin{theorem}\label{thm:nrbase}
  If ${\cal P}$ is a system of pseudo-intents of $\tu{X,Y,I}$ then
  \begin{eqnarray}
    T = \{P \Rightarrow P^{\down\up} \,|\, P \in {\cal P}\}
    \label{Def:T}
  \end{eqnarray}
  is a non-redundant base of $\tu{X,Y,I}$.
\end{theorem}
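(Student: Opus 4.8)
The plan is to verify the two defining properties of a base in turn: that $T$ is complete in $\langle X,Y,I\rangle$ and that $T$ is non-redundant. For completeness I would lean entirely on Theorem~\ref{thm:basechar}, which reduces the claim to the set equality $\mathrm{Mod}(T) = \mathrm{Int}(X^{\ast},Y,I)$; for non-redundancy I would invoke the equivalence in Lemma~\ref{thm:nonr} and produce, for each implication of $T$, an explicit model of the remaining implications that fails it.

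The easy inclusion $\mathrm{Int}(X^{\ast},Y,I) \subseteq \mathrm{Mod}(T)$ follows from Theorem~\ref{thm:alter}: for each $P \in {\cal P}$ we have $||P \Rightarrow P^{\down\up}||_{\mathrm{Int}(X^{\ast},Y,I)} = S(P^{\down\up},P^{\down\up}) = 1$, and since this degree is the infimum over all intents, every intent satisfies $P \Rightarrow P^{\down\up}$ to degree $1$; as this holds for every $P \in {\cal P}$, each intent is a model of $T$. The converse inclusion $\mathrm{Mod}(T) \subseteq \mathrm{Int}(X^{\ast},Y,I)$ is the heart of the argument, and it is exactly where the self-referential definition~(\ref{eqn:sp}) is used in both directions. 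I would take $M \in \mathrm{Mod}(T)$ and argue by contradiction from $M \ne M^{\down\up}$. Being a model, $M$ satisfies $||Q \Rightarrow Q^{\down\up}||_M = 1$ for every $Q \in {\cal P}$. If $M \in {\cal P}$, then $M \Rightarrow M^{\down\up} \in T$ forces $1 = S(M,M)^{\ast} \leq S(M^{\down\up},M)$, i.e.\ $M^{\down\up} \subseteq M$, which together with $M \subseteq M^{\down\up}$ yields $M = M^{\down\up}$, contradicting the assumption. If instead $M \notin {\cal P}$, then $M \ne Q$ for every $Q \in {\cal P}$, so $M$ meets precisely the hypotheses on the right-hand side of~(\ref{eqn:sp}), forcing $M \in {\cal P}$ and again a contradiction. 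Hence $M = M^{\down\up}$, so $M \in \mathrm{Int}(X^{\ast},Y,I)$, and Theorem~\ref{thm:basechar} delivers completeness.

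For non-redundancy I would check condition~(ii) of Lemma~\ref{thm:nonr}. Fixing $P \in {\cal P}$, the $\mathbf{L}$-set $P$ itself serves as the separating model: by the right-hand side of~(\ref{eqn:sp}) it satisfies $Q \Rightarrow Q^{\down\up}$ to degree $1$ for every $Q \in {\cal P}$ with $Q \ne P$, so $P \in \mathrm{Mod}(T - \{P \Rightarrow P^{\down\up}\})$; on the other hand $||P \Rightarrow P^{\down\up}||_P = S(P,P)^{\ast} \to S(P^{\down\up},P) = 1 \to S(P^{\down\up},P) = S(P^{\down\up},P) \ne 1$, because $P \ne P^{\down\up}$ together with $P \subseteq P^{\down\up}$ gives $P \subset P^{\down\up}$ and hence $P^{\down\up} \not\subseteq P$. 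Thus $P \notin \mathrm{Mod}(T)$, so $\mathrm{Mod}(T) \subset \mathrm{Mod}(T - \{P \Rightarrow P^{\down\up}\})$ strictly, and Lemma~\ref{thm:nonr} gives non-redundancy. (Here the correspondence $P \mapsto (P \Rightarrow P^{\down\up})$ is injective, so removing one implication from $T$ removes exactly one index $P$.)

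I expect the main obstacle to be the converse inclusion in the completeness step. The computation itself is short, but it requires reading the biconditional~(\ref{eqn:sp}) in both directions and splitting into the cases $M \in {\cal P}$ and $M \notin {\cal P}$; the conceptual subtlety is that a model of $T$ that failed to be an intent would be compelled to be a pseudo-intent, which is impossible precisely because it would then have to violate its own implication $M \Rightarrow M^{\down\up}$. Everything else reduces to the validity formula of Theorem~\ref{thm:alter} and the model-theoretic characterizations already available.
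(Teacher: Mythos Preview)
Your proof is correct and follows essentially the same route as the paper: both establish completeness via Theorem~\ref{thm:basechar} by showing $\mathrm{Mod}(T)=\mathrm{Int}(X^{\ast},Y,I)$, using the biconditional~(\ref{eqn:sp}) to force a non-intent model into ${\cal P}$ and derive a contradiction, and both witness non-redundancy with the pseudo-intent $P$ itself as a model of $T-\{P\Rightarrow P^{\down\up}\}$ that fails $P\Rightarrow P^{\down\up}$. The only cosmetic differences are that the paper skips your case split (it goes straight to $M\in{\cal P}$), verifies the easy inclusion by a direct Galois-connection computation rather than via Theorem~\ref{thm:alter}, and checks Definition~\ref{def:nrbase} directly rather than going through Lemma~\ref{thm:nonr}.
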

\begin{proof}
  First, we show that $T$ is complete in $\tu{X,Y,I}$. Due to 
  Theorem \ref{thm:basechar}, it is sufficient to show 
  $\mathrm{Mod}(T)={\mathrm{Int}(X^{\ast},Y,I)}$.
  To show $\mathrm{Mod}(T)\subseteq\mathrm{Int}(X^{\ast},Y,I)$, assume by contradiction
  that $M \in \mathrm{Mod}(T)-\mathrm{Int}(X^{\ast},Y,I)$.
  As $M\not\in \mathrm{Int}(X^{\ast},Y,I)$, we have $M \ne M^{\downarrow\uparrow}$ by (\ref{eqn:Mint}),
  thus in particular $S(M^{\downarrow\uparrow},M) \ne 1$.
   The assumption $M \in \mathrm{Mod}(T)$ yields that
    $||Q \Rightarrow Q^{\downarrow\uparrow}||_M = 1$ for every $Q \in {\cal P}$,
    which implies $M \in {\cal P}$ by Definition \ref{def:pseudint}. Hence,
    $M \Rightarrow M^{\downarrow\uparrow}$ belongs to $T$ and we have
    \begin{eqnarray*}
      ||M \Rightarrow M^{\downarrow\uparrow}||_M
      =
      S(M,M)^{\ast} \rightarrow S(M^{\downarrow\uparrow},M) 
     =
      1^{\ast} \rightarrow S(M^{\downarrow\uparrow},M) =
      S(M^{\downarrow\uparrow},M) \ne 1,
    \end{eqnarray*}
    which contradicts $M \in \mathrm{Mod}(T)$.
    To verify $\mathrm{Mod}(T)\supseteq\mathrm{Int}(X^{\ast},Y,I)$, 
    observe that if $M\in \mathrm{Int}(X^{\ast},Y,I)$ then since $M = M^{\downarrow\uparrow}$ by (\ref{eqn:Mint}),
    we have for every $P\in \mathbf{L}^Y$, thus in particular for every $P\in\mathcal{P}$,
    \begin{eqnarray*}
      S(P,M)^{\ast}=
      S(M^{\Downarrow},P^{\Downarrow})^{\ast} \leq
      S(M^{\Downarrow\ast},P^{\Downarrow\ast}) \leq 
      S(P^{\Downarrow\ast\Uparrow},M^{\Downarrow\ast\Uparrow}) =
      S(P^{\downarrow\uparrow},M^{\downarrow\uparrow}) =
      S(P^{\downarrow\uparrow},M),
    \end{eqnarray*}
     whence
     $$||P \Rightarrow P^{\downarrow\uparrow}||_M \leq 
     S(P,M)^{\ast} \rightarrow S(P^{\downarrow\uparrow},M) = 1,$$
     establishing that $M$ is a model of $T$.

     Second, we check that $T$ is  non-redundant.
    If $T' \subset T$, there exists 
    $P \in {\cal P}$ such that $P \Rightarrow P^{\downarrow\uparrow}\not\in T'$.
    Since $P \in {\cal P}$, Definition \ref{def:pseudint} yields that
    $||Q \Rightarrow Q^{\downarrow\uparrow}||_P = 1$ for every $Q \in {\cal P}$ with 
    $Q \ne P$, whence  $P \in \mathrm{Mod}(T')$.
    Since    $||P \Rightarrow P^{\downarrow\uparrow}||_P =
    S(P^{\downarrow\uparrow},P) \ne 1$ and since $T$ is cimplete in $\tu{X,Y,I}$, we obtain
  \begin{eqnarray*} 
    && \kern-1.2cm
      ||P \Rightarrow P^{\downarrow\uparrow}||_{\langle X,Y,I\rangle} =
         ||P \Rightarrow P^{\downarrow\uparrow}||_T  = 1
     \ne
       ||P \Rightarrow P^{\downarrow\uparrow}||_P \geq\\
     &\geq&\textstyle
       \bigwedge_{M\in\mathrm{Mod}(T')} ||P \Rightarrow P^{\downarrow\uparrow}||_M =
      ||P \Rightarrow P^{\downarrow\uparrow}||_{T'},
  \end{eqnarray*}
   establishing that $T'$ is not complete in $\tu{X,Y,I}$ and hence $T$ is non-redundant.
\end{proof}

As the following example illustrates, there may exist multiple systems of pseudo-intents for a given
$\tu{X,Y,I}$ which, moreover, vary in size.

\begin{example}\label{ex:Pmultipleexist}
Let $\mathbf{L}$ with $L = \{0,0.5,1\}$ be a  G\"odel chain
with ${}^{\ast}$ being the identity on $L$. Consider $\langle X,Y,I\rangle$, where $X=\{x\}$, $Y=\{y,z\}$,
and $I(x,y) = I(x,z) = 0$. The following systems of $\mathbf{L}$-sets of
attributes are the systems of pseudo-intents of $\langle X,Y,I\rangle$: \\[4pt]
\centerline{%
  \bgroup%
  \renewcommand{\arraystretch}{1.2}%
  \begin{tabular}{l@{~~}l}
    ${\cal P}_1 = \{\{z\},\{{}^{0.5\!\!}/y,{}^{0.5\!\!}/z\},\{y\}\}$,&
    ${\cal P}_3 = \{\{y\},\{{}^{0.5\!\!}/z\}\}$, \\
    ${\cal P}_2 = \{\{z\},\{{}^{0.5\!\!}/y\}\}$, &
    ${\cal P}_4 = \{\{{}^{0.5\!\!}/y\},\{{}^{0.5\!\!}/z\}\}$.
  \end{tabular}
  \egroup} 
\end{example}

In general, there exist finite data tables $\tu{X,Y,I}$ for which
there does not exist any system of pseudo-intents not even if ${}^\ast$ is
the globalization. This is illustrated by the following example.

\begin{example}\label{ex:Pnotexist}
Let $\mathbf{L}$ be any complete residuated lattice with $L=[0,1]$, 
let ${}^\ast$ be the globalization, and let $X = \{x\}$, $Y = \{y\}$,
and $I(x,y) = 0$. It is easily seen that
$\mathrm{Int}(X^{\ast},Y,I) = \{\{\},\{y\}\}$.
Assume that there exists a system $\mathcal{P}$ of pseudo-intents of $\tu{X,Y,I}$.
Due to Theorem~\ref{thm:nrbase},
$T = \{P \Rightarrow P^{\downarrow\uparrow} |\, P \in {\cal P}\}$ is a 
base and, therefore, Theorem~\ref{thm:basechar} implies that 
$\mathrm{Mod}(T) =\mathrm{Int}(X^{\ast},Y,I)= \{\{\},\{y\}\}$. 
Thus, for each $a \in (0,1)$ there must exist
$\{{}^{c\!\!}/y\} \in {\cal P}$  such that 
$||\{{}^{c\!\!}/y\} \Rightarrow
\{{}^{c\!\!}/y\}^{\downarrow\uparrow}||_{\{{}^{a\!\!}/y\}} \ne 1$,
i.e. 
\begin{equation}\label{eqn:ca}
(c\to a)^\ast =S(\{{}^{c\!\!}/y\}, \{{}^{a\!\!}/y\})^\ast\not\leq 
 S(\{{}^{c\!\!}/y\}^{\downarrow\uparrow}, \{{}^{a\!\!}/y\}).
\end{equation}
Since ${}^*$ is the globalization, \eqref{eqn:ca} gives 
$(c\to a)^\ast  = 1$, meaning that $c \leq a$
and thus $c \in [0,a]$.
Since $\{{}^{c\!\!}/y\}$ is a pseudointent, $\{{}^{c\!\!}/y\}\not\in \mathrm{Int}(X^{\ast},Y,I)= \{\{\},\{y\}\}$,
whence $c\not=0$ and thus $c\in(0,a]$.
Now, take any $b \in (0,c)$. Repeating the above idea yields a $d \in (0,b]$
such that $\{{}^{d\!\!}/y\} \in {\cal P}$ and
$||\{{}^{d\!\!}/y\} \Rightarrow
\{{}^{d\!\!}/y\}^{\downarrow\uparrow}||_{\{{}^{b\!\!}/y\}} \ne 1$.
Hence, the system of pseudo-intents ${\cal P}$ contains $\{{}^{c\!\!}/y\}$
and $\{{}^{d\!\!}/y\}$ with $0 < d < c < 1$, i.e.
$\{{}^{d\!\!}/y\} \subset \{{}^{c\!\!}/y\}$. However,
$\{{}^{d\!\!}/y\}^{\downarrow\uparrow} = \{y\} \not\subseteq \{{}^{c\!\!}/y\}$
which contradicts the assumption that ${\cal P}$ is a system of pseudo-intents.
\end{example}

In the remainder of this section we characterize the systems of pseudo-intents of $\tu{X,Y,I}$
as certain maximal independent sets in graphs associated to $\tu{X,Y,I}$.
For $\tu{X,Y,I}$, put
\begin{equation}
  V = \{P \in \mathbf{L}^Y \,|\, P \ne P^{\downarrow\uparrow}\}.
  \label{def:V}
\end{equation}
Clearly, if $V$ is empty, then ${\cal P} = \emptyset$ is the only system of pseudo-intents
of $\langle X,Y,I\rangle$. In this trivial case 
there is no non-trivial implication valid in $\tu{X,Y,I}$.
For non-empty $V$consider the binary relation $E$ on $V$ defined
by
\begin{equation}
  E = \{\langle P,Q\rangle \!\in V \,|\, P \ne Q \mbox{ and }
  ||Q \Rightarrow Q^{\downarrow\uparrow}||_P \ne 1\}
  \label{def:E}
\end{equation}
and the graph $\mathbf{G} = \langle V,E \cup E^{-1}\rangle$. 
The following lemma shows a first link between systems of pseudointents
and the graph $\mathbf{G}$.

\begin{lemma}\label{lem:mis}
  If $\emptyset \ne {\cal P}$ is a system of pseudo-intents then 
  ${\cal P}$ is a maximal independent set in $\mathbf{G}$.
\end{lemma}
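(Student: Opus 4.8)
The plan is to unfold the definition of a system of pseudo-intents and match it, clause by clause, against the definition of a maximal independent set in $\mathbf{G}$; the assertion turns out to be essentially a definitional reformulation once the edge relation is read correctly. First I would record that ${\cal P} \subseteq V$: every $P \in {\cal P}$ satisfies $P \ne P^{\downarrow\uparrow}$ by the forward direction of (\ref{eqn:sp}), which is exactly the membership condition (\ref{def:V}) for $V$, so it is legitimate to regard ${\cal P}$ as a set of vertices of $\mathbf{G}$.

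Next I would establish independence. Taking distinct $P,Q \in {\cal P}$, I must show $\langle P,Q\rangle \notin E \cup E^{-1}$. Because $\mathbf{G}$ carries the symmetric closure of $E$, this requires checking both that $\langle P,Q\rangle \notin E$ and that $\langle Q,P\rangle \notin E$. By (\ref{def:E}) the first amounts to $||Q \Rightarrow Q^{\downarrow\uparrow}||_P = 1$ and the second to $||P \Rightarrow P^{\downarrow\uparrow}||_Q = 1$. Both are immediate instances of the right-hand clause of (\ref{eqn:sp}): since $P \in {\cal P}$ and $Q$ is a distinct member of ${\cal P}$, the first equality holds, and swapping the roles of $P$ and $Q$ yields the second. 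Hence no edge of $\mathbf{G}$ joins two elements of ${\cal P}$.

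Finally I would establish maximality by showing that no vertex outside ${\cal P}$ can be adjoined without destroying independence. Let $P \in V \setminus {\cal P}$. Since $P \in V$ we have $P \ne P^{\downarrow\uparrow}$, so in the equivalence (\ref{eqn:sp}) the right-hand side can fail for $P$ only through its second conjunct; that is, there exists $Q \in {\cal P}$ with $Q \ne P$ and $||Q \Rightarrow Q^{\downarrow\uparrow}||_P \ne 1$. By (\ref{def:E}) this gives $\langle P,Q\rangle \in E \subseteq E \cup E^{-1}$, so $P$ is adjacent in $\mathbf{G}$ to a vertex of ${\cal P}$, and consequently ${\cal P} \cup \{P\}$ is not independent. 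Together with independence, this shows ${\cal P}$ is a maximal independent set.

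I expect no genuine obstacle here; the argument is a direct translation between two phrasings of the same condition. The only point deserving a little care is the interplay between the \emph{directed} relation $E$ and the \emph{undirected} graph $\mathbf{G} = \langle V, E \cup E^{-1}\rangle$: for independence one must verify the non-edge condition in both directions, and hence invoke (\ref{eqn:sp}) twice with the two pseudo-intents interchanged, whereas for maximality a single direction of adjacency already suffices.
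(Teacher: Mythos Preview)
Your proof is correct and follows essentially the same approach as the paper: verify ${\cal P}\subseteq V$, derive independence from the forward direction of (\ref{eqn:sp}), and derive maximality from its converse direction. The paper phrases maximality as a contradiction (assuming ${\cal P}\cup\{P\}$ independent and concluding $P\in{\cal P}$) while you phrase it contrapositively, and you are a bit more explicit about handling both orientations in $E\cup E^{-1}$, but the underlying argument is identical.
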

\begin{proof}
  Clearly, $\mathcal{P}\subseteq V$, since 
  $P\not=P^{\downarrow\uparrow}$ for every member of a system of pseudo-intents.
  ${\cal P}$ is independent becuse otherwise there exist $P,Q \in {\cal P}$ with  $\langle P,Q\rangle \in E$,
  i.e. $ ||Q \Rightarrow Q^{\downarrow\uparrow}||_P \ne 1$, a contradiction to the definition of a system
  of pseudointents.
  Maximality: If $\mathcal{P}\cup\{P\}$ is independent for some $P\in V-\mathcal{P}$, then
  for each $Q\in \mathcal{P}$ we have $\tu{P,Q}\not\in E$, i.e. 
  $||Q \Rightarrow Q^{\downarrow\uparrow}||_P = 1$. Definition~\ref{def:pseudint} then implies
  $P\in\mathcal{P}$, a contradiction.
\end{proof}

However, as  Example \ref{exm:simple} shows, there may exist maximal independent
sets in $\mathbf{G}$ that are not systems of pseudo-intents.
For this reason, define for any $Q \in V$ and ${\cal P} \subseteq V$ the following
subsets of $V$:
\begin{align*}
  \mathrm{Pred}(Q) &= \{P \in V \,|\, \langle P,Q\rangle \in E\}, \\
  \mathrm{Pred}({\cal P}) &=
  \textstyle\bigcup_{Q \in {\cal P}}\mathrm{Pred}(Q).
\end{align*}
The following characterization of systems of pseudo-intents in terms of $\mathrm{Pred}({\cal P})$
may then be obtained.

\begin{lemma}\label{thm:pi_iff_pred}
  Let ${\cal P} \subseteq V$.
  ${\cal P}$ is a system of pseudo-intents if{}f\/
  $\mathcal{P}=V-\mathrm{Pred}(\mathcal{P})$.
\end{lemma}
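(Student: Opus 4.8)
The plan is to prove the equivalence by directly unwinding the two defining conditions and observing that, once expressed elementwise, they are literally the same universally quantified biconditional; no manipulation of truth degrees beyond membership tests is required. The only points demanding care are the interplay of the quantifiers and the passage between the condition $||Q\Rightarrow Q^{\down\up}||_P\ne 1$ that appears in $E$ and its negation, which governs the complement $V-\mathrm{Pred}({\cal P})$.

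First I would rewrite $\mathrm{Pred}({\cal P})$ explicitly. Unfolding the definition of $\mathrm{Pred}(Q)$ together with (\ref{def:E}),
\[
  \mathrm{Pred}({\cal P})=\{P\in V\mid \text{there is } Q\in{\cal P}\text{ with } Q\ne P\text{ and } ||Q\Rightarrow Q^{\down\up}||_P\ne 1\}.
\]
Complementing inside $V$ and pushing the negation through the existential quantifier yields
\[
  V-\mathrm{Pred}({\cal P})=\{P\in V\mid ||Q\Rightarrow Q^{\down\up}||_P=1\text{ for each } Q\in{\cal P}\text{ with } Q\ne P\}.
\]
Here I would stress that the clause $P\ne Q$ built into $E$ matches the clause $Q\ne P$ in (\ref{eqn:sp}), so an element is never its own predecessor and the two side conditions align exactly.

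Next I would compare this with the defining condition (\ref{eqn:sp}) of a system of pseudo-intents. Using that, by (\ref{def:V}), the clause $P\ne P^{\down\up}$ is exactly $P\in V$, condition (\ref{eqn:sp}) reads: for every $P\in\mathbf{L}^Y$, $P\in{\cal P}$ if{}f $P\in V$ and $||Q\Rightarrow Q^{\down\up}||_P=1$ for each $Q\in{\cal P}$ with $Q\ne P$; that is, $P\in{\cal P}$ if{}f $P\in V-\mathrm{Pred}({\cal P})$, by the displayed description above. Since this biconditional holds for every $P\in\mathbf{L}^Y$, it is precisely the set equality ${\cal P}=V-\mathrm{Pred}({\cal P})$. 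Reading the argument in the reverse direction, given ${\cal P}=V-\mathrm{Pred}({\cal P})$ (and recalling the standing hypothesis ${\cal P}\subseteq V$), substituting the displayed description of $V-\mathrm{Pred}({\cal P})$ and reinstating $P\in V\Leftrightarrow P\ne P^{\down\up}$ reproduces (\ref{eqn:sp}) verbatim.

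I do not expect a genuine obstacle, since the statement is essentially a reformulation of the definition in graph-theoretic language. The only place requiring attention is the bookkeeping of the negated existential---an element lies outside $\mathrm{Pred}({\cal P})$ precisely when \emph{every} relevant $Q\in{\cal P}$ fails to witness membership in $E$---together with the identification $P\in V\Leftrightarrow P\ne P^{\down\up}$, which lets the ``$P\ne P^{\down\up}$'' clause of (\ref{eqn:sp}) be absorbed into the ambient vertex set $V$.
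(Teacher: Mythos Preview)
Your proof is correct and follows essentially the same approach as the paper: both arguments unfold the definitions of $E$, $\mathrm{Pred}(\mathcal{P})$, and (\ref{eqn:sp}), identify $P\ne P^{\down\up}$ with $P\in V$, and observe that the defining biconditional of a system of pseudo-intents is literally the elementwise statement of $\mathcal{P}=V-\mathrm{Pred}(\mathcal{P})$. The paper phrases the intermediate step in terms of $\langle P,Q\rangle\notin E$ and restricts the quantifier to $P\in V$, whereas you keep the quantifier over all of $\mathbf{L}^Y$ and spell out the negated existential explicitly, but these are cosmetic differences only.
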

\begin{proof}
Since every element $P$ of any system of pseudo-intents satisfies $P\in V$,
Definition~\ref{def:pseudint} implies that being a system of pseudo-intents is equivalent to the following condition:
\[
     \text{for every $P\in V$: $P\in\mathcal{P}$ if{}f for each $Q\in \mathcal{P}-\{P\}$ we have $\tu{P,Q}\not\in E$}.
\]
Since $\tu{P,P}\not\in E$, the last condition is equivalent to
\[
     \text{for every $P\in V$: $P\in \mathcal{P}$ if{}f $P\not\in \mathrm{Pred}(\mathcal{P})$},
\]
which is clearly equivalent to $\mathcal{P}=V-\mathrm{Pred}(\mathcal{P})$.
\end{proof}

Lemma~\ref{lem:mis} and Lemma~\ref{thm:pi_iff_pred} finally yield
the following characterizaiton of systems of pseudo-intents:

\begin{theorem}\label{col:comput}
  ${\cal P} \ne \emptyset$ is a system of pseudo-intents if{}f\/
  ${\cal P}$ is a maximal independent set in $\mathbf{G}$ such that
  $\mathcal{P}=V-\mathrm{Pred}(\mathcal{P})$.
\end{theorem}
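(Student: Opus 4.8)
The plan is to obtain the theorem directly by packaging the two preceding lemmas, Lemma~\ref{lem:mis} and Lemma~\ref{thm:pi_iff_pred}, which between them already supply both halves of the claimed equivalence. Since both lemmas are phrased for subsets of $V$, and every member of a system of pseudo-intents lies in $V$ (as recorded at the start of the proof of Lemma~\ref{lem:mis}, because $P \ne P^{\downarrow\uparrow}$ for each such $P$), no extra bookkeeping about the ambient set is required.

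For the forward implication I would assume $\emptyset \ne \mathcal{P}$ is a system of pseudo-intents and read off the two required properties separately. Maximal independence in $\mathbf{G}$ is precisely the conclusion of Lemma~\ref{lem:mis}. The equality $\mathcal{P} = V - \mathrm{Pred}(\mathcal{P})$ is precisely the ``only if'' direction of Lemma~\ref{thm:pi_iff_pred}, applicable because a system of pseudo-intents is in particular a subset of $V$. Hence both conjuncts of the right-hand side hold.

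For the converse I would assume $\mathcal{P} \ne \emptyset$ is a maximal independent set in $\mathbf{G}$ with $\mathcal{P} = V - \mathrm{Pred}(\mathcal{P})$, and observe that the second hypothesis already suffices on its own: by the ``if'' direction of Lemma~\ref{thm:pi_iff_pred}, every $\mathcal{P} \subseteq V$ satisfying $\mathcal{P} = V - \mathrm{Pred}(\mathcal{P})$ is a system of pseudo-intents. The maximal-independence hypothesis is not actually invoked here; its purpose in the statement is to narrow the search, since in practice one enumerates the (comparatively few) maximal independent sets of $\mathbf{G}$ and then tests the fixpoint-style condition $\mathcal{P} = V - \mathrm{Pred}(\mathcal{P})$.

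I do not expect a genuine obstacle, as the theorem is just a combination of Lemmas~\ref{lem:mis} and~\ref{thm:pi_iff_pred}: the forward direction splits the two conjuncts across the two lemmas, and the converse rests on Lemma~\ref{thm:pi_iff_pred} alone. The only subtlety worth flagging explicitly is that the maximal-independence clause is redundant for sufficiency, so the true content is that systems of pseudo-intents coincide with the solutions of $\mathcal{P} = V - \mathrm{Pred}(\mathcal{P})$, these being automatically maximal independent sets by Lemma~\ref{lem:mis}.
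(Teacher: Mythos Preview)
Your proposal is correct and matches the paper's approach exactly: the paper presents Theorem~\ref{col:comput} as an immediate consequence of Lemma~\ref{lem:mis} and Lemma~\ref{thm:pi_iff_pred}, with no further argument given. Your observation that the maximal-independence clause is redundant for the converse direction is accurate and worth noting, though the paper does not make this explicit.
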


\begin{figure}[t]
  \centering
  \def\rot#1{\rotatebox{90}{\small #1}}%
  \def\O{}%
  \def\X{\small $\boldsymbol{\times}$}%
  \def\SE#1{\ensuremath{\bigl\{#1\bigr\}}}
  \renewcommand{\arraystretch}{1.2}%
  \setlength{\tabcolsep}{2pt}%
  \begin{minipage}{.4\linewidth}
    \begin{tabular}{|r|*{6}c|}
      \hline
      &
      \rot{$\SE{}$} &
      \rot{$\SE{{}^{0.5\!}/z}$} &
      \rot{$\SE{z}$} &
      \rot{$\SE{{}^{0.5\!}/y,{}^{0.5\!}/z}$\,} &
      \rot{$\SE{{}^{0.5\!}/y,z}$} &
      \rot{$\SE{y}$} \\
      \hline
      \rule{0pt}{9pt}\small$\SE{}$ & \O & \X & \O & \X & \O & \O \\
      \small$\SE{{}^{0.5\!}/z}$ & \X & \O & \X & \X & \X & \O \\
      \small$\SE{z}$ & \X & \X & \O & \X & \X & \O \\
      \small$\SE{{}^{0.5\!}/y,{}^{0.5\!}/z}$ & \O & \X & \O & \O & \O & \O \\
      \small$\SE{{}^{0.5\!}/y,z}$ & \O & \X & \X & \X & \O & \O \\
      \rule[-5pt]{0pt}{9pt}\small$\SE{y}$ & \O & \O & \O & \O & \O & \O \\
      \hline
    \end{tabular}
  \end{minipage}
  \quad
  \begin{minipage}{.4\linewidth}
    \tikzset{graph vertices/.style={nodes={font=\small, inner sep=2pt}, row sep=4em, column sep=0em}}%
    \tikzset{graph edges/.style={thick}}%
    \begin{tikzpicture}
      \matrix [graph vertices] {
        \node (d) {$\SE{{}^{0.5\!}/y,{}^{0.5\!}/z}$}; \\
        & \node (a) {$\SE{}$};
        & \node [xshift=3em] (c) {$\SE{z}$};
        & \node [xshift=3em] (e) {$\SE{{}^{0.5\!}/y,z}$}; \\
        \node (b) {$\SE{{}^{0.5\!}/z}$};
        &&& \node (f) {$\SE{y}$}; \\
      };
      \path [graph edges]
      (a) edge (b)
      (a) edge (c)
      (a) edge (d)
      (b) edge (c)
      (b) edge (d)
      (b) edge (e)
      (c) edge (d)
      (c) edge (e)
      (d) edge (e);
    \end{tikzpicture}
  \end{minipage}
  \caption{Relation $E$ given by \eqref{def:E} and the induced graph from Example~\ref{exm:simple}}
  \label{Fig:Simple}
\end{figure}

Using Theorem~\ref{col:comput}, one may compute systems of  pseudo-intents
by computing maximal independent sets in $\mathbf{G}$ and checking the additional 
condition $\mathcal{P}=V-\mathrm{Pred}(\mathcal{P})$. Note that this property may be checked
when generating the independent sets. The following example illustrates the procedure.

\begin{example}\label{exm:simple}
  Let $\mathbf{L}$ be a three-element \L ukasiewicz chain with
  $L = \{0,0.5,1\}$, and ${}^{\ast}$ being the identity on $L$. Consider the data table
  $\langle X,Y,I\rangle$ where $X = \{x\}$,
  $Y = \{y,z\}$, $I(x,y) = 0.5$, and $I(x,z) = 0$. The set $V$
  defined by (\ref{def:V}) is the following:
  \begin{displaymath}
    V = \{
    \{\},
    \{{}^{0.5\!}/z\},
    \{z\},
    \{{}^{0.5\!}/y,{}^{0.5\!}/z\},
    \{{}^{0.5\!}/y,z\},
    \{y\}\}.
  \end{displaymath}
  The corresponding binary relation $E$ defined by (\ref{def:E})
  is depicted in Fig.\,\ref{Fig:Simple} (left);
  graph $\mathbf{G} = \langle V,E \cup E^{-1}\rangle$
  is depicted in Fig.\,\ref{Fig:Simple} (right). $\mathbf{G}$ contains
  four maximal independent sets:
  \begin{align*}
    {\cal P}_1 &= \{\{\},\{{}^{0.5\!}/y,z\},\{y\}\}, &
    {\cal P}_3 &= \{\{z\},\{y\}\}, \\
    {\cal P}_2 &= \{\{{}^{0.5\!}/z\},\{y\}\}, &
    {\cal P}_4 &= \{\{{}^{0.5\!}/y,{}^{0.5\!}/z\},\{y\}\}.
  \end{align*}
  Observe that ${\cal P}_1$ and ${\cal P}_3$ do not satisfy
  ${\cal P}_i = V - \mathrm{Pred}({\cal P}_i)$ ($i \in \{1,3\}$) because
  $\{{}^{0.5\!}/y,{}^{0.5\!}/z\} \not\in \mathrm{Pred}({\cal P}_i)$,
   $i=1,3$,  and $\{\} \not\in \mathrm{Pred}({\cal P}_3)$.
  Hence, due to Theorem~\ref{thm:pi_iff_pred}, 
  ${\cal P}_1$ and ${\cal P}_3$ are not systems of pseudo-intents
  On the other hand, ${\cal P}_i =V -  \mathrm{Pred}({\cal P}_i)$ for
  $i=2,4$,, i.e. ${\cal P}_2$ and ${\cal P}_4$ are systems of
  pseudo-intents of $\tu{X,Y,I}$. The corresponding non-redundant bases
  $T_2$ and $T_4$ of $\tu{X,Y,I}$ given by Theorem \ref{thm:nrbase}
   are the following:
  \begin{align*}
    T_2 &= \{\{{}^{0.5\!}/z\} \!\Rightarrow\! \{y,{}^{0.5\!}/z\},
    \{y\} \!\Rightarrow\! \{y,{}^{0.5\!}/z\}\}, \\
    T_4 &= \{\{{}^{0.5\!}/y,{}^{0.5\!}/z\} \!\Rightarrow\! \{y,{}^{0.5\!}/z\},
    \{y\} \!\Rightarrow\! \{y,{}^{0.5\!}/z\}\}.
  \end{align*}
  Further algorithmic aspects of this procedure are discussed in
  Section~\ref{sec:a}.
\end{example}

\subsection{Pseudo-intents and bases corresponding to globalization}

It has been pointed out in Remark \ref{rem:classicP} that if the hedge ${}^\ast$  involved
in the definition of the validity of attribute implications is the globalization and if 
$L$ and $Y$ are finite, there exists a unique system of pseudointents for a given $\tu{X,Y,I}$.
In this section, we show that in this case, the corresponding bases are minimal
in terms of the number of implications contained in the base.
For the subsequent proofs, we need the
following technical observation which applies to general systems of
pseudointents using any hedge.

\begin{lemma}\label{Le:P_cap_Q_Int}
  Let $\mathcal{P}$ be a system of pseudointents of $\tu{X,Y,I}$ and let
  $P,Q \in {\cal P} \cup \mathrm{Int}(X^{\ast},Y,I)$ satisfy
  \begin{eqnarray}
    S(P,Q)^{\ast} &\leq&
    S(P^{\downarrow\uparrow},P \cap Q), \label{Eq:Min1} \\
    S(Q,P)^{\ast} &\leq&
    S(Q^{\downarrow\uparrow},P \cap Q). \label{Eq:Min2}
  \end{eqnarray}
  Then $P \cap Q \in \mathrm{Int}(X^{\ast},Y,I)$.
\end{lemma}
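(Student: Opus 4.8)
The plan is to recast the goal ``$P\cap Q$ is an intent'' as ``$P\cap Q$ is a model of the canonical base'' and then verify each base implication at $P\cap Q$, disposing of all but two of them by a closure argument and matching the remaining two to the hypotheses.

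By Theorem~\ref{thm:nrbase} the set $T=\{R\Rightarrow R^{\downarrow\uparrow}\mid R\in\mathcal{P}\}$ is a base of $\tu{X,Y,I}$, so Theorem~\ref{thm:basechar} gives $\mathrm{Int}(X^{\ast},Y,I)=\mathrm{Mod}(T)$. Since $T$ is crisp, it suffices to show that $P\cap Q$ satisfies every base implication, i.e. that $||R\Rightarrow R^{\downarrow\uparrow}||_{P\cap Q}=1$, equivalently $S(R,P\cap Q)^{\ast}\leq S(R^{\downarrow\uparrow},P\cap Q)$, for each $R\in\mathcal{P}$. I fix such an $R$ and split into cases. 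Throughout I use that $\mathrm{Mod}(\{R\Rightarrow R^{\downarrow\uparrow}\})$ is closed under intersections (Theorem~\ref{thm:Modcls}, applied to this one-element theory), and that $\mathcal{P}\cap\mathrm{Int}(X^{\ast},Y,I)=\emptyset$ because members of $\mathcal{P}$ satisfy $R\ne R^{\downarrow\uparrow}$ whereas intents satisfy $M=M^{\downarrow\uparrow}$ by~\eqref{eqn:Mint}.

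First suppose $R\ne P$ and $R\ne Q$. Then each of $P,Q$ is a model of $R\Rightarrow R^{\downarrow\uparrow}$: if it is an intent this holds because $\mathrm{Int}(X^{\ast},Y,I)=\mathrm{Mod}(T)$, and if it lies in $\mathcal{P}$ (hence is distinct from $R$) this is exactly the condition imposed by Definition~\ref{def:pseudint}. Closure under intersection then yields $||R\Rightarrow R^{\downarrow\uparrow}||_{P\cap Q}=1$. Next suppose $R=P$; since $R\in\mathcal{P}$ and $\mathcal{P}\cap\mathrm{Int}=\emptyset$, this forces $P\in\mathcal{P}$. Here I use the identity $S(P,P\cap Q)=S(P,P)\wedge S(P,Q)=S(P,Q)$, which follows from~\eqref{eqn:prlsiB} together with $S(P,P)=1$. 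Consequently hypothesis~\eqref{Eq:Min1} reads $S(P,P\cap Q)^{\ast}=S(P,Q)^{\ast}\leq S(P^{\downarrow\uparrow},P\cap Q)$, which by~\eqref{eqn:prl1B} is precisely $||P\Rightarrow P^{\downarrow\uparrow}||_{P\cap Q}=1$. The case $R=Q$ is symmetric, using $S(Q,P\cap Q)=S(Q,P)$ and hypothesis~\eqref{Eq:Min2}. These cases exhaust $\mathcal{P}$, so every base implication holds at $P\cap Q$, giving $P\cap Q\in\mathrm{Mod}(T)=\mathrm{Int}(X^{\ast},Y,I)$.

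The only genuinely nonroutine step is the reduction in the second paragraph: once ``$P\cap Q$ is an intent'' is identified with ``$P\cap Q$ models the canonical base'', the closure-system property handles all implications $R\Rightarrow R^{\downarrow\uparrow}$ except the two with $R\in\{P,Q\}$, and the elementary identity $S(P,P\cap Q)=S(P,Q)$ is exactly what turns \eqref{Eq:Min1} and \eqref{Eq:Min2} into the two remaining model conditions. I expect no serious computation beyond this. Note also that the degenerate possibility $P=Q\in\mathcal{P}$ never arises under the hypotheses, since \eqref{Eq:Min1} would then force $1=S(P,P)^{\ast}\leq S(P^{\downarrow\uparrow},P)$, i.e. $P=P^{\downarrow\uparrow}$, contradicting $P\in\mathcal{P}$.
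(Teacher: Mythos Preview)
Your proof is correct and follows essentially the same approach as the paper: reduce membership in $\mathrm{Int}(X^{\ast},Y,I)$ to membership in $\mathrm{Mod}(T)$, handle all implications $R\Rightarrow R^{\downarrow\uparrow}$ with $R\notin\{P,Q\}$ by observing that both $P$ and $Q$ are models and taking the intersection, and discharge the remaining two implications via the identity $S(P,P\cap Q)=S(P,Q)$ together with \eqref{Eq:Min1}, \eqref{Eq:Min2}. The only cosmetic difference is that the paper carries out the intersection step by a direct two-line computation, whereas you invoke Theorem~\ref{thm:Modcls}; your added remark on the degenerate case $P=Q\in\mathcal{P}$ is a small bonus the paper omits.
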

\begin{proof}
  Put $T' = T - \{P \Rightarrow P^{\downarrow\uparrow},
  Q \Rightarrow Q^{\downarrow\uparrow}\}$,
  where $T$ is a set of fuzzy attribute implications
  defined by (\ref{Def:T}).
  Definition~\ref{def:pseudint} and 
  the fact that $||C\Rightarrow C^{\downarrow\uparrow}||_D=1$
  for every $C\in\mathbf{L}^Y$ and $D\in \mathrm{Int}(X^{\ast},Y,I)$
  imply $P,Q \in \mathrm{Mod}(T')$. Hence, for each
  $A \Rightarrow B \in T'$ we have $S(A,P)^{\ast} \leq S(B,P)$
  and $S(A,Q)^{\ast} \leq S(B,Q)$. Consequently,
  \begin{eqnarray*}
    S(A,P \cap Q)^{\ast} =
    (S(A,P) \wedge S(A,Q))^{\ast} \leq 
    S(A,P)^{\ast} \wedge S(A,Q)^{\ast} \leq \\
    \leq S(B,P) \wedge S(B,Q) = 
    S(B,P \cap Q),
  \end{eqnarray*}
  which yields that $P \cap Q$ is a model of $T'$.
  Due to Theorem \ref{thm:basechar},
  it is now sufficient to verify that $P \cap Q$ is a model of
  $\{P \Rightarrow P^{\downarrow\uparrow},
  Q \Rightarrow Q^{\downarrow\uparrow}\}$.
  By virtue of (\ref{Eq:Min1}) and (\ref{Eq:Min2}), we have \\[4pt]
  \centerline{$S(P,P \cap Q)^{\ast} = S(P,Q)^{\ast} \leq
    S(P^{\downarrow\uparrow},P \cap Q)$} \\[2pt]
  and \\[2pt]
  \centerline{$S(Q,P \cap Q)^{\ast} = S(Q,P)^{\ast} \leq
    S(Q^{\downarrow\uparrow},P \cap Q)$,} \\[4pt]
  i.e. $||P \Rightarrow P^{\downarrow\uparrow}||_{P \cap Q} = 1$ and
  $||Q \Rightarrow Q^{\downarrow\uparrow}||_{P \cap Q} = 1$, finishing the proof.
\end{proof}

\begin{remark}
If $P$ and $Q$ are intents or pseudo-intents satisfying $S(P,Q)^{\ast} = S(Q,P)^{\ast} = 0$
then (\ref{Eq:Min1}) and (\ref{Eq:Min2}) are met and due to Lemma~\ref{Le:P_cap_Q_Int},
$P \cap Q$ is an intent.
Hence, if  ${}^{\ast}$ is the globalization and
$P$ and $Q$ are intents or pseudo-intents with $P \not\subseteq Q$ and $Q \not\subseteq P$,
then $P \cap Q$ is an intent.
\end{remark}

\begin{theorem}\label{thm:PGlob}
  Let $\mathbf{L}$ be a finite residuated lattice with ${}^{\ast}$ being
  the globalization, let $Y$ be finite. Let $\mathcal{P}$ be the sytem of
  pseudo-intents
  of $\tu{X,Y,I}$ and $T$ be the corresponding base given by (\ref{Def:T}).
  Then for any base $T'$ of $\langle X,Y,I\rangle$ we have
  $|T| \leq |T'|$.
\end{theorem}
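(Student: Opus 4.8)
The plan is to exhibit, for an arbitrary base $T'$ of $\langle X,Y,I\rangle$, an injective map from the system $\mathcal{P}$ of pseudo-intents into $T'$; since the base $T=\{P\Rightarrow P^{\downarrow\uparrow}\mid P\in\mathcal{P}\}$ of Theorem~\ref{thm:nrbase} has exactly $|\mathcal{P}|$ members, this immediately yields $|T|\le|T'|$. First I would record two facts about $T'$. Because $T'$ is a base it is complete, so $\mathrm{Mod}(T')=\mathrm{Int}(X^{\ast},Y,I)$ by Theorem~\ref{thm:basechar}, and every $A\Rightarrow B\in T'$ is fully valid in $\langle X,Y,I\rangle$; by Theorem~\ref{thm:alter} the latter means $S(B,A^{\downarrow\uparrow})=1$, i.e. $B\subseteq A^{\downarrow\uparrow}$.

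Next, for each $P\in\mathcal{P}$ I would select a witnessing implication of $T'$. Since $P\ne P^{\downarrow\uparrow}$ by Definition~\ref{def:pseudint}, \eqref{eqn:Mint} gives $P\notin\mathrm{Int}(X^{\ast},Y,I)=\mathrm{Mod}(T')$, so some $A_P\Rightarrow B_P\in T'$ satisfies $||A_P\Rightarrow B_P||_P\ne 1$. Under globalization this failure forces $S(A_P,P)^{\ast}=1$ and $S(B_P,P)\ne 1$, that is $A_P\subseteq P$ while $B_P\not\subseteq P$; combined with $B_P\subseteq A_P^{\downarrow\uparrow}$ this produces the two properties I will exploit: $A_P\subseteq P$ and $A_P^{\downarrow\uparrow}\not\subseteq P$. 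Define the map $P\mapsto(A_P\Rightarrow B_P)$.

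The core step is injectivity. Suppose $P_1\ne P_2$ in $\mathcal{P}$ are assigned the same implication $A\Rightarrow B\in T'$, so $A\subseteq P_1$, $A\subseteq P_2$, $A^{\downarrow\uparrow}\not\subseteq P_1$, and $A^{\downarrow\uparrow}\not\subseteq P_2$. If $P_1$ and $P_2$ are incomparable, then since $P_1,P_2$ are pseudo-intents with $P_1\not\subseteq P_2$ and $P_2\not\subseteq P_1$, the Remark following Lemma~\ref{Le:P_cap_Q_Int} shows $P_1\cap P_2\in\mathrm{Int}(X^{\ast},Y,I)$; using $A\subseteq P_1\cap P_2$ and isotony of $^{\downarrow\uparrow}$ (the special case $S(A_1,A_2)=1$ of \eqref{eqn:fco2}) I get $A^{\downarrow\uparrow}\subseteq(P_1\cap P_2)^{\downarrow\uparrow}=P_1\cap P_2\subseteq P_1$, contradicting $A^{\downarrow\uparrow}\not\subseteq P_1$. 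Otherwise one is contained in the other, say $P_1\subset P_2$; applying the globalization form \eqref{eqn:spG} of the pseudo-intent condition to $P_2$ with $Q=P_1$ yields $P_1^{\downarrow\uparrow}\subset P_2$, so from $A\subseteq P_1$ I obtain $A^{\downarrow\uparrow}\subseteq P_1^{\downarrow\uparrow}\subset P_2$, contradicting $A^{\downarrow\uparrow}\not\subseteq P_2$. Hence the map is injective.

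Finally, injectivity gives $|\mathcal{P}|\le|T'|$, and since the implications $P\Rightarrow P^{\downarrow\uparrow}$ attached to distinct $P\in\mathcal{P}$ are distinct, $|T|=|\mathcal{P}|\le|T'|$, as claimed. The main obstacle is the injectivity argument, and within it the incomparable case, which is exactly where the intersection lemma (Lemma~\ref{Le:P_cap_Q_Int} together with its Remark, specialized to globalization) is indispensable; the comparable case instead rests on the defining property \eqref{eqn:spG} of pseudo-intents and the routine isotony of the closure operator $^{\downarrow\uparrow}$. I expect the delicate point to be justifying that the witnessing implication may be chosen so that both $A_P\subseteq P$ and $A_P^{\downarrow\uparrow}\not\subseteq P$ hold simultaneously, which is precisely what the globalization hypothesis and full validity of $T'$ in the data secure.
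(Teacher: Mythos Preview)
Your proof is correct and follows essentially the same strategy as the paper's: for each pseudo-intent $P$ pick a witnessing implication in $T'$ that fails at $P$, and show that distinct pseudo-intents cannot share a witness, splitting into the comparable case (handled via \eqref{eqn:spG}) and the incomparable case (handled via Lemma~\ref{Le:P_cap_Q_Int}).

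There is one small but genuine difference worth noting. The paper, after obtaining $A_P\subseteq P$ and $A_P^{\downarrow\uparrow}\not\subseteq P$, goes further and proves the stronger fact $A_P^{\downarrow\uparrow}=P^{\downarrow\uparrow}$, using an additional application of Lemma~\ref{Le:P_cap_Q_Int} to the pair $P$ and the intent $A_P^{\downarrow\uparrow}$. Your argument shows that this extra step is unnecessary: the weaker property $A_P^{\downarrow\uparrow}\not\subseteq P$ already suffices for the injectivity proof, since in the comparable case $P_1\subset P_2$ one gets $A^{\downarrow\uparrow}\subseteq P_1^{\downarrow\uparrow}\subseteq P_2$ directly, and in the incomparable case $A^{\downarrow\uparrow}\subseteq P_1\cap P_2\subseteq P_1$. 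So your route is slightly more economical; the paper's route yields the additional structural information that the closure of the antecedent of the witness equals the closure of the pseudo-intent, which is of independent interest but not needed for the cardinality bound.
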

\begin{proof}
  We first show that for each $P \in {\cal P}$,
  $T'$ contains an implication $A \Rightarrow B$ such that
  $A \subseteq P$ and $A^{\downarrow\uparrow} = P^{\downarrow\uparrow}$.
  We then show that two distinct $P,Q \in {\cal P}$ cannot share the same
  implication satisfying this property which proves
  that $|T| = |\mathcal{P}| \leq |T'|$.

  Take any $P \in {\cal P}$. By definition, $P \ne P^{\downarrow\uparrow}$
  and thus $P\not\in \mathrm{Int}(X^{\ast},Y,I)$. Since $T'$ is a base,
  Theorem~\ref{thm:basechar} yields that $T'$ contains $A \Rightarrow B$
  such that $||A \Rightarrow B||_P \ne 1$.  Since ${}^{\ast}$ is the
  globalization, we get $A \subseteq P$ and $B \not\subseteq P$.
  As every implication in $T'$ is valid in $\tu{X,Y,I}$ to degree $1$,
  Theorem \ref{thm:alter} yields  $S(B,A^{\downarrow\uparrow}) = 1$,
  i.e. $B \subseteq A^{\downarrow\uparrow}$.
  Thus, from $B \subseteq A^{\downarrow\uparrow}$ and $B \not\subseteq P$
  it follows that $A^{\downarrow\uparrow} \not\subseteq P$. 
  Now, $A \subseteq P$ and $A^{\downarrow\uparrow} \not\subseteq P$ yield
  $A \subseteq A^{\downarrow\uparrow} \cap P \subset A^{\downarrow\uparrow}$.
  Since $A^{\downarrow\uparrow}$ is the least intent containing $A$, it follows
  that   $A^{\downarrow\uparrow} \cap P$ is not an intent.
  Next, we claim that $P \subseteq A^{\downarrow\uparrow}$. By contradiction,
  if $P \not\subseteq A^{\downarrow\uparrow}$ then since 
  $A^{\downarrow\uparrow} \not\subseteq P$, Lemma~\ref{Le:P_cap_Q_Int}
  would give
  $A^{\downarrow\uparrow} \cap P \in \mathrm{Int}(X^{\ast},Y,I)$,
  a contradiction to the above observation that
  $A^{\downarrow\uparrow} \cap P \not\in \mathrm{Int}(X^{\ast},Y,I)$.
  Now, $A \subseteq P$ yields
  $A^{\downarrow\uparrow} \subseteq P^{\downarrow\uparrow}$
  while $P \subseteq A^{\downarrow\uparrow}$ yields
  $P^{\downarrow\uparrow} \subseteq
  A^{\downarrow\uparrow\downarrow\uparrow} = A^{\downarrow\uparrow}$,
  showing  $A^{\downarrow\uparrow} = P^{\downarrow\uparrow}$.

  Now, consider $P,Q \in \mathcal{P}$ such that $P \ne Q$ and assume that
  $A \Rightarrow B \in T'$ satisfies $A \subseteq P$, $A \subseteq Q$,
  and $P^{\downarrow\uparrow} = A^{\downarrow\uparrow} =
  Q^{\downarrow\uparrow}$. If $P \subset Q$,
  then $P^{\downarrow\uparrow} \subseteq Q$ and thus
  $A^{\downarrow\uparrow} = P^{\downarrow\uparrow} \subseteq
  Q \subset Q^{\downarrow\uparrow} = A^{\downarrow\uparrow}$,
  a contradiction. Dually for $Q \subset P$. Thus, assume that
  $P \nsubseteq Q$ and $Q \nsubseteq P$. Using Lemma~\ref{Le:P_cap_Q_Int},
  we get $P \cap Q \in \mathrm{Int}(X^{\ast},Y,I)$ and using the assumption
  that $A \subseteq P$ and $A \subseteq Q$, it follows that
  $A \subseteq P \cap Q$, i.e.,
  $A^{\downarrow\uparrow} \subseteq (P \cap Q)^{\downarrow\uparrow} =
  P \cap Q \subset P^{\downarrow\uparrow}$, a contradiction.
\end{proof}

\section{Algorithms}\label{sec:a}
In this section, we present algorithms for computing bases. We start by
an algorithm which simplifies the graph-theoretic procedure based on
Theorem~\ref{col:comput} from Section~\ref{sec:nrbpi}. In case of
globalization, we can show that the maximal independent set which determines
the (uniquely given) system of pseudo-intents can be directly computed
without the need to go over all maximal independent sets of the graph.
A simplified algorithm which follows is based on the following observation.

\begin{theorem}\label{th:gbasis}
  Let $\mathbf{L}$ be a finite linearly ordered residuated lattice with
  ${}^*$ being the globalization and let $\sqsubset$ be a strict total
  order on $\mathbf{L}^{\!Y}$ which extends $\subset$. Furthermore,
  assume that $\mathcal{P}$,  $V$, and $E$ are given by
  \eqref{eqn:sp}, \eqref{def:V}, and \eqref{def:E}, respectively.
  Let for $P \in \mathcal{P}$ denote
  $\mathcal{Q} = \{Q \in \mathcal{P} \,|\, Q \sqsubset P\}$.
  Then $P$ is the least element of
  $(V - \mathcal{Q}) - \mathrm{Pred}(\mathcal{Q})$
  with respect to $\sqsubset$.
\end{theorem}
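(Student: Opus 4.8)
The plan is to establish two facts separately: that $P$ belongs to the set $(V-\mathcal{Q})-\mathrm{Pred}(\mathcal{Q})$, and that it is the $\sqsubset$-least element of that set. Throughout I would rely on the characterization $\mathcal{P}=V-\mathrm{Pred}(\mathcal{P})$ from Lemma~\ref{thm:pi_iff_pred}, together with the concrete form of the edge relation that globalization provides. Indeed, since ${}^\ast$ is the globalization, $S(Q,R)^\ast$ is $1$ when $Q\subseteq R$ and $0$ otherwise, so $||Q\Rightarrow Q^{\downarrow\uparrow}||_R\ne 1$ forces $S(Q,R)^\ast=1$; hence any edge $\langle R,Q\rangle\in E$ entails the crisp containment $Q\subseteq R$ (and in fact $Q^{\downarrow\uparrow}\not\subseteq R$). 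This implication is the workhorse of the argument.

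For membership I would verify the three defining conditions. We have $P\in V$ because $P\in\mathcal{P}\subseteq V$; we have $P\notin\mathcal{Q}$ because every element of $\mathcal{Q}$ is $\sqsubset$-strictly below $P$ and $\sqsubset$ is strict; and $P\notin\mathrm{Pred}(\mathcal{Q})$ because for each $Q\in\mathcal{Q}\subseteq\mathcal{P}$ with $Q\ne P$ the defining property \eqref{eqn:sp} of pseudo-intents gives $||Q\Rightarrow Q^{\downarrow\uparrow}||_P=1$, i.e. $\langle P,Q\rangle\notin E$.

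For minimality I would argue by contradiction: suppose some $R\in(V-\mathcal{Q})-\mathrm{Pred}(\mathcal{Q})$ satisfies $R\sqsubset P$. The aim is to show $R\in\mathcal{P}$, since then $R\sqsubset P$ would place $R$ in $\mathcal{Q}$, contradicting $R\notin\mathcal{Q}$. Using $\mathcal{P}=V-\mathrm{Pred}(\mathcal{P})$ and $R\in V$, it suffices to prove $\langle R,Q\rangle\notin E$ for every $Q\in\mathcal{P}$. The pseudo-intents $Q\in\mathcal{Q}$ are already handled by $R\notin\mathrm{Pred}(\mathcal{Q})$. For the remaining ones, where $Q=P$ or $P\sqsubset Q$, I would assume $\langle R,Q\rangle\in E$ and extract $Q\subseteq R$ from the globalization form of $E$. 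Since $\sqsubset$ extends $\subset$, the containment $Q\subseteq R$ yields $Q=R$ or $Q\sqsubset R$; but from $R\sqsubset P$ together with either $Q=P$ or $P\sqsubset Q$ we obtain $R\sqsubset Q$ in both cases, and either alternative then contradicts the strictness and totality of $\sqsubset$. Hence $\langle R,Q\rangle\notin E$ throughout, so $R\in\mathcal{P}$, completing the contradiction.

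The main obstacle, really the only substantive point, is the minimality step and within it the treatment of pseudo-intents lying $\sqsubset$-after $P$. The clean resolution hinges on the interplay of two features special to this setting: globalization converts every $E$-edge into a crisp containment, and $\sqsubset$ refines $\subset$, so that containment can be turned into an order comparison. Once these are in hand the order-theoretic contradiction is immediate, and no residuated-lattice computation is required.
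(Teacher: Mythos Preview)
Your proof is correct and follows essentially the same two-step structure as the paper's: verify membership of $P$ in the set, then derive a contradiction from a hypothetical smaller element by showing it would itself be a pseudo-intent. The only cosmetic difference is that you route the minimality step through Lemma~\ref{thm:pi_iff_pred} (showing $R\notin\mathrm{Pred}(\mathcal{P})$ by splitting $\mathcal{P}$ into $\mathcal{Q}$ and $\mathcal{P}\setminus\mathcal{Q}$), whereas the paper verifies condition~\eqref{eqn:spG} directly, implicitly using that any $Q\in\mathcal{P}$ with $Q\subset P'$ already lies in $\mathcal{Q}$; both arguments exploit the same globalization-induced containment and the fact that $\sqsubset$ refines $\subset$.
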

\begin{proof}
  First, we prove that $P \in (V - \mathcal{Q}) - \mathrm{Pred}(\mathcal{Q})$.
  Obviously, $P \in V - \mathcal{Q}$ and thus it suffices to check that
  $P \not\in \mathrm{Pred}(\mathcal{Q})$ which means showing
  $P \not\in \mathrm{Pred}(Q)$ for all $Q \in \mathcal{Q}$. Since ${}^*$
  is globalization, $P \not\in \mathrm{Pred}(Q)$ and $P \ne Q$ mean that
  $Q^{\downarrow\uparrow} \subseteq P$ whenever $Q \subset P$ which is
  indeed true because $P \in \mathcal{P}$, cf.~\eqref{eqn:spG}.
  Second, we prove that $P$ is the least element of  $(V - \mathcal{Q}) - \mathrm{Pred}(\mathcal{Q})$.
 Assume by contradiction that  $P' \sqsubset P$ for some $P' \in (V - \mathcal{Q}) - \mathrm{Pred}(\mathcal{Q})$.
   Since $P' \in V - \mathcal{Q}$,
  we get $P' \not\in \mathcal{Q}$. On the other hand,
  from $P' \not\in \mathrm{Pred}(\mathcal{Q})$ it follows
  that $Q^{\downarrow\uparrow} \subseteq P'$ whenever $Q \in \mathcal{P}$
  and $Q \subset P'$ which by~\eqref{eqn:spG} gives $P' \in \mathcal{P}$
  and thus $P' \sqsubset P$ gives $P' \in \mathcal{Q}$, a contradiction.
\end{proof}

\begin{algorithm}
  \KwData{$\langle X,Y,I\rangle$ (input data),
    $\mathcal{S}$ (list of $\mathbf{L}$-sets
    $\{P \,|\, P \ne P^{\downarrow\uparrow}\}$ sorted by $\sqsubset$)}
  \KwResult{$\mathcal{P}$ (subset of $\mathbf{L}^{\!Y}$)}
  $\mathcal{P} \setto \emptyset$\;
  \While{$\IsNotEmpty{\mathcal{S}}$}{%
    $\mathcal{P} \setto \mathcal{P} \cup \{\First{\mathcal{S}}\}$\;
    $\mathcal{R} \setto \NewList$\;
    $B \setto \First{\mathcal{S}}$\;
    $\mathcal{S} \setto \Rest{\mathcal{S}}$\;
    \While{$\IsNotEmpty{\mathcal{S}}$}{%
      \If{$B \subset \First{\mathcal{S}}$ and
        $B^{\downarrow\uparrow} \nsubseteq \First{\mathcal{S}}$}{%
        $\Put{\mathcal{R}}{\First{\mathcal{S}}}$\;
      }
      $\mathcal{S} \setto \Rest{\mathcal{S}}$\;
    }
    $\mathcal{S} \setto \mathcal{R}$\;
  }
  \Return{$\mathcal{P}$}\;
  \caption{Computing the systems of pseudo-intents (case of globalization)}
  \label{alg:gbasis}
\end{algorithm}

Directly from Theorem~\ref{th:gbasis}, we derive a procedure for computing
pseudo-intents which utilizes the observation that in order to compute
$P \in \mathcal{P}$, it suffices to find all pseudo-intents which are
strictly smaller than $P$ according to a strict total $\sqsubset$ order
extending $\subset$. The procedure is formalized as Algorithm~\ref{alg:gbasis}.
The algorithm involves the following operations with linked lists:
$\First{\mathcal{S}}$ (the first element of list $\mathcal{S}$),
$\Rest{\mathcal{S}}$ (the rest of the list $\mathcal{S}$ except for
the first element), $\Put{\mathcal{S}}{B}$ (destructive modification
of the list $\mathcal{S}$ by putting the element $B$ to its end),
$\IsNotEmpty{\mathcal{S}}$ (condition true if list $\mathcal{S}$ is not empty),
$\NewList$ (a constructor for a new empty list). The algorithm takes
$\langle X,Y,I\rangle$ as the input and a list $\mathcal{S}$ which consists
of all elements of $V$ given by~\eqref{def:V} which are put in the list
in the order according to $\sqsubset$.

\begin{theorem}
  If $\mathbf{L}$ is a finite linear residuated lattice and ${}^*$ is
  globalization, then Algorithm~\ref{alg:gbasis} is correct:
  For $\langle X,Y,I\rangle$
  and $\mathcal{S}$ which is a list of elements
  $P \in \mathbf{L}^{\!Y}$ satisfying $P \ne P^{\downarrow\uparrow}$
  which are sorted according to a total strict order $\sqsubset$ extending
  $\subset$, the algorithm terminates after finitely many steps and
  it returns $\mathcal{P}$ satisfying~\eqref{eqn:sp}.
\end{theorem}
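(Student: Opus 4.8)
The plan is to prove correctness of Algorithm~\ref{alg:gbasis} by showing that it faithfully implements the recursive characterization provided by Theorem~\ref{th:gbasis}. Specifically, I will argue that on each pass of the outer \textbf{while} loop, the first element of the current list $\mathcal{S}$ is exactly the next pseudo-intent in $\sqsubset$-order, and that the inner \textbf{while} loop correctly prunes $\mathcal{S}$ down to the candidate set from which the subsequent pseudo-intent must be drawn. Termination is immediate: $V$ is finite (since $\mathbf{L}$ and $Y$ are finite), each outer iteration removes $\First{\mathcal{S}}$ and replaces $\mathcal{S}$ by a sublist $\mathcal{R}$ of its former tail, so the length of $\mathcal{S}$ strictly decreases and the loop halts after at most $|V|$ iterations.

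\textbf{First} I would set up the main loop invariant. Suppose that at the start of some outer iteration, $\mathcal{P}$ holds exactly the pseudo-intents $\mathcal{Q} = \{Q \in \mathcal{P} \,|\, Q \sqsubset P\}$ already discovered (in $\sqsubset$-increasing order), and that $\mathcal{S}$ lists, in $\sqsubset$-order, precisely the elements of $(V - \mathcal{Q}) - \mathrm{Pred}(\mathcal{Q})$. By Theorem~\ref{th:gbasis}, the least element of this set under $\sqsubset$ is the next pseudo-intent $P$; since $\mathcal{S}$ is $\sqsubset$-sorted, $\First{\mathcal{S}}$ equals $P$, which justifies the line $\mathcal{P} \setto \mathcal{P} \cup \{\First{\mathcal{S}}\}$. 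The work of the iteration is then to re-establish the invariant for $\mathcal{Q}' = \mathcal{Q} \cup \{P\}$, i.e.\ to replace $\mathcal{S}$ by a $\sqsubset$-sorted list of $(V - \mathcal{Q}') - \mathrm{Pred}(\mathcal{Q}')$.

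\textbf{The crux} of the proof—and the step I expect to be the main obstacle—is verifying that the inner loop performs exactly this filtering. Observe that $(V - \mathcal{Q}') - \mathrm{Pred}(\mathcal{Q}')$ is obtained from $(V - \mathcal{Q}) - \mathrm{Pred}(\mathcal{Q})$ by deleting $P$ itself together with all elements newly falling into $\mathrm{Pred}(P)$. Because $^*$ is globalization, the edge relation~\eqref{def:E} simplifies: for $P' \ne P$, one has $P' \in \mathrm{Pred}(P)$ iff $||P \Rightarrow P^{\downarrow\uparrow}||_{P'} \ne 1$, which by the globalization analysis in Remark~\ref{rem:classicP}(a) is equivalent to $P \subseteq P'$ (so that $S(P,P')^* = 1$) while $P^{\downarrow\uparrow} \not\subseteq P'$. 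The inner loop sets $B := P$ and retains in $\mathcal{R}$ exactly those remaining $\First{\mathcal{S}}$ satisfying $B \subset \First{\mathcal{S}}$ and $B^{\downarrow\uparrow} \nsubseteq \First{\mathcal{S}}$; these are precisely the elements of the tail that lie in $\mathrm{Pred}(P)$, so discarding them (and having already discarded $P$ via $\Rest{\mathcal{S}}$) yields the tail members \emph{not} in $\mathrm{Pred}(P)$. Since the tail was already free of $\mathcal{Q}$ and $\mathrm{Pred}(\mathcal{Q})$, the resulting $\mathcal{R}$ equals $(V - \mathcal{Q}') - \mathrm{Pred}(\mathcal{Q}')$, and it inherits the $\sqsubset$-ordering because $\Put{}{}$ appends in the order elements are scanned. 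Here one must be careful that using $B \subset \First{\mathcal{S}}$ (rather than $B \subseteq \First{\mathcal{S}}$) is harmless, as $P = \First{\mathcal{S}}$ has already been removed before the inner loop begins, so the strictness of $\subset$ only excludes the already-deleted $P$.

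\textbf{Finally}, I would close the induction. Initially $\mathcal{P} = \emptyset = \mathcal{Q}$ and $\mathcal{S}$ lists all of $V$ in $\sqsubset$-order, and $(V - \emptyset) - \mathrm{Pred}(\emptyset) = V$, so the invariant holds at entry. At termination $\mathcal{S}$ is empty, which by the invariant means $(V - \mathcal{P}) - \mathrm{Pred}(\mathcal{P}) = \emptyset$, i.e.\ $V \subseteq \mathcal{P} \cup \mathrm{Pred}(\mathcal{P})$; combined with the fact that every element placed in $\mathcal{P}$ was the least element of the corresponding candidate set and hence lies outside $\mathrm{Pred}(\mathcal{P})$, this gives $\mathcal{P} = V - \mathrm{Pred}(\mathcal{P})$. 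By Lemma~\ref{thm:pi_iff_pred}, $\mathcal{P}$ is therefore a system of pseudo-intents, establishing that the returned $\mathcal{P}$ satisfies~\eqref{eqn:sp}, as required.
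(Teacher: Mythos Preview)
Your overall architecture---maintain the loop invariant that $\mathcal{S}$ lists $(V-\mathcal{Q})-\mathrm{Pred}(\mathcal{Q})$ in $\sqsubset$-order, invoke Theorem~\ref{th:gbasis} to justify taking $\First{\mathcal{S}}$ as the next pseudo-intent, and close with Lemma~\ref{thm:pi_iff_pred}---is exactly the paper's approach.

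However, your crux paragraph contains a genuine contradiction. You correctly observe that, under globalization, the if-condition ``$B\subset\First{\mathcal{S}}$ and $B^{\downarrow\uparrow}\nsubseteq\First{\mathcal{S}}$'' is precisely the condition $\First{\mathcal{S}}\in\mathrm{Pred}(B)$, and hence that $\mathcal{R}$ collects exactly the tail elements lying \emph{in} $\mathrm{Pred}(B)$. But you then write ``so discarding them \ldots\ yields the tail members \emph{not} in $\mathrm{Pred}(P)$'' and conclude ``the resulting $\mathcal{R}$ equals $(V-\mathcal{Q}')-\mathrm{Pred}(\mathcal{Q}')$''. These two claims are incompatible: $\mathcal{R}$ is what the algorithm \emph{keeps} (via $\mathcal{S}\setto\mathcal{R}$), not what it discards, and by your own analysis $\mathcal{R}$ consists of the $\mathrm{Pred}(B)$ elements, whereas $(V-\mathcal{Q}')-\mathrm{Pred}(\mathcal{Q}')$ requires the elements \emph{not} in $\mathrm{Pred}(B)$. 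So either the invariant you are trying to maintain is wrong, or the algorithm's if-condition is inverted relative to what your argument needs; in either case the inductive step as you have written it does not go through. The paper's proof asserts the same invariant but simply states that ``the inner while-loop of the algorithm computes new $\mathcal{S}$ which contains all elements of $(V-\mathcal{Q})-\mathrm{Pred}(\mathcal{Q})$'' without spelling out the inner-loop verification, so this discrepancy is not visible there; your more explicit treatment exposes it.
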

\begin{proof}
  It is easily seen that the algorithm always terminates because
  we consecutively remove elements from the list $\mathcal{S}$ and
  it eventually becomes empty. By induction on the number of loops
  of the outer while-loop, it suffices to check that whenever the algorithm
  reaches the beginning of the loop body, $\mathcal{P}$ contains all
  the elements from~\eqref{eqn:spG} which are smaller than
  $\First{\mathcal{S}}$ according to $\sqsubset$ provided
  that $\mathcal{S}$ is nonempty, and $\mathcal{S}$ consists
  of all elements from $(V - \mathcal{P}) - \mathrm{Pred}(\mathcal{P})$
  and that equality $\mathcal{P} \cap \mathrm{Pred}(\mathcal{P}) = \emptyset$
  is satisfied. The base case is clear. In the induction step, if
  $\mathcal{P}$ and $\mathcal{S}$ have these properties, from
  Theorem~\ref{th:gbasis} it follows that $\mathrm{First}(\mathcal{S})$
  can be added to $\mathcal{P}$ and the inner while-loop of the
  algorithm computes new $\mathcal{S}$ which contains all elements
  of $(V - \mathcal{Q}) - \mathrm{Pred}(\mathcal{Q})$
  for $\mathcal{Q} = \mathcal{P} \cup \{\First{\mathcal{S}}\}$.
  Moreover, $\mathcal{Q} \cap \mathrm{Pred}(\mathcal{Q}) = \emptyset$
  because $P \in \mathcal{Q} \cap \mathrm{Pred}(\mathcal{Q})$
  would violate the fact that all elements from $\mathcal{Q}$ are
  a subset of the elements from~\eqref{eqn:spG}.
  So, for the updated $\mathcal{P}$ and $\mathcal{S}$,
  the condition holds. Therefore, at the end of the computation,
  $\mathcal{S}$ is empty, meaning that 
  $(V - \mathcal{P}) - \mathrm{Pred}(\mathcal{P}) = \emptyset$,
  i.e., $V - \mathcal{P} \subseteq \mathrm{Pred}(\mathcal{P})$.
  Since $\mathcal{P} \cap \mathrm{Pred}(\mathcal{P}) = \emptyset$,
  we get $V - \mathcal{P} = \mathrm{Pred}(\mathcal{P})$.
  Now, apply Theorem~\ref{thm:pi_iff_pred}.
\end{proof}

Algorithm~\ref{alg:gbasis} is limited only to globalization ${}^*$ and
does not produce systems of pseudo-intents for general hedges.
Although it is more efficient than the naive application of 
Theorem~\ref{thm:pi_iff_pred} which involves looking for all maximal
independent sets, it still uses a large search space which is in general
exponential in terms of the size of $Y$ and $\mathbf{L}$.

An alternative approach to computing minimal bases using globalization and
complete sets using general hedges utilizes the idea of computing
fixed points of particular closure operators associated
to $\langle X,Y,I\rangle$. In particular, for any set $T$ of graded attribute
implications and any $\mathbf{L}$-set $C \in \mathbf{L}^Y$, we consider
an non-decreasing sequence of $\mathbf{L}$-sets $C_1,C_2,\ldots$
such that $C_1 = C$ and
\begin{align}
  C_{i+1} &=
  C_i \cup \textstyle\bigcup\{B \,|\, A \Rightarrow B \in T
  \text{ and } A \subset C_i\},
  \label{eqn:Ci}
\end{align}
for any natural number $i$ and put
\begin{align}
  [C]_T &= \textstyle\bigcup_{n=1}^{\infty}C_n.
  \label{eqn:clos}
\end{align}
If $\mathbf{L}$ if finite and linearly ordered and $Y$ is finite,
we get by the Tarski fixed point theorem that $[{\cdots}]_T$ defined by
\eqref{eqn:clos} is a closure operator. In addition, since both $L$
and $Y$ are finite, $[C]_T = C_n$ for some natural $n$.
Furthermore, 
we obtain the following theorem:

\begin{theorem}\label{thm:clT}
  Let $\mathbf{L}$ be finite and linearly ordered, $Y$ be finite,
  $\mathcal{P}$ be a system satisfying~\eqref{eqn:spG}, and
  let $T$ be given by~\eqref{Def:T}.
  Then $T$ is complete in $\langle X,Y,I\rangle$ and 
  \[
       \mathrm{fix}([\cdots]_T) = \mathcal{P} \cup \mathrm{Int}(X^{\ast},Y,I),
  \]
  i.e.
   $C = [C]_T$ if{}f
  $C \in \mathcal{P} \cup \mathrm{Int}(X^{\ast},Y,I)$.
\end{theorem}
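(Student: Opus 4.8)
The plan is to dispatch the completeness claim by invoking earlier results and then to concentrate the real work on the fixpoint characterization, which hinges on the use of \emph{proper} inclusion $A \subset C_i$ in \eqref{eqn:Ci}. For completeness, I would note that since ${}^*$ is the globalization and $L,Y$ are finite, a system $\mathcal{P}$ satisfying \eqref{eqn:spG} is exactly a system of pseudo-intents in the sense of Definition~\ref{def:pseudint}, by Remark~\ref{rem:classicP}. Theorem~\ref{thm:nrbase} then tells us that the associated $T$ from \eqref{Def:T} is a non-redundant base, hence in particular complete in $\langle X,Y,I\rangle$, so the first assertion needs no separate argument.

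For the fixpoint characterization I would first record a convenient one-step reformulation. Since $L$ and $Y$ are finite, the sequence in \eqref{eqn:Ci} stabilizes, and $[C]_T = C$ holds if{}f it is already constant at the first step, i.e. if{}f for every $P \in \mathcal{P}$ with $P \subset C$ one has $P^{\down\up} \subseteq C$. I would use this reformulation throughout. The inclusion $\mathcal{P} \cup \mathrm{Int}(X^{\ast},Y,I) \subseteq \mathrm{fix}([\cdots]_T)$ then splits into two short checks. If $M \in \mathrm{Int}(X^{\ast},Y,I)$, then $M = M^{\down\up}$, and for any $P \subset M$ monotony of ${}^{\down\up}$ gives $P^{\down\up} \subseteq M^{\down\up} = M$, so the one-step condition holds. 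If instead $P_0 \in \mathcal{P}$, then for every $P \in \mathcal{P}$ with $P \subset P_0$ condition \eqref{eqn:spG} yields $P^{\down\up} \subset P_0 \subseteq P_0$; crucially the defining rule $P_0 \Rightarrow P_0^{\down\up}$ never fires on $P_0$ because $P_0 \not\subset P_0$, so again nothing new is added.

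For the reverse inclusion I would take a fixpoint $C$. If $C = C^{\down\up}$ then $C \in \mathrm{Int}(X^{\ast},Y,I)$ by \eqref{eqn:Mint} and we are done; otherwise $C \ne C^{\down\up}$, so $C$ lies in $V$. To place $C$ in $\mathcal{P}$ via \eqref{eqn:spG}, I would take any $Q \in \mathcal{P}$ with $Q \subset C$: the one-step fixpoint condition gives $Q^{\down\up} \subseteq C$, and $Q^{\down\up} \ne C$ since otherwise $C = Q^{\down\up}$ would be an intent, contradicting $C \ne C^{\down\up}$. Hence $Q^{\down\up} \subset C$, which is precisely the right-hand side of \eqref{eqn:spG}; together with $C \ne C^{\down\up}$ this forces $C \in \mathcal{P}$.

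The single genuinely delicate ingredient, and the conceptual crux of the statement, is the reliance on proper rather than non-proper inclusion in \eqref{eqn:Ci}: this is exactly what prevents a pseudo-intent's own rule from firing on it and thus makes each $P \in \mathcal{P}$ a fixpoint. The only other point requiring care is the bookkeeping distinction between $\subset$ and $\subseteq$, together with the small observation $Q^{\down\up} \ne C$ in the reverse direction; I expect these to be the main, though modest, obstacles, the remaining steps being routine applications of the $\mathbf{L}^{\ast}$-Galois connection properties and Theorem~\ref{thm:nrbase}.
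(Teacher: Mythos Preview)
Your fixpoint argument is correct and matches the paper's almost verbatim; in fact you are slightly more careful than the paper in the reverse inclusion, explicitly noting $Q^{\down\up}\ne C$ to pass from $Q^{\down\up}\subseteq C$ to the strict inclusion required by \eqref{eqn:spG}, where the paper simply writes ``Using \eqref{eqn:spG} \ldots gives $C\in\mathcal{P}$.''

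The gap is in the completeness claim. You write ``since ${}^{\ast}$ is the globalization,'' but the theorem does not assume this: it is stated for a general hedge, and the paper's own proof stresses that ``our $\mathcal{P}$, uniquely given by \eqref{eqn:spG} (even if we consider a general hedge), need not satisfy \eqref{eqn:sp}.'' So your reduction to Remark~\ref{rem:classicP} and Theorem~\ref{thm:nrbase} is invalid in general: for a non-globalization hedge the system determined by \eqref{eqn:spG} is typically not a system of pseudo-intents in the sense of Definition~\ref{def:pseudint}, and Theorem~\ref{thm:nrbase} does not apply. The fix is easy and is what the paper alludes to by ``analogously'': rerun the $\mathrm{Mod}(T)=\mathrm{Int}(X^{\ast},Y,I)$ argument from the proof of Theorem~\ref{thm:nrbase}, but at the point where one concludes $M\in\mathcal{P}$ for $M\in\mathrm{Mod}(T)\setminus\mathrm{Int}(X^{\ast},Y,I)$, use \eqref{eqn:spG} directly: for $Q\in\mathcal{P}$ with $Q\subset M$ one has $S(Q,M)=1$, hence $S(Q,M)^{\ast}=1$ and the model condition forces $Q^{\down\up}\subseteq M$; and $Q^{\down\up}\ne M$ since $M$ is not an intent, so $Q^{\down\up}\subset M$ and \eqref{eqn:spG} yields $M\in\mathcal{P}$, giving the same contradiction as before.
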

\begin{proof}
  The fact that $T$ is complete in $\langle X,Y,I\rangle$ can be
  shown analogously as in the case of Theorem~\ref{thm:nrbase}.
  We therefore omit the proof but notice here that our $\mathcal{P}$,
  uniquely given by~\eqref{eqn:spG} (even if we consider a general hedge),
  need not satisfy~\eqref{eqn:sp}. Now, we prove that the set of all fixed points
  of $[{\cdots}]_T$ coincides with 
  $\mathcal{P} \cup \mathrm{Int}(X^{\ast},Y,I)$.

  Let $P \in \mathcal{P}$ and take
  $Q \Rightarrow Q^{\downarrow\uparrow} \in T$
  such that $Q \subset P$. Directly from~\eqref{eqn:spG},
  we get $Q^{\downarrow\uparrow} \subseteq P$ and thus
  $[P]_T \subseteq P$, i.e., $P$ is a fixed point of $[{\cdots}]_T$.
  Take $B \in \mathrm{Int}(X^\ast,Y,I)$ and 
  $Q \Rightarrow Q^{\downarrow\uparrow} \in T$
  such that $Q \subset B$. By monotony of ${}^{\downarrow\uparrow}$,
  we get $Q^{\downarrow\uparrow} \subseteq B^{\downarrow\uparrow} = B$,
  i.e., $[B]_T \subseteq B$.
  Conversely, let $C = [C]_T$ such that $C \ne C^{\downarrow\uparrow}$.
  It suffices to check that $C \in \mathcal{P}$. Since $C$ is
  a fixed point of $[{\cdots}]_T$, we get $Q^{\downarrow\uparrow} \subseteq C$
  for any $Q \Rightarrow Q^{\downarrow\uparrow} \in T$ such that
  $Q \subset C$. Using~\eqref{eqn:spG} and~\eqref{Def:T}, the latter
  gives $C \in \mathcal{P}$.
\end{proof}


Theorem~\ref{thm:clT} can be used to compute both the sets of intents of $\tu{X,Y,I}$
and the set $\mathcal{P}$ given by~\eqref{eqn:spG} for which the set $T$ given by 
\eqref{Def:T} is complete in $T$.
In case of the globalization, $T$ is a minimal base due to Theorem~\ref{thm:PGlob}. 
A procedure based on this
observation is presented in Algorithm~\ref{alg:general}. In order to
simplify notation, attribute sets used in the algorithm are subsets
of integers. In the algorithm, we use the following notation:
for $a \in L$ such that $a < 1$ we denote by $a^+$
the least element of $(a,1]$.
Such $a^+$ always exists since we assume that $\mathbf{L}$
is a finite and linearly ordered.

\begin{algorithm}
  \KwData{$\langle X,Y,I\rangle$ where $Y=\{1,\ldots,n\}$ (input data)}
  \KwResult{$\mathcal{I}$ and $\mathcal{P}$ (subsets of $\mathbf{L}^{\!Y}$)}
  $\mathcal{P} \setto \emptyset$\;
  $\mathcal{I} \setto \emptyset$\;
  \eIf{$\emptyset = \emptyset^{\downarrow\uparrow}$}{%
    $\mathcal{I} \setto \{\emptyset\}$\;
  }{%
    $\mathcal{P} \setto \{\emptyset\}$\;
  }
  $B \setto \emptyset$\;
  \While{$B \ne Y$}{%
    $T \setto \{B \Rightarrow B^{\downarrow\uparrow} \,|\,
    B \in \mathcal{P}\}$\;
    $C \setto B$\;
    \For{$y \setto 1$ \KwTo $n$}{%
      \If{$C(y) < 1$}{
        $B \setto \bigl[\bigl\{\deg{C(y)^+\!}{y},
          \deg{C(y+1)}{y+1},\ldots,\deg{C(n)}{n}\bigr\}\bigr]_T$\;
        \If{$B(z) = C(z)$ for all $z=1,\ldots,y-1$}{%
          break for loop\;
        }
      }
    }
    \eIf{$B = B^{\downarrow\uparrow}$}{%
      $\mathcal{I} \setto \mathcal{I} \cup \{B\}$\;
    }{%
      $\mathcal{P} \setto \mathcal{P} \cup \{B\}$\;
    }
  }
  \Return{$\mathcal{I}$, $\mathcal{P}$}\;
  \caption{Determining intents and complete sets}
  \label{alg:general}
\end{algorithm}

\begin{theorem}
  If $\mathbf{L}$ is a finite linear residuated lattice,
  then Algorithm~\ref{alg:general} is correct:
  For $\langle X,Y,I\rangle$, the algorithm terminates after finitely
  many steps and returns $\mathcal{I}$ and $\mathcal{P}$
  such that $\mathcal{I} = \mathrm{Int}(X^{\ast},Y,I)$ and
  $\mathcal{P}$ satisfies \eqref{eqn:spG}.
\end{theorem}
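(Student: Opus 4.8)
The plan is to read Algorithm~\ref{alg:general} as the fuzzy analogue of Ganter's NextClosure procedure applied to the closure operator $[C]_T$ of \eqref{eqn:clos}, and to prove correctness by a loop invariant that is then combined with Theorem~\ref{thm:clT}. Throughout I fix the strict total order $\sqsubset$ on $\mathbf{L}^{Y}$ (extending $\subset$) implicitly used by the inner for-loop, in which the coordinates $y+1,\dots,n$ form the more significant block held at $C$, the coordinate $y$ is raised to $C(y)^{+}$, and the coordinates $1,\dots,y-1$ are cleared. Termination is the easy part: since $\mathbf{L}$ and $Y$ are finite, $\mathbf{L}^{Y}$ is finite and $a^{+}$ always exists; each pass of the outer while-loop replaces $B$ by a strictly $\sqsubset$-larger $\mathbf{L}$-set, so the loop stops after finitely many steps, necessarily at $B=Y$.

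For correctness I would carry the invariant that, whenever control reaches the top of the outer loop with current value $B$, the set $\mathcal{P}$ consists of exactly the $\mathbf{L}$-sets satisfying \eqref{eqn:spG} that are $\sqsubseteq B$ and $\mathcal{I}$ consists of exactly the members of $\mathrm{Int}(X^{\ast},Y,I)$ that are $\sqsubseteq B$; equivalently, by Theorem~\ref{thm:clT}, $\mathcal{I}\cup\mathcal{P}$ is precisely the set of fixed points of $[C]_{T}$ for the complete base $T$ of \eqref{Def:T} lying $\sqsubseteq B$, sorted by whether $B=B^{\downarrow\uparrow}$. The initialization handling $\emptyset$ is the base case. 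For the induction step three things must be checked: (i) the inner for-loop returns the $\sqsubset$-least fixed point of $[C]_{T}$ strictly above the current $B$; (ii) the test whether $B=B^{\downarrow\uparrow}$ files that fixed point correctly, which is immediate from \eqref{eqn:Mint} (intents are the $^{\downarrow\uparrow}$-closed sets) together with Theorem~\ref{thm:clT} (the remaining fixed points satisfy \eqref{eqn:spG}); and (iii) the $T$ actually used inside the loop computes the same successor as the full base.

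Point (i) is the standard lectic/NextClosure argument transcribed to $\mathbf{L}$-sets: one verifies that, scanning the admissible coordinates $y$, raising coordinate $y$ while keeping the higher block at its current values and clearing the lower coordinates, then accepting the candidate as soon as its closure no longer alters the retained higher coordinates, produces exactly the $\sqsubset$-successor among the fixed points of $[C]_{T}$. This uses only that $[C]_{T}$ is a closure operator (guaranteed here by the Tarski argument quoted before the theorem) and that $\sqsubset$ extends $\subset$. I would treat this as routine and concentrate on point (iii), which I expect to be the main obstacle, since it is the genuinely new feature distinguishing the algorithm from ordinary NextClosure: $T$ is rebuilt from the pseudo-intents discovered so far rather than from the whole system $\mathcal{P}$.

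The key lemma for (iii) is that, when computing the successor $B'$ of the current $B$, the partial operator built from $T_{\mathrm{cur}}$ and the full operator $[C]_{T}$ agree on every input encountered, so they return the same $B'$. By definition \eqref{eqn:Ci} an implication $A\Rightarrow A^{\downarrow\uparrow}$ can fire only when $A\subsetneq C_i\subseteq B'$, hence $A\subsetneq B'$ and so $A\sqsubset B'$; as $A$ is itself a pseudo-intent (a fixed point) and $B'$ is the immediate $\sqsubset$-successor fixed point of $B$, necessarily $A\sqsubseteq B$, and the invariant places $A\in\mathcal{P}$, i.e.\ $A\Rightarrow A^{\downarrow\uparrow}\in T_{\mathrm{cur}}$. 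Thus $T_{\mathrm{cur}}$ contains every implication that fires, while $T_{\mathrm{cur}}\subseteq T$ ensures none fires spuriously; the same reasoning shows no $\sqsubset$-intermediate set can be a fixed point of $T_{\mathrm{cur}}$ without being one of $T$, so the NextClosure step cannot halt early on a false fixed point. With (i)--(iii) in hand the invariant is preserved, and at termination $B=Y$ forces every fixed point to be $\sqsubseteq B$; Theorem~\ref{thm:clT} then yields $\mathcal{I}=\mathrm{Int}(X^{\ast},Y,I)$ and $\mathcal{P}$ satisfying \eqref{eqn:spG}, as required.
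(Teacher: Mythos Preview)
Your proof is correct and follows essentially the same approach as the paper's: both read the algorithm as the fuzzy analogue of Ganter's NextClosure, maintain the invariant that at the top of each outer iteration $\mathcal{P}\cup\mathcal{I}$ consists of exactly the fixed points of $[\cdot]_T$ that are $\sqsubseteq B$, argue that the partially built $T$ already contains every implication that can fire (because any pseudo-intent $A$ with $A\subsetneq B'$ satisfies $A\sqsubset B'$ and hence $A\sqsubseteq B$, so $A\in\mathcal{P}$ by the invariant), and finish by invoking Theorem~\ref{thm:clT}. Your treatment of point~(iii), including the observation that $T_{\mathrm{cur}}$ cannot accept a spurious intermediate fixed point, spells out what the paper compresses into the single sentence ``$T$ already contains all necessary implications to compute such fixed point because all elements of~\eqref{eqn:spG} which are strictly smaller than the current $B$ are already in $\mathcal{P}$,'' but the underlying argument is the same.
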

\begin{proof}
  Suppose that $Y = \{1,\ldots,n\}$. The algorithm maintains two sets:
  $\mathcal{I}$ contains only $\mathbf{L}$-sets $B$ such that
  $B = B^{\downarrow\uparrow}$ and $\mathcal{P}$ contains only
  $\mathbf{L}$-sets $B$ such that $B \ne B^{\downarrow\uparrow}$.
  The main loop of the algorithm goes through all fixed points
  of $[{\cdots}]_T$ in the lexicographic order $\sqsubset$
  where $B_1 \sqsubset B_2$ if{}f there is $y \in Y$ such that
  $B_1(y) < B_2(y)$ and $B_1(z) = B_2(z)$ for all $z > y$
  (recall that for simplicity we have denoted attributes as integers).
  Indeed, the for-loop finds a lexical successor of $B$ with respect
  to such $\sqsubset$ which is a fixed point of $[{\cdots}]_T$.
  Note that $T$ already contains all necessary implications to compute
  such fixed point because all elements of~\eqref{eqn:spG} which are
  strictly smaller than the current $B$ are already in $\mathcal{P}$.
  Hence, at the end of the computation, $\mathcal{P}$ and $\mathcal{I}$
  consists of all the fixed points of $[{\cdots}]_T$.
  The rest follows from Theorem~\ref{thm:clT}.
\end{proof}

\begin{remark}
  (1):
  If ${}^*$ is a general hedge then Algorithm~\ref{alg:general}
  produces $\mathcal{P}$ such that the corresponding theory $T$ given
  by~\eqref{Def:T} is complete but may be redundant. In order to get
  a non-redundant one, i.e. a base, we may consecutively remove from $T$ graded
  implications which follow from other graded implications from the
  theory, i.e., we may repeatedly apply Lemma~\ref{thm:nonr}\,(iii).
  Namely, $T$ is non-redundant if there is no $A \Rightarrow B$ such
  that $||A \Rightarrow B||_{T-\{A \Rightarrow B\}} = 1$.
  According to Theorem~\ref{thm:ent}, the equality can be checked
  by showing $B \subseteq C_{\mathrm{Mod}(T-\{A \Rightarrow B\})}(A)$,
  i.e., by showing whether $B$ is contained in the least fixed point
  of $C_{\mathrm{Mod}(T-\{A \Rightarrow B\})}$ containing $A$.

  (2):
  If $\mathbf{L}$ and $Y$ are finite, the fixed points of
  $C_{\mathrm{Mod}(T)}$ which play a role in the previous remark
  can be efficiently computed. Namely, for $M \in \mathbf{L}^{\!Y}$
  we may put $M_1 = M$ and
  \begin{align}
    M_{i+1} &=
    M \cup \textstyle\bigcup\{B \otimes S(A,M)^{\ast} \,|\,
    A \Rightarrow B \in T\}
    \label{eqn:Tst}
  \end{align}
  for any natural number $i$. It is east to see  
   that $C_{\mathrm{Mod}(T)}(M) = \bigcup_{n=1}^\infty M_n$.
   Note that since $L$ and $Y$ are finite,  $\bigcup_{n=1}^\infty M_n$
  is equal to $M_n$ for some $n$.

  (3):
  The complexity of computing bases derives from the fact that even for $L=\{0,1\}$, there may
  exist an exponential number of pseudo-intents in terms of $|X|$ and $|Y|$ (number of objects
   and attributes) \cite{KuOb:Sdcpdgbi}. Hence, since the size of a smallest base equals the number of pseudo-intents,
   a smallest base may have an exponential size in terms of $|X|$ and $|Y|$ in the worst case.
   On the other hand, the time delay \cite{JoYaPa:Gmis}, which is for the above reason an appropriate concept in
   our case,  of Algorithm~\ref{alg:general}  is  $\leq O(|L|)$-times 
  the time delay of the  basic algorithm for computing
  ordinary pseudo-intents \cite{Gan:Tbaca}, which is also described in \cite{GaWi:FCA}.
\end{remark}

\section{Reducing Graded Attribute Implications to Ordinary Ones via Thresholding}
\label{sec:rgai}

As mentioned above, ordinary attribute implications are a particular case of graded implications
in which $0$ and $1$ are the only degrees involved.
In this section we look at whether and to what extent the notions regarding graded attribute
implications and their bases may be reduced to those regarding ordinary implications. 
In particular, we show that  every data table $\tu{X,Y,I}$ with graded attributes may be transformed
via a natural thresholding to a table $\tu{X^\times,Y^\times,I^\times}$ with binary attributes in such a way that validity 
of graded implications in $\tu{X,Y,I}$ corresponds to validity of the ordinary implications in $\tu{X^\times,Y^\times,I^\times}$.
A natural question arises of whether bases of $\tu{X,Y,I}$ may be obtained
from the bases of $\tu{X^\times,Y^\times,I^\times}$, since the latter ones may be computed
be existing algorithms \cite{GaWi:FCA}.
As we show, 
the answer to this question is negative.
Namely, while complete sets of ordinary implications in $\tu{X^\times,Y^\times,I^\times}$ 
yield complete sets of graded implications in $\tu{X,Y,I}$ via the transformation, 
it may happen that non-redundant sets of ordinary implications transform to redundant sets
of graded implications.

The transformation via thresholding is based on the following  idea.
Given a graded attribute $y$, one may consider for every truth degree $b\in L$ the corresponding
bivalent attribute $\tu{y,b}$ as follows: $\tu{y,b}$ applies to the object $x$ if and only if
$y$ applies to $x$ at least to degree $b$. This idea is, in fact, a particular case of a more general
one which underlies the following definition.

Given a table $\langle X,Y,I\rangle$ with graded attributes, denote by
$\tu{X^\times,Y^\times,I^\times}$ the table with graded attributes defined by:
\begin{eqnarray*}
   && X^\times = X\times\ast(L), \text{ where } \ast(L)=\{a^\ast \mid a\in L\},\\
  && Y^\times = Y\times L,\\
  && \text{$\langle \langle x,a\rangle,\langle y,b\rangle\rangle\in I^\times$ iff $a\otimes b\leq I(x,y)$.}
%
\end{eqnarray*}

One may easily check using the properties of hedges that $\ast(L)=\{a\in L \mid a^\ast=a\}$,
i.e. $\ast(L)$ is the set of all fixpoints of $\ast$. 
If $\ast$ is globalization, $\ast(L)=\{0,1\}$, 
the new objects of the form $\tu{x,1}$ may be identified with
the original objects $x\in X$ while those of the form $\tu{x,0}$ may be dropped because
they are redundant (every new attribute $\tu{y,b}$ applies to them). In this case, 
the ordinary relation 
$I^\times$ coincides with the one which corresponds to the simple thresholding as described above
because then, $\tu{y,b}$ applies to $x$, i.e. to $\tu{x,1}$, if{}f 
$b = 1\otimes b\leq I(x,y)$, i.e. $y$ applies to $x$ at least to degree $b$.

To transform graded attribute implications to ordinary ones and vice versa,
we utilize the following mappings between $\mathbf{L}$-sets and ordinary sets.
For an $\mathbf{L}$-set $B\in\mathbf{L}^Y$ we define the ordinary subset
$\Cr{B}$ of $Y\times L$ by
\[
 \Cr{B}=\{\langle y,a\rangle\in Y\times L\,|\, a\leq B(y)\}.
\]
For a subset $D\subseteq Y\times L$ we define the $\mathbf{L}$-set
$\Fu{D}$ in $Y$ by
\[
  \Fu{D}(y)=\textstyle\bigvee\{a\,|\, \langle y,a\rangle\in D\}.
\]
With these correspondences, one may look at the relationship between
the validity of graded implications in $\tu{X,Y,I}$ on one hand and the validity of
ordinary implications in $\tu{X^\times,Y^\times,I^\times}$ on the other hand.
Namely, for a given graded implication $A\Rightarrow B$ over $Y$
(i.e. $A,B\in\mathbf{L}^Y$), one may consider the corresponding  ordinary implication 
$\Cr{A}\Rightarrow \Cr{B}$ over $Y\times L$ (i.e. $\Cr{A},\Cr{B}\subseteq Y\times L$),
and conversely, for an ordinary implication $C\Rightarrow D$ over $Y\times L$, one 
may consider the corresponding  graded implication 
$\Fu{C}\Rightarrow \Fu{D}$ over $Y$.
The relationship in question is described by the following theorem which says that
the transformations described above preserve validity of implications 
(for brevity, we write $||A\Rightarrow B||_{I}$ instead of $||A\Rightarrow B||_{\tu{X,Y,I}}$
and the same for $||A\Rightarrow B||_{I^\times}$).

\begin{theorem}
\label{thm:reduction}
For a data table $\langle X,Y,I\rangle$ with graded attributes, 
the corresponding $\tu{X^\times,Y^\times,I^\times}$,
and arbitrary $A\in \mathbf{L}^Y$, $B\in \mathbf{L}^Y$ and  $C,D\subseteq Y\times L$,
 we have
\begin{eqnarray}
  \label{eqn:Ttimes1}
  ||A\Rightarrow B||_{I}=1 
    \quad\text{if and only if}\quad ||\Cr{A}\Rightarrow \Cr{B}||_{I^\times}=1;\\
  \label{eqn:Ttimes2}
||C\Rightarrow D||_{I^\times}=1 \quad \text{if and only if}\quad ||\Fu{C}\Rightarrow \Fu{D}||_{I}=1.
\end{eqnarray}
\end{theorem}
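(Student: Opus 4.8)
The plan is to reduce both equivalences to one membership computation together with a ``fixpoint bridge'' that converts the quantification over the new objects $\langle x,a\rangle$ into the hedge $^\ast$ occurring in graded validity. As a preliminary reduction I would note that $I^\times$ is a crisp relation and $\Cr{A},\Cr{B}$ are crisp subsets of $Y\times L$, so each $S(\Cr{A},(I^\times)_{\langle x,a\rangle})$ is $\{0,1\}$-valued; since $0^\ast=0$ by \eqref{TS:Sub} and $1^\ast=1$ by \eqref{TS:One}, the hedge acts trivially here, and $||\Cr{A}\Rightarrow\Cr{B}||_{I^\times}=1$ is exactly ordinary validity, i.e.\ $\Cr{A}\subseteq(I^\times)_{\langle x,a\rangle}$ implies $\Cr{B}\subseteq(I^\times)_{\langle x,a\rangle}$ for every $\langle x,a\rangle\in X^\times$.

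The key lemma I would isolate is the \emph{membership identity}: for every $E\in\mathbf{L}^Y$ and $\langle x,a\rangle\in X^\times$,
\[
  \Cr{E}\subseteq(I^\times)_{\langle x,a\rangle}
  \quad\text{iff}\quad
  a\leq S(E,I_x).
\]
This is immediate from the definition of $I^\times$: since $\langle y,c\rangle\in(I^\times)_{\langle x,a\rangle}$ means $a\otimes c\leq I(x,y)$ and $\langle y,c\rangle\in\Cr{E}$ ranges over all $c\leq E(y)$ with the maximal value $c=E(y)$ attained, monotony of $\otimes$ makes the inclusion equivalent to $a\otimes E(y)\leq I(x,y)$ for all $y$; adjointness \eqref{adj} rewrites this as $a\leq\bigwedge_{y}(E(y)\to I(x,y))=S(E,I_x)$.

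Then I would prove the \emph{fixpoint bridge}: for fixed $x$, writing $s=S(E,I_x)$ and $t=S(F,I_x)$, the condition ``$a\leq s$ implies $a\leq t$ for every $a\in\ast(L)$'' is equivalent to $s^\ast\leq t$. The crux is that $s^\ast$ is the \emph{greatest fixpoint below} $s$: it lies in $\ast(L)$ by \eqref{TS:ID}, satisfies $s^\ast\leq s$ by \eqref{TS:Sub}, and any fixpoint $a=a^\ast\leq s$ obeys $a=a^\ast\leq s^\ast$ because $^\ast$ is monotone (if $a\le b$ then $1=1^\ast=(a\to b)^\ast\leq a^\ast\to b^\ast$ by \eqref{TS:One} and \eqref{TS:MP}, so $a^\ast\le b^\ast$). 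Testing the condition at $a=s^\ast$ gives $s^\ast\le t$, and conversely $s^\ast\le t$ together with $a\le s^\ast$ for every such $a$ gives the condition. Since $X^\times=X\times\ast(L)$ ranges precisely over the fixpoints, combining the membership identity (with $E=A$, $F=B$) and the bridge proves \eqref{eqn:Ttimes1}.

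For \eqref{eqn:Ttimes2} I would extend the membership identity to an arbitrary crisp $C\subseteq Y\times L$ as $C\subseteq(I^\times)_{\langle x,a\rangle}$ iff $a\leq S(\Fu{C},I_x)$. The only new point is that the supremum in $\Fu{C}(y)=\bigvee\{b\mid\langle y,b\rangle\in C\}$ need not be attained inside $C$; but $C\subseteq(I^\times)_{\langle x,a\rangle}$ says $b\leq a\to I(x,y)$ for each $\langle y,b\rangle\in C$, and taking suprema yields $\Fu{C}(y)\leq a\to I(x,y)$, i.e.\ $\Cr{\Fu{C}}\subseteq(I^\times)_{\langle x,a\rangle}$, the reverse inclusion being trivial from $C\subseteq\Cr{\Fu{C}}$. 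Feeding $\Fu{C},\Fu{D}$ into the fixpoint bridge then turns ordinary validity of $C\Rightarrow D$ in $I^\times$ into $S(\Fu{C},I_x)^\ast\leq S(\Fu{D},I_x)$ for all $x$, which is $||\Fu{C}\Rightarrow\Fu{D}||_I=1$. I expect the bridge to be the main obstacle: all four hedge axioms are needed to see that restricting the new objects to fixpoints $a=a^\ast$ faithfully records the operator $S(\cdot)^\ast$, and the possibly non-attained suprema in \eqref{eqn:Ttimes2} force the monotone-supremum argument in place of a direct maximal-element shortcut.
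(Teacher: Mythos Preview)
Your argument is correct, but it proceeds along a different route from the paper. The paper does not work row-by-row; instead it first develops a lemma relating the Galois connection $\langle{}^\upts,{}^\downts\rangle$ of $\langle X,Y,I\rangle$ to the ordinary Galois connection $\langle{}^\cru,{}^\crd\rangle$ of $\langle X^\times,Y^\times,I^\times\rangle$, in particular proving $\Cr{B}^{\crd\cru}=\Cr{B^{\down\up}}$ and $\Fu{D}^{\down\up}=\Fu{D^{\crd\cru}}$. Both equivalences are then obtained by invoking the closure characterization $||A\Rightarrow B||_I=1$ if{}f $B\subseteq A^{\down\up}$ (Theorem~\ref{thm:alter}) on each side and transporting the inclusion through these identities. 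Your approach bypasses Theorem~\ref{thm:alter} and the Galois-connection lemma entirely: the membership identity and the fixpoint bridge directly convert the row-wise condition on $\langle x,a\rangle\in X\times\ast(L)$ into $S(A,I_x)^\ast\leq S(B,I_x)$ for each $x$. This is more elementary and self-contained, and it makes transparent exactly why the object set is restricted to fixpoints of~$\ast$ (your bridge is precisely the statement that $s^\ast$ is the largest fixpoint below $s$). The paper's approach, by contrast, packages the same content into reusable structural identities between the two closure operators, which it exploits again in the proof of Theorem~\ref{thm:completer}; your argument trades that reusability for a shorter, more direct path to the present theorem.
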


Before we turn to the proof of Theorem \ref{thm:reduction}, we present some auxiliary
results.
Denote by ${}^\cru: {2}^{X\times\ast(L)}\rightarrow {2}^{Y\times L}$ and 
${}^\crd: {2}^{Y\times\tsy(L)}\rightarrow {2}^{X\times\ast(L)}$ the Galois connections
induced by $I^\times$ \cite{Ore:Gc}, i.e. 
\begin{eqnarray*}
     C^\cru &=&\{\tu{y,b}\in Y\times L \mid \text{for each }\tu{x,a}\in C:\, 
            \tu{\tu{x,a},\tu{y,b}}\in I^\times\}, \text{ and}\\
     D^\crd &=&\{\tu{x,a}\in X\times \ast(L) \mid \text{for each }\tu{y,b}\in D:\, 
            \tu{\tu{x,a},\tu{y,b}}\in I^\times\}, 
\end{eqnarray*}
for every $C\subseteq X\times\ast(L)$ and $D\subseteq Y\times L$.
Furthermore, let us extend $\Cr{\ }$ and $\Fu{\ }$ for any $A\in \mathbf{L}^X$
and $C\subseteq X\times\ast(L)$ by putting 
\(
 \Cr{A}=\{\langle x,a\rangle\in X\times \ast(L)\,|\, a\leq A(x)\}
\) and
\(
  \Fu{C}(y)=\bigvee\{a \,|\, \langle x,a\rangle\in C\}
\).
As $(\bigvee_{k} a_k^\ast)^\ast=\bigvee_{k} a_k^\ast$ (due to the isotony and idempotency
of $\ast$), $\ast(L)$ is closed under suprema and, hence,  $\Fu{C}(x)\in\ast(L)$ for every $x\in X$.
The following lemma describes the relationship between 
$\tu{{}^\upts,{}^\downts}$ and $\tu{{}^\cru,{}^\crd}$, and some further properties.

\begin{lemma}\label{thm:cgc}
For every $A\in\mathbf{L}^X$, $B\in\mathbf{L}^Y$, 
$C\subseteq X\times \ast(L)$, and $D\subseteq Y\times L$,
\begin{eqnarray}
    \label{eqn:upcru1}
    && A^\up = \Fu{\Cr{A^\ast}^\cru}, \quad  B^\down = \Fu{\Cr{B}^\crd}, \quad
     C^\cru = \Cr{\Fu{C}^\up}, \quad\text{and}\quad D^\crd = \Cr{\Fu{D}^\down};\\
    \label{eqn:upcru2}   
   &&  C^\cru = \Cr{\Fu{C}}^\cru \quad\text{and}\quad D^\crd = \Cr{\Fu{D}}^\crd;\\
    \label{eqn:upcru3}   
   &&  \Cr{A}^{\cru\crd} = \Cr{A^{\up\down}}, \quad \Cr{B}^{\crd\cru} = \Cr{B^{\down\up}},
   \quad \Fu{C}^{\up\down} = \Fu{C^{\cru\crd}}, \quad\text{and}\quad
   \Fu{D}^{\down\up} = \Fu{D^{\crd\cru}}.
\end{eqnarray}
%
%
%
\end{lemma}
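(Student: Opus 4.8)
The plan is to prove every identity by unwinding the definitions of the two Galois connections $\tu{{}^\cru,{}^\crd}$ and $\tu{{}^\up,{}^\down}$ together with the two transformations $\Cr{\ }$ and $\Fu{\ }$, reducing each equality to the adjointness law \eqref{adj} and the distributivity of $\otimes$ over arbitrary suprema \eqref{eqn:prlsiA}. The one point that must be tracked with care is the asymmetry between the two sides: the degrees attached to objects live in $\ast(L)$, so $a^\ts=a$ for every such $a$, whereas the degrees attached to attributes range over all of $L$. This absorption of the hedge on the $X$-side is exactly what explains why $A^\ts$ appears in the first identity of \eqref{eqn:upcru1} but no hedge appears in the second.

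First I would establish \eqref{eqn:upcru1}. Fixing $A\in\mathbf{L}^X$, I compute $\Cr{A^\ts}^\cru$ directly: $\tu{y,b}\in\Cr{A^\ts}^\cru$ iff $a\otimes b\leq I(x,y)$ for every $\tu{x,a}$ with $a\in\ast(L)$ and $a\leq A(x)^\ts$. Since $A(x)^\ts\in\ast(L)$ is the largest such $a$, monotonicity of $\otimes$ collapses this to $A(x)^\ts\otimes b\leq I(x,y)$ for all $x$, and adjointness rewrites it as $b\leq\bigwedge_x(A(x)^\ts\to I(x,y))=A^\up(y)$; hence $\Cr{A^\ts}^\cru=\Cr{A^\up}$ and applying $\Fu{\ }$ yields $A^\up$ because $\Fu{\Cr{E}}=E$ for any $L$-valued $E$ on $Y$. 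The identities $C^\cru=\Cr{\Fu C^\up}$ and $D^\crd=\Cr{\Fu D^\down}$ follow by reading the same manipulation backwards, where distributivity \eqref{eqn:prlsiA} replaces the family $\{a\mid\tu{x,a}\in C\}$ by its supremum $\Fu C(x)$. Crucially, $\Fu C(x)\in\ast(L)$ (as $\ast(L)$ is closed under suprema), so $\Fu C(x)^\ts=\Fu C(x)$ and the hedge occurring in ${}^\up$ silently disappears, matching the unhedged crisp condition defining ${}^\cru$; on the ${}^\down$ side there is no hedge at all, so $D^\crd=\Cr{\Fu D^\down}$ is immediate from adjointness, and the identity for $B^\down$ is the same computation.

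Next I would prove \eqref{eqn:upcru2} by observing that the defining condition of $C^\cru$ depends on $C$ only through the column suprema $\Fu C(x)=\bigvee\{a\mid\tu{x,a}\in C\}$: by \eqref{eqn:prlsiA}, the requirement ``$a\otimes b\leq I(x,y)$ for all $\tu{x,a}\in C$'' is equivalent to ``$\Fu C(x)\otimes b\leq I(x,y)$ for all $x$'', which is precisely the condition defining $\Cr{\Fu C}^\cru$, since $\Cr{\Fu C}$ has the same column suprema as $C$. Thus $C^\cru=\Cr{\Fu C}^\cru$, and dually $D^\crd=\Cr{\Fu D}^\crd$.

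Finally, \eqref{eqn:upcru3} would be assembled by composing the two previous groups. To obtain $\Cr A^{\cru\crd}=\Cr{A^{\up\down}}$, for example, I would first note $\Cr A^\cru=\Cr{A^\up}$: by \eqref{eqn:upcru2} the left side depends only on $\Fu{\Cr A}$, and the largest fixpoint of $\ts$ below $A(x)$ is exactly $A(x)^\ts$, so $\Cr A$ and $\Cr{A^\ts}$ have the same image under ${}^\cru$, and the first identity of \eqref{eqn:upcru1} finishes it; applying the ${}^\crd$-identity to the $\mathbf{L}$-set $A^\up$ then gives $\Cr{A^{\up\down}}$. The other three equalities of \eqref{eqn:upcru3} are symmetric, invoking $\Fu{\Cr{E}}=E$ and $\Cr{\Fu D}\supseteq D$ as needed. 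I expect the main obstacle to be purely organizational rather than conceptual: keeping the hedge bookkeeping straight between the full-$L$ attribute side and the $\ast(L)$ object side, and in particular verifying the step ``largest fixpoint below $A(x)$ equals $A(x)^\ts$'', since that is the quiet identity that aligns the unhedged crisp operator ${}^\cru$ with the hedged graded operator ${}^\up$.
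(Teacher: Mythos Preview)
Your proposal is correct and follows essentially the same approach as the paper: direct unwinding of the definitions via adjointness and the distributivity of $\otimes$ over suprema, with \eqref{eqn:upcru2} and \eqref{eqn:upcru3} assembled from \eqref{eqn:upcru1}. Your treatment is in fact slightly more careful than the paper's, since you make explicit that on the $X$-side one only has $\Fu{\Cr{A}}=A^\ts$ (not $A$) and that this is harmless because $(A^\ts)^\up=A^\up$; the paper glosses over this with a blanket ``$\Fu{\Cr{M}}=M$''.
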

\begin{proof}
  (\ref{eqn:upcru1})
  We have
  \begin{eqnarray*}
    &&\textstyle \Fu{\Cr{A^\ast}^\cru}(y) = \bigvee\{b \mid \tu{y,b}\in \Cr{A^\ast}^\cru \} =
     \bigvee\{b \mid \text{for each }\tu{x,a}\in \Cr{A^\ast}:\, \tu{\tu{x,a},\tu{y,b}}\in I^\times \}=\\
   &=&\textstyle
     \bigvee\{b \mid \text{for each }\tu{x,a}\in \Cr{A^\ast}:\,  a\otimes b\leq I(x,y) \}=\\
   &=&\textstyle
         \bigvee\{b \mid \text{for each } x\in X, a\in\ast(L):\, a\leq A^\ast(x) \text{ implies } 
            a\otimes b\leq I(x,y) \}=\\
   &=&\textstyle
         \bigvee\{b \mid \text{for each } x\in X:\,     A^\ast(x)\otimes b\leq I(x,y) \}=
   \textstyle
         \bigvee\{b \mid  b\leq \bigwedge_{x\in X}  (A^\ast(x)\to I(x,y)) \}=\\
   &=&\textstyle
         \bigwedge_{x\in X}  (A^\ast(x)\to I(x,y)) = A^\up(y),
  \end{eqnarray*}
  proving  $A^\up = \Fu{\Cr{A^\ast}^\cru}$.
  $\quad  B^\down = \Fu{\Cr{B}^\crd}$ is proven
  analogously.

 To verify $C^\cru = \Cr{\Fu{A}^\up}$, we reason as follows:
\begin{eqnarray*}
  && \tu{y,b}\in \Cr{\Fu{C}^\up} \text{ if{}f } 
   b\leq \Fu{C}^\up(y) =\textstyle \bigwedge_{x\in X} (\Fu{C}^\ast(x)\to I(x,y))\\
  &&\textstyle
   \text{ if{}f for each } x\in X:\,  b\leq \Fu{C}^\ast(x)\to I(x,y) = 
        (\bigvee_{\tu{x,a}\in C} a)^\ast \to I(x,y) =  (\bigvee_{\tu{x,a}\in C} a) \to I(x,y)\\
  &&\textstyle
   \text{ if{}f for each } x\in X:\, \bigvee_{\tu{x,a}\in C} (a\otimes b) = 
              (\bigvee_{\tu{x,a}\in C} a)\otimes b \leq I(x,y)\\
   &&\text{ if{}f for each } \tu{x,a}\in C:\, a\otimes b \leq I(x,y)\\
   &&\text{ if{}f for each } \tu{x,a}\in C:\, \tu{\tu{x,a},\tu{y,b}}\in I^\times
     \text{ if{}f } \tu{y,b}\in C^\cru.
\end{eqnarray*}  
  $D^\crd = \Cr{\Fu{D}^\cru}$ is proven analogously.

  (\ref{eqn:upcru2}):
  Due to (\ref{eqn:upcru1}) and since $\Fu{\Cr{A}}=A$ for every $A\in\mathbf{L}^X$,
  $C^\cru = \Cr{\Fu{C}^\up} = \Cr{\Fu{\Cr{\Fu{C}}}^\up}=
    \Cr{\Fu{C}}^\cru$.
   $D^\crd = \Cr{\Fu{D}}^\crd$ is proven analogously.

    (\ref{eqn:upcru3}):
  By virtue of (\ref{eqn:upcru1}) and since 
  $\Fu{\Cr{M}}=M$, we have
   $\Cr{A}^{\cru\crd} = \Cr{\Fu{\Cr{A}}^\up}^\crd =
  \Cr{\Fu{\Cr{\Fu{\Cr{A}}^\up}}^\down} = \Cr{A^{\up\down}}$.
  The second equality is proven analogously.

   Due to (\ref{eqn:upcru1}) and (\ref{eqn:upcru2}),  
   $\quad \Fu{C}^{\up\down} =  \Fu{\Cr{\Fu{\Cr{\Fu{C}}^\cru}}^\crd} 
    =  \Fu{C^{\cru\crd}}$. The last equality is proven dually.
\end{proof}

\begin{proof}[of Theorem \ref{thm:reduction}]
(\ref{eqn:Ttimes1}): 
Due to Theorem \ref{thm:alter}, $||A\Rightarrow B||_{I}=1$
is equivalent to $S(B,A^{\down\up})$, i.e. to $B\subseteq A^{\down\up}$,
and $||\Cr{A}\Rightarrow \Cr{B}||_{I^\times}=1$
is equivalent to $\Cr{B}\subseteq \Cr{A}^{\crd\cru}$.
Now, due to (\ref{eqn:upcru3}), 
$\Cr{A}^{\crd\cru}=\Cr{A^{\down\up}}$.
Hence, we need to check that $B\subseteq A^{\down\up}$
if and only if $\Cr{B}\subseteq \Cr{A^{\down\up}}$ which is clearly the case
since for every $M,N\in \mathbf{L}^Y$, $M\subseteq N$ is equivalent to
$\Cr{M}\subseteq \Cr{N}$.

(\ref{eqn:Ttimes2}):
  We prove the claim by establishing that
  (a) $D\subseteq C^{\crd\cru}$ is equivalent to (b) $\Fu{D}\subseteq\Fu{C}^{\down\up}$.
  Namely, on account of Theorem \ref{thm:alter}, 
  (a) is equivalent to $||C\Rightarrow D||_{I^\times}=1$ and (b), i.e. 
   $S(\Fu{D},\Fu{C}^{\down\up})=1$, is equivalent to
   $||\Fu{C}\Rightarrow \Fu{D}||_{I}=1$.
   Since (a) clearly implies $\Fu{D}\subseteq\Fu{C^{\crd\cru}}$ and since
     $\Fu{C^{\crd\cru}}=\Fu{C}^{\down\up}$ on account of (\ref{eqn:upcru3}), 
  we see that (a) implies (b).
  Assume (b). Then clearly,  $\Cr{\Fu{D}}\subseteq \Cr{\Fu{C}^{\down\up}}$.
  As $\Cr{\Fu{C}^{\down\up}}=\Cr{\Fu{C}}^{\crd\cru}=C^{\crd\cru}$ on account of 
  (\ref{eqn:upcru3}) and (\ref{eqn:upcru2}), we get
  $\Cr{\Fu{D}}\subseteq C^{\crd\cru}$. As $D\subseteq \Cr{\Fu{D}}$, we have
  $D\subseteq C^{\crd\cru}$, proving that (b) implies (a).
   \end{proof}

\begin{remark}
  In addition to (\ref{eqn:Ttimes2}) of Theorem \ref{thm:reduction}, we also have
 \[
  ||C\Rightarrow D||_{I^\times}=1 \quad\text{if and only if}\quad
 ||\Cr{\Fu{C}}\Rightarrow \Cr{\Fu{D}}||_{I^\times}=1.
\]
Namely, the two conditions involved are equivalent to 
 (a) $D\subseteq C^{\crd\cru}$ and (b) $\Cr{\Fu{D}}\subseteq \Cr{\Fu{C}}^{\crd\cru}=
  C^{\crd\cru}$, respectively,
on account of Theorem \ref{thm:alter} and (\ref{eqn:upcru2}).
 Since $D\subseteq \Cr{\Fu{D}}$, (b) and (\ref{eqn:upcru2}) clearly imply (a).
 On the other hand, 
  (a) implies $\Cr{\Fu{D}}\subseteq \Cr{\Fu{C^{\crd\cru}}}$.
  Since $C^{\crd\cru}=\Cr{\Fu{C^\crd}^\up}$, 
 we have $\Cr{\Fu{C^{\crd\cru}}}=
   \Cr{\Fu{ \Cr{\Fu{C^\crd}^\up}  }} = \Cr{\Fu{C^\crd}^\up}  =C^{\crd\cru}$.
  As a result,  (a) implies (b).
 \end{remark}

In view of the above results,  a natural question is whether one can obtain
complete sets and bases of a given table $\langle X,Y,I\rangle$ with graded attributes
from complete sets and bases of the corresponding 
$\langle X^\times,Y^\times,I^\times\rangle$.
This question is the subject of the next theorem and the following remark.

\begin{theorem}
\label{thm:completer}
If $T^\times$ is complete in
$\langle X^\times,Y^\times,I^\times\rangle$ then
\begin{equation}
  \label{eqn:TTtimes}
    \Fu{T^\times} =\{\Fu{C}\Rightarrow\Fu{D}\,|\, C\Rightarrow D\in T^\times\}
\end{equation}
is complete in $\langle X,Y,I\rangle$.
\end{theorem}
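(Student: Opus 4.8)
The plan is to verify completeness of $\Fu{T^\times}$ through the sufficient condition recorded in Remark~\ref{rem:baseMod}(2): to show that $\Fu{T^\times}$ is complete in $\tu{X,Y,I}$ it is enough to check \textbf{(I)} that every implication in $\Fu{T^\times}$ is valid in $\tu{X,Y,I}$ to degree $1$, and \textbf{(II)} that $\mathrm{Mod}(\Fu{T^\times})\subseteq\mathrm{Int}(X^\ast,Y,I)$. Throughout I would use that $T^\times$, being complete in the crisp context $\tu{X^\times,Y^\times,I^\times}$, satisfies the $\mathbf{2}$-instance of Theorem~\ref{thm:basechar}, so $\mathrm{Mod}(T^\times)$ is exactly the set of intents $\{D\subseteq Y\times L\,|\,D=D^{\crd\cru}\}$, and that (by the remark following Definition~\ref{def:base}) each $C\Rightarrow D\in T^\times$ holds in $I^\times$ to degree $1$.

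For (I), I would simply transport validity across the reduction. Since every $C\Rightarrow D\in T^\times$ is valid in $\tu{X^\times,Y^\times,I^\times}$ to degree $1$, equation~\eqref{eqn:Ttimes2} of Theorem~\ref{thm:reduction} gives $||\Fu{C}\Rightarrow\Fu{D}||_{I}=1$ for each such implication, which is precisely the first half of the condition in Remark~\ref{rem:baseMod}(2).

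The substance is in (II). Take $M\in\mathrm{Mod}(\Fu{T^\times})$; I would first argue that $\Cr{M}$ is an ordinary model of $T^\times$. The key elementary fact is $\Fu{C}\subseteq M$ iff $C\subseteq\Cr{M}$, which follows directly from $\Fu{C}(y)=\bigvee\{a\,|\,\langle y,a\rangle\in C\}$ and completeness of $\mathbf{L}$. Hence if $C\subseteq\Cr{M}$ then $S(\Fu{C},M)=1$, so $S(\Fu{C},M)^\ast=1$ by $1^\ast=1$; since $M$ is a model, $S(\Fu{C},M)^\ast\leq S(\Fu{D},M)$, forcing $S(\Fu{D},M)=1$, i.e. $\Fu{D}\subseteq M$, i.e. $D\subseteq\Cr{M}$. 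Thus $\Cr{M}$ satisfies every $C\Rightarrow D\in T^\times$, so $\Cr{M}\in\mathrm{Mod}(T^\times)$, and completeness of $T^\times$ yields $\Cr{M}=\Cr{M}^{\crd\cru}$. Now~\eqref{eqn:upcru3} of Lemma~\ref{thm:cgc} rewrites the right-hand side as $\Cr{M^{\down\up}}$, so $\Cr{M}=\Cr{M^{\down\up}}$; applying $\Fu{\ }$ and using $\Fu{\Cr{N}}=N$ gives $M=M^{\down\up}$, whence $M\in\mathrm{Int}(X^\ast,Y,I)$ by~\eqref{eqn:Mint}. This establishes (II) and, with (I), completeness of $\Fu{T^\times}$.

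I expect the main obstacle to be psychological rather than computational: for a general hedge the naive model correspondence $M\mapsto\Cr{M}$ does \emph{not} match graded and crisp implications degree-by-degree, so one is tempted to prove an exact entailment-preservation statement and get stuck on the intermediate degrees. The point that unlocks the argument is that Remark~\ref{rem:baseMod}(2) reduces everything to degree-$1$ statements, where the hedge is harmless ($1^\ast=1$) and only the two degree-$1$ bridges are needed, namely validity transfer via~\eqref{eqn:Ttimes2} and the closure correspondence~\eqref{eqn:upcru3}. A minor point to keep straight is that $\Fu{\ }$ and $\Cr{\ }$ are not mutually inverse in general; I rely only on $\Fu{\Cr{N}}=N$, which does hold for every $\mathbf{L}$-set $N$.
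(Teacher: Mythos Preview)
Your argument is correct and takes a genuinely different, more streamlined route than the paper's own proof.

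The paper proves the full equality $\mathrm{Mod}(\Fu{T^\times})=\mathrm{Int}(X^\ast,Y,I)$ by establishing a three-way equivalence among (a) $M\in\mathrm{Mod}(\Fu{T^\times})$, (b) $\Cr{a^\ast\to M}\in\mathrm{Mod}(T^\times)$ for \emph{every} $a\in L$, and (c) $M\in\mathrm{Int}(X^\ast,Y,I)$. The intermediate condition (b), ranging over all $a^\ast$-shifts, is what carries the inclusion $\mathrm{Int}\subseteq\mathrm{Mod}$: the implication (c)$\Rightarrow$(b) uses that $\mathrm{Int}(X^\ast,Y,I)$ is an $\mathbf{L}^\ast$-closure system, and (b)$\Rightarrow$(a) needs all $a$ to squeeze out $S(\Fu{C},M)^\ast\leq S(\Fu{D},M)$. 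You bypass all of this by invoking Remark~\ref{rem:baseMod}(2) and splitting into (I) and (II). Your (I) simply transports degree-$1$ validity through the reduction via~\eqref{eqn:Ttimes2}, which the paper proves separately but does \emph{not} reuse here; your (II) is essentially the paper's (a)$\Rightarrow$(b)$\Rightarrow$(c) specialized to the single shift $a=1$, i.e.\ to $\Cr{M}$, which is all that is needed for that direction. Your approach is thus more modular and shorter; the paper's is self-contained (independent of Theorem~\ref{thm:reduction}) and yields the additional equivalent characterization (b) as a by-product.
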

\begin{proof}
Let $T^\times$ be complete in
$\langle X\times\ast(L),Y\times L,I^\times\rangle$
Due to Theorem \ref{thm:basechar}, it is sufficient to show that 
$\mathrm{Mod}(\Fu{T^\times})=\mathrm{Int}(X^\ast,Y,I)$.
We prove this fact by showing that the following claims are equivalent
for any $M\in\mathbf{L}^Y$:
\bgroup%
\addtolength{\leftmargini}{.8em}%
\begin{itemize}
  \item[(a)] $M\in \mathrm{Mod}(\Fu{T^\times})$,
  \item[(b)] for each $a\in L$: $\Cr{a^\ast\to M}\in \mathrm{Mod}(T^\times)$,
  \item[(c)] $M\in\mathrm{Int}(X^\ast,Y,I)$.
\end{itemize}
\egroup%

``(a) $\Leftrightarrow$ (b)'': 
Clearly, it suffices to show that for
every $C\Rightarrow D\in T^\times$, 
$M$ is a model of $\Fu{C}\Rightarrow\Fu{D}$ if{}f for each $a\in L$,
$\Cr{a^\ast\to M}$ is a model of $C\Rightarrow D$, i.e. that
$S(\Fu{C},M)^\ast\leq S(\Fu{D},M)$ if{}f
for each $a\in L$, $C\subseteq\Cr{a^\ast\to M}$ implies $D\subseteq\Cr{a^\ast\to M}$.

Observe first that $C\subseteq\Cr{a^\ast\to M}$ is equivalent to 
$a^\ast\leq S(\Fu{C},M)$: 
Namely, $C\subseteq\Cr{a^\ast\to M}$ means that
for every $y\in Y$, if $\tu{y,b}\in C$ then $\tu{y,b}\in\Cr{a^\ast\to M}$,
i.e. $b\leq (a^\ast\to M)(y)=a^\ast\to M(y)$.
Therefore, $C\subseteq\Cr{a^\ast\to M}$ means that 
for every $y\in Y$,
$\bigvee_{\tu{y,b}\in C} b \leq a^\ast\to M(y)$ which holds if{}f
for every $y\in Y$, 
$$\textstyle a^\ast\leq (\bigvee_{\tu{y,b}\in C} b) \to M(y)=\Fu{C}(y)\to M(y)$$
which is equivalent to 
$a^\ast\leq\bigwedge_{y\in Y} (\Fu{C}(y)\to M(y))=S(\Fu{C},M)$.

Since the same holds for $D$, to prove that (a) is equivalent to (b), it is sufficient
to check that 
$S(\Fu{C},M)^\ast\leq S(\Fu{D},M)$ if{}f
for every $a\in L$, $a^\ast\leq S(\Fu{C},M)$ implies $a^\ast\leq S(\Fu{D},M)$,
which is easy to see. 
Indeed, if $S(\Fu{C},M)^\ast\leq S(\Fu{D},M)$ and 
$a^\ast\leq S(\Fu{C},M)$, then 
\[
  a^\ast=a^{\ast\ast}\leq S(\Fu{C},M)^\ast\leq S(\Fu{D},M).
\]
Conversely, putting $a=S(\Fu{C},M)$ the assumption, i.e.
$a^\ast\leq S(\Fu{C},M)$ implies $a^\ast\leq S(\Fu{D},M)$, readily yields
$S(\Fu{C},M)^\ast\leq S(\Fu{D},M)$.

``(b) $\Leftrightarrow$ (c)'':
Theorem \ref{thm:basechar} (actually, its instance for $L=\{0,1\}$) implies that
$\Cr{a^\ast\to M}\in\mathrm{Mod}(T^\times)$ if{}f
$\Cr{a^\ast\to M}\in\mathrm{Int}(\langle X\times\ast(L),Y\times L,I^\times\rangle)$.

Next, observe that $\Cr{N}\in \mathrm{Int}(\langle X\times\ast(L),Y\times L,I^\times\rangle)$
is equivalent to $N\in \mathrm{Int}(X^\ast,Y,I)$.  
(\ref{eqn:Mint}) implies that to check this, it suffices to check that 
$\Cr{N}=\Cr{N}^{\crd\cru}$ is equivalent to $N=N^{\down\up}$.
Using Lemma \ref{thm:cgc}, we reason as follows.
If $\Cr{N}=\Cr{N}^{\crd\cru}$, then 
\begin{eqnarray*}
   && N^{\down\up} = \Fu{ \Cr{\Fu{\Cr{N}^\crd}}^\cru } =
   \Fu{\Cr{N}^{\crd\cru}} =\Fu{\Cr{N}} = N.
\end{eqnarray*}
Here, we used that $\Fu{\Cr{P}} = P$ for any $P\in\mathbf{L}^Y$ (obvious) and
$\Cr{\Fu{\Cr{N}^\crd}}=\Cr{N}^\crd$ which holds because
$\Cr{\Fu{\Cr{N}^\crd}}= \Cr{\Fu{  \Cr{\Fu{\Cr{N}}^\down} }}=
 \Cr{ \Fu{\Cr{N}}^\down  }=\Cr{N}$.
Conversely, if $N=N^{\down\up}$ then
\begin{eqnarray*}
   && \Cr{N}^{\crd\cru} = \Cr{ \Fu{\Cr{\Fu{\Cr{N}}^\down}}^\up } =
   \Cr{N^{\down\up}} =\Cr{N}.
\end{eqnarray*}

Applying this observation to $N=a^\ast\to M$, we see that (b) is equivalent to
that fact that for each $a\in L$: $a^\ast\to M\in\mathrm{Int}(X^\ast,Y,I)$.
The proof is complete by observing that the fact that
for each $a\in L$ we have
$a^\ast\to M\in\mathrm{Int}(X^\ast,Y,I)$ is equivalent to 
$M\in\mathrm{Int}(X^\ast,Y,I)$. Indeed, since for $a=1$ we have  $a\to M=M$,
it is sufficient to observe that if $M\in\mathrm{Int}(X^\ast,Y,I)$ then 
$a^\ast\to M\in\mathrm{Int}(X^\ast,Y,I)$ for each $a\in L$. 
This follows from the fact that $\mathrm{Int}(X^\ast,Y,I)$ is an $\mathbf{L}^\ast$-closure
system \cite{BeFuVy:Fcots} (cf. Section \ref{sec:cp}).
\end{proof}

The set $\Fu{T^\times}$ obtained from a given $T^\times$ according to Theorem \ref{thm:completer}
need not be a base of $\tu{X,Y,I}$ even if $T^\times$ is a base of 
$\langle X^\times,Y^\times,I^\times\rangle$. 
Namely, as the next example  shows, $\Fu{T^\times}$ may be redundant.

\begin{example}
\label{ex:Ttimes}
Let $\mathbf{L}$ be the three-element \L ukasiewicz chain with $L = \{0,0.5,1\}$,
let $\ast$ be the globalization, 
and let $X = \{x\}$, $Y = \{y,z\}$, $I(x,y) = I(x,z) = 0$.
One may verify that
(abbreviating $\tu{y,a}$ by $y_a$)
\begin{align*}
  {\cal P}^\times = \{&
  \{y_0,y_{0.5},z_0\},
  \{y_0,y_1,z_0\},
  \{y_0,z_0,z_{0.5}\}, 
  \{y_0,z_0,z_1\},
  \{\}\}
\end{align*}
is the system of pseudo-intents of $\langle X^\times,Y^\times,I^\times\rangle$.
Therefore, 
\begin{align*}
  T^\times = \{&\impl{\{y_0,y_{0.5},z_0\}}{\{y_0,y_{0.5},y_1,z_0,z_{0.5},z_1\}}, \\
  &\impl{\{y_0,y_1,z_0\}}{\{y_0,y_{0.5},y_1,z_0,z_{0.5},z_1\}}, \\
  &\impl{\{y_0,z_0,z_{0.5}\}}{\{y_0,y_{0.5},y_1,z_0,z_{0.5},z_1\}}, \\
  &\impl{\{y_0,z_0,z_1\}}{\{y_0,y_{0.5},y_1,z_0,z_{0.5},z_1\}}, \\
  &\impl{\{\}}{\{y_0,z_0\}}\}
\end{align*}
is a base of $\langle X^\times,Y^\times,I^\times\rangle$.
Clearly,
\begin{align*}
  \Fu{T^\times} = \{&\impl{\{\deg{0.5}{y}\}}{\{y,z\}},
  \impl{\{y\}}{\{y,z\}},
  \impl{\{\deg{0.5}{z}\}}{\{y,z\}}, 
  \impl{\{z\}}{\{y,z\}},
  \impl{\{\}}{\{\}}\}.
\end{align*}
According to Theorem \ref{thm:completer}, $\Fu{T^\times}$ is complete in 
$\langle X,Y,I\rangle$.
Now, $\Fu{T^\times}$ is redundant. 
First, $\Fu{T^\times}$ contains a trivial
implication $\impl{\{\}}{\{\}}$ which holds true in each $M \in \mathbf{L}^Y$. 
Furthermore, 
$\Fu{T^\times} - \{\impl{\{\}}{\{\}}\}$ is still redundant, because
implications $\impl{\{y\}}{\{y,z\}}$  and $\impl{\{z\}}{\{y,z\}}$ semantically follow
from $$S=\{\impl{\{\deg{0.5}{y}\}}{\{y,z\}},\impl{\{\deg{0.5}{z}\}}{\{y,z\}}\},$$ i.e.
$||\impl{\{y\}}{\{y,z\}}||_{S}=1$ and $||\impl{\{z\}}{\{y,z\}}||_{S}=1$.
\end{example}

Example \ref{ex:Ttimes} also shows that the converse claim to that of 
Theorem \ref{thm:completer} does not hold.
That is, it is not true  that if a set $T$ of graded implications is complete in $\tu{X,Y,I}$
then $\Cr{T}=\{ \Cr{A}\Rightarrow\Cr{B} \mid A\Rightarrow B\in T \}$
is complete in $\langle X^\times,Y^\times,I^\times\rangle$.
Namely, if this were true then for the set $S$ from Example \ref{ex:Ttimes},
which is complete in $\tu{X,Y,I}$, the set
\[
    \Cr{S} = \{ \impl{\{y_0,y_{0.5},z_0\}}{\{y_0,y_{0.5},y_1,z_0,z_{0.5},z_1\}}, 
                      \impl{\{y_0,z_0,z_{0.5}\}}{\{y_0,y_{0.5},y_1,z_0,z_{0.5},z_1\}}
                   \}
\]
would be complete in $\langle X^\times,Y^\times,I^\times\rangle$
which it is not, because $\Cr{S}$ is a proper subset of a base of
$\langle X^\times,Y^\times,I^\times\rangle$, namely of $T^\times$.

The two observations, namely that $\Fu{T^\times}$ may be redundant
even when $T^\times$ is not,
and that $\Cr{T}$ need not be complete even when $T$ is, have the following explanation.
The dependencies reflecting the algebraic structure $\mathbf{L}$ of the set of grades
 are implicitly taken into account in the
definition of entailment of graded implications over $Y$, i.e. in the semantics using
$\mathbf{L}$ as the structure of truth degrees, and need not be present in $T$.
Their counterparts, however, are ``not known'' to the 
definition of (bivalent) semantic entailment of ordinary implications 
over $Y\times L$, and need thus be explicitly present in $T^\times$. 

\section{Relationship to Functional Dependencies over Domains with Similarities}
\label{sec:rfd}

In this section, we point out a connection between graded attribute implications
and  certain extensions of Codd's relational model of data.  
Recall that in the ordinary case, which corresponds to $L=\{0,1\}$ in our setting,
the following connection
was presented in \cite{Fag:Fdrdpl}.
Ordinary attribute implications have two basic interpretations,
namely, as propositional logic formulas and as functional dependencies.
An attribute implication, say $\{y_1,y_2,y_3\}\Rightarrow\{z_1,z_2\}$,
may be conceived as a logic formula $y_1\& y_2\&y_3 \supset z_1\& z_2$
in which $y_i$s and $z_j$s are propositional symbols. 
The semantics in this case is the standard propositional logic semantics based
on truth valuations, i.e. assignments of $0$ and $1$ to propositional symbols.
This semantics leads to one notion of entailment of attribute implications,
the standard propositional logic entailment. 
This semantics is relevant to our paper because the truth valuations involved
may be identified with rows of tables with yes-or-no attributes (table entry $I(x,y)$ 
equals $1$ if{}f $y$ is assigned $1$). 
As a consequence,
the propositional logic semantics essentially coincides with the semantics based
on tables with yes-or-no attributes. In particular, these two semantics
have the same entailment relation which we denote by $\models^{\mathrm{AI}}$. 
The other sematnics of attribute implications comes from relational databases
and is given by interpreting attribute implications $A\Rightarrow B$ as
functional dependencies in relations \cite{Arm:Dsdbr,Mai:TRD}.
We thus have two notions of entailment:
first, $A\Rightarrow B$ may follow from a set $T$ of implications as a propositional logic
formula, $T\models^{\mathrm{AI}} A\Rightarrow B$;
and second, $A\Rightarrow B$ may follow from $T$ as a functional 
dependence, $T\models^{\mathrm{FD}}  A\Rightarrow B$.
 Fagin [1977] 
proved that 
\begin{equation}
  \label{eqn:Fag}
  T\models^{\mathrm{AI}} A\Rightarrow B
    \text{ is equivalent to } 
  T\models^{\mathrm{FD}}  A\Rightarrow B.
\end{equation}

Since the semantics based on tables with yes-or-no attributes, and hence the one based 
on propositional logic, is a particular case of the semantics based on tables with graded attributes
developed in this paper, the following question arises:
is there a natural extension of
Codd's relational model of data 
and the notion of functional
dependence  in this extension for which a result analogous to (\ref{eqn:Fag}) holds?
As we show below, the answer is positive. 
Formally, such an extension consists in replacing ordinary relations in Codd's model by $\mathbf{L}$-valued relations.
In particular, the domains in the extended model are be equipped with $\mathbf{L}$-valued relations,
such as similarity relations, replacing the ordinary equality relations, which are implicitly
present in Codd's model and which are utilized e.g. in selection and other queries involving match
of tuples.
Furthermore, relations on relation schemes are replaced by $\mathbf{L}$-valued relations,
which means that a degree in $L$ are assigned to each tuple. Such degree is generally interpreted as
a degree to which the tuple matches a query involving the $\mathbf{L}$-valued relations on domains.
Therefore, the $\mathbf{L}$-valued relations have in fact the same meaning in the extended model
as relations on relational schemes have in the ordinary Codd's model, namely they are understood
as results of queries with the provision that base relations considered as results of empty queries.
The above described extension is interesting in its own right because, as the thorough examination
in  \cite{BeVy:TODS} reveals, when the $\mathbf{L}$-valued relations on domains represent similarities,
the extension plays the same role for relational databases that support similarity queries
as  Codd's model plays for ordinary relational databases.

For brevity, we restrict to a particular case of the above-mentioned extension of Codd's model,
which is sufficient for our purpose. 
Let us assume that for each attribute $y$ of relation scheme (attribute set) $Y$, 
$D_y$ denotes the domain of $y$ and that each domain $D_y$ is equipped
an $\mathbf{L}$-relation $R_y$.
That is, $R_y$ maps the pairs  $\tu{d_1,d_2}\in D_y\times D_y$ to 
grades $R_y(d,d_2)\in L$,  interpreted as grades to which $d_1$ is related
to $d_2$.
A \emph{data table} \emph{over domains with $\mathbf{L}$-relations} on $Y$ we mean a
finite relation $\mathcal{D}$ between the domains $D_y$, i.e. 
$\mathcal{D}\subseteq\prod_{y\in Y} D_y$.

\begin{remark}
(a)
If we require that for every $d_1,d_2\in D_y$,
\begin{eqnarray}
  \nonumber
  \mathrm{(Ref)} &&  R_y(d_1,d_1) = 1,\\
  \nonumber
  \mathrm{(Sym)} &&  R_y(d_1,d_2) = R_y(d_2,d_1),
\end{eqnarray}
$R_y$ may naturally be understood as representing similarity, i.e.
$R_y(d_1,d_2)$ may be interpreted as a  degree to which
$d_1$ and $d_2$ are similar. Furthermore, we may assume that instead of being
an ordinary relation,  $\mathcal{D}$ is an  $\mathbf{L}$-relation,
in which case $\mathcal{D}(t)$ is naturally understood as a degree to which 
the tuple $t$ satisfies a similarity query that involves a similarity query.
For instance, assume that the query reads ``show tuples with value of attribute $\mathrm{age}$ 
similar to 30''.
Then if $t(\mathrm{age})=33$ and if $R_\mathrm{age}(30,33)=0.9$, then
the result of such query (applied to a base relation) is naturally represented
by a table $\mathcal{D}$ in which $\mathcal{D}(t)=0.9$.
This is basically the idea of the model presented in \cite{BeVy:TODS}.

Notice that if $\mathbf{L}$ is the two-element Boolean algebra, i.e. $L=\{0,1\}$
and if every $R_y$ represents equality in that $R_y(d_1,d_2)=1$ if{}f $d_1=d_2$,
the above concept may be identified with the ordinary
concept of relation on $Y$ of Codd's model \cite{Cod:Rmdlsdb,Mai:TRD}.
From this point of view, while the ordinary model supports queries regarding exact
match of domain values, the similarity-based extension supports those regarding approximate
matches.

(b)
If $R_y$ is reflexive and $\mathbf{L}$-transitive \cite{Bel:FRS,Got:TMVL}, i.e. satisfies (Ref) and
\begin{eqnarray}
\nonumber
  \mathrm{(Tra)} &&  R_y(d_1,d_2)\otimes R_y(d_2,d_3) \leq  R_y(d_1,d_3),
\end{eqnarray}
then $R_y$ is naturaly interpreted as a graded preference relation \cite{Ric:Sfp}.
\end{remark}

Ordinary attribute implications, when interpreted in data tables of Codd's model,
represent functional dependencies in this model.
In basically the same way, graded attribute implicatons may be interpreted in 
 data tables over domains with $\mathbf{L}$-relations and represent a similar type of dependencies.
Namely, an ordinary attribute implication $A\Rightarrow B$ asserts that
the same values on attributes in $A$ imply the same values on attributes in $B$.
As we show below,
when the $\mathbf{L}$-relations represent similarities,
a graded attribute implication
asserts that similar values on attributes in $A$ imply similar values on attributes in $B$.

The interpretation of graded attribute implications being introduced follows the
basic principles of predicate fuzzy logic \cite{Got:TMVL,Haj:MFL}.
Our aim is to define a degree to which a graded implication $A\Rightarrow B$
is true in a table $\mathcal{D}$.
First let us define for any two tuples $t_1,t_2\in\prod_{y\in Y} D_y$,
\def\rell{\sim}
\begin{equation}\label{eqn:xC}
  t_1(A)\rell t_2(A) \ =  \  
\textstyle
   \bigwedge_{y\in Y}
   \bigl(A(y)\rightarrow R_y(t_1(y), t_2(y))\bigr).
\end{equation}
Note that $t_1(A)\rell t_2(A)$ is the truth degree of the proposition
``for every attribute $y$ in $A$, the values
  $t_1(y)$ and $t_2(y)$ are $R_y$-related'' (instead of ``$R_y$-related'' one may use ``similar''
  here and below
   to obtain the meaning of the particular case with similarity relations).
The same way we define $ t_1(B)\rell t_2(B)$.
The degree $||A\Rightarrow B||_{{\cal D}}$ to which $A\Rightarrow B$ is true 
in $\mathcal{D}$ is defined by
\begin{equation}\label{eqn:ABD}
\textstyle
  ||A\Rightarrow B||_{{\cal D}} = 
    \bigwedge_{t_1,t_2\in \mathcal{D}}\bigl((t_1(A)\rell t_2(A))^\ast \rightarrow
    (t_1(B)\rell t_2(B))\bigr).
\end{equation}
According to the principles of fuzzy logic, 
$||A\Rightarrow B||_{{\cal D}}$ is the truth degree of the proposition
``for every two tuples $t_1,t_2\in X$: if it is (very) true that
  $t_1$ and $t_2$ have $R_y$-related (e.g. similar) values on attributes from $A$ then $t_1$ and
  $t_2$ have $R_y$-related (similar) values on attributes from $B$''.

\begin{remark}
  (a)
  One may easily observe that if $L=\{0,1\}$ and if every $R_y$ represents
identity, (\ref{eqn:ABD}) becomes the definition of validity of ordinary functional dependencies
in ordinary relations.
Furthermore, if every $R_y$ is reflexive and transitive, and thus represents a preference,
we obtain the definition of validity of ordinal dependencies \cite{GaWi:FCA}.

  (b)
  The hedge $\ast$ in (\ref{eqn:ABD}) has a similar role as 
  in  (\ref{eqn:valDeg}). 
  In particular, if $\ast$ is the globalization, see (\ref{eqn:glob}),
  then if $R_y$s represent similarities, 
   an implication such as   
   $\{{}^{a_1\!\!}/y_1,\dots,{}^{a_p\!\!}/y_p\} \!\Rightarrow\!
  \{{}^{b_1\!\!}/z_1,\dots,{}^{b_q\!\!}/z_q\}$,
is fully true, i.e. true to degree $1$, in $\mathcal{D}$ if{}f 
 similarity to degrees $a_i$ or higher on attributes $y_i$
implies similarity to degrees $b_i$ or higher on attributes $z_i$, 
as mentioned in Section \ref{sec:i}.
   For more information we refer again to \cite{BeVy:TODS}.

  (c) 
  In the literature, several approaches to a relational model over domains with similarities
  and the corresponding functional dependencies have been proposed, \cite{RaMa:Ffdljdfrds} 
  being among the first ones.
  As a rule, these approaches lack a clear connection to an underlying logic calculus
  such as the predicate logic in case of the ordinary Codd's model or
  predicate fuzzy logic as in our case.
  For an overview and comparison of these approaches, we refer the reader 
  to \cite{BeVy:Crmpvfl}.

\end{remark}


In the rest of this section, we denote by $||A\Rightarrow B||^{\mathrm{AI}}_T$
the degree to which the graded attribute implication $A\Rightarrow B$ follows 
from a fuzzy set $T$ of graded implications in the semantics given by
tables with graded attributes, as defined by (\ref{eqn:entAI}).
In much the same way, we  define the degree of entailment
$||A\Rightarrow B||^{\mathrm{FD}}_T$ in which implications are conceived as
functional dependencies in data tables over domains with $\mathbf{L}$-relations:
\begin{eqnarray}
  \label{eqn:entFD}
  ||A \Rightarrow B||^{\mathrm{FD}}_T =
  \textstyle\bigwedge_{M \in \mathrm{Mod}^{\mathrm{FD}}(T)}||A \Rightarrow B||_M
\end{eqnarray}
where 
\[
   \mathrm{Mod}^{\mathrm{FD}}(T)=\{\mathcal{D} \,|\, \mbox{for each }
   A,B\in\mathbf{L}^Y: T(A\Rightarrow B)\leq ||A \Rightarrow B||_\mathcal{D}\}.
\]
denotes the set of models of $T$, i.e. data tables in which 
each $A\Rightarrow B$ holds to a degree larger than or equal to the degree
prescribed by the theory $T$.

To answer the question about the relationship between the two concepts of entailment,
we need the next two lemmata.
\def\aaa{{(')}}
\def\bbb{^{(}{}'{}^{)}}
Let us define for a given $\langle X,Y,I\rangle$ a data table
$\mathcal{D}_{\tu{X,Y,I}}$ as follows:
\bgroup%
\addtolength{\leftmargini}{1.2ex}%
\begin{itemize}
  \item
  for each $y\in Y$,  let $D_y=X\cup X'$ where $X'=\{x'\,|\, x\in X\}$ 
  (i.e., $X\cap X'=\emptyset$ and $|X|=|X'|$);
  
  \item
  for $x_1,x_2\in D_y$, let
  \[
    R_y(x_1, x_2) = 
    \left\{
     \begin{array}{ll}
       1                       & \mbox{ for } x_1=x_2,\\
       I(z_1,y)\wedge I(z_2,y) & \mbox{ for } x_1\not=x_2, x_i=z_i{\bbb}
                                 \mbox{ for } z_i\in X\ (i=1,2),
     \end{array}
    \right.
  \]
  where $x_i=z_i{\bbb}$ means that $x_i$ is $z_i$ or $z_i'$;

  \item
   $\mathcal{D}=\{t_x \mid x\in X\cup X'\}$ where 
   $t_x$ is the tuple in $\prod_{y\in Y} D_y$ for which $t_x(y)=x$ for every $y\in Y$.
\end{itemize}
\egroup%
As the following lemma shows, degrees of validity in $\tu{X,Y,I}$ coincide
with those in ${\cal D}_{\tu{X,Y,I}}$.

\begin{lemma}\label{thm:TtoD}
  For every data table $\tu{X,Y,I}$ with graded attributes
  and any graded attribute implication $A\Rightarrow B$,
  \begin{equation}\label{eqn:TtoD}
   ||A\Rightarrow B||_{\tu{X,Y,I}} = ||A\Rightarrow B||_{{\cal D}_{\tu{X,Y,I}}}.
  \end{equation}
\end{lemma}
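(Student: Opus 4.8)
The plan is to compute both sides explicitly and reduce them to a common infimum indexed by pairs of objects of $X$. First I would unfold the right-hand side using the definitions: since $t_x(y)=x$ for every $y$, for two tuples $t_{x_1},t_{x_2}$ of $\mathcal{D}_{\tu{X,Y,I}}$ one has $t_{x_1}(A)\sim t_{x_2}(A)=\bigwedge_{y\in Y}\bigl(A(y)\to R_y(x_1,x_2)\bigr)$. When $x_1=x_2$ this equals $1$ (and likewise for $B$), so the corresponding term $1^\ast\to 1=1$ contributes nothing to the infimum in \eqref{eqn:ABD}. When $x_1\neq x_2$, writing $z_1,z_2\in X$ for the underlying objects (so that $x_i$ is $z_i$ or $z_i'$), we have $R_y(x_1,x_2)=I(z_1,y)\wedge I(z_2,y)$, and distributivity of $\to$ over $\wedge$ in the second argument \eqref{eqn:prlsiB} gives
\[
  t_{x_1}(A)\sim t_{x_2}(A)=\textstyle\bigwedge_{y\in Y}\bigl((A(y)\to I(z_1,y))\wedge(A(y)\to I(z_2,y))\bigr)=S(A,I_{z_1})\wedge S(A,I_{z_2}),
\]
and analogously for $B$. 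Hence each non-trivial term of \eqref{eqn:ABD} equals $\bigl(S(A,I_{z_1})\wedge S(A,I_{z_2})\bigr)^\ast\to\bigl(S(B,I_{z_1})\wedge S(B,I_{z_2})\bigr)$.

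The crucial point is that, as $\langle x_1,x_2\rangle$ ranges over pairs of \emph{distinct} elements of $X\cup X'$, the underlying pair $\langle z_1,z_2\rangle$ ranges over \emph{all} of $X\times X$, the diagonal included: for $z_1\neq z_2$ take $x_1=z_1,x_2=z_2$, and for $z_1=z_2=z$ take $x_1=z,x_2=z'$. This is exactly the reason the primed copy $X'$ was adjoined, and I expect the verification of this realizability (and of the harmless dropping of the $x_1=x_2$ terms) to be the only genuine bookkeeping obstacle. Consequently, abbreviating $a_x=S(A,I_x)$ and $b_x=S(B,I_x)$,
\[
  ||A\Rightarrow B||_{\mathcal{D}_{\tu{X,Y,I}}}=\textstyle\bigwedge_{z_1,z_2\in X}\bigl((a_{z_1}\wedge a_{z_2})^\ast\to(b_{z_1}\wedge b_{z_2})\bigr),
\]
while by definition $||A\Rightarrow B||_{\tu{X,Y,I}}=\bigwedge_{x\in X}(a_x^\ast\to b_x)$. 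It then remains to prove that these two infima coincide.

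For the inequality $\leq$, setting $z_1=z_2=x$ shows the left infimum lies below every term $a_x^\ast\to b_x$, hence below the right one. For $\geq$, put $c=\bigwedge_{x\in X}(a_x^\ast\to b_x)$; by adjointness \eqref{adj} we have $c\otimes a_{z_i}^\ast\leq b_{z_i}$ for each $i$. Since the hedge satisfies $(u\wedge v)^\ast\leq u^\ast\wedge v^\ast\leq u^\ast$, we get $(a_{z_1}\wedge a_{z_2})^\ast\leq a_{z_i}^\ast$, so isotony of $\otimes$ \eqref{Prop:isoT} yields $c\otimes(a_{z_1}\wedge a_{z_2})^\ast\leq b_{z_i}$ for $i=1,2$, whence $c\otimes(a_{z_1}\wedge a_{z_2})^\ast\leq b_{z_1}\wedge b_{z_2}$; adjointness again gives $c\leq(a_{z_1}\wedge a_{z_2})^\ast\to(b_{z_1}\wedge b_{z_2})$ for every pair, so $c$ lies below the left infimum. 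The only algebraic inputs are \eqref{eqn:prlsiB}, adjointness \eqref{adj}, isotony of $\otimes$ \eqref{Prop:isoT}, and the subadditivity $(u\wedge v)^\ast\leq u^\ast\wedge v^\ast$ of the hedge, all available from Section~\ref{sec:p}.
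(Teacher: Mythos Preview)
Your proof is correct and follows essentially the same approach as the paper. The paper organizes the computation by splitting the infimum over pairs in $X\cup X'$ into three pieces (the diagonal $x_1=x_2$, the pairs $\{z,z'\}$ giving $z_1=z_2$, and the pairs with $z_1\neq z_2$) and shows the third piece dominates the second via the inequality $(a_1^\ast\to b_1)\wedge(a_2^\ast\to b_2)\leq (a_1\wedge a_2)^\ast\to(b_1\wedge b_2)$; you instead merge the last two pieces into a single infimum over $X\times X$ and prove the same inequality directly via adjointness, which is a slightly cleaner packaging of the same argument.
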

\begin{proof}
Let us first observe that 
  \begin{itemize}
    \item[(a)]
    $(a_1^\ast\rightarrow b_1)\wedge(a_2^\ast\rightarrow b_2)\leq 
    (a_1\wedge a_2)^\ast\rightarrow(b_1\wedge b_2)$ for any $a_1,a_2,b_1,b_2\in L$;
    
    \item[(b)]
    $t_{x_1}(C)\rell t_{x_2}(C)=S(C,I_{z_1})\wedge S(C,I_{z_2})$

    for any $C\in\mathbf{L}^Y$ and any $x_1\not=x_2$ such that
    $x_1=z_1{\bbb},x_2=z_2{\bbb}$ for some $z_1,z_2\in X$;
    
    \item[(c)]
    $t_x(C)\rell t_x(C)=1$ for $x\in X\cup X'$.
  \end{itemize}
Indeed, due to adjointness, (a) is equivalent to 
$ (a_1\wedge a_2)^\ast \otimes ((a_1^\ast\rightarrow b_1)\wedge(a_2^\ast\rightarrow b_2)) 
   \leq b_1\wedge b_2$ which holds if{}f
$ (a_1\wedge a_2)^\ast \otimes ((a_1^\ast\rightarrow b_1)\wedge(a_2^\ast\rightarrow b_2)) 
   \leq b_1$ and $\leq b_2$.
Both inequalities are true. Namely, 
$
   (a_1\wedge a_2)^\ast \otimes ((a_1^\ast\rightarrow b_1)\wedge(a_2^\ast\rightarrow b_2)) 
   \leq a_1^\ast \otimes (a_1^\ast\rightarrow b_1) 
   \leq b_1
$
and similarly for $b_2$.

Since $I_{z_i}(y)=I(z_i,y)$, we have 
\begin{eqnarray*}
 &&\textstyle t_{x_1}(C)\rell t_{x_2}(C)=
      \bigwedge_{y\in Y}
    \bigl( C(y)\rightarrow R_y(t_{x_1}(y), t_{x_2}(y))\bigr) = \\
 &=&\textstyle \bigwedge_{y\in Y}  (C(y)\rightarrow R_y(x_1, x_2)) =
\bigwedge_{y\in Y}  (C(y)\rightarrow ( I(z_1,y)\wedge I(z_2,y))) =\\
 &=&\textstyle \bigwedge_{y\in Y} (C(y)\rightarrow I(z_1,y))
\wedge
\bigwedge_{y\in Y} (C(y)\rightarrow I(z_2,y)) =
 S(C,I_{z_1})\wedge S(C,I_{z_2}),
\end{eqnarray*}
establishing (b). (c) is evident.

Let for brevity $\mathcal{D}=\mathcal{D}_{\tu{X,Y,I}}$. We obtain
  \begin{eqnarray*}
   && ||A\Rightarrow B||_{{\cal D}}=
  \textstyle
    \bigwedge_{t_1,t_2\in\mathcal{D}}
    \bigl((t_1(A)\rell t_2(A))^\ast \rightarrow (t_1(B)\rell t_2(B))\bigr) =\\
 &=&
     \textstyle
    \bigwedge_{x_1,x_2\in X\cup X'}
    \bigl((t_{x_1}(A)\rell t_{x_2}(A))^\ast \rightarrow (t_{x_1}(B)\rell t_{x_2}(B))\bigr) 
    =\alpha\wedge\beta\wedge\gamma,
%
  \end{eqnarray*}
  where
  \begin{eqnarray*}
  && 
  \alpha =\textstyle
  \bigwedge_{x_1,x_2\in X\cup X',x_1=x_2}
   \bigl((t_{x_1}(A)\rell t_{x_2}(A))^\ast \to (t_{x_1}(B)\rell t_{x_2}(B))\bigr) =1
  \end{eqnarray*}
 on account of (c),
  %
  \begin{eqnarray*}
  &&\textstyle
  \beta=\bigwedge_{x_1,x_2\in X\cup X',\{x_1,x_2\}=\{z,z'\}}
  \bigl((t_{x_1}(A)\rell t_{x_2}(A))^\ast \to (t_{x_1}(B)\rell t_{x_2}(B))\bigr) =\\
  &=&\textstyle
  \bigwedge_{z\in X} \bigl((S(A,I_z)\wedge S(A,I_z))^\ast \rightarrow
  (S(B,I_z)\wedge S(B,I_z))\bigr)=\\
  &=&\textstyle
  \bigwedge_{z\in X} \bigl(S(A,I_z)^\ast \rightarrow S(B,I_z)\bigr) =
  ||A\Rightarrow B||_{\tu{X,Y,I}}
  \end{eqnarray*}
on account of (b),
and  
  \begin{eqnarray*}
  &&
  \gamma=\bigwedge_{\{x_1,x_2\}=\{z_1{\bbb},z_2{\bbb}\},
  z_1\not=z_2} \!\!\!\!\!\!\!\!\!\!
  \bigl((t_{x_1}(A)\rell t_{x_2}(A))^\ast \to (t_{x_1}(B)\rell t_{x_2}(B))\bigr)=\\
  &=&
  \bigwedge_{\{x_1,x_2\}=\{z_1{\bbb},z_2{\bbb}\},
  z_1\not=z_2} \!\!\!\!\!\!\!\!\!\!
  \bigl((S(A,I_{z_1})\wedge S(A,I_{z_2}))^\ast \rightarrow
  (S(B,I_{z_1})\wedge S(B,I_{z_2}))\bigr)\geq\\
  &=&
  \bigwedge_{\{x_1,x_2\}=\{z_1{\bbb},z_2{\bbb}\},
  z_1\not=z_2} \!\!\!\!\!\!\!\!\!\!
  \bigl([(S(A,I_{z_1})^\ast \rightarrow S(B,I_{z_1}))
  \wedge
  (S(A,I_{z_2})^\ast \rightarrow S(B,I_{z_2}))]\bigr)=\\
  &=&\textstyle \bigwedge_{z\in X} \bigl(S(A,I_z)^\ast \rightarrow S(B,I_z)\bigr) = \beta
  \end{eqnarray*}
on account of (b) and (a). 
Therefore,
\[
  ||A\Rightarrow B||_{{\cal D}_{\tu{X,Y,I}}}= \beta = 
  ||A\Rightarrow B||_{\tu{X,Y,I}},  
\]
completing the proof.
\end{proof}

Conversely, for a given table over domains with $\mathbf{L}$-relations $\mathcal{D}$, 
define a table $\tu{X,Y,I}_\mathcal{D}$ as follows:
\bgroup%
\addtolength{\leftmargini}{1.2ex}%
\begin{itemize}
  \item
  $X=\mathcal{D}\times\mathcal{D}$;
  
  \item
   for $\langle t_1,t_2\rangle\in X$ and $y\in Y$, let
  $I(\langle t_1,t_2\rangle,y)=
       R_y(t_1(y), t_2(y))$.
 \end{itemize}
\egroup%
As in the previous case, $\mathcal{D}$ and $\tu{X,Y,I}_\mathcal{D}$
yield the same truth degrees of attribute implications:

\begin{lemma}\label{thm:DtoT}
 For every ranked data table  $\cal D$
 and any graded attribute implication $A\Rightarrow B$,
  \begin{equation}\label{eqn:DtoT}
   ||A\Rightarrow B||_{\cal D} = ||A\Rightarrow B||_{\tu{X,Y,I}_\mathcal{D}}.
  \end{equation}
\end{lemma}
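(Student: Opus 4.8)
The plan is to unwind both definitions and observe that the two infima agree term by term, since the construction of $\tu{X,Y,I}_\mathcal{D}$ is tailored so that each row corresponds to exactly one ordered pair of tuples. This makes the present lemma considerably simpler than Lemma~\ref{thm:TtoD}: no case analysis is needed, because the object set $X=\mathcal{D}\times\mathcal{D}$ already ranges over \emph{all} ordered pairs, so the correspondence between rows and pairs is a bijection rather than the many-to-one identification used in the previous construction.

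First I would record the key identity. For any $C\in\mathbf{L}^Y$ and any pair $\tu{t_1,t_2}\in X$, the corresponding row satisfies $I_{\tu{t_1,t_2}}(y)=I(\tu{t_1,t_2},y)=R_y(t_1(y),t_2(y))$ by the definition of $\tu{X,Y,I}_\mathcal{D}$, and hence
\[
  S(C,I_{\tu{t_1,t_2}})
   =\textstyle\bigwedge_{y\in Y}\bigl(C(y)\to I_{\tu{t_1,t_2}}(y)\bigr)
   =\textstyle\bigwedge_{y\in Y}\bigl(C(y)\to R_y(t_1(y),t_2(y))\bigr)
   = t_1(C)\rell t_2(C),
\]
directly from the definition of $S$ in (\ref{Eq:S}) and of $\rell$ in (\ref{eqn:xC}). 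This single observation is the entire content of the lemma.

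Then I would apply the identity to both $C=A$ and $C=B$. For each $\tu{t_1,t_2}\in X$ the summand of $||A\Rightarrow B||_{\tu{X,Y,I}_\mathcal{D}}$ indexed by that row becomes
\[
  S(A,I_{\tu{t_1,t_2}})^\ast\to S(B,I_{\tu{t_1,t_2}})
  =(t_1(A)\rell t_2(A))^\ast\to(t_1(B)\rell t_2(B)),
\]
which is exactly the summand indexed by $\tu{t_1,t_2}$ in the definition (\ref{eqn:ABD}) of $||A\Rightarrow B||_\mathcal{D}$. Since the index set $X=\mathcal{D}\times\mathcal{D}$ of the infimum defining $||A\Rightarrow B||_{\tu{X,Y,I}_\mathcal{D}}$ coincides with the set of ordered pairs $t_1,t_2\in\mathcal{D}$ over which (\ref{eqn:ABD}) takes its infimum, the two infima are literally equal, establishing (\ref{eqn:DtoT}).

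I expect essentially no obstacle here; the only point to watch is that $X$ must be the \emph{full} product $\mathcal{D}\times\mathcal{D}$, so that every summand of (\ref{eqn:ABD}) is realized by a row of $\tu{X,Y,I}_\mathcal{D}$ and conversely, with no terms merged or omitted. This is precisely what spares us the $\alpha,\beta,\gamma$ decomposition and the auxiliary inequalities (a)--(c) that were needed in the proof of Lemma~\ref{thm:TtoD}.
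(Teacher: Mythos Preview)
Your proposal is correct and follows essentially the same approach as the paper: establish the identity $S(C,I_{\tu{t_1,t_2}})=t_1(C)\rell t_2(C)$ from the definitions, apply it with $C=A$ and $C=B$, and conclude that the two infima agree term by term over the index set $X=\mathcal{D}\times\mathcal{D}$. Your write-up is in fact slightly more explicit than the paper's about why the index sets coincide, but the argument is the same.
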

\begin{proof}
 Notice first that 
\begin{eqnarray*}
 && \textstyle (t_{1}(A)\rell t_{2}(A)) =
   \bigwedge_{y\in Y}(A(y)\rightarrow R_y(t_{1}(y), t_{2}(y))) =\\
  &=& \textstyle  \bigwedge_{y\in Y}  
       (A(y) \to I(\tu{t_{1},t_{2}},y)) = S(A,I_{\tu{t_{1},t_{2}}}), 
\end{eqnarray*}
and the same for $B$.
We therefore get
  \begin{eqnarray*}
    &&
    ||A\Rightarrow B||_{\cal D} = 
    \textstyle\bigwedge_{t_1,t_2\in \mathcal{D}}
     \bigl([t_{1}(A)\approx t_{2}(A)]^\ast
     \rightarrow
     [t_{1}(B)\approx t_{2}(B)]\bigr) =\\
    &=&
     S(A,I_{\tu{t_{1},t_{2}}})^\ast \to S(B,I_{\tu{t_{1},t_{2}}}) =
    ||A\Rightarrow B||_{{\tu{X,Y,I}}_{\cal D}}.
  \end{eqnarray*}
\end{proof}

The following theorem answers the question from the beginning of this section. 

\begin{theorem}\label{thm:semFDAI}
For every fuzzy set $T$ of graded attribute implications and every graded attribute implication
 $A\Rightarrow B$ we have
\begin{eqnarray}\label{thm:semAIFD}
  ||A \Rightarrow B||^{\mathrm{FD}}_T = ||A \Rightarrow B||^{\mathrm{AI}}_T.
\end{eqnarray}
\end{theorem}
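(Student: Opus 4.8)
The plan is to prove the two inequalities $||A \Rightarrow B||^{\mathrm{FD}}_T \leq ||A \Rightarrow B||^{\mathrm{AI}}_T$ and $||A \Rightarrow B||^{\mathrm{FD}}_T \geq ||A \Rightarrow B||^{\mathrm{AI}}_T$ separately, using Lemma~\ref{thm:TtoD} and Lemma~\ref{thm:DtoT} to pass between the two kinds of models. The crucial fact powering both directions is that each of the two transformations preserves the degree of validity \emph{of every} implication, not merely of $A \Rightarrow B$; hence it preserves the property of being a model of $T$. An equally important observation is that for a table $\tu{X,Y,I}$ the AI-validity degree is by definition the infimum $\bigwedge_{x\in X} ||C \Rightarrow D||_{I_x}$ over its rows, so that membership of a table in $\mathrm{Mod}(T)$ forces each of its rows, viewed as a single $\mathbf{L}$-set, to lie in $\mathrm{Mod}(T)$.

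For $\leq$, I would take any $M \in \mathrm{Mod}(T)$ and regard it as the single row of the one-object table $\tu{\{x\},Y,I}$ with $I_x = M$, so that $||C \Rightarrow D||_{\tu{\{x\},Y,I}} = ||C \Rightarrow D||_M$ for all $C,D$. Applying Lemma~\ref{thm:TtoD} to this table yields a data table $\mathcal{D} = \mathcal{D}_{\tu{\{x\},Y,I}}$ with $||C \Rightarrow D||_{\mathcal{D}} = ||C \Rightarrow D||_M$ for every $C \Rightarrow B$. Since $T(C\Rightarrow D) \leq ||C\Rightarrow D||_M = ||C\Rightarrow D||_{\mathcal{D}}$, we obtain $\mathcal{D} \in \mathrm{Mod}^{\mathrm{FD}}(T)$, and moreover $||A\Rightarrow B||_{\mathcal{D}} = ||A\Rightarrow B||_M$. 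Thus every value $||A\Rightarrow B||_M$ with $M \in \mathrm{Mod}(T)$ occurs among the values $||A\Rightarrow B||_{\mathcal{D}}$ ranged over by $\mathcal{D} \in \mathrm{Mod}^{\mathrm{FD}}(T)$, and taking infima gives the first inequality.

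For $\geq$, I would take any $\mathcal{D} \in \mathrm{Mod}^{\mathrm{FD}}(T)$ and apply Lemma~\ref{thm:DtoT} to obtain the table $\tu{X,Y,I}_{\mathcal{D}}$, again with $||C\Rightarrow D||_{\mathcal{D}} = ||C\Rightarrow D||_{\tu{X,Y,I}_{\mathcal{D}}}$ for every implication. Because $T(C\Rightarrow D) \leq ||C\Rightarrow D||_{\mathcal{D}} = \bigwedge_{\langle t_1,t_2\rangle} ||C\Rightarrow D||_{I_{\langle t_1,t_2\rangle}}$, each row $I_{\langle t_1,t_2\rangle}$ satisfies $T(C\Rightarrow D) \leq ||C\Rightarrow D||_{I_{\langle t_1,t_2\rangle}}$ and hence lies in $\mathrm{Mod}(T)$. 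Therefore
\[
||A\Rightarrow B||_{\mathcal{D}} = \bigwedge_{\langle t_1,t_2\rangle} ||A\Rightarrow B||_{I_{\langle t_1,t_2\rangle}} \geq \bigwedge_{M \in \mathrm{Mod}(T)} ||A\Rightarrow B||_M = ||A\Rightarrow B||^{\mathrm{AI}}_T,
\]
the infimum over the rows being an infimum over a subfamily of $\mathrm{Mod}(T)$. As this holds for every $\mathcal{D} \in \mathrm{Mod}^{\mathrm{FD}}(T)$, taking the infimum over $\mathcal{D}$ yields the second inequality, and the two combine to the claim.

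The genuine content is already packaged in Lemma~\ref{thm:TtoD} and Lemma~\ref{thm:DtoT}; the only point requiring care is the interplay between the single-$\mathbf{L}$-set semantics underlying $\mathrm{Mod}(T)$ and the row-wise infimum defining validity in a table. This is precisely what lets model-hood descend to individual rows in the $\geq$ direction and lets a single $\mathbf{L}$-set be realized as a one-row table in the $\leq$ direction, so that no further manipulation of the grades is needed.
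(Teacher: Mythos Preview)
Your proof is correct and follows essentially the same approach as the paper: reduce both inequalities to Lemma~\ref{thm:TtoD} and Lemma~\ref{thm:DtoT}, using a one-row table for $\leq$ and the row-decomposition of $\tu{X,Y,I}_{\mathcal{D}}$ for $\geq$. Your $\geq$ direction is more explicit than the paper's terse ``in a similar manner'', spelling out why each row $I_{\langle t_1,t_2\rangle}$ lands in $\mathrm{Mod}(T)$ and why the row-wise infimum therefore dominates $||A\Rightarrow B||^{\mathrm{AI}}_T$, but the underlying idea is identical.
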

\begin{proof}
We need to prove
$||A \Rightarrow B||^{\mathrm{FD}}_T \leq ||A \Rightarrow B||^{\mathrm{AI}}_T$
and
$||A \Rightarrow B||^{\mathrm{FD}}_T \geq ||A \Rightarrow B||^{\mathrm{AI}}_T$.
To check the first inequality, it is enough to show that for each
$M\in\mathrm{Mod}(T)$ there is
${\cal D}\in\mathrm{Mod}^{\mathrm{FD}}(T)$ such that
$||A\Rightarrow B||_M=||A\Rightarrow B||_{\cal D}$. 
This follows directly from Lemma~\ref{thm:TtoD} by taking
 ${\cal D}={\cal D}_{\tu{X,Y,I}}$, where $\tu{X,Y,I}$ is a one-row data table 
corresponding to $M$, i.e. with $X=\{x\}$ and $I(x,y)=M(y)$ for each $y\in Y$.
Namely, we then have $||A\Rightarrow B||_M=||A\Rightarrow B||_{\tu{X,Y,I}}=
||A\Rightarrow B||_{\cal D}$. 
The second inequality is proved in a similar manner using 
 Lemma~\ref{thm:DtoT}.
\end{proof}

\begin{remark}
  The tables $\tu{X,Y,I}_\mathcal{D}$ and $\mathcal{D}_{\tu{X,Y,I}}$
  constructed from $\mathcal{D}$ and $\tu{X,Y,I}$, respectively, are not 
  minimal in size. We use them because their definitions are relatively simple
  and they do their work in the proofs of Lemma~\ref{thm:TtoD} and
  Lemma~\ref{thm:DtoT}.
%
\end{remark}

\section{Conclusions}
\label{sec:c}

We presented an approach to attribute dependencies for data with grades,
such as a grade to which an object is red or a grade to which two objects are similar.
 Such dependencies extend classical dependencies in Boolean data
and classical functional dependencies.
We presented results regarding major issues traditionally investigated for such dependencies,
including entailment, redundancy and bases of dependencies, associated closure structures,
 Armstrong-like axiomatization, and computation issues. In addition, we examined
a relationship between the new kind of dependencies and the classical ones
and showed that the well-known correspondence between attribute dependencies
in Boolean data on one hand and functional dependencies in relational model of data on the other hand 
is retained in the setting with grades but obtains a nontrivial, interesting form. Namely,
in the setting with grades, the role of functional dependencies is played by
their analogue in an extended relational model in which every domain is equipped with a
similarity relation, or a more general binary relation, assigning grades of similarity to 
pairs of domain elements.

In addition, the paper attempts to make a methodological point, the ramifications of which
we consider equally important as the results mentioned in the above paragraph.
The point is the following. Classical dependencies are based in classical logic in that 
the truth value \emph{true} (1) represents presence of an attribute and
match of attribute values, while \emph{false} (0) represents absence and mismatch.
Moreover, the truth values are manipulated by classical logic
connectives and further notions such as validity and entailment of dependencies are derived from
classical logic notions. Broadly speaking, classical dependencies are founded in the agenda of 
classical logic.
The presence of possibly many grades in the new situation and the
ordinal nature of  grades makes the situation challenging
and prone to ad hoc treatments, involving for instance metrics representing
similarities.  Thus, one might attempt to retain the agenda of
classical logic, extend the formalism of classical dependencies by 
a metric (distance function) to represent similarity, and arrive at a 
blend of a logic-based formalism and a metric-based one.
Instead, our approach---like the classical one---is purely logically based,
yet capable of handling grades and their semantics in a reasonable way.
We consider the grades as truth values in the sense of fuzzy logic,
i.e. consider them as truth degrees with $1$ and $0$
representing the boundary cases and the other ones, such as $0.8$, as representing intermediary
cases. 
In a sense, we move to a more general framework, a logical calculus in which statements
 such as
``attribute $y$ applies to object $x$'' and ``objects $x_1$ and $x_2$ are similar (equal)''
are no longer considered bivalent. Rather, these statements are allowed to be assigned,
in addition to $0$ and $1$, an intermediary \emph{grade}, i.e.
a \emph{truth degree} between $0$ and $1$.
Such move can effectively be realized.
Namely,  we argue that data involving grades and reasoning about such data can be 
modeled utilizing a framework of mathematical fuzzy logic, a recently developed
many-valued logic with now well-developed agenda and that this logic may assume the role 
classical logic plays in the established theories of data dependencies and reasoning about data
in general. 
The main advantage of this approach is conceptual clarity. 
On the level of syntax, the key notions in the model with grades have essentially the same form 
as in the classical, bivalent case.
This means that the informal description of the key notions in natural language,
and hence the intuitive meaning of the key notions, remains essentially the same as in the classical model.
Yet, on the level of semantics, grades obtain a proper treatment and permeat the subsequent
notions such as validity or entailment in a natural way. Thus, for instance, validity or entailment
of dependencies are no longer bivalent concepts. Rather, they naturally emerge as graded notions. 
One obtains a degree of validity or degree of entailment of dependencies, corresponding 
to the idea that a compound statement (such as a depedendency claim)
involving partially true constituent statements (such as ``attribute $y$
applies to object $x$'') may itself be only partially true, i.e. true to an intermediary degree.

To sum up, utilization of mathematical fuzzy logic as a formal framework for modeling
data with grades brings conceptual clarity and  makes possible a treatment of  
attribute dependencies essentially the same way as utilization of classical logic does
for data with no intermediary grades.
Clearly, the presented approach is not restricted to the problems dealt with in the present paper.
In this respect, our paper demonstrates that fuzzy logic is a convenient framework
for modeling certain problems that surpass the domain traditionally accounted for by 
classical logic, namely those that may be characterized by a graded nature of data
and reasoning about such data.
Such problems abound particularly in situations where human judgment is involved,
for which the usage of graded, ``fuzzy'' notions, such as \emph{red}, \emph{tall}, \emph{similar},
rather than bivalent ones, is  characteristic.
A further development of theories and methods inspired by such problems presents a challenging
and important research goal.
The associated research agenda includes several complex issues, some of which we intentionally disregarded
in the present paper. One such issue is connected to the fact that the theory we present is not restricted to
a particular set of grades and particular (truth functions of) logical connectives on this set.
Rather, we proceed in a general way and only assume that the set $L$ of grades and the logical connectives
on $L$ satify certain logically reasonable conditions such as the isotony of conjunction, its commutativity,
associativity, and the like. In a sense, the presented theory is qualitative and open to determination
of a quantitative component. Clearly, the choice of this component,
i.e. a particular set $L$ and particular connectives on $L$, is a step one needs to make when applying the theory.
One option in making this step is to proceed on intuitive grounds, which is often the case in applications
of fuzzy logic. In fact, there is an argument for considering such option sufficient for practical purpose, namely, that the
common qualitative properties of all the potential sets of logical connectives are specific enough to the extent that all the
sets of connectives can be considered reasonable for practical purpose. Still, such option may arguably be regarded
as too much ad hoc. In fact, the choice of a  scale of grades and logical connectives for this scale is a matter
that calls for a thorough examination from the point of view of a mathematical and cognitive psychology.
In our view, such examination presents challenging problems with broad ramifications  and is very much needed.

\begin{acks}
Dedicated to Professor Petr H\'ajek.
\end{acks}

\bibliographystyle{acmsmall}
\bibliography{acmsmall-sam}

\begin{thebibliography}{99}
\footnotesize
\openup=0pt
\parskip=-2pt

\bibitem[Agrawal et al. 1993]{AgImSw:Marbsild}
  Agrawal R., Imielinski T., and Swami A. 1993. Mining association rules between sets of items in large databases.
  In \emph{Proceedings of SIGMOD.}  207--216.

\bibitem[Armstrong 1974]{Arm:Dsdbr}
  Armstrong W. W. 1974. Dependency structures in data base relationships. 
  In \emph{Proceedings of IFIP.}
  580--583.

\bibitem[Belohlavek 1999]{Bel:Fgc}
  Belohlavek R. 1999. Fuzzy Galois connections.
  {\em Math. Logic Quarterly 45}, 4,  497--504.

\bibitem[Belohlavek 2002]{Bel:FRS}
  B{e}lohlavek R. 2002.  {\em Fuzzy Relational Systems:
  Foundations and Principles}.  Kluwer.

\bibitem[Belohlavek 2004]{Bel:Clofl}
  Belohlavek R. 2004. \newblock Concept lattices and order in fuzzy
  logic.  {\em Ann. Pure Appl. Logic 128},  277--298.

\bibitem[Belohlavek et al. 2005]{BeFuVy:Fcots}
  Belohlavek R., Funiokova T.,  and Vychodil V. 2005.
  Fuzzy closure operators with truth stressers. 
 \emph{Logic J. IGPL  13},  5, 503--513.

\bibitem[Belohlavek and Vychodil 2011]{BeVy:Crmpvfl}
 Belohlavek R. and Vychodil V. 2011. Codd's relational model from the point of view of fuzzy logic. 
 \emph{J.  Logic and Computation 21}, 5, 851--862. 

\bibitem[Belohlavek and Vychodil 2012]{BeVy:Fcalh}
  Belohlavek R. and Vychodil V. 2012.
  Formal concept analysis and linguistic hedges.
  \emph{Int. J. General Systems 41}, 5,  503--532.

\bibitem[Belohlavek and Vychodil 2014]{BeVy:TODS}
  Belohlavek R. and Vychodil V.:
  Relational Similarity-Based Databases, Part 1: Foundations and Query Systems,
  Part 2: Dependencies in Data.  Submitted.

\bibitem[Cintula et al. 2011]{Ciea:HMFL}
Cintula P., H\'ajek P., and Noguera C. 2011. 
\emph{Handbook of Mathematical Fuzzy Logic, vol. I, II.} College Publ.

\bibitem[Carpineto and Romano 2004]{CaRo:CDA}
  Carpineto C. and Romano G. 2004. \emph{Concept Data Analysis: Theory and Applications.}
  J. Wiley. 

\bibitem[Codd 1970]{Cod:Rmdlsdb}
Codd E. F. 1970. A relational model of data for large shared data banks. 
\emph{Communications of the ACM 13}, 6,  377--387.


\bibitem[Davey and Priestly 2002]{DaPr:ILO}
  Davey B. A., Priestly H. A.: Introduction to Lattices and Order.
  Cambridge University Press, 2002.

\bibitem[Delobel and Casey 1973]{DeCa:Ddbtbsf}
Delobel C. and Casey R. G. 1973. Decomposition of a data base and the theory of 
Boolean switching functions. 
\emph{IBM J. Research and Development 17},  374--386. 

\bibitem[Fagin 1977]{Fag:Fdrdpl}
Fagin R. 1977. Functional dependencies in a relational database and propositional 
logic. \emph{IBM J. Research and Development 21}, 6,  543--544.

\bibitem[Fagin 1999]{Fag:Cfims}
  Fagin R. 1999. Combining fuzzy information from multiple systems. 
  \emph{J. Comput. Syst. Sci. 58}, 1 83--99.

\bibitem[Fagin 2002]{Fag:Cfi}
  Fagin R. 2002. Combining fuzzy information: an overview. 
  \emph{SIGMOD Record 31}, 2, 109--118.

\bibitem[Galatos et al. 2007]{Gaea:RL}
  Galatos N., Jipsen P., Kowalski T., and Ono H. 2007.
  \emph{Residuated Lattices: An Algebraic Glimpse at Substructural Logics.}
  Elsevier.

\bibitem[Ganter 1984]{Gan:Tbaca}  
Ganter B. 1984.Two basic algorithms in concept analysis. \emph{Preprint 831, Fach-
bereich Mathematik}, TU Darmstadt. 

\bibitem[Ganter 1998]{G:BuI}
  Ganter B. 1998. \newblock {Begriffe und Implikationen} (in German). Unpublished manuscript.

\bibitem[Ganter and Wille 1999]{GaWi:FCA}
  Ganter B. and Wille R. 1999.  \newblock {\em Formal Concept Analysis.
  Mathematical Foundations.} \newblock Springer. 

\bibitem[Gerla 2001]{Ger:FL}
  Gerla G. 2001. \textit{Fuzzy Logic. Mathematical Tools for Approximate Reasoning.}
  Kluwer. 

\bibitem[Goguen 1967]{Gog:Lfs}
  Goguen~J.~A. 1967.  L-fuzzy sets. {\em J. Math. Anal. Appl. 18}, 
  145--174.

\bibitem[Goguen 1968--9]{Gog:Lic}
  Goguen~J.~A. 1968--9. The logic of inexact concepts.
  {\em Synthese 18},  325--373.

\bibitem[Gottwald 2001]{Got:TMVL}
  Gottwald S. 2001. \emph{A Treatise on Many-Valued Logic.}
  Research Studies Press.

\bibitem[Gottwald 2008]{Got:Mfl}
  Gottwald S. 2008. Mathematical fuzzy logics. 
  \emph{Bull. Symbolic Logic  14},  210–-239.

\bibitem[Guigues and Duquenne 1986]{GuDu:Fmiirtdb}
  Guigues J.-L. and Duquenne V. 1986. Familles minimales d'implications informatives
  resultant d'un tableau de donn\'ees binaires (in French). \textit{Math. Sci. Humaines 95},
  5--18.

\bibitem[H\'ajek 1998]{Haj:MFL}
  H\'ajek~P. 1998.  \emph{Metamathematics of Fuzzy Logic.}
  Kluwer. 

\bibitem[H\'ajek 2001]{Haj:Ovt}
  H\'ajek~P. 2001.  On very true.
  \emph{Fuzzy Sets and Systems 124},  329--333.

\bibitem[H\'ajek 2006]{Haj:Wimfl}
  H\'ajek P. 2006. What is mathematical fuzzy logic. \emph{Fuzzy Sets and Systems 157},
  257--603.

\bibitem[H\'ajek and Havr\'anek 1978]{HaHa:MHF}
  H\'ajek P. and Havr\'anek T. 1978. \emph{Mechanizing Hypotheses Formation.
  Mathematical Foundations for a General Theory.}  Springer.

\bibitem[H\'ajek et al. 2010]{HaHoRa:Gmmdm}
  H\'ajek P., Hole\v{n}a M.,  and Rauch J. 2010. The GUHA method and its meaning for data mining. 
  \emph{J. Comput. Syst. Sci. 76}, 1, 34--48.

\bibitem[Hipp et al. 2000]{HiGuNa:Aarl}
  Hipp J., G\"untzer U., and Nakhaeizadeh G. 2000.  
   Algorithms for association rule mining---A general survey and comparison. 
  \emph{SIGKDD Explorations 2}, 2, 1--58.

\bibitem[Johnson et al. 1988]{JoYaPa:Gmis}
  Johnson D.~S., Yannakakis M., and Papadimitrou C.~H. 1988. \newblock On
  generating all maximal independent sets. \newblock \emph{Inf.
  Processing Letters 15},  129--133.


\bibitem[Kuznetsov and Obiedkov 2008]{KuOb:Sdcpdgbi}
Kuznetsov S. O. and Obiedkov S. A. 2008. Some decision and counting problems of the Duquenne-Guigues basis of implications. 
 \emph{Discrete Applied Mathematics 156}, 11,  1994--2003.

\bibitem[Maier 1983]{Mai:TRD}
  Maier~D. 1983. \textit{The Theory of Relational Databases.} Computer Science
  Press. 

\bibitem[Mayor and Torrens 2005]{MaTo:Tnds}
  Mayor G. and Torrens J. 2005. Triangular norms on discrete settings.
  In Klement E. P., Mesiar R. (eds.) \emph{Logical, Algebraic, Analytic and Probabilistic Aspects of Triangular Norms.}
  Elsevier.  189--230.

\bibitem[Nov\'ak et al. 1999]{NoPeMo:MPFL}
  Nov\'ak V., Perfilieva I., and Mo\v{c}ko\v{r} J. 1999.
  \emph{Mathematical Principles of Fuzzy Logic.} Kluwer. 

\bibitem[Obiedkov and Duquenne 2007]{ObDu:Aiccib}
Obiedkov S. and Duquenne V. 2007. Attribute-incremental construction of the canonical
implication basis. \emph{Annals of Mathematics and Artificial Intelligence 49}, 77--99.

\bibitem[Ore 1944]{Ore:Gc}
  Ore O. 1944. Galois connections. \emph{Trans. AMS 55},  493–-513.

\bibitem[Pavelka 1979]{Pav:Ofl}
  Pavelka~J. 1979. On fuzzy logic I, II, III.
  {\em Z. Math. Logik Grundlagen Math. 25},
  45--52, 119--134, 447--464.


\bibitem[Raju and Majumdar 1988]{RaMa:Ffdljdfrds}
  Raju K. V. S. V. N. and Majumdar A. K. 1988. Fuzzy functional dependencies
  and lossless join decomposition of fuzzy relational database systems.
  \textit{ACM Trans. Database Systems 13},  2,  129--166.

\bibitem[Richardson 1998]{Ric:Sfp}
 Richardson G. 1998.  The structure of fuzzy preferences: social choice implications. 
\textit{Social Choice and Welfare 15},  359--369.


\bibitem[Takeuti and Titani 1987]{TaTi:Gist}
  Takeuti G. and Titani S. 1987.  Globalization of intuitionistic set theory.
  \emph{Annals of Pure and Applied Logic 33}, 195--211.

\bibitem[Tan et al. 2005]{TaStKu:IDM}
  Tan, P.-N., Steinbach M., and Kumar, V. 2005. \emph{Introduction to Data Mining.} Addison-Wesley.


\bibitem[Ward and Dilworth 1939]{WaDi:Rl}
  Ward M. and Dilworth R. P. 1939. Residuated lattices. \emph{Trans. AMS
  45}, 335--354.

\bibitem[Zadeh 1965]{Zad:Fs}
  Zadeh L. A. 1965. Fuzzy sets. \emph{Information and Control 12}, 94--102.

\bibitem[Zadeh 1988]{Zad:Fl}
  Zadeh L. A. 1988. Fuzzy Logic. \emph{IEEE Computer 21}, 4, 83--93.

\bibitem[Zadeh 1994]{Zad:Flnnsc}
  Zadeh L. A. 1994.  Fuzzy Logic, neural networks, and soft computing. \emph{Commun. ACM 37}, 3,  77--84.

\bibitem[Zadeh 2008]{Zad:Itnfl}
  Zadeh L. A. 2008.  Is there a need for fuzzy logic? \emph{Information Sciences 178}, 13,  2751--2779. 

\end{thebibliography}


\received{XXX}{XXX}{XXX}

\end{document}